\newcommand{\at}[1]%
            {\ensuremath{\protect\underline{\mathbf{#1}}}} 
\newcommand{\op}[1]{\ensuremath{\operatorname{#1}}}        
\newcommand{\farg}{\cdot}                                  
\newcommand{\h}[1][]                                       
 {\ifthenelse{\boolean{mmode}}%
  {$\mathrm{h}$}%
  {h\nobreakdash#1\hspace{0pt}}}
\DeclareMathOperator{\U}{\boldsymbol{\mathcal{U}}}   
\newcommand{\comp}{\circ}          
\newcommand{\adcomp}%
  {\overset{\operatorname{ad}}{\comp}} 
\newcommand{\funcomp}%
  {\overset{\operatorname{fn}}{\comp}}
\newcommand{\sccat}
{\mathbin{\kern-1pt\raisebox{6pt}{.}\kern-5pt
\downarrow\kern-5pt\raisebox{6pt}{.}\kern-1pt}}
\newcommand{\parrow}[1]
   {\underset{{\displaystyle \raisebox{5pt}%
   {$\longleftarrow$}}}{\op{#1}}{\,}}
\newcommand{\iarrow}[1]
   {\underset{{\displaystyle \raisebox{5pt}%
   {$\longrightarrow$}}}{\op{#1}}{\,}}
\DeclareMathOperator{\Sub}{Sub}    
\DeclareMathOperator{\inc}{in}     
\newcommand{\rest}%
{\mathnormal{\restriction}}        
\newcommand{\nin}{\not\in}         
\newcommand{\bprod}{\times}        
\newcommand{\bcoprod}{\amalg}      
\newcommand{\function}[4]{
            \begin{array}{@{\:}c@{\:}c@{\:}l}
                   #1 &\mor& #2 \\
                   #3 &\longmapsto& #4
            \end{array} }
\newcommand{\nfunction}[4]
    {\left\{
     \function{#1}{#2}{#3}{#4}
     \right. }
\newcommand{\fmon}[1]
 {\ensuremath{#1^{\star}}}
\newcommand{\bb}[1]{\ensuremath
 {\lvert #1 \rvert}}
\DeclareMathOperator{\Sg}{Sg}      
\DeclareMathOperator{\Cgr}{Cgr}    
\newcommand{\pr}{\mathrm{pr}}        
\DeclareMathOperator{\G}{G}        
\DeclareMathOperator{\bconcat}
            {\curlywedge}
\newcommand{\concat}
  {\ensuremath{\text
  {\Large $\curlywedge$}}}
\newcommand{\vacio}{\ensuremath{\varnothing}}
\newcommand{\brel}{\ensuremath{\xymatrix{{}\ar@{{*}{-}{*}}[r] & {}}}}
\newcommand{\nseq}[3]{\xymatrix@1@C=16pt{#1 \ar@{>}[r]_-{\scriptscriptstyle{#2}} & #3 }}
\newsavebox{\xymor}  
\newsavebox{\xymon}  
\newsavebox{\xyepi}  
\newsavebox{\xytn}   
\newsavebox{\xyrel}  
\newsavebox{\xycel}  
\newsavebox{\xymdf}  
\newsavebox{\xyumor} 
\newsavebox{\xydmor} 
\newsavebox{\xyomor} 
\newsavebox{\xyemor} 
\newcommand{\xynode}{\makebox[0ex]{}}
\savebox{\xymor}{\ensuremath{%
\xymatrix@1@C=19pt{\xynode \ar@{>}[r] & \xynode }}}
\savebox{\xymon}{\ensuremath{%
\xymatrix@1@C=19pt{\xynode \ar@{{ +}{-}{>}}[r] & \xynode }}}
\savebox{\xyepi}{\ensuremath{%
\xymatrix@1@C=19pt{\xynode \ar@{{}{-}{+>}}[r] & \xynode }}}
\savebox{\xytn}{\ensuremath{%
\xymatrix@1@C=19pt{\xynode \ar[r]|(.44){\object@{.-}} & \xynode
}}}
\savebox{\xyrel}{\ensuremath{%
\xymatrix@1@C=19pt{\xynode \ar@{{}{-}{-o}}[r] & \xynode }}}
\savebox{\xycel}{\ensuremath{%
\xymatrix@1@C=19pt{\xynode \ar@{=>}[r] & \xynode }}}
\savebox{\xymdf}{\ensuremath{%
\xymatrix@1@C=16pt{\xynode \ar@{}[r]|{\dir{~>}} & \xynode}}}
\savebox{\xyumor}{\ensuremath{%
\xymatrix@1@C=19pt{\xynode \ar@{{}{-}^{>}}[r] & \xynode }}}
\savebox{\xydmor}{\ensuremath{%
\xymatrix@1@C=19pt{\xynode \ar@{{}{-}_{>}}[r] & \xynode }}}
\savebox{\xyomor}{\ensuremath{%
\xymatrix@1@C=19pt{\xynode \ar@{{}{-}^{< }}[r] & \xynode }}}
\savebox{\xyemor}{\ensuremath{%
\xymatrix@1@C=19pt{\xynode \ar@{{ >}{-}{>}}[r] & \xynode }}}
\newcommand{\mor}{\usebox{\xymor}}    
\newcommand{\functor}[9]{
 \xymatrix{
    #4 \save[]+<0ex,5ex>*+{#1}="1"  \restore
      \ar[d]_{#6}  \ar@{}[rd]|{\longmapsto}
  & #5 \save[]+<0ex,5ex>*+{#3}="3"  \restore
      \ar[d]^{#7}
  \\
   #8 & #9 \ar "1";"3"^-{#2} } }
\newcommand{\functornd}[9]{
 \xymatrix{
    #4 \save[]+<0ex,5ex>*+{#1}="1"  \restore
      \ar[d]_{#6}  \ar@{}[rd]|{\longmapsto}
  & #5 \save[]+<0ex,5ex>*+{#3}="3"  \restore
  \\
   #8 & #9 \ar[u]_{#7} \ar "1";"3"^-{#2} } }
\newcommand{\functordn}[9]{
 \xymatrix{
    #4 \save[]+<0ex,5ex>*+{#1}="1"  \restore
       \ar@{}[rd]|{\longmapsto}
  & #5 \save[]+<0ex,5ex>*+{#3}="3"  \restore
      \ar[d]^{#7}
  \\
   #8  \ar[u]^{#6}  & #9 \ar "1";"3"^-{#2} } }
\newcommand{\larr}{->}
\newcommand{\rarr}{->}
\newcommand{\xfunctor}[9]{
 \xymatrix{
    #4 \save[]+<0ex,5ex>*+{#1}="1"  \restore
      \ifthenelse{\equal{\larr}{->}}{\ar[d]_{#6}}{}
      \ifthenelse{\equal{\larr}{<-}}{\ar[d];[]^{#6}}{}
      \ifthenelse{\equal{\larr}{-<}}{\ar@{< }[d]_{#6}}{}
      \ar@{}[rd]|{\longmapsto}
  & #5 \save[]+<0ex,5ex>*+{#3}="3"  \restore
      \ifthenelse{\equal{\rarr}{->}}{\ar[d]^{#7}}{}
      \ifthenelse{\equal{\rarr}{<-}}{\ar[d];[]_{#7}}{}
      \ifthenelse{\equal{\rarr}{-<}}{\ar@{< }[d]^{#7}}{}
  \\
   #8 & #9 \ar "1";"3"^-{#2} } }
\theoremstyle{plain}
\newtheorem{theorem}{\indent\bf Theorem}[section]
\newtheorem{proposition}[theorem]{\indent\bf Proposition}
\newtheorem{corollary}[theorem]{\indent\bf Corollary}
\newtheorem{lemma}[theorem]{\indent\bf Lemma}
\theoremstyle{definition}
\newtheorem{definition}[theorem]{\indent\bf Definition}
\newtheorem*{remark}{\indent\bf Remark}
\theoremstyle{remark}
\numberwithin{equation}{section}
\begin{document}
\title[Eilenberg theorems for many-sorted formations]{Eilenberg theorems for many-sorted formations}

\author[Climent]{J. Climent Vidal}
\address{Universitat de Val\`{e}ncia\\
         Departament de L\`{o}gica i Filosofia de la Ci\`{e}ncia\\
         Av. Blasco Ib\'{a}\~{n}ez, 30-$7^{\mathrm{a}}$, 46010 Val\`{e}ncia, Spain}
\email{Juan.B.Climent@uv.es}
\author[Cosme]{E. Cosme Ll\'{o}pez}
\address{Universitat de Val\`{e}ncia\\
         Departament d'\`{A}lgebra\\
         Dr. Moliner, 50, 46100 Burjassot, Val\`{e}ncia, Spain}
\email{enric.cosme@uv.es}
\thanks{The research of the second author has been supported by the grant MTM2014-54707-C3-1-P from the \emph{Ministerio de Econom\'{\i}a y Competitividad} (Spanish Government) and FEDER (European Union).}

\subjclass[2010]{Primary: 08A68; Secondary: 08A70, 68Q70.} \keywords{Many-sorted algebra, support, many-sorted congruence, saturation, cogenerated congruence, many-sorted (finite) algebra formation, many-sorted (finite index) congruence formation, many-sorted regular language formation.}
\date{January 6th, 2016}

\begin{abstract}
A theorem of Eilenberg establishes that there exists a bijection between the set of all varieties of regular languages and the set of all varieties of finite monoids. In this article after defining, for a fixed set of sorts $S$ and a fixed  $S$-sorted signature $\Sigma$, the concepts of formation of congruences with respect to $\Sigma$ and of formation of $\Sigma$-algebras, we prove that
the algebraic lattices of all $\Sigma$-congruence formations and of all $\Sigma$-algebra formations are isomorphic, which is an Eilenberg's type theorem. Moreover, under a suitable condition on the free $\Sigma$-algebras and after defining the concepts of formation of congruences of finite index with respect to $\Sigma$, of formation of finite $\Sigma$-algebras, and of formation of regular languages with respect to $\Sigma$, we prove that the algebraic lattices of all $\Sigma$-finite index congruence formations, of all $\Sigma$-finite algebra formations, and of all $\Sigma$-regular language formations are isomorphic, which is also an Eilenberg's type theorem.
\end{abstract}
\maketitle



\section{Introduction}

In the development of the theory of regular languages the definition
and characterization of the \emph{varieties} ($\ast$-\emph{varieties}) of regular languages by Samuel Eilenberg (see~\cite{se76}, pp.~193--194), was crucial. Such a variety is a set of languages closed under the Boolean operations, left and right quotients by words, and inverse homomorphic images.  Eilenberg's main result is that varieties of regular languages are characterized by their \emph{syntactic semigroups}, and that the corresponding sets of finite semigroups are those that are closed under subsemigroups, quotient semigroups, and finite products of semigroups. Eilenberg called these sets \emph{varieties} of finite semigroups (see~\cite{se76},  p.~109). As it is well-known, one of the most important theorems in the study of formal languages and automata is the variety theorem of Eilenberg (see~\cite{se76}, p.~194), which states that there exists a bijection between the set of all varieties of regular languages and the set of all varieties of finite monoids. Eilenberg's work had as one of its consequences that of putting scattered results on diverse classes of languages into a general setting, and most of the subsequent work on regular languages can be properly viewed as taking place in this theoretical framework.

Several extensions of Eilenberg's theorem, obtained by replacing monoids by other algebraic constructs or by modifying the definition of variety of regular languages, have been considered in recent times. A further step in this research program has been to replace the varieties of finite monoids by the more general concept of formation of finite monoids (see~\cite{bps14}, p.~1740),  that is, a set of finite monoids closed under isomorphisms, homomorphic images, and finite subdirect products. The just mentioned replacement is founded, in the end, on the great significance that the (saturated) formations of finite groups---introduced by Wolfgang Gasch\"{u}tz in~\cite{gas62}---have, in particular, for a better understanding of the structure of the finite groups.
Perhaps it is appropriate at this point to recall that Gasch\"{u}tz, in~\cite{gas62} on p.~300, defines a formation as follows:
\begin{quotation}
Eine \emph{Menge} [\emph{we emphasize}] $F$ von Gruppen mit den Eigenschaften
\begin{enumerate}
\item[(2.1)] $\mathfrak{G}\in F$, $\mathfrak{G}^{\varphi}$ homomorphes Bild von $\mathfrak{G}$  $ \Rightarrow  $ $\mathfrak{G}^{\varphi}\in F$,
\item[(2.2)] $\mathfrak{N}_{1}$, $\mathfrak{N}_{2}$ Normalteiler von $\mathfrak{G}$, $\mathfrak{G}/\mathfrak{N}_{1}\in F$, $\mathfrak{G}/\mathfrak{N}_{2}\in F$ $ \Rightarrow $ $\mathfrak{G}/\mathfrak{N}_{1}\cap\mathfrak{N}_{2}\in F$
\end{enumerate}
hei{\ss}e Formation.
\end{quotation}

For the purposes of the present introduction, the following terminology is used: By $\boldsymbol{\mathcal{U}}$ we mean a fixed Grothendieck universe; by $\boldsymbol{\mathcal{U}}^{S}$, for a set of sorts $S\in \boldsymbol{\mathcal{U}}$, the set of all $S$-sorted sets, i.e., mappings $A$ from $S$ to $\boldsymbol{\mathcal{U}}$; by $\mathbf{T}_{\Sigma}(A)$ the free $\Sigma$-algebra on the $S$-sorted set $A$; by $\mathbf{Cgr}(\mathbf{T}_{\Sigma}(A))$ the algebraic lattice of all congruences on $\mathbf{T}_{\Sigma}(A)$; by $\mathrm{Filt}(\mathbf{Cgr}(\mathbf{T}_{\Sigma}(A)))$ the set of all filters of $\mathbf{Cgr}(\mathbf{T}_{\Sigma}(A))$; for a congruence $\Theta$ on $\mathbf{T}_{\Sigma}(B)$, by $\mathrm{pr}^{\Theta}$ the canonical projection from $\mathbf{T}_{\Sigma}(B)$ to $\mathbf{T}_{\Sigma}(B)/\Theta$; for an $L\subseteq \mathrm{T}_{\Sigma}(A)$,  by $\Omega^{\mathbf{T}_{\Sigma}(A)}(L)$ the greatest congruence which saturates $L$; and, under a condition on every free $\Sigma$-algebra $\mathbf{T}_{\Sigma}(A)$, specified below, by $\mathrm{Lang_{r}}(\mathbf{T}_{\Sigma}(A))$ the set of all $L\subseteq \mathrm{T}_{\Sigma}(A)$ such that $\Omega^{\mathbf{T}_{\Sigma}(A)}(L)$ is a congruence of finite index on $\mathbf{T}_{\Sigma}(A)$.

In this article, for a fixed set of sorts $S$ in $\boldsymbol{\mathcal{U}}$ and a fixed $S$-sorted signature $\Sigma$, we firstly consider the following types of many-sorted formations. ($\mathrm{I}$) Formations of $\Sigma$-algebras. That is, sets of $\Sigma$-algebras $\mathcal{F}$ closed under isomorphisms, homomorphic images, and finite subdirect products. And ($\mathrm{II}$) formations of congruences with respect to $\Sigma$. That is, choice functions $\mathfrak{F}$ for the family  $(\mathrm{Filt}(\mathbf{Cgr}(\mathbf{T}_{\Sigma}(A))))_{A\in \boldsymbol{\mathcal{U}}^{S}}$ such that, for every $S$-sorted sets $A$, $B$, every homomorphism $f$ from $\mathbf{T}_{\Sigma}(A)$ to $\mathbf{T}_{\Sigma}(B)$, and every $\Theta\in \mathfrak{F}(B)$, if $\mathrm{pr}^{\Theta}\circ f\colon \mathbf{T}_{\Sigma}(A)\mor \mathbf{T}_{\Sigma}(B)/\Theta$ is surjective, then $\mathrm{Ker}(\mathrm{pr}^{\Theta}\circ f)\in \mathfrak{F}(A)$. Let us point out that the notion of formation of congruences with respect to $\Sigma$ is a generalization to the many-sorted case of the definition presented in~\cite{bcrr15}, on $\mathrm{p}.\, 186$. And our first main result concerning the aforementioned formations is the proof that there exists an isomorphism between the algebraic lattice of all $\Sigma$-algebra formations and the algebraic lattice of all $\Sigma$-congruence formations.

Before proceeding any further we remark that with regard to the congruence approach for many-sorted algebras adopted by us in this article, it was explored for monoids in other papers (e.g., in~\cite{bcrr15} and~\cite{cll15}). Actually, one of the most significant efforts known to us in this last direction was made by Denis Th\'{e}rien in~\cite{th80}. There, Th\'{e}rien considers the problem of providing an algebraic classification of regular languages. Actually, one of his most interesting contributions is the proof that the $\ast$-varieties of congruences are in one-to-one correspondence with the varieties of regular languages and with the pseudovarieties of monoids---which is an extension of Eilenberg's variety theorem. For the case of monoids, the main difference between an $\ast$-variety of congruences and a formation of congruences is that Th\'{e}rien only considers finite index congruences. Moreover, he does nor require that the composition of the corresponding homomorphisms be surjective. The congruence approach is very helpful because it is fundamentally constructive and one can systematically generate $\ast$-varieties of congruences of increasing complexity.


After the above remark, we further note that in this article, for a fixed set of sorts $S$, a fixed $S$-sorted signature $\Sigma$, and under the hypothesis that, for every $S$-sorted set $A$ in $\boldsymbol{\mathcal{U}}^{S}$, the support of $\mathbf{T}_{\Sigma}(A)$ is finite, we also secondly and finally consider the following types of many-sorted formations.  ($\mathrm{III}$) Formations of finite index congruences with respect to $\Sigma$. That is, formations of congruences $\mathfrak{F}$ with respect to $\Sigma$ such that, for every $S$-sorted set $A$, $\mathfrak{F}(A)\subseteq \mathrm{Cgr}_{\mathrm{fi}}(\mathbf{T}_{\Sigma}(A))$, where $\mathrm{Cgr}_{\mathrm{fi}}(\mathbf{T}_{\Sigma}(A))$ is the filter of the algebraic lattice $\mathbf{Cgr}(\mathbf{T}_{\Sigma}(A))$ formed by those congruences that are of finite index. ($\mathrm{IV}$) Formations of finite $\Sigma$-algebras. And ($\mathrm{V}$)  formations of regular languages with respect to $\Sigma$. That is, choice functions $\mathcal{L}$ for $(\mathrm{Sub}(\mathrm{Lang_{r}}(\mathbf{T}_{\Sigma}(A))))_{A\in \boldsymbol{\mathcal{U}}^{S}}$, satisfying the following conditions: (1) for every $A\in \boldsymbol{\mathcal{U}}^{S}$, $\mathcal{L}(A)$ contains all languages of $\mathrm{T}_{\Sigma}(A)$ saturated by the greatest congruence on $\mathbf{T}_{\Sigma}(A)$, (2) for every $A\in \boldsymbol{\mathcal{U}}^{S}$ and every $L$, $L' \in \mathcal{L}(A)$, $\mathcal{L}(A)$ contains all languages of $\mathrm{T}_{\Sigma}(A)$ saturated by $\Omega^{\mathbf{T}_{\Sigma}(A)}(L)\cap \Omega^{\mathbf{T}_{\Sigma}(A)}(L')$, the intersection of the cogenerated congruences by $L$ and $L'$, respectively, and (3) for every $S$-sorted sets $A$, $B$, every $M\in \mathcal{L}(B)$, and every homomorphism $f$ from $\mathbf{T}_{\Sigma}(A)$ to $\mathbf{T}_{\Sigma}(B)$, if $\mathrm{pr}^{\Omega^{\mathbf{T}_{\Sigma}(B)}(M)}\circ f$ is an epimorphism, then $\mathcal{L}(A)$ contains all languages of $\mathrm{T}_{\Sigma}(A)$ saturated by  $\mathrm{Ker}(\mathrm{pr}^{\Omega^{\mathbf{T}_{\Sigma}(B)}(M)}\circ f)$. This last definition is, ultimately, based on that  presented, for the monoid case, in~\cite{bcrr15} on $\mathrm{p}.\, 187$. However, in this article, in contrast with~\cite{bcrr15}, no appeal to coalgebras is needed since all relevant notions can be stated by using saturations under congruences. And our second main result concerning the aforementioned formations is the proof that the algebraic lattices of all $\Sigma$-finite index congruence formations, of all $\Sigma$-finite algebra formations, and of all $\Sigma$-regular language formations are isomorphic.

Let us point out that the use of formations in the field of many-sorted algebra, as we do in this article, seems, to the best of our knowledge, to be new. The generality we have achieved in this work by using the many-sorted algebras encompasses not only the automata case and their generalizations, but also every type of action of an algebraic construct on another. Moreover, in the light of the results obtained, we think that the original Eilenberg's variety theorem can now be considered as a theorem of many-sorted universal algebra.

We next proceed to succinctly summarize the contents of the subsequent sections of this article. The reader will find a more detailed explanation at the beginning of the succeeding sections.

In Section 2, for the convenience of the reader, we recall, mostly without proofs, for a set of sorts $S$ and an $S$-sorted signature $\Sigma$, those notions and constructions of the theories of $S$-sorted sets and of $\Sigma$-algebras which are indispensable to define in the subsequent sections those others which will allow us to achieve the above  mentioned  main results, thus making, so we hope, our exposition self-contained.

In Section 3 we define, for an $S$-sorted signature $\Sigma$, the concepts of formation of algebras with respect to $\Sigma$, of formation of congruences with respect to $\Sigma$, and of Shemetkov\!\!$\And$\!\!Skiba-formation of algebras with respect to $\Sigma$, which is a generalization to the many-sorted case of that proposed in~\cite{shsk89}, and of which we prove that is equivalent to that of formation of algebras with respect to $\Sigma$. Besides, we investigate the properties of the aforementioned formations and prove an Eilenberg type theorem which states an isomorphism between the algebraic lattice of all $\Sigma$-algebra formations and the algebraic lattice of all $\Sigma$-congruence formations.

In Section 4 we define, for a $\Sigma$-algebra, the concepts of elementary translation and of translation with respect to it, and provide, by using the just mentioned notions, two characterizations of the congruences on a $\Sigma$-algebra. Moreover, we investigate the relationships between the translations and the homomorphisms between $\Sigma$-algebras.

In Section 5, for a $\Sigma$-algebra $\mathbf{A}$, we define, by making use of the translations, a mapping $\Omega^{\mathbf{A}}$ from $\mathrm{Sub}(A)$, the set of all subsets of the underlying $S$-sorted set $A$ of $\mathbf{A}$, to $\mathrm{Cgr}(\mathbf{A})$, the set of all $S$-sorted congruences on $\mathbf{A}$, which assigns to a subset $L$ of $A$ the, so-called, congruence cogenerated by $L$, and investigate its properties.

In Section 6 we define, for an $S$-sorted signature $\Sigma$ and under a suitable condition on the free $\Sigma$-algebras, the concepts of formation of finite index congruences with respect to $\Sigma$, of formation of finite $\Sigma$-algebras, of formation of regular languages with respect to $\Sigma$, and of Ballester\!$\And$\!Pin\!$\And$\!Soler-formation of regular languages  with respect to $\Sigma$, which is a generalization to the many-sorted case of that proposed in~\cite{bps14}, and of which we prove that is equivalent to that of formation of regular languages with respect to $\Sigma$. Moreover, we investigate the properties of the aforementioned formations and prove that the algebraic lattices of all $\Sigma$-finite index congruence formations, of all $\Sigma$-finite algebra formations, and of all $\Sigma$-regular language formations are isomorphic.

Our underlying set theory is $\mathbf{ZFSK}$, Zermelo-Fraenkel-Skolem set theory (also known as Zermelo-Fraenkel set theory), plus the existence of a Grothendieck(-Sonner-Tarski) universe $\ensuremath{\boldsymbol{\mathcal{U}}}$, fixed once and for all (see~\cite{sM98}, pp.~21--24). We recall that the elements of $\ensuremath{\boldsymbol{\mathcal{U}}}$ are called $\ensuremath{\boldsymbol{\mathcal{U}}}$-small sets and the subsets of $\ensuremath{\boldsymbol{\mathcal{U}}}$ are called $\ensuremath{\boldsymbol{\mathcal{U}}}$-large sets or classes. Moreover, from now on $\mathbf{Set}$ stands for the category of sets, i.e., the category whose object set, $\mathrm{Ob}(\mathbf{Set})$, is $\ensuremath{\boldsymbol{\mathcal{U}}}$, and whose morphism set, $\mathrm{Mor}(\mathbf{Set})$, is the set of all mappings between $\ensuremath{\boldsymbol{\mathcal{U}}}$-small sets (notice that $\mathrm{Mor}(\mathbf{Set})\subseteq \ensuremath{\boldsymbol{\mathcal{U}}}$ and that, for every $A$, $B\in \ensuremath{\boldsymbol{\mathcal{U}}}$, the hom-set $\mathrm{Hom}_{\mathbf{Set}}(A,B) = \mathrm{Hom}(A,B)\in \ensuremath{\boldsymbol{\mathcal{U}}}$).

In all that follows we use standard concepts and constructions from category theory, see e.g., \cite{sM98}; classical universal algebra, see e.g., \cite{gb15}, \cite{pC81}, \cite{gG08}, and \cite{mmt87}; many-sorted universal algebra, see e.g., \cite{jB68}, \cite{gm85}, \cite{h63}, and \cite{m76}; lattice theory, see e.g., \cite{bir79} and \cite{gG11}, and set theory, see e.g., \cite{nB70}, \cite{hE77}, and \cite{dM69}. Nevertheless, regarding set theory, we have adopted the following conventions. Between ordinals the binary relation ``$<$'' is identified with ``$\in$''; thus for a \emph{(von Neumann) ordinal} $\alpha$ we have that $\alpha = \{\,\beta\mid \beta\in \alpha\,\}$, and $\mathbb{N}$, the first transfinite ordinal, is the set of all \emph{natural numbers}. For a mapping $f\colon A\mor B$, a subset $X$ of $A$, and a subset $Y$ of $B$, we denote by $f^{-1}[Y]$ the \emph{inverse image of} $Y$ \emph{under} $f$, and by $f[X]$ the \emph{direct image of} $X$ \emph{under} $f$. For a set $B$, a set $I$, a family of sets $(A_{i})_{i\in I}$, and a family of mappings $(f_{i})_{i\in I}$ in $\prod_{i\in I}\mathrm{Hom}(B,A_{i})$, we denote by $\left<f_{i}\right>_{i\in I}$ the unique mapping from $B$ to $\prod_{i\in I}A_{i}$ such that, for every $i\in I$, $f_{i} = \mathrm{pr}_{i}\comp \left<f_{i}\right>_{i\in I}$, where, for every $i\in I$, $\mathrm{pr}_{i}$ is the canonical projection from $\prod_{i\in I}A_{i}$ to $A_{i}$.

More specific assumptions, conditions, and conventions will be included and explained in the successive sections.

\section{Preliminaries.}

In this section we introduce those basic notions and constructions which we shall need to define in the subsequent sections those others which will allow us to achieve the aforementioned main results of this article. Specifically, for a set (of sorts) $S$ in $\ensuremath{\boldsymbol{\mathcal{U}}}$, we begin by recalling the concept of free monoid on $S$, which will be fundamental for defining the concept of $S$-sorted signature. Following this we define the concepts of $S$-sorted set, $S$-sorted mapping from an $S$-sorted set to another, and the corresponding category. Moreover, we define the subset relation between $S$-sorted sets, the notion of finiteness as applied to $S$-sorted sets, some special objects of the category of $S$-sorted sets---in particular, the deltas of Kronecker---, the concept of support of an $S$-sorted set, and its properties, the notion of $S$-sorted equivalence on an $S$-sorted set, the saturation of an $S$-sorted set with respect to an $S$-sorted equivalence on an $S$-sorted set, and its properties, the quotient $S$-sorted set of an $S$-sorted set by an $S$-sorted equivalence on it, and the usual set-theoretic operations on the $S$-sorted sets.

Afterwards, for a set (of sorts) $S$ in $\ensuremath{\boldsymbol{\mathcal{U}}}$, we define the notion of $S$-sorted signature. Next, for an $S$-sorted signature $\Sigma$, we define the concepts of $\Sigma$-algebra, $\Sigma$-homomorphism (or, to abbreviate, homomorphism) from a $\Sigma$-algebra to another, and the corresponding category. Moreover, we define the notions of support of a $\Sigma$-algebra, of finite $\Sigma$-algebra, and of subalgebra of a $\Sigma$-algebra, the construction of the product of a family of $\Sigma$-algebras, the concepts of subfinal $\Sigma$-algebra and of congruence on a $\Sigma$-algebra, the constructions of the quotient $\Sigma$-algebra of a $\Sigma$-algebra by a congruence on it and of the free $\Sigma$-algebra on an $S$-sorted set, and the concept of subdirect product of a family of $\Sigma$-algebras.

From now on we make the following assumption: $S$ is a set of sorts in $\ensuremath{\boldsymbol{\mathcal{U}}}$, fixed once and for all.

\begin{definition}
Let $S$ be a set of sorts. The \emph{free monoid on} $S$, denoted by $\mathbf{S}^{\star}$, is $(S^{\star},\curlywedge,\lambda)$, where $S^{\star}$, the set of all \emph{words on} $S$, is $\bigcup_{n\in\mathbb{N}}\mathrm{Hom}(n,S)$, the set of all mappings $w\colon n\mor S$ from some $n\in \mathbb{N}$ to $S$, $\curlywedge$, the \emph{concatenation} of words on $S$, is the binary operation on $S^{\star}$ which sends a pair of words $(w,v)$ on $S$ to the mapping $w\curlywedge v$ from $\bb{w}+\bb{v}$ to $S$, where $\bb{w}$ and $\bb{v}$ are the lengths ($\equiv$ domains) of the mappings $w$ and $v$, respectively, defined as follows:
$$
w\bconcat v
\nfunction
{\bb{w}+\bb{v}}{S}
{i}{
\begin{cases}
w_{i}, & \text{if $0\leq i < \bb{w}$;}\\
v_{i-\bb{w}}, & \text{if $\bb{w}\leq i < \bb{w}+\bb{v}$,}
\end{cases}
}
$$
and $\lambda$, the \emph{empty word on} $S$, is the unique mapping from $0 = \varnothing$ to $S$.
\end{definition}

\begin{definition}
Let $S$ be a set of sorts. An $S$-\emph{sorted set} is a function $A = (A_{s})_{s\in S}$ from $S$ to $\ensuremath{\boldsymbol{\mathcal{U}}}$. If $A$ and $B$ are $S$-sorted sets, an $S$-\emph{sorted mapping from} $A$ \emph{to} $B$ is an $S$-indexed family $f = (f_{s})_{s\in S}$, where, for every $s$ in $S$, $f_{s}$ is a mapping from $A_{s}$ to  $B_{s}$. Thus, an $S$-sorted mapping from $A$ to $B$ is an element of $\prod_{s\in S}\mathrm{Hom}(A_{s}, B_{s})$, where, for every $s\in S$, $\mathrm{Hom}(A_{s}, B_{s})$ is the set of all mappings from $A_{s}$ to $B_{s}$. We denote by $\mathrm{Hom}(A,B)$ the set of all $S$-sorted mappings from $A$ to $B$. From now on, $\mathbf{Set}^{S}$ stands for the category of $S$-sorted sets and $S$-sorted mappings.
\end{definition}

\begin{definition}
Let $S$ be a set of sorts, $I$ a set in $\ensuremath{\boldsymbol{\mathcal{U}}}$, and $(A^{i})_{i\in I}$ an $I$-indexed family of $S$-sorted sets. Then the \emph{product} of $(A^{i})_{i\in I}$, denoted by $\prod_{i\in I}A^{i}$, is the $S$-sorted set defined, for every $s\in S$, as $\left(\prod\nolimits_{i\in I}A^{i}\right)_{s} = \prod\nolimits_{i\in I}A^{i}_{s}$. Moreover, for every $i\in I$, the \emph{ith canonical projection}, $\mathrm{pr}^{i} = (\mathrm{pr}^{i}_{s})_{s\in S}$, is the $S$-sorted mapping from  $\prod_{i\in I}A^{i}$ to $A^{i}$ defined, for every $s\in S$, as follows:
$$
\mathrm{pr}^{i}_{s}
\nfunction
{\prod_{i\in I}A^{i}_{s}}
{A^{i}_{s}}
{(a_{i})_{i\in I}}
{a_{i}}
$$
On the other hand, if $B$ is an $S$-sorted set, $I$ a set of indexes, and $(f^{i})_{i\in I}$ an $I$-indexed family of $S$-sorted mappings, where, for every $i\in I$, $f^{i}$ is an $S$-sorted mapping from $B$ to $A^{i}$, then we denote by $\left<f^{i}\right>$ the unique $S$-sorted mapping $f$ from $B$ to $\prod_{i\in I}A^{i}$ such that, for every $i\in I$, $\mathrm{pr}^{i}\circ f = f^{i}$.

The remaining set-theoretic operations on $S$-sorted sets: $\times$ (binary product), $\coprod$ (coproduct), $\amalg$ (binary coproduct), $\bigcup$ (union), $\cup$ (binary union), $\bigcap$ (intersection), $\cap$ (binary intersection), $\complement_{A}$ (complement of an $S$-set with respect to a given $S$-sorted $A$), and $-$ (difference), are defined in a similar way, i.e., componentwise.
\end{definition}

\begin{definition}
Let $S$ be a set of sorts. An $S$-sorted set $A$ is \emph{subfinal} if, for every $s\in S$, $\mathrm{card}(A_{s})\leq 1$. We denote by $1^{S}$, or, to abbreviate, by $1$, the (standard) final $S$-sorted set of $\mathbf{Set}^{S}$, which is $1^{S} = (1)_{s\in S}$, and by $\varnothing^{S}$ the initial $S$-sorted set, which is $\varnothing^{S} = (\varnothing)_{s\in S}$.
\end{definition}

\begin{definition}
Let $S$ be a set of sorts. If $A$ and $B$ are $S$-sorted sets, then we will say that $A$ is a \emph{subset} of $B$, denoted by $A\subseteq B$, if, for every $s\in S$, $A_{s}\subseteq B_{s}$. We denote by $\mathrm{Sub}(A)$ the set of all $S$-sorted sets $X$ such that $X\subseteq A$.
\end{definition}

\begin{definition}
Given a sort $t\in S$ we call \emph{delta of Kronecker in} $t$, the $S$-sorted set $\delta^{t} = (\delta^{t}_{s})_{s\in S}$ defined, for every $s\in S$, as follows:
\begin{equation*}
\delta^{t}_{s} =
\begin{cases}
1, &\text{if $s = t$;}\\
\vacio, & \text{otherwise.}
\end{cases}
\end{equation*}
Let $t$ be a sort in $S$ and $X$ a set, then we denote by $\delta^{t,X}$ the $S$-sorted set defined, for every $s\in S$, as follows:
\begin{equation*}
\delta^{t,X}_{s} =
\begin{cases}
X, &\text{if $s = t$;}\\
\vacio, & \text{otherwise.}
\end{cases}
\end{equation*}
Let us notice that $\delta^{t}$ is $\delta^{t,1}$, i.e., the deltas of Kronecker, are particular cases of the $S$-sorted sets $\delta^{t,X}$ (however, see the remark immediately below). Therefore we will use $\delta^{t}$ or $\delta^{t,1}$.
\end{definition}

\begin{remark}
For a sort $t\in S$ and a set $X$, the $S$-sorted set $\delta^{t,X}$ is isomorphic to the $S$-sorted set $\coprod_{x\in X}\delta^{t}$, i.e., to the coproduct of the family $(\delta^{t})_{x\in X}$.

For every sort $t\in S$ we have a functor $\delta^{t,\cdot}$ from $\mathbf{Set}$ to $\mathbf{Set}^{S}$. In fact, for every set $X$, $\delta^{t,\cdot}(X) = \delta^{t,X}$, and, for every mapping $f\colon X\mor Y$, $\delta^{t,\cdot}(f) = \delta^{t,f}$, where, for $s\in S$, $\delta^{t,f}_{s} = \mathrm{id}_{\varnothing}$, if $s\neq t$, and $\delta^{t,f}_{t} = f$. Moreover, for every $t\in S$, the object mapping of the functor $\delta^{t,\cdot}$ is injective and $\delta^{t,\cdot}$ is full and faithful. Hence, for every $s\in S$, $\delta^{t,\cdot}$ is a full embedding from $\mathbf{Set}$ to $\mathbf{Set}^{S}$.

The final object $1^{S}$ does not generates ($\equiv$ separates)
$\mathbf{Set}^{S}$, but the set $\{\,\delta^{s}\mid s\in S\,\}$, of the deltas of Kronecker, is a generating ($\equiv$ separating) set for the category $\mathbf{Set}^{S}$.
Therefore, every $S$-sorted set $A$ can be represented as a coproduct of copowers of deltas of Kronecker, i.e., $A$ is
naturally isomorphic to $\coprod_{s\in S}\mathrm{card}(A_{s})\cdot\delta^{s}$, where, for every $s\in S$,
$\mathrm{card}(A_{s})\cdot\delta^{s}$ is the copower of the family $(\delta^{s})_{\alpha\in \mathrm{card}(A_{s})}$, i.e., the coproduct of $(\delta^{s})_{\alpha\in \mathrm{card}(A_{s})}$.

To this we add the following. (1) That $\{\,\delta^{s}\mid s\in S\,\}$ is the set of atoms of the Boolean algebra $\mathbf{Sub}(1^{S})$, of subobjects of $1^{S}$. (2) That the Boolean algebras $\mathbf{Sub}(1^{S})$ and $\mathbf{Sub}(S)$ are isomorphic. (3) That, for every $s\in S$, $\delta^{s}$ is a projective object. And (4) that, for every $s\in S$, every $S$-sorted mapping from $\delta^{s}$ to another $S$-sorted set is a monomorphism.

In view of the above, it must  be concluded that the deltas of Kronecker are of crucial importance for many-sorted sets and associated fields.
\end{remark}

Before proceeding any further, let us point out that it is no longer unusual to find in the literature devoted to many-sorted algebra the following. (1) That an $S$-sorted set $A$ is defined in such a way that $\mathrm{Hom}(1^{S},A)\neq \varnothing$, or, what is equivalent, requiring that, for every $s\in S$, $A_{s}\neq \varnothing$. This has as an immediate consequence that the corresponding category is not even finite cocomplete. Since cocompleteness (and completeness) are desirable properties for a category, we exclude such a convention in our work (the admission of $\varnothing^{S}$ is crucial in many applications). And (2) that an $S$-sorted set $A$ must be such that, for every $s$, $t\in S$, if $s\neq t$, then $A_{s}\cap A_{t} = \varnothing$. We also exclude such a requirement (the possibility of a common underlying set for the different sorts is very important in many applications). The above conventions are possibly based on the untrue widespread belief that many-sorted equational logic is an inessential variation of single-sorted equational logic (one can find definitive refutations to the just mentioned belief, e.g., in~\cite{mb72}, \cite{cs05}, \cite{fs90}, \cite{gm85}, \cite{jh85}, and \cite{m76}).

We next define for an $S$-sorted mapping the associated mappings of direct and inverse image formation.

\begin{definition}
Let $f\colon A\mor B$ be an $S$-sorted mapping. Then
\begin{enumerate}
\item The mapping $f[\farg]$ of $f$-\emph{direct image formation} is the mapping defined as follows:
$$
f[\farg] \nfunction
{\mathrm{Sub}(A)}
{\mathrm{Sub}(B)}
{X}
{ f[X] = (f_{s}[X_{s}])_{s\in S}  }
$$

\item The mapping $f^{-1}[\farg]$ of $f$-\emph{inverse image formation} is the mapping defined as follows:
$$
f^{-1}[\farg] \nfunction
{\mathrm{Sub}(B)}
{\mathrm{Sub}(A)}
{Y}
{ f^{-1}[Y] = (f_{s}^{-1}[Y_{s}] )_{s\in S} }
$$
\end{enumerate}
\end{definition}

\begin{definition}
Let $S$ be a set of sorts. An $S$-sorted set $A$ is \emph{finite} if $\coprod A = \bigcup_{s\in S}(A_{s}\times \{s\})$ is finite. We say that $A$ is a \emph{finite} subset of $B$ if $A$ is finite and $A\subseteq B$.
\end{definition}

\begin{remark}
An $S$-sorted set $A$ is finite if, and only if, the covariant hom-functor $\mathrm{H}(A,\cdot)\colon \mathbf{Set}^{S}\mor \mathbf{Set}$ is finitary, i.e., if, and only if, for every $\ensuremath{\boldsymbol{\mathcal{U}}}$-small upward-directed preordered set $\mathbf{I}$ and every functor $D$ from $\mathbf{\mathbf{I}}$, the category canonically associated to $\mathbf{I}$, to $\mathbf{Set}^{S}$, if $((f^{i})_{i\in \mathrm{Ob}(\mathbf{\mathbf{I}})},L)$ is an inductive limit of $D$, then $((F(f^{i}))_{i\in \mathrm{Ob}(\mathbf{I})},F(L))$ is an epi-sink. Moreover, the $S$-sorted set $A$ is finite if, and only if, the functor $\mathrm{H}(A,\cdot)$ from $\mathbf{Set}^{S}$ to $\mathbf{Set}$ is strongly finitary, i.e., if, and only if, for every $\ensuremath{\boldsymbol{\mathcal{U}}}$-small upward-directed preordered set $\mathbf{I}$ and every functor $D$ from $\mathbf{I}$ to $\mathbf{Set}^{S}$, if $((f^{i})_{i\in \mathrm{Ob}(\mathbf{I})},L)$ is an inductive limit of $D$, then $((F(f^{i}))_{i\in \mathrm{Ob}(\mathbf{I})},F(L))$ is an inductive limit.
\end{remark}

\begin{definition}
Let $S$ be a set of sorts. Then the \emph{support of} $A$, denoted by $\mathrm{supp}_{S}(A)$, is the set $\{\,s\in S\mid A_{s}\neq \varnothing\,\}$.
\end{definition}

\begin{remark}
An $S$-sorted set $A$ is finite if, and only if, $\mathrm{supp}_{S}(A)$ is finite and, for every $s\in \mathrm{supp}_{S}(A)$, $A_{s}$ is finite.
\end{remark}

In the following proposition, for a set of sorts $S$, we gather together the most interesting properties of the mapping   $\mathrm{supp}_{S}\colon\U^{S}\mor \Sub(S)$, the support mapping for $S$, which sends an $S$-sorted set $A$ to $\mathrm{supp}_{S}(S)$.

\begin{proposition}\label{propssupport}
Let $S$ be a set of sorts, $A$, $B$ two $S$-sorted sets, $I$ a set in $\ensuremath{\boldsymbol{\mathcal{U}}}$, and $(A^{i})_{i\in I}$ an $I$-indexed family of $S$-sorted sets. Then the following properties hold:
\begin{enumerate}
\item $\mathrm{Hom}(A,B)\neq \vacio$ if, and only if,
      $\mathrm{supp}_{S}(A)\subseteq\mathrm{supp}_{S}(B)$. Therefore, if $A\subseteq B$, then $\mathrm{supp}_{S}(A)\subseteq\mathrm{supp}_{S}(B)$. Moreover, if $f$ is an $S$-sorted mapping from $A$ to $B$ and $X\subseteq A$, then $\mathrm{supp}_{S}(X) = \mathrm{supp}_{S}(f[X])$.

\item If from $A$ to $B$ there exists a surjective $S$-sorted mapping $f$, then
      $\mathrm{supp}_{S}(A) = \mathrm{supp}_{S}(B)$. Moreover, if $Y\subseteq B$, then $\mathrm{supp}_{S}(Y) = \mathrm{supp}_{S}(f^{-1}[Y])$.

\item $\mathrm{supp}_{S}(\varnothing^{S}) = \varnothing$; $\mathrm{supp}_{S}(1) = S$; $\mathrm{supp}_{S}(\bigcup_{i\in I} A^{i}) = \bigcup_{i\in I}\mathrm{supp}_{S}(A^{i})$; $\mathrm{supp}_{S}(\coprod_{i\in I} A^{i}) = \bigcup_{i\in I}\mathrm{supp}_{S}(A^{i})$; if $I\neq \varnothing$, $\mathrm{supp}_{S}(\bigcap_{i\in I} A^{i}) = \bigcap_{i\in I}\mathrm{supp}_{S}(A^{i})$; $\mathrm{supp}_{S}(\prod_{i\in I}A^{i}) = \bigcap\nolimits_{i\in I}\mathrm{supp}_{S}(A_{i})$; and $\mathrm{supp}_{S}(A)-\mathrm{supp}_{S}(B)\subseteq\mathrm{supp}_{S}(A-B)$.
\end{enumerate}
\end{proposition}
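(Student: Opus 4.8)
The unifying principle is that every operation occurring in the statement---$\mathrm{Hom}$, direct and inverse image, and all the set-theoretic operations---is defined sortwise, and that $\mathrm{supp}_{S}(A)$ records precisely the sorts $s$ with $A_{s}\neq \varnothing$. Consequently each assertion reduces, sort by sort, to an elementary criterion for when the relevant one-sorted set is empty, and the plan is to isolate these one-sorted facts and then quantify over $s\in S$. The facts I would use are: (a) for sets $X$, $Y$, the set $\mathrm{Hom}(X,Y)$ of mappings is nonempty if and only if it is \emph{not} the case that $X\neq \varnothing$ and $Y=\varnothing$ (the only obstruction to a map is a nonempty source with empty target); (b) a product $\prod_{i}X_{i}$ is nonempty if and only if every $X_{i}$ is nonempty, using the axiom of choice available in $\mathbf{ZFSK}$; and (c) for a mapping $g$ and a subset $Z$ of its domain, $g[Z]=\varnothing$ if and only if $Z=\varnothing$.

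For part (1) I would begin from the identification $\mathrm{Hom}(A,B)=\prod_{s\in S}\mathrm{Hom}(A_{s},B_{s})$ coming from the definition of $S$-sorted mapping. By fact (b) this product is nonempty exactly when every factor $\mathrm{Hom}(A_{s},B_{s})$ is nonempty, and by fact (a) that holds exactly when, for all $s$, $A_{s}\neq \varnothing$ implies $B_{s}\neq \varnothing$, which is precisely $\mathrm{supp}_{S}(A)\subseteq \mathrm{supp}_{S}(B)$. The ``Therefore'' clause is then immediate, since $A\subseteq B$ supplies the sortwise inclusions as an element of $\mathrm{Hom}(A,B)$ (equivalently, $A_{s}\subseteq B_{s}$ and $A_{s}\neq\varnothing$ force $B_{s}\neq\varnothing$ directly). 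For the ``Moreover'' clause, $f[X]=(f_{s}[X_{s}])_{s\in S}$, so fact (c) gives $X_{s}\neq\varnothing\iff f_{s}[X_{s}]\neq\varnothing$ for each $s$, whence $\mathrm{supp}_{S}(X)=\mathrm{supp}_{S}(f[X])$.

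Part (2) is the sortwise observation that a surjection $f_{s}\colon A_{s}\to B_{s}$ satisfies $A_{s}\neq\varnothing\iff B_{s}\neq\varnothing$, because a nonempty source forces a nonempty image while an empty source cannot surject onto a nonempty set; this yields $\mathrm{supp}_{S}(A)=\mathrm{supp}_{S}(B)$. The ``Moreover'' clause follows because, for $Y\subseteq B$, surjectivity of each $f_{s}$ makes $f_{s}^{-1}[Y_{s}]$ nonempty precisely when $Y_{s}$ is; note that it is surjectivity, not just the existence of a map, that delivers the forward implication here. Part (3) then collects the sortwise emptiness criteria for the individual operations: the values on $\varnothing^{S}$ and $1^{S}$ are read off directly; for union and coproduct the $s$-component is nonempty iff some $A^{i}_{s}$ is; for the product one invokes fact (b), which also disposes of the case $I=\varnothing$ since the empty product is a singleton and the empty intersection of supports is $S$; and for the difference one checks that $s\in\mathrm{supp}_{S}(A)-\mathrm{supp}_{S}(B)$, i.e.\ $A_{s}\neq\varnothing$ and $B_{s}=\varnothing$, forces $(A-B)_{s}=A_{s}\neq\varnothing$, which gives only the stated inclusion because overlap can leave $A_{s}-B_{s}$ nonempty while $B_{s}\neq\varnothing$.

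I expect the main obstacle to be the intersection identity. The inclusion $\mathrm{supp}_{S}(\bigcap_{i}A^{i})\subseteq\bigcap_{i}\mathrm{supp}_{S}(A^{i})$ is immediate from the monotonicity already established in part (1), and the hypothesis $I\neq\varnothing$ is exactly what keeps the componentwise intersection a legitimate $\ensuremath{\boldsymbol{\mathcal{U}}}$-small set. The reverse inclusion is the substantive content: it asserts that whenever every $A^{i}_{s}$ is nonempty the intersection $\bigcap_{i}A^{i}_{s}$ is nonempty as well, which is precisely the kind of ``nonemptiness of an intersection'' claim that must be scrutinized with care---by contrast with the product identity, where fact (b) settles the matter outright. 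The difference clause is revealingly stated only as an inclusion, for exactly the reason that overlapping components survive under difference, and I would keep that same phenomenon firmly in view when treating the intersection, as it is there that the argument is least automatic.
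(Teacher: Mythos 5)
Your sortwise-reduction strategy is the natural one, and since the paper states this proposition in its preliminaries without any proof, correctness is the only thing to check. Your arguments for (1), for (2), and for the clauses of (3) concerning $\varnothing^{S}$, $1$, unions, coproducts, products, and the difference inclusion are all correct; the appeal to choice for the product clause and for the ``if'' direction of (1) is legitimate and indeed unavoidable, since choosing an element $b_{s}\in B_{s}$ for each $s\in\mathrm{supp}_{S}(A)$ is exactly what constructing an element of $\mathrm{Hom}(A,B)$ requires.

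The genuine problem is the clause you yourself singled out and then left hanging: the reverse inclusion $\bigcap_{i\in I}\mathrm{supp}_{S}(A^{i})\subseteq \mathrm{supp}_{S}(\bigcap_{i\in I}A^{i})$. You say it ``must be scrutinized with care'' and is ``least automatic,'' but you never prove it --- and you cannot, because it is false. Take $S=\{s\}$, $I=\{0,1\}$, $A^{0}_{s}=\{0\}$, $A^{1}_{s}=\{1\}$. Then $\mathrm{supp}_{S}(A^{0})\cap\mathrm{supp}_{S}(A^{1})=\{s\}$, while $(A^{0}\cap A^{1})_{s}=\varnothing$, so $\mathrm{supp}_{S}(A^{0}\cap A^{1})=\varnothing$. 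Nonemptiness of every $A^{i}_{s}$ does not yield nonemptiness of $\bigcap_{i\in I}A^{i}_{s}$: intersections, unlike products, can destroy elements, which is precisely why your fact (b) settles the product identity but has no analogue here. So for intersections only the inclusion $\mathrm{supp}_{S}(\bigcap_{i\in I}A^{i})\subseteq\bigcap_{i\in I}\mathrm{supp}_{S}(A^{i})$ holds (by the monotonicity you established in part (1)), exactly parallel to the difference clause, and the equality asserted in the proposition is an error in the paper's statement rather than something awaiting a cleverer argument. Your instinct that this was the dangerous spot was sound; the missing step was to convert that suspicion into the two-element counterexample above instead of leaving the claim unresolved, since as written your proposal is an incomplete proof of a statement that, in that one clause, is unprovable.
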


\begin{definition}
Let $S$ be a set of sorts. An $S$-\emph{sorted equivalence relation on} (or, to abbreviate, an $S$-\emph{sorted equivalence on}) an $S$-sorted set $A$ is an $S$-sorted relation $\Phi$ on $A$, i.e., a subset $\Phi = (\Phi_{s})_{s\in S}$ of the cartesian product $A\times A = (A_{s}\times A_{s})_{s\in S}$ such that, for every $s\in S$, $\Phi_{s}$ is an equivalence relation on $A_{s}$. We denote by $\mathrm{Eqv}(A)$ the set of all $S$-sorted equivalences on $A$ (which is an algebraic closure system on $A\times A$), by $\mathbf{Eqv}(A)$ the algebraic lattice  $(\mathrm{Eqv}(A),\subseteq)$, by $\nabla^{A}$ the greatest element of $\mathbf{Eqv}(A)$, and by $\Delta^{A}$ the least element of $\mathbf{Eqv}(A)$.

For an $S$-sorted equivalence relation $\Phi$ on $A$, $A/\Phi$, the $S$-\emph{sorted quotient set of} $A$ \emph{by} $\Phi$, is $(A_{s}/\Phi_{s})_{s\in S}$, and $\mathrm{pr}^{\Phi}\colon A\mor A/\Phi$, the \emph{canonical projection from} $A$ \emph{to} $A/\Phi$, is the $S$-sorted mapping $(\mathrm{pr}^{\Phi_{s}})_{s\in S}$, where, for every $s\in S$, $\mathrm{pr}^{\Phi_{s}}$ is the canonical projection from $A_{s}$ to $A_{s}/\Phi_{s}$ (which sends $x$ in $A_{s}$ to $\mathrm{pr}^{\Phi_{s}}(x) = [x]_{\Phi_{s}}$, the $\Phi_{s}$-equivalence class of $x$, in $A_{s}/\Phi_{s}$).

Let $\Phi$ and $\Psi$ be $S$-sorted equivalence on $A$ such that $\Phi\subseteq\Psi$. Then the \emph{quotient of} $\Psi$ \emph{by} $\Phi$, denoted by $\Psi/\Phi$, is the $S$-sorted equivalence $(\Psi_{s}/\Phi_{s})_{s\in S}$ on $A/\Phi$ defined, for every $s\in S$, as follows:
$$
\Psi_{s}/\Phi_{s}=\{([a]_{\Phi_{s}},
[b]_{\Phi_{s}})\in (A_{s}/\Phi_{s})\mid (a,b)\in\Psi_{s}\}.
$$

Let $X$ be a subset of $A$ and $\Phi\in\mathrm{Eqv}(A)$. Then the $\Phi$-\emph{saturation of} $X$ (or, the \emph{saturation of} $X$ \emph{with respect to} $\Phi$), denoted by $[X]^{\Phi}$, is the $S$-sorted set defined, for every $s\in S$, as follows:
$$
\textstyle[X]^{\Phi}_{s} = \{a\in A_{s}\mid
X_{s}\cap [a]_{\Phi_{s}}\neq\vacio\}= \bigcup_{x\in X_{s}}[x]_{\Phi_{s}} = [X_{s}]^{\Phi_{s}}.
$$
Let $X$ be a subset of $A$ and $\Phi\in\mathrm{Eqv}(A)$. Then we say that $X$ is $\Phi$-\emph{saturated} if, and only if, $X = [X]^{\Phi}$. We will denote by $\Phi\!-\!\mathrm{Sat}(A)$ the subset of $\mathrm{Sub}(A)$ defined as $\Phi\!-\!\mathrm{Sat}(A) = \{X\in \mathrm{Sub}(A)\mid X = [X]^{\Phi}\}$.
\end{definition}

\begin{remark}
Let $S$ be a set of sorts, $A$ an $S$-sorted set, and $\Phi$ an $S$-sorted equivalence on $A$. Then, by Proposition~\ref{propssupport}, $\mathrm{supp}_{S}(A) = \mathrm{supp}_{S}(A/\Phi)$.
\end{remark}

\begin{remark}
Let $S$ be a set of sorts, $A$ an $S$-sorted set, and $\Phi\in\mathrm{Eqv}(A)$. Then, for an $S$-sorted subset $X$ of
$A$, we have that the $\Phi$-saturation of $X$ is $(\mathrm{pr}^{\Phi})^{-1}[\mathrm{pr}^{\Phi}[X]]$. Therefore, $X$ is $\Phi$-saturated if, and only if, $X \supseteq [X]^{\Phi}$. Besides, $X$ is $\Phi$-saturated if, and only if, there exists a $\mathcal{Y}\subseteq A/\Phi$ such that $X = (\mathrm{pr}^{\Phi})^{-1}[\mathcal{Y}]$.
\end{remark}

\begin{proposition}\label{PropIncSat}
Let $A$ be an $S$-sorted set and $\Phi$, $\Psi\in\mathrm{Eqv}(A)$, then
$$
\Phi\subseteq \Psi\, \text{if, and only if, }\, \forall X\subseteq A\;([[X]^{\Psi}]^{\Phi}=[X]^{\Psi}).
$$
\end{proposition}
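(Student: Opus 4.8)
The plan is to recognize that the right-hand side of the stated equivalence says precisely that \emph{every $\Psi$-saturated $S$-sorted subset of $A$ is also $\Phi$-saturated}, and then to establish each implication by a direct, componentwise argument at each sort, using only reflexivity and transitivity of the equivalences together with the description $[Z]^{\Theta}_{s} = \bigcup_{z\in Z_{s}}[z]_{\Theta_{s}}$.

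First I would record the reformulation. For any $S$-sorted subset $Y$ of $A$ one always has $Y\subseteq [Y]^{\Phi}$, since each element lies in its own $\Phi$-class by reflexivity; hence, by the remark immediately preceding this proposition, $Y$ is $\Phi$-saturated if, and only if, $[Y]^{\Phi}\subseteq Y$, i.e.\ $[Y]^{\Phi}=Y$. Taking $Y = [X]^{\Psi}$ shows that the displayed equality $[[X]^{\Psi}]^{\Phi}=[X]^{\Psi}$ holds for every $X\subseteq A$ exactly when every subset of $A$ of the form $[X]^{\Psi}$, that is, every $\Psi$-saturated subset of $A$, is $\Phi$-saturated.

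For the forward implication, assuming $\Phi\subseteq\Psi$, I would fix $X\subseteq A$ and a sort $s$ and prove $[[X]^{\Psi}]^{\Phi}_{s}\subseteq [X]^{\Psi}_{s}$, the reverse inclusion being the general fact just noted. If $a$ belongs to the left-hand side, then there are $b\in [X]^{\Psi}_{s}$ with $(a,b)\in\Phi_{s}$ and $x\in X_{s}$ with $(b,x)\in\Psi_{s}$; since $\Phi_{s}\subseteq\Psi_{s}$ and $\Psi_{s}$ is transitive, $(a,x)\in\Psi_{s}$, whence $a\in [X]^{\Psi}_{s}$. As $s$ and $X$ were arbitrary, the reformulation gives the displayed equality for all $X$.

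For the converse, assuming the saturation condition, I would fix a sort $s$ and a pair $(a,b)\in\Phi_{s}$ and apply the hypothesis to the test subset $X$ given by $X_{s}=\{a\}$ and $X_{t}=\vacio$ for $t\neq s$. Here $[X]^{\Psi}_{s}=[a]_{\Psi_{s}}$, so the hypothesis yields $[[a]_{\Psi_{s}}]^{\Phi_{s}}=[a]_{\Psi_{s}}$. Since $a\in [a]_{\Psi_{s}}$ and $(a,b)\in\Phi_{s}$, the element $b$ lies in $[[X]^{\Psi}]^{\Phi}_{s}=[a]_{\Psi_{s}}$, that is, $(a,b)\in\Psi_{s}$; as $s$ and the pair were arbitrary, $\Phi\subseteq\Psi$. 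The only point that genuinely requires care is the choice of these singleton test sets in the converse direction, which is what lets a statement about all saturations descend to a statement about individual pairs; the remainder is a routine unravelling of the componentwise definition of saturation.
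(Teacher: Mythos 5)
Your proof is correct and takes essentially the same approach as the paper: the forward direction is the same componentwise transitivity argument, and your converse rests on the same key idea of a delta-style test set concentrated at one sort on one element. The only cosmetic difference is that the paper argues contrapositively with the test set $\delta^{s,[a]_{\Psi_{s}}}$, while you argue directly with the singleton $\delta^{s,\{a\}}$; since $[\delta^{s,\{a\}}]^{\Psi}=\delta^{s,[a]_{\Psi_{s}}}$, the two arguments coincide after taking the $\Psi$-saturation.
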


\begin{proof}
Let us assume that $\Phi\subseteq \Psi$ and let $X$ be a subset of $A$. In order to prove that $[[X]^{\Psi}]^{\Phi}=[X]^{\Psi}$ it suffices to verify that $[[X]^{\Psi}]^{\Phi}\subseteq [X]^{\Psi}$. Let $s$ be an element of $S$. Then, by definition, $a\in [[X]^{\Psi}]^{\Phi}_{s}$ if, and only if, there exists some $b\in [X]^{\Psi}_{s}$ such that $a\in [b]_{\Phi_{s}}$. Since $\Phi\subseteq \Psi$, we have that $a\in [b]_{\Psi_{s}}$, therefore $a\in [X]^{\Psi}_{s}$.

To prove the converse, let us assume that $\Phi\not\subseteq \Psi$. Then there exists some sort $s\in S$ and elements $a$, $b$ in $A_{s}$ such that $(a,b)\in\Phi_{s}$ and $(a,b)\not\in \Psi_{s}$. Hence $b$ does not belong to $[\delta^{s,[a]_{\Psi_{s}}}]^{\Psi}_s$, whereas it does belong to $[[\delta^{s,[a]_{\Psi_{s}}}]^{\Psi}]^{\Phi}_{s}$. It follows that $[\delta^{s,[a]_{\Psi_{s}}}]^{\Psi}\neq [[\delta^{s,[a]_{\Psi_{s}}}]^{\Psi}]^{\Phi}$.
\end{proof}

\begin{corollary}\label{IncSat}
Let $A$ be an $S$-sorted set and $\Phi$, $\Psi\in\mathrm{Eqv}(A)$. If $\Phi\subseteq \Psi$, then  $\Psi\!-\!\mathrm{Sat}(A)\subseteq\Phi\!-\!\mathrm{Sat}(A)$.
\end{corollary}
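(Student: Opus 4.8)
The plan is to obtain this as an immediate consequence of Proposition~\ref{PropIncSat}. Recall that, by definition, $X\in \Psi\!-\!\mathrm{Sat}(A)$ means precisely that $X = [X]^{\Psi}$, and that proving $X\in \Phi\!-\!\mathrm{Sat}(A)$ amounts to establishing $X = [X]^{\Phi}$. So the whole task reduces to showing that every $\Psi$-saturated subset of $A$ is $\Phi$-saturated whenever $\Phi\subseteq \Psi$.

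First I would fix an arbitrary $X\in \Psi\!-\!\mathrm{Sat}(A)$, so that $X = [X]^{\Psi}$. Since $\Phi\subseteq \Psi$ by hypothesis, Proposition~\ref{PropIncSat} applies and gives $[[X]^{\Psi}]^{\Phi} = [X]^{\Psi}$ (this holds for every subset of $A$, in particular for this one). Substituting the identity $X = [X]^{\Psi}$ into the left-hand argument and into the right-hand side then yields $[X]^{\Phi} = [[X]^{\Psi}]^{\Phi} = [X]^{\Psi} = X$. Hence $X = [X]^{\Phi}$, i.e., $X\in \Phi\!-\!\mathrm{Sat}(A)$, and since $X$ was arbitrary the desired inclusion $\Psi\!-\!\mathrm{Sat}(A)\subseteq \Phi\!-\!\mathrm{Sat}(A)$ follows at once.

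Should one prefer to avoid citing the proposition, the same conclusion can be reached by a short direct computation: saturation is extensive, so $X\subseteq [X]^{\Phi}$ always holds, while $\Phi\subseteq \Psi$ forces $[X]^{\Phi}\subseteq [X]^{\Psi}$ sortwise, since any $a$ related to some $x\in X_{s}$ by $\Phi_{s}$ is a fortiori related to it by $\Psi_{s}$; combining these with $X = [X]^{\Psi}$ gives $X\subseteq [X]^{\Phi}\subseteq [X]^{\Psi} = X$, so all three $S$-sorted sets coincide. I do not anticipate any genuine obstacle here: the statement is a one-line corollary of the preceding proposition, and the only point deserving attention is the bookkeeping of the substitution $X = [X]^{\Psi}$, together with the already recorded observation that $\Phi$-saturatedness is equivalent to the single inclusion $X\supseteq [X]^{\Phi}$.
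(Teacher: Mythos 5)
Your proof is correct and follows exactly the route the paper intends: the corollary is stated without proof as an immediate consequence of Proposition~\ref{PropIncSat}, and your substitution $[X]^{\Phi} = [[X]^{\Psi}]^{\Phi} = [X]^{\Psi} = X$ is precisely that derivation. The alternative direct computation you sketch (extensivity plus monotonicity of saturation in the equivalence relation) is also valid and equally short, so there is nothing to correct.
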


\begin{remark}
If, for an $S$-sorted set $A$, we denote by $(\cdot)\!-\!\mathrm{Sat}(A)$ the mapping from $\mathrm{Eqv}(A)$ to $\mathrm{Sub}(\mathrm{Sub}(A))$ which send $\Phi$ to $\Phi\!-\!\mathrm{Sat}(A)$, then the above corollary means that $(\cdot)\!-\!\mathrm{Sat}(A)$ is an antitone ($\equiv$ order-reversing) mapping from the ordered set $(\mathrm{Eqv}(A),\subseteq)$ to the ordered set $(\mathrm{Sub}(\mathrm{Sub}(A)),\subseteq)$.
\end{remark}

\begin{proposition}\label{NablaSat}
Let $A$ be an $S$-sorted set and $X\subseteq A$. Then $X\in \nabla^{A}\!-\!\mathrm{Sat}(A)$ if, and only if, for every $s\in S$, if $s\in \mathrm{supp}_{S}(X)$, then $X_{s} = A_{s}$.
\end{proposition}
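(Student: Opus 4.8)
The plan is to compute the $\nabla^{A}$-saturation of $X$ explicitly, componentwise, and then read off the stated characterization directly from the resulting formula. First I would recall that $\nabla^{A}$ is the greatest element of $\mathbf{Eqv}(A)$, so that, for every $s\in S$, the relation $(\nabla^{A})_{s}$ is the total relation $A_{s}\times A_{s}$ on $A_{s}$. Consequently, whenever $A_{s}\neq\varnothing$, every $\nabla^{A}$-class in $A_{s}$ is the whole set $A_{s}$; that is, $[a]_{(\nabla^{A})_{s}}=A_{s}$ for each $a\in A_{s}$.

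Next I would substitute this into the defining formula for saturation, $[X]^{\nabla^{A}}_{s}=\bigcup_{x\in X_{s}}[x]_{(\nabla^{A})_{s}}$. This splits into two cases according to whether $X_{s}$ is empty. If $X_{s}\neq\varnothing$, the union ranges over a nonempty index set and each term equals $A_{s}$, so $[X]^{\nabla^{A}}_{s}=A_{s}$; if $X_{s}=\varnothing$, the union is empty, so $[X]^{\nabla^{A}}_{s}=\varnothing$. In other words, recalling that $s\in\mathrm{supp}_{S}(X)$ amounts to $X_{s}\neq\varnothing$, one has $[X]^{\nabla^{A}}_{s}=A_{s}$ exactly on the sorts $s\in\mathrm{supp}_{S}(X)$, and $[X]^{\nabla^{A}}_{s}=\varnothing$ on the remaining sorts.

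Finally, I would unwind the definition of $\nabla^{A}\!-\!\mathrm{Sat}(A)$: by definition $X\in\nabla^{A}\!-\!\mathrm{Sat}(A)$ means $X=[X]^{\nabla^{A}}$, i.e., $X_{s}=[X]^{\nabla^{A}}_{s}$ for every $s\in S$. Comparing with the case analysis of the previous paragraph, on a sort $s\notin\mathrm{supp}_{S}(X)$ one has $X_{s}=\varnothing=[X]^{\nabla^{A}}_{s}$, so equality holds automatically; and on a sort $s\in\mathrm{supp}_{S}(X)$ the equality $X_{s}=[X]^{\nabla^{A}}_{s}$ reduces to $X_{s}=A_{s}$. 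Hence $X=[X]^{\nabla^{A}}$ holds if, and only if, $X_{s}=A_{s}$ for every $s\in\mathrm{supp}_{S}(X)$, which is precisely the asserted criterion. I do not anticipate any genuine obstacle here, since the argument is a direct computation; the only delicate point will be the sorts outside the support, where the condition holds vacuously and both $X_{s}$ and $[X]^{\nabla^{A}}_{s}$ are empty, so no requirement need be imposed there.
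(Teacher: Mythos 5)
Your proof is correct and follows essentially the same route as the paper: both rest on the observation that every $\nabla^{A}$-class in $A_{s}$ is all of $A_{s}$, so that $[X]^{\nabla^{A}}_{s}=A_{s}$ whenever $X_{s}\neq\varnothing$ and $[X]^{\nabla^{A}}_{s}=\varnothing$ otherwise. The only difference is presentational: the paper argues the forward direction by contraposition (exhibiting a sort $t$ with $\varnothing\neq X_{t}\subset A_{t}$ and concluding $[X]^{\nabla^{A}}_{t}=A_{t}\neq X_{t}$) and dismisses the converse as straightforward, whereas your componentwise computation of the saturation delivers both directions at once.
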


\begin{proof}
Let us suppose that there exists a $t\in S$ such that $X_{t}\neq \varnothing$ and $X_{t}\neq A_{t}$. Then, since $[X]_{t}^{\nabla^{A}} = \bigcup_{x\in X_{t}}[x]_{\nabla_{A_{t}}}$ and $X_{t}\neq \varnothing$, we have that, for some $y\in X_{t}$, $[y]_{\nabla_{A_{t}}} = A_{t}$. But $X_{t}\subset A_{t}$. Hence $[X]_{t}^{\nabla^{A}}\neq X_{t}$. Therefore $X\nin \nabla^{A}\!-\!\mathrm{Sat}(A)$.

The converse implication is straightforward.
\end{proof}

\begin{remark}
Let $A$ be an $S$-sorted set and $X\subseteq A$. Then, from the above proposition, it follows that $\varnothing^{S}$, $A\in \nabla^{A}\!-\!\mathrm{Sat}(A)$. Moreover, for every $T\subseteq S$, $\bigcup_{t\in T}\delta^{t,A_{t}}\in \nabla^{A}\!-\!\mathrm{Sat}(A)$.
\end{remark}

\begin{proposition}
Let $A$ be an $S$-sorted set, $X\subseteq A$, and $\Phi$, $\Psi\in\mathrm{Eqv}(A)$, then
$[X]^{\Phi\cap\Psi}\subseteq[X]^{\Phi}\cap[X]^{\Psi}$.
\end{proposition}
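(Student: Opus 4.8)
The plan is to reduce the $S$-sorted inclusion to a family of single-sorted inclusions, one for each sort, and then dispatch each of these by a direct element chase. Since the subset relation between $S$-sorted sets holds precisely when it holds componentwise, and since the binary intersection $\Phi\cap\Psi$ of $S$-sorted equivalences is formed componentwise, so that $(\Phi\cap\Psi)_{s} = \Phi_{s}\cap\Psi_{s}$ for every $s\in S$, it suffices to establish, for each $s\in S$, that
$$
[X_{s}]^{\Phi_{s}\cap\Psi_{s}}\subseteq [X_{s}]^{\Phi_{s}}\cap [X_{s}]^{\Psi_{s}}.
$$

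First I would fix $s\in S$ and take $a\in [X_{s}]^{\Phi_{s}\cap\Psi_{s}}$. By the definition of saturation recalled above, this means $X_{s}\cap [a]_{\Phi_{s}\cap\Psi_{s}}\neq\varnothing$, so there exists some $x\in X_{s}$ with $(a,x)\in\Phi_{s}\cap\Psi_{s}$, that is, with $(a,x)\in\Phi_{s}$ and $(a,x)\in\Psi_{s}$ simultaneously. From $(a,x)\in\Phi_{s}$ together with $x\in X_{s}$ I obtain $x\in X_{s}\cap[a]_{\Phi_{s}}$, whence $a\in [X_{s}]^{\Phi_{s}}$; symmetrically, from $(a,x)\in\Psi_{s}$ I obtain $a\in [X_{s}]^{\Psi_{s}}$. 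Hence $a\in [X_{s}]^{\Phi_{s}}\cap [X_{s}]^{\Psi_{s}}$, which completes the componentwise inclusion and therefore the proposition.

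Alternatively, and perhaps more conceptually, the same conclusion follows from the monotonicity of the saturation operation in its equivalence argument: since $[X_{s}]^{\Phi_{s}} = \bigcup_{x\in X_{s}}[x]_{\Phi_{s}}$ and a finer equivalence yields smaller classes, $\Theta\subseteq\Theta'$ implies $[X]^{\Theta}\subseteq [X]^{\Theta'}$. Applying this to the two inclusions $\Phi\cap\Psi\subseteq\Phi$ and $\Phi\cap\Psi\subseteq\Psi$ gives $[X]^{\Phi\cap\Psi}\subseteq [X]^{\Phi}$ and $[X]^{\Phi\cap\Psi}\subseteq [X]^{\Psi}$, and intersecting the right-hand sides yields the claim.

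There is no real obstacle here: the statement is the routine \emph{easy half} of a saturation identity. The only point demanding care is to resist proving more than is claimed, since the reverse inclusion $[X]^{\Phi}\cap[X]^{\Psi}\subseteq[X]^{\Phi\cap\Psi}$ fails in general. Indeed, an element can be $\Phi$-related to one point of $X$ and $\Psi$-related to a \emph{different} point of $X$ without being $(\Phi\cap\Psi)$-related to any point of $X$; so the argument must produce a single common witness $x$, exactly as above, rather than two separate witnesses.
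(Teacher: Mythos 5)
Your proof is correct and takes essentially the same route as the paper: a componentwise element chase in which a single witness $x\in X_{s}$ with $(a,x)\in\Phi_{s}\cap\Psi_{s}$ yields membership in both saturations. The monotonicity remark and the caution about the failure of the reverse inclusion are accurate but not needed.
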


\begin{proof}
Let $s$ be a sort in $S$ and $b\in[X]^{\Phi\cap \Psi}_{s}$. Then, by definition, there exists an $a\in X_{s}$ such that $(a,b)\in (\Phi\cap\Psi)_{s} = \Phi_{s}\cap\Psi_{s}$. Hence, $(a,b)\in \Phi_{s}$ and $(a,b)\in \Psi_{s}$. Therefore $b\in [X]^{\Phi}_{s}$ and $b\in[X]^{\Psi}_{s}$. Consequently, $b\in ([X]^{\Phi}\cap [X]^{\Psi})_{s}$. Thus $[X]^{\Phi\cap\Psi}\subseteq[X]^{\Phi}\cap[X]^{\Psi}$.
\end{proof}

\begin{corollary}
Let $A$ be an $S$-sorted set and $\Phi$, $\Psi\in\mathrm{Eqv}(A)$. Then $\Phi\!-\!\mathrm{Sat}(A)\cap\Psi\!-\!\mathrm{Sat}(A)\subseteq(\Phi\cap\Psi)\!-\!\mathrm{Sat}(A)$.
\end{corollary}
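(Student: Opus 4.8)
The plan is to derive this corollary directly from the immediately preceding proposition, which supplies the inclusion $[X]^{\Phi\cap\Psi}\subseteq[X]^{\Phi}\cap[X]^{\Psi}$ for every $X\subseteq A$, together with the elementary observation that saturation is extensive. The whole argument thus reduces to combining one nontrivial inclusion (furnished by that proposition) with one trivial one.

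First I would fix an arbitrary $X\in\Phi\!-\!\mathrm{Sat}(A)\cap\Psi\!-\!\mathrm{Sat}(A)$. By the definition of $\Phi\!-\!\mathrm{Sat}(A)$ and of $\Psi\!-\!\mathrm{Sat}(A)$, this says precisely that $X=[X]^{\Phi}$ and $X=[X]^{\Psi}$. The goal is then to show that $X\in(\Phi\cap\Psi)\!-\!\mathrm{Sat}(A)$, i.e., to establish the single equality $X=[X]^{\Phi\cap\Psi}$.

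For the inclusion $X\subseteq[X]^{\Phi\cap\Psi}$ I would argue for each sort $s\in S$ separately: for every $x\in X_{s}$, reflexivity of $(\Phi\cap\Psi)_{s}$ gives $x\in[x]_{(\Phi\cap\Psi)_{s}}$, whence $X_{s}\cap[x]_{(\Phi\cap\Psi)_{s}}\neq\vacio$, and therefore $x\in[X]^{\Phi\cap\Psi}_{s}$; this is just the extensivity of saturation and holds for any $S$-sorted equivalence, with no hypothesis on $\Phi$ or $\Psi$. For the reverse inclusion I would invoke the preceding proposition to obtain $[X]^{\Phi\cap\Psi}\subseteq[X]^{\Phi}\cap[X]^{\Psi}$, and then substitute the two saturatedness hypotheses $[X]^{\Phi}=X$ and $[X]^{\Psi}=X$ to conclude $[X]^{\Phi\cap\Psi}\subseteq X\cap X=X$.

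Putting the two inclusions together yields $X=[X]^{\Phi\cap\Psi}$, so $X\in(\Phi\cap\Psi)\!-\!\mathrm{Sat}(A)$, which is exactly the asserted containment. There is no genuine obstacle here: the entire content is carried by the preceding proposition, and the only point requiring any care is to recognize that the reverse inclusion is where that proposition is applied, whereas the forward inclusion is the trivial reflexivity direction.
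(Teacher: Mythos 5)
Your proof is correct and is exactly the derivation the paper intends: the corollary is stated without proof precisely because it follows from the preceding proposition ($[X]^{\Phi\cap\Psi}\subseteq[X]^{\Phi}\cap[X]^{\Psi}$) combined with the extensivity of saturation, which is how you argue. Nothing is missing; the reflexivity step for $X\subseteq[X]^{\Phi\cap\Psi}$ and the substitution $[X]^{\Phi}=[X]^{\Psi}=X$ are both sound.
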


We next state that, for a set of sorts $S$, an $S$-sorted set $A$, and an $S$\nobreakdash-sorted equivalence $\Phi$ on $A$, the set $\Phi\!-\!\mathrm{Sat}(A)$ is the set of all fixed points of a suitable operator on $A$, i.e., of a mapping of $\mathrm{Sub}(A)$ into itself.

\begin{proposition}\label{SatOperator}
Let $S$ be a set of sorts, $A$ an $S$-sorted set, and $\Phi\in\mathrm{Eqv}(A)$. Then the mapping $[\cdot]^{\Phi}$ from $\mathrm{Sub}(A)$ to $\mathrm{Sub}(A)$ defined as follows:
$$
[\cdot]^{\Phi} \nfunction
{\mathrm{Sub}(A)}
{\mathrm{Sub}(A)}
{X}
{ [\cdot]^{\Phi}(X) = [X]^{\Phi}  }
$$
is a completely additive closure operator on $A$. Moreover, $[\cdot]^{\Phi}$ is completely  multiplicative, i.e., for every nonempty set $I$ in $\ensuremath{\boldsymbol{\mathcal{U}}}$ and every $I$-indexed family $(X^{i})_{i\in I}$ in $\mathrm{Sub}(A)$, $[\bigcap_{i\in I}X^{i}]^{\Phi} = \bigcap_{i\in I}[X^{i}]^{\Phi}$ (and, obviously, $[A]^{\Phi} = A$), and, for every $X\subseteq A$, if $X = [X]^{\Phi}$, then  $\complement_{A}X = [\complement_{A}X]^{\Phi}$. Besides, $[\cdot]^{\Phi}$ is uniform, i.e., is such that, for every $X$, $Y\subseteq A$, if  $\mathrm{supp}_{S}(X) = \mathrm{supp}_{S}(Y)$, then $\mathrm{supp}_{S}([X]^{\Phi}) = \mathrm{supp}_{S}([Y]^{\Phi})$---hence, in particular, $[\cdot]^{\Phi}$ is a uniform algebraic closure operator on $A$. And $\Phi\!-\!\mathrm{Sat}(A) = \mathrm{Fix}([\cdot]^{\Phi})$, where $\mathrm{Fix}([\cdot]^{\Phi})$ is the set of all fixed point of the operator $[\cdot]^{\Phi}$.
\end{proposition}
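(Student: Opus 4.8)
The plan is to reduce every assertion to the componentwise formula $[X]^{\Phi}_{s} = [X_{s}]^{\Phi_{s}}$ together with the image description $[X]^{\Phi} = (\mathrm{pr}^{\Phi})^{-1}[\mathrm{pr}^{\Phi}[X]]$ recorded in the remark above; throughout write $p = \mathrm{pr}^{\Phi}$, which is surjective. First I would verify the three closure-operator axioms directly from this formula. Extensivity $X\subseteq p^{-1}[p[X]]$ and monotonicity are the unit and the functoriality of the direct-image/inverse-image pair, and idempotency uses surjectivity of $p$, which yields $p[p^{-1}[\mathcal{Y}]] = \mathcal{Y}$ and hence $p^{-1}[p[p^{-1}[p[X]]]] = p^{-1}[p[X]]$. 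Complete additivity is then immediate, since both $p[\farg]$ and $p^{-1}[\farg]$ commute with arbitrary unions (componentwise in $\mathbf{Set}$), giving $[\bigcup_{i\in I}X^{i}]^{\Phi} = p^{-1}[\bigcup_{i\in I}p[X^{i}]] = \bigcup_{i\in I}[X^{i}]^{\Phi}$.

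For complete multiplicativity I would first rewrite $\bigcap_{i\in I}[X^{i}]^{\Phi} = p^{-1}[\bigcap_{i\in I}p[X^{i}]]$, using that $p^{-1}[\farg]$ preserves arbitrary intersections, and observe that $p^{-1}[\farg]$ is injective on $\mathrm{Sub}(A/\Phi)$ because $p$ is surjective. The whole identity therefore reduces to $p[\bigcap_{i\in I}X^{i}] = \bigcap_{i\in I}p[X^{i}]$ in $A/\Phi$. The inclusion $\subseteq$ here is free from monotonicity, and this is exactly the easy half $[\bigcap_{i\in I}X^{i}]^{\Phi}\subseteq\bigcap_{i\in I}[X^{i}]^{\Phi}$. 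I expect the reverse inclusion to be the main obstacle, since for unrestricted subsets it demands that a $\Phi$-class meeting every $X^{i}_{s}$ already meet their intersection, which genuinely needs the arguments to be saturated. The natural way around this is to invoke that $\Phi\!-\!\mathrm{Sat}(A) = \mathrm{Fix}([\farg]^{\Phi})$ is a closure system, hence closed under arbitrary intersections: for saturated $X^{i}$ the meet $\bigcap_{i\in I}X^{i}$ is again saturated and the equality is automatic, which, together with the complement clause, is precisely what is needed downstream, namely that $\Phi\!-\!\mathrm{Sat}(A)$ is a complete Boolean subalgebra of $\mathrm{Sub}(A)$.

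The remaining clauses are comparatively routine. The complement property falls out of the image form: if $X = [X]^{\Phi}$, then $X = p^{-1}[\mathcal{Y}]$ with $\mathcal{Y} = p[X]$, so $\complement_{A}X = p^{-1}[\complement_{A/\Phi}\mathcal{Y}]$ is again an inverse image and hence saturated. For uniformity I would prove the sharper statement $\mathrm{supp}_{S}([X]^{\Phi}) = \mathrm{supp}_{S}(X)$, which is immediate componentwise because $[X_{s}]^{\Phi_{s}}$ is empty exactly when $X_{s}$ is (and also follows from Proposition~\ref{propssupport} applied to the surjection $p$); uniformity is then trivial. That $[\farg]^{\Phi}$ is an \emph{algebraic} closure operator I would obtain for free from complete additivity, writing $X$ as the directed union of its finite subsets and distributing the operator over that union. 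Finally, $\Phi\!-\!\mathrm{Sat}(A) = \mathrm{Fix}([\farg]^{\Phi})$ holds by the very definition of $\Phi\!-\!\mathrm{Sat}(A)$, so nothing further is required there.
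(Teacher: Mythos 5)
The paper states this proposition without proof, so there is nothing of the authors' to compare your argument against; judged on its own terms, your proposal is correct on every clause that is actually true, and your reduction of everything to the representation $[X]^{\Phi}=(\mathrm{pr}^{\Phi})^{-1}[\mathrm{pr}^{\Phi}[X]]$ with $\mathrm{pr}^{\Phi}$ surjective is the natural route. The closure-operator axioms, complete additivity, the complement clause, uniformity (via the sharper identity $\mathrm{supp}_{S}([X]^{\Phi})=\mathrm{supp}_{S}(X)$, which indeed follows from Proposition~\ref{propssupport}), algebraicity as a consequence of complete additivity, and $\Phi\!-\!\mathrm{Sat}(A)=\mathrm{Fix}([\cdot]^{\Phi})$ are all handled correctly.

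Where you hedge---complete multiplicativity---you should not hedge but flatly assert: the clause as printed is false for unrestricted families. Your own reduction shows why: since $(\mathrm{pr}^{\Phi})^{-1}[\cdot]$ is injective on $\mathrm{Sub}(A/\Phi)$, the identity $[\bigcap_{i\in I}X^{i}]^{\Phi}=\bigcap_{i\in I}[X^{i}]^{\Phi}$ for all families is equivalent to direct images under $\mathrm{pr}^{\Phi}$ preserving arbitrary intersections, which happens if, and only if, $\mathrm{pr}^{\Phi}$ is injective, i.e., if, and only if, $\Phi=\Delta^{A}$. Concretely, if some $\Phi_{t}$ has a class containing distinct elements $a\neq b$, put $X^{1}=\delta^{t,\{a\}}$ and $X^{2}=\delta^{t,\{b\}}$; then $[X^{1}]^{\Phi}=[X^{2}]^{\Phi}=\delta^{t,[a]_{\Phi_{t}}}$, so $[X^{1}]^{\Phi}\cap[X^{2}]^{\Phi}=\delta^{t,[a]_{\Phi_{t}}}\neq\varnothing^{S}$, whereas $[X^{1}\cap X^{2}]^{\Phi}=[\varnothing^{S}]^{\Phi}=\varnothing^{S}$. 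Your salvage is the right one, and it is all the paper ever uses: for a family of $\Phi$-saturated sets the intersection is again saturated---this needs only extensivity and monotonicity, since $[\bigcap_{i\in I}X^{i}]^{\Phi}\subseteq[X^{j}]^{\Phi}=X^{j}$ for each $j$---and then the equality is automatic; together with complete additivity and the complement clause, this is exactly what Proposition~\ref{CABA Saturades} and the later sections require. So turn your ``I expect'' into the counterexample above, restate the multiplicativity clause as holding for families of saturated sets (equivalently, that $\Phi\!-\!\mathrm{Sat}(A)$ is closed under arbitrary intersections), and your proof is complete.
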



\begin{proposition}\label{CABA Saturades} Let $A$ be an $S$-sorted set and $\Phi\in\mathrm{Eqv}(A)$. Then the ordered pair
$\Phi\!-\!\mathbf{Sat}(A) = (\Phi\!-\!\mathrm{Sat}(A),\subseteq)$ is a complete atomic Boolean algebra (for brevity a CABA).
\end{proposition}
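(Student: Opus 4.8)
The plan is to identify $\Phi\!-\!\mathbf{Sat}(A)$ with a power-set-type Boolean algebra via the canonical projection $\mathrm{pr}^{\Phi}$, and then invoke that such an algebra is a CABA. Concretely, I would show that the inverse-image map $(\mathrm{pr}^{\Phi})^{-1}[\cdot]$ is an isomorphism of (complete) Boolean algebras from $\mathbf{Sub}(A/\Phi)$ onto $\Phi\!-\!\mathbf{Sat}(A)$, and then observe that $\mathbf{Sub}(A/\Phi)$ is visibly a CABA. All the bookkeeping is delegated to Proposition~\ref{SatOperator} and to the saturation remark, so the proof is essentially a transport-of-structure argument.

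First I would establish that $(\mathrm{pr}^{\Phi})^{-1}[\cdot]$ is a bijection from $\mathrm{Sub}(A/\Phi)$ onto $\Phi\!-\!\mathrm{Sat}(A)$. By the Remark stating that a subset $X\subseteq A$ is $\Phi$-saturated if, and only if, $X=(\mathrm{pr}^{\Phi})^{-1}[\mathcal{Y}]$ for some $\mathcal{Y}\subseteq A/\Phi$, the image of $(\mathrm{pr}^{\Phi})^{-1}[\cdot]$ is exactly $\Phi\!-\!\mathrm{Sat}(A)$. Injectivity follows from the surjectivity of $\mathrm{pr}^{\Phi}$: for a surjective $S$-sorted mapping one has $\mathrm{pr}^{\Phi}[(\mathrm{pr}^{\Phi})^{-1}[\mathcal{Y}]]=\mathcal{Y}$, so distinct subsets of $A/\Phi$ have distinct inverse images. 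Hence $(\mathrm{pr}^{\Phi})^{-1}[\cdot]$ is a bijection onto $\Phi\!-\!\mathrm{Sat}(A)$.

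Next I would check that this bijection is an isomorphism of complete Boolean algebras. By Proposition~\ref{SatOperator}, $\Phi\!-\!\mathrm{Sat}(A)=\mathrm{Fix}([\cdot]^{\Phi})$ is closed in $\mathrm{Sub}(A)$ under arbitrary unions (complete additivity), under arbitrary intersections (complete multiplicativity, together with $[A]^{\Phi}=A$), and under complementation (the stated property that $X=[X]^{\Phi}$ implies $\complement_{A}X=[\complement_{A}X]^{\Phi}$); moreover $\varnothing^{S}$ and $A$ are saturated. Thus $\Phi\!-\!\mathbf{Sat}(A)$ is a complete Boolean subalgebra of $\mathbf{Sub}(A)$, its join, meet, and complement being the union, intersection, and complement inherited from $\mathbf{Sub}(A)$. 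On the other side, inverse-image formation is monotone and commutes with arbitrary unions, arbitrary intersections, and complements, i.e.\ $(\mathrm{pr}^{\Phi})^{-1}[\complement_{A/\Phi}\mathcal{Y}]=\complement_{A}(\mathrm{pr}^{\Phi})^{-1}[\mathcal{Y}]$ and likewise for $\bigcup$ and $\bigcap$. Therefore $(\mathrm{pr}^{\Phi})^{-1}[\cdot]$ transports the complete Boolean structure of $\mathbf{Sub}(A/\Phi)$ onto that of $\Phi\!-\!\mathbf{Sat}(A)$, so the two are isomorphic as complete Boolean algebras.

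Finally I would observe that $\mathbf{Sub}(A/\Phi)$ is a CABA. Componentwise it is the product $\prod_{s\in S}\mathcal{P}(A_{s}/\Phi_{s})$ of power-set Boolean algebras, each of which is a CABA, and a product of CABAs is again a CABA, its atoms being the tuples that are an atom in a single coordinate and $\varnothing$ elsewhere. Transporting along the isomorphism, the atoms of $\Phi\!-\!\mathbf{Sat}(A)$ are exactly the $S$-sorted sets $\delta^{s,[a]_{\Phi_{s}}}$ for $s\in S$ and $a\in A_{s}$, namely a single $\Phi_{s}$-class sitting in one sort. Being isomorphic to a CABA, $\Phi\!-\!\mathbf{Sat}(A)$ is a CABA. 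Since the closure and Boolean properties are supplied by Proposition~\ref{SatOperator} and the saturation remark, the only point demanding genuine care is atomicity: one must confirm that a product of atomic algebras remains atomic and that the displayed $\delta^{s,[a]_{\Phi_{s}}}$ really are the atoms, which is precisely what the passage through $\mathbf{Sub}(A/\Phi)$ makes transparent and is the reason I route the argument through the quotient rather than arguing directly inside $\mathbf{Sub}(A)$.
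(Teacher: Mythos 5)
Your proof is correct, but it takes a genuinely different route from the paper's. The paper treats the statement as essentially immediate---its proof is a two-line sketch left to the reader---and works directly inside $\mathrm{Sub}(A)$: it names the atoms of $\Phi\!-\!\mathbf{Sat}(A)$ as the Kronecker deltas $\delta^{t,[x]_{\Phi_{t}}}$ ($t\in S$, $x\in A_{t}$) and asserts that every $\Phi$-saturated set is the join ($\equiv$ union) of the atoms below it, with the Boolean and completeness axioms implicitly delegated to Proposition~\ref{SatOperator}. You instead transport the structure from $\mathbf{Sub}(A/\Phi)$ along $(\mathrm{pr}^{\Phi})^{-1}[\cdot]$, using the remark that the saturated sets are exactly the inverse images of subsets of the quotient, surjectivity of $\mathrm{pr}^{\Phi}$ for injectivity, and the commutation of inverse image with unions, intersections, and complements. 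Your route buys a uniform explanation: completeness, atomicity, and the Boolean structure all arrive at once from the power-set algebra $\mathbf{Sub}(A/\Phi)\cong\prod_{s\in S}\mathcal{P}(A_{s}/\Phi_{s})$, and the identification of the atoms as the $\delta^{s,[a]_{\Phi_{s}}}$---the same atoms the paper names---falls out of the correspondence rather than requiring a separate check. The paper's direct route is shorter and avoids introducing the quotient, but it leaves to the reader precisely the point your detour makes transparent, namely that every saturated set really is the union of the atoms it dominates; both arguments are sound, and yours is arguably the more conceptual packaging of the same facts.
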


\begin{proof}
The proof is straightforward and we leave it to the reader. We only point out that the atoms of $\Phi\!-\!\mathbf{Sat}(A)$ are precisely the deltas of Kronecker $\delta^{t, [x]_{\Phi_t}}$, for some $t\in S$ and some $x\in A_{t}$, and that, obviously, every $\Phi$-saturated subset $X$ of $A$ is the join ($\equiv$ union) of all atoms smaller than $X$.
\end{proof}

We next recall the concept of kernel of an $S$-sorted mapping and the universal property of the $S$-sorted quotient set of an $S$-sorted set by an $S$-sorted equivalence on it (which are at the basis of those of kernel of a $\Sigma$-homomorphism and of the universal property of the quotient $\Sigma$-algebra of a $\Sigma$-algebra by a congruence on it, respectively).

\begin{definition}
Let $f\colon A\mor B$ be an $S$-sorted mapping. Then the \emph{kernel} of $f$, denoted by  $\mathrm{Ker}(f)$, is the $S$-sorted relation defined, for every $s\in S$, as $\mathrm{Ker}(f)_{s} = \mathrm{Ker}(f_{s})$ (i.e., as the kernel pair of $f_{s}$).
\end{definition}

\begin{proposition}
Let $f$ be an $S$-sorted mapping from $A$ to $B$. Then $\mathrm{Ker}(f)$ is an $S$-sorted equivalence on $A$.
\end{proposition}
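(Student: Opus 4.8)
The plan is to exploit the fact that $\mathrm{Ker}(f)$ has been defined \emph{componentwise}, so that the many-sorted statement dissolves into a family of single-sorted ones. Recall that, by definition, an $S$-sorted equivalence on $A$ is a subset $\Phi=(\Phi_{s})_{s\in S}$ of $A\times A=(A_{s}\times A_{s})_{s\in S}$ such that each $\Phi_{s}$ is an (ordinary) equivalence relation on $A_{s}$; and that $\mathrm{Ker}(f)_{s}=\mathrm{Ker}(f_{s})$, the kernel pair of the mapping $f_{s}\colon A_{s}\mor B_{s}$. Hence it suffices to prove, for each fixed $s\in S$, that $\mathrm{Ker}(f_{s})=\{(a,a')\in A_{s}\times A_{s}\mid f_{s}(a)=f_{s}(a')\}$ is an equivalence relation on $A_{s}$.

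First I would fix an arbitrary $s\in S$ and verify the three defining properties, each of which is inherited from the corresponding property of the identity relation on $B_{s}$: reflexivity follows from $f_{s}(a)=f_{s}(a)$ for every $a\in A_{s}$; symmetry follows because $f_{s}(a)=f_{s}(a')$ implies $f_{s}(a')=f_{s}(a)$; and transitivity follows because $f_{s}(a)=f_{s}(a')$ together with $f_{s}(a')=f_{s}(a'')$ gives $f_{s}(a)=f_{s}(a'')$. Since $s$ was arbitrary, $\mathrm{Ker}(f)_{s}$ is an equivalence relation on $A_{s}$ for every $s\in S$, and the inclusion $\mathrm{Ker}(f)\subseteq A\times A$ holds automatically because each $\mathrm{Ker}(f_{s})\subseteq A_{s}\times A_{s}$. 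By the definition of $S$-sorted equivalence, this is exactly the assertion that $\mathrm{Ker}(f)\in\mathrm{Eqv}(A)$.

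There is no real obstacle here: the only thing to do is to unwind the componentwise definitions, after which the statement is precisely the classical single-sorted fact that the kernel pair of a mapping is an equivalence relation, applied sort by sort. I expect the write-up to be a couple of lines, and one could even reduce the verification to citing the single-sorted case together with the observation that the defining conditions for an $S$-sorted equivalence are stipulated sortwise.
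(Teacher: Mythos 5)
Your proof is correct: the paper states this proposition in its preliminaries without proof, and the routine sortwise verification you give---reducing $\mathrm{Ker}(f)\in\mathrm{Eqv}(A)$ to the classical fact that each kernel pair $\mathrm{Ker}(f_{s})$ is reflexive, symmetric, and transitive on $A_{s}$---is exactly the argument the paper implicitly relies on. Nothing is missing.
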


\begin{proposition}
Let $A$ be an $S$-sorted set and $\Phi\in \mathrm{Eqv}(A)$. Then the pair $(\mathrm{pr}^{\Phi},A/\Phi)$ is such that:
\begin{enumerate}
\item $\mathrm{Ker}(\mathrm{pr}^{\Phi}) = \Phi$.
\item \emph{(Universal property)} For every $S$-sorted mapping $f\colon A\mor B$, if $\Phi\subseteq \mathrm{Ker}(f)$, then there exists a unique $S$-sorted mapping $\mathrm{p}^{\Phi,\mathrm{Ker}(f)}$ from $A/\Phi$ to $B$ such that $f = \mathrm{p}^{\Phi,\mathrm{Ker}(f)}\circ \mathrm{pr}^{\Phi}$.
\end{enumerate}
\end{proposition}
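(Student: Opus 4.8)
The plan is to reduce everything to the classical single-sorted universal property of quotient sets, exploiting the fact that both $\mathrm{pr}^{\Phi}$ and $\mathrm{Ker}$ are defined componentwise. Throughout I will use that, for each $s\in S$, $\mathrm{pr}^{\Phi_{s}}\colon A_{s}\mor A_{s}/\Phi_{s}$ is the ordinary canonical projection and that, by the componentwise definition of the kernel, $\mathrm{Ker}(f)_{s} = \mathrm{Ker}(f_{s})$.

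For part (1), I would argue sortwise. Fix $s\in S$. By the definition of the kernel, $\mathrm{Ker}(\mathrm{pr}^{\Phi})_{s} = \mathrm{Ker}(\mathrm{pr}^{\Phi_{s}})$, and for the single-sorted projection one has $(x,y)\in\mathrm{Ker}(\mathrm{pr}^{\Phi_{s}})$ if, and only if, $[x]_{\Phi_{s}} = [y]_{\Phi_{s}}$, if, and only if, $(x,y)\in\Phi_{s}$. Hence $\mathrm{Ker}(\mathrm{pr}^{\Phi})_{s} = \Phi_{s}$ for every $s\in S$, that is, $\mathrm{Ker}(\mathrm{pr}^{\Phi}) = \Phi$.

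For the existence part of (2), I would define the candidate mapping componentwise. Since $\Phi\subseteq\mathrm{Ker}(f)$ means $\Phi_{s}\subseteq\mathrm{Ker}(f_{s})$ for every $s\in S$, the assignment $[x]_{\Phi_{s}}\longmapsto f_{s}(x)$ is well defined: if $[x]_{\Phi_{s}} = [y]_{\Phi_{s}}$ then $(x,y)\in\Phi_{s}\subseteq\mathrm{Ker}(f_{s})$, whence $f_{s}(x) = f_{s}(y)$. Calling this mapping $\mathrm{p}^{\Phi,\mathrm{Ker}(f)}_{s}$ and collecting the family $(\mathrm{p}^{\Phi,\mathrm{Ker}(f)}_{s})_{s\in S}$ yields an $S$-sorted mapping $\mathrm{p}^{\Phi,\mathrm{Ker}(f)}$ from $A/\Phi$ to $B$. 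The factorization $f = \mathrm{p}^{\Phi,\mathrm{Ker}(f)}\circ\mathrm{pr}^{\Phi}$ then holds by an immediate sortwise check, since $\mathrm{p}^{\Phi,\mathrm{Ker}(f)}_{s}(\mathrm{pr}^{\Phi_{s}}(x)) = \mathrm{p}^{\Phi,\mathrm{Ker}(f)}_{s}([x]_{\Phi_{s}}) = f_{s}(x)$ for every $x\in A_{s}$.

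Uniqueness follows from the surjectivity of $\mathrm{pr}^{\Phi}$: each $\mathrm{pr}^{\Phi_{s}}$ is surjective, so if $g\colon A/\Phi\mor B$ also satisfies $f = g\circ\mathrm{pr}^{\Phi}$, then for every $s\in S$ and every class $[x]_{\Phi_{s}} = \mathrm{pr}^{\Phi_{s}}(x)$ we get $g_{s}([x]_{\Phi_{s}}) = f_{s}(x) = \mathrm{p}^{\Phi,\mathrm{Ker}(f)}_{s}([x]_{\Phi_{s}})$, forcing $g = \mathrm{p}^{\Phi,\mathrm{Ker}(f)}$. I do not anticipate any real obstacle here: the statement is the standard quotient universal property, and the only point that actually uses a hypothesis is the well-definedness of the factor map, which is exactly where the assumption $\Phi\subseteq\mathrm{Ker}(f)$ is consumed. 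The many-sorted layer adds nothing beyond the bookkeeping of carrying the single-sorted argument through each $s\in S$.
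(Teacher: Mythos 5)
Your proof is correct: the paper states this proposition without proof among the recalled preliminaries, and your componentwise reduction to the classical single-sorted quotient universal property (sortwise identification of the kernel, well-definedness of the factor map from $\Phi_{s}\subseteq\mathrm{Ker}(f_{s})$, and uniqueness from surjectivity of each $\mathrm{pr}^{\Phi_{s}}$) is precisely the standard argument the authors intend. Nothing is missing.
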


Following this we define, for a set of sorts $S$, the category of $S$-sorted signatures.

\begin{definition}\label{$S$-sorted signature}
Let $S$ be a set of sorts. Then an $S$-\emph{sorted signature} is a function $\Sigma$ from $\fmon{S}\bprod S$ to $\ensuremath{\boldsymbol{\mathcal{U}}}$ which sends a pair $(w,s)\in \fmon{S}\bprod S$ to the set $\Sigma_{w,s}$ of the
\emph{formal operations} of \emph{arity} $w$, \emph{sort} (or \emph{coarity}) $s$, and \emph{rank} (or \emph{biarity}) $(w,s)$.  Sometimes we will write $\sigma\colon w\mor s$ to indicate that the formal operation $\sigma$ belongs to
$\Sigma_{w,s}$.
\end{definition}

From now on we make the following assumption: $\Sigma$ stands for an $S$-sorted signature, fixed once and for all.

We next define, for an $S$-sorted signature $\Sigma$ the category of $\Sigma$-algebras.

\begin{definition}
Let $\Sigma$ be an $S$-sorted signature. The $S^{\star}\times S$-sorted set of the \emph{finitary operations on} an $S$-sorted set $A$ is $(\mathrm{Hom}(A_{w},A_{s}))_{(w,s)\in S^{\star}\times S}$, where, for every $w\in S^{\star}$, $A_{w} = \prod_{i\in \lvert w\rvert}A_{w_{i}}$, with $\lvert w\rvert$ denoting the length of the word $w$.
A \emph{structure of} $\Sigma$-\emph{algebra on} an $S$-sorted set  $A$ is a family $(F_{w,s})_{(w,s)\in S^{\star}\times S}$, denoted by $F$, where, for $(w,s)\in S^{\star}\times S$, $F_{w,s}$ is a mapping from $\Sigma_{w,s}$ to $\mathrm{Hom}(A_{w},A_{s})$.  For a pair $(w,s)\in S^{\star}\times S$ and a formal operation $\sigma\in \Sigma_{w,s}$, in order to simplify the notation, the operation from $A_{w}$ to $A_{s}$ corresponding to $\sigma$ under $F_{w,s}$ will be written as $F_{\sigma}$ instead of $F_{w,s}(\sigma)$. A $\Sigma$-\emph{algebra} is a pair $(A,F)$, abbreviated to $\mathbf{A}$, where $A$ is an $S$-sorted set and $F$ a structure of $\Sigma$-algebra on $A$. A $\Sigma$-\emph{homomorphism} from $\mathbf{A}$ to $\mathbf{B}$, where $\mathbf{B} = (B,G)$, is a triple $(\mathbf{A},f,\mathbf{B})$, abbreviated to $f\colon \mathbf{A}\mor \mathbf{B}$, where $f$ is an $S$-sorted mapping from $A$ to $B$ such that, for every $(w,s)\in S^{\star}\times S$, $\sigma\in \Sigma_{w,s}$, and $(a_{i})_{i\in \lvert w\rvert}\in A_{w}$ we have that
$$
f_{s}(F_{\sigma}((a_{i})_{i\in \lvert w\rvert})) = G_{\sigma}(f_{w}((a_{i})_{i\in \lvert w\rvert})),
$$
where $f_{w}$ is the mapping $\prod_{i\in \lvert w\rvert}f_{w_{i}}$ from $A_{w}$ to $B_{w}$ which sends $(a_{i})_{i\in \lvert w\rvert}$ in $A_{w}$ to $(f_{w_{i}}(a_{i}))_{i\in \lvert w\rvert}$ in $B_{w}$. We denote by $\mathbf{Alg}(\Sigma)$ the category of $\Sigma$-algebras and $\Sigma$-homomorphisms (or, to abbreviate, homomorphisms) and by $\mathrm{Alg}(\Sigma)$ the set of objects of $\mathbf{Alg}(\Sigma)$.
\end{definition}

\begin{remark}
With regard to the category $\mathbf{Alg}(\Sigma)$ let us point out the following. (1) That $\mathrm{Alg}(\Sigma)\subseteq\ensuremath{\boldsymbol{\mathcal{U}}}$. And (2) that, for every $\mathbf{A}$, $\mathbf{B}\in \mathrm{Alg}(\Sigma)$, $\mathrm{Hom}_{\mathbf{Alg}(\Sigma)}(\mathbf{A},\mathbf{B})\in\ensuremath{\boldsymbol{\mathcal{U}}}$. Thus $\mathbf{Alg}(\Sigma)$ is a $\ensuremath{\boldsymbol{\mathcal{U}}}$-category.
\end{remark}

\begin{definition}
Let $\mathbf{A}$ be a $\Sigma$-algebra. Then the \emph{support of} $\mathbf{A}$, denoted by $\mathrm{supp}_{S}(\mathbf{A})$, is $\mathrm{supp}_{S}(A)$, i.e., the support of the underlying $S$-sorted set $A$ of $\mathbf{A}$.
\end{definition}

\begin{remark}
The set $\{\mathrm{supp}_{S}(\mathbf{A})\mid \mathbf{A}\in \mathrm{Alg}(\Sigma)\}$ is a closure system on $S$.
\end{remark}

\begin{definition}
Let $\mathbf{A}$ be a $\Sigma$-algebra. We say that $\mathbf{A}$ is \emph{finite} if $A$, the underlying $S$-sorted set of $\mathbf{A}$, is finite.
\end{definition}

We next define when a subset $X$ of the underlying $S$-sorted set $A$ of a $\Sigma$-algebra $\mathbf{A} = (A,F)$ is closed under an operation $F_{\sigma}$ of $\mathbf{A}$, as well as when $X$ is a subalgebra of $\mathbf{A}$.

\begin{definition}\label{Subalg}
Let $\mathbf{A}$ be a $\Sigma$-algebra and $X\subseteq A$. Let $\sigma$ be such that $\sigma\colon w\mor s$, i.e., a formal operation in $\Sigma_{w,s}$. We say that $X$ is \emph{closed under the operation} $F_{\sigma}\colon A_{w}\mor A_{s}$ if, for every $a\in X_{w}$, $F_{\sigma}(a)\in X_{s}$. We say that $X$ is a \emph{subalgebra} of $\mathbf{A}$ if $X$ is closed under the operations of $\mathbf{A}$. We denote by $\mathrm{Sub}(\mathbf{A})$ the set of all subalgebras of $\mathbf{A}$ (which is an algebraic closure system on $A$) and by  $\mathbf{Sub}(\mathbf{A})$ the algebraic lattice $(\mathrm{Sub}(\mathbf{A}),\subseteq)$. We also say, equivalently, that a $\Sigma$-algebra $\mathbf{B}$ is a \emph{subalgebra} of $\mathbf{A}$ if $B\subseteq A$ and the canonical embedding of $B$ into $A$ determines an embedding of $\mathbf{B}$ into $\mathbf{A}$.
\end{definition}

\begin{definition}
Let $\mathbf{A}$ be a $\Sigma$-algebra. Then we denote by $\mathrm{Sg}_{\mathbf{A}}$ the algebraic closure operator canonically
associated to the algebraic closure system $\mathrm{Sub}(\mathbf{A})$ on $A$ and we call it the \emph{subalgebra  generating operator} for $\mathbf{A}$. Moreover, if $X\subseteq A$, then we call $\mathrm{Sg}_{\mathbf{A}}(X)$ the \emph{subalgebra of} $\mathbf{A}$ \emph{generated by} $X$, and if $X$ is such that $\Sg_{\mathbf{A}}(X) = A$, then we say that $X$ is a \emph{generating} subset of $\mathbf{A}$.
\end{definition}

\begin{remark}
Let $\mathbf{A}$ be a $\Sigma$-algebra. Then the algebraic closure operator $\mathrm{Sg}_{\mathbf{A}}$ is uniform, i.e., for every  $X$, $Y\subseteq A$, if $\mathrm{supp}_{S}(X) = \mathrm{supp}_{S}(Y)$, then we have that $\mathrm{supp}_{S}(\mathrm{Sg}_{\mathbf{A}}(X)) = \mathrm{supp}_{S}(\mathrm{Sg}_{\mathbf{A}}(Y))$. To appreciate the significance  of the just mentioned property of $\mathrm{Sg}_{\mathbf{A}}$, see~\cite{cs04}.
\end{remark}

We now recall the concept of product of a family of $\Sigma$-algebras.

\begin{definition}
Let $I$ be a set in $\ensuremath{\boldsymbol{\mathcal{U}}}$ and $(\mathbf{A}^{i})_{i\in I}$ an $I$-indexed family of $\Sigma$-algebras, where, for every $i\in I$, $\mathbf{A}^{i} = (A^{i},F^{i})$. The \emph{product} of $(\mathbf{A}^{i})_{i\in I}$, denoted by $\prod_{i\in I}\mathbf{A}^{i}$, is the $\Sigma$-algebra $(\prod_{i\in I}A^{i},F)$ where, for every $\sigma\colon w\mor s$ in $\Sigma$, $F_{\sigma}$ is defined as follows:%
$$
F_{\sigma}
\nfunction
{(\prod_{i\in I}A^{i})_{w}}
{\prod_{i\in I}A^{i}_{s}}
{(a_{\alpha})_{\alpha\in \lvert w\rvert}}
{(F^{i}_{\sigma}((a_{\alpha}(i))_{\alpha\in \lvert w\rvert}))_{i\in I}}
$$
For every $i\in I$, the \emph{$i$th canonical projection}, $\mathrm{pr}^{i} = (\mathrm{pr}^{i}_{s})_{s\in S}$, is the homomorphism from $\prod_{i\in I}\mathbf{A}^{i}$ to $\mathbf{A}^{i}$ defined, for every $s\in S$, as follows:
$$
\mathrm{pr}^{i}_{s}
\nfunction
{\prod_{i\in I}A^{i}_{s}}
{A^{i}_{s}}
{(a_{i})_{i\in I}}
{a_{i}}
$$
On the other hand, if $\mathbf{B}$ is a $\Sigma$-algebra and $(f^{i})_{i\in I}$ an $I$-indexed family of homomorphisms, where, for every $i\in I$, $f^{i}$ is a homomorphism from $\mathbf{B}$ to $\mathbf{A}^{i}$, then we denote by $\left<f^{i}\right>_{i\in I}$ the unique homomorphism $f$ from $\mathbf{B}$ to $\prod_{i\in I}\mathbf{A}^{i}$ such that, for every $i\in I$, $\mathrm{pr}^{i}\circ f = f^{i}$.
\end{definition}

We next  define the concept of subfinal $\Sigma$-algebra. But before defining the just mentioned concept, we recall that $\mathbf{1}$, the final $\Sigma$-algebra in $\mathbf{Alg}(\Sigma)$, has as underlying $S$-sorted set $1 = (1)_{s\in S}$, the family constantly $1$, and, for every $(w,s)\in S^{\star}\times S$ and every formal operation $\sigma\in\Sigma_{w,s}$, as operation $F_{\sigma}$ from $1_{w} =  \prod_{i\in \lvert w\rvert}1_{w_{i}} = \{(\overbrace{0,\ldots,0}^{\lvert w\rvert})\}$ to $1_{s} = 1$ the unique mapping from $1_{w}$ to $1$.

\begin{definition}
A $\Sigma$-algebra $\mathbf{A}$ is \emph{subfinal} if $\mathbf{A}$ is isomorphic to a subalgebra of $\mathbf{1}$, the final $\Sigma$-algebra in $\mathbf{Alg}(\Sigma)$. We denote by $\mathrm{Sf}(\mathbf{1})$ the set of all subfinal $\Sigma$-algebras of $\mathbf{1}$.
\end{definition}

\begin{remark}
If $\mathbf{A}$ is a subfinal $\Sigma$-algebra, then $A$ is a subfinal $S$-sorted set. In fact, since there exists an $\mathbf{X}\in \mathrm{Sub}(\mathbf{1})$ such that $\mathbf{A}\cong \mathbf{X}$, then, by Proposition~\ref{propssupport}, $\mathrm{supp}_{S}(\mathbf{A}) = \mathrm{supp}_{S}(\mathbf{X})$, hence, for every $s\in S$, $\mathrm{card}(A_{s})\leq 1$, i.e., $A$ is a subfinal $S$-sorted set. On the other hand, if the $\Sigma$-algebra $\mathbf{A}$ is such that $A$ is a subfinal $S$-sorted set, then $\mathbf{A}$ is a subobject of $\mathbf{1}$, i.e., $\mathbf{A}$ is isomorphic to a subalgebra of $\mathbf{1}$. In fact, the $S$-sorted mapping $f$ from $A$ to $1 = (1)_{s\in S}$ which, for every $s\in S$, is the unique mapping from $A_{s}$ to $1$, determines an embedding from $\mathbf{A}$ to $\mathbf{1}$. Therefore, given a $\Sigma$-algebra $\mathbf{A}$, we have that $\mathbf{A}$ is a subfinal $\Sigma$-algebra if, and only if, $A$ is a subfinal $S$-sorted set. Furthermore, if $\mathbf{A}$ is a subfinal $\Sigma$-algebra, then, for every $\Sigma$-algebra $\mathbf{B}$, there exists at most a homomorphism from $\mathbf{B}$ to $\mathbf{A}$.
\end{remark}

Our next goal is to define the concepts of congruence on a $\Sigma$-algebra and of quotient of a $\Sigma$-algebra by a congruence on it. Moreover, we recall the notion of kernel of a homomorphism between $\Sigma$-algebras and the universal property of the quotient of a $\Sigma$-algebra by a congruence on it.

\begin{definition}
Let $\mathbf{A}$ be a $\Sigma$-algebra and $\Phi$ an $S$-sorted equivalence on $A$. We say that $\Phi$ is an
$S$-\emph{sorted congruence on} (or, to abbreviate, a \emph{congruence on}) $\mathbf{A}$ if, for every $(w,s)\in (S^{\star}-\{\lambda\})\times S$, $\sigma\colon w\mor s$,
and $a,b\in A_{w}$ we have that
$$
\frac
{\forall i\in \lvert w\rvert\mathrm{, }\,\, (a_{i}, b_{i})\in\Phi_{w_{i}} }
{(F_{\sigma}(a), F_{\sigma}(b))\in \Phi_{s}}\cdot
$$

We denote by $\mathrm{Cgr}(\mathbf{A})$ the set of all $S$-sorted congruences on $\mathbf{A}$ (which is an algebraic closure system on $A\times A$), by $\mathbf{Cgr}(\mathbf{A})$ the algebraic lattice  $(\Cgr(\mathbf{A}),\subseteq)$, by $\nabla^{\mathbf{A}}$ the greatest element of $\mathbf{Cgr}(\mathbf{A})$, and by $\Delta^{\mathbf{A}}$ the least element of $\mathbf{Cgr}(\mathbf{A})$.
\end{definition}

\begin{definition}
Let $\mathbf{A}$ be a $\Sigma$-algebra and $\Phi\in\mathrm{Cgr}(\mathbf{A})$. Then $\mathbf{A}/\Phi$, the \emph{quotient $\Sigma$-algebra} of $\mathbf{A}$ \emph{by} $\Phi$, is the $\Sigma$-algebra $(A/\Phi,F^{\mathbf{A}/\Phi})$, where, for every $\sigma\colon w\mor s$, the operation $F_{\sigma}^{\mathbf{A}/\Phi}\colon (A/\Phi)_{w}\mor A_{s}/\Phi_{s}$, also denoted, to simplify, by $F_{\sigma}$, is defined, for every $([a_{i}]_{\Phi_{w_{i}}})_{i\in\lvert w\rvert}\in (A/\Phi)_{w}$, as follows: %
$$
F_{\sigma}
\nfunction
{(A/\Phi)_{w}} {A_{s}/\Phi_{s}}
{([a_{i}]_{\Phi_{w_{i}}})_{i\in\lvert w\rvert}}
{[F_{\sigma}((a_{i})_{i\in \lvert w\rvert})]_{\Phi_{s}}}
$$
And $\mathrm{pr}^{\Phi}\colon \mathbf{A}\mor \mathbf{A}/\Phi$, the \emph{canonical projection from} $\mathbf{A}$ \emph{to} $\mathbf{A}/\Phi$, is the homomorphism determined by the $S$-sorted mapping $\mathrm{pr}^{\Phi}$ from $A$ to $A/\Phi$.
\end{definition}

\begin{proposition}
Let $f$ be a homomorphism from $\mathbf{A}$ to $\mathbf{B}$. Then $\mathrm{Ker}(f)$ is a congruence on $\mathbf{A}$.
\end{proposition}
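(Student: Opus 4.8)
The plan is to use the earlier proposition guaranteeing that, for an $S$-sorted mapping $f$, the relation $\Ker(f)$ is already an $S$-sorted equivalence on $A$. Hence the only thing that remains to be checked is the compatibility condition from the definition of congruence on a $\Sigma$-algebra, namely that for every $(w,s)\in(S^{\star}-\{\lambda\})\times S$, every $\sigma\colon w\mor s$, and all $a,b\in A_{w}$, if $(a_{i},b_{i})\in\Ker(f)_{w_{i}}$ for every $i\in\bb{w}$, then $(F_{\sigma}(a),F_{\sigma}(b))\in\Ker(f)_{s}$.

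First I would fix such $(w,s)$, $\sigma$, $a$, and $b$, and unwind the hypothesis. Since $\Ker(f)_{w_{i}} = \Ker(f_{w_{i}})$ is the kernel pair of $f_{w_{i}}$, the assumption $(a_{i},b_{i})\in\Ker(f)_{w_{i}}$ says exactly that $f_{w_{i}}(a_{i}) = f_{w_{i}}(b_{i})$ for every $i\in\bb{w}$. Recalling that $f_{w} = \prod_{i\in\bb{w}}f_{w_{i}}$ sends $(a_{i})_{i\in\bb{w}}$ to $(f_{w_{i}}(a_{i}))_{i\in\bb{w}}$, these componentwise equalities amount precisely to the single equality of tuples $f_{w}(a) = f_{w}(b)$ in $B_{w}$.

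Next I would invoke the defining equation of a $\Sigma$-homomorphism, $f_{s}(F_{\sigma}(c)) = G_{\sigma}(f_{w}(c))$, valid for every $c\in A_{w}$. Applying it to $c = a$ and to $c = b$, and then substituting $f_{w}(a) = f_{w}(b)$, yields
$$
f_{s}(F_{\sigma}(a)) = G_{\sigma}(f_{w}(a)) = G_{\sigma}(f_{w}(b)) = f_{s}(F_{\sigma}(b)).
$$
This is exactly the statement $(F_{\sigma}(a),F_{\sigma}(b))\in\Ker(f_{s}) = \Ker(f)_{s}$, which is the compatibility we needed; together with $\Ker(f)\in\Eqv(A)$ it establishes that $\Ker(f)\in\Cgr(\mathbf{A})$.

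There is no genuine obstacle here: the entire content is the observation that $f_{w}$ is assembled componentwise from the $f_{w_{i}}$, so that componentwise agreement of $a$ and $b$ under the $f_{w_{i}}$ lifts to agreement of the whole argument tuple, after which the homomorphism identity transports this agreement through $F_{\sigma}$ and $G_{\sigma}$. The only point demanding a little care is the many-sorted bookkeeping, i.e.\ keeping the index $i$ ranging over $\bb{w}$ and matching each $a_{i}$ to its sort $w_{i}$; once the notation is set up correctly the verification is immediate.
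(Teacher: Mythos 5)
Your proof is correct: it is exactly the routine verification that the paper leaves implicit (the proposition is stated without proof in the preliminaries), namely checking the compatibility condition by assembling the componentwise equalities $f_{w_{i}}(a_{i})=f_{w_{i}}(b_{i})$ into $f_{w}(a)=f_{w}(b)$ and transporting it through the homomorphism identity. Nothing is missing.
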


\begin{proposition}
Let $\mathbf{A}$ be a $\Sigma$-algebra and $\Phi\in \mathrm{Cgr}(\mathbf{A})$. Then the pair $(\mathrm{pr}^{\Phi},\mathbf{A}/\Phi)$ is such that:
\begin{enumerate}
\item $\mathrm{Ker}(\mathrm{pr}^{\Phi}) = \Phi$.
\item \emph{(Universal property)} For every homomorphism $f\colon\mathbf{A}\mor \mathbf{B}$, if $\Phi\subseteq \mathrm{Ker}(f)$, then there exists a unique homomorphism $\mathrm{p}^{\Phi,\mathrm{Ker}(f)}$ from $\mathbf{A}/\Phi$ to $\mathbf{B}$ such that $f = \mathrm{p}^{\Phi,\mathrm{Ker}(f)}\circ \mathrm{pr}^{\Phi}$.
\end{enumerate}
\end{proposition}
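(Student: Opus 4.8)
The plan is to reduce everything to the already-established universal property of the $S$-sorted quotient set and then to verify the single extra fact that the induced $S$-sorted mapping respects the operations. First I would observe that the underlying $S$-sorted mapping of the homomorphism $\mathrm{pr}^{\Phi}\colon \mathbf{A}\mor\mathbf{A}/\Phi$ is precisely the canonical projection $\mathrm{pr}^{\Phi}$ from the $S$-sorted set $A$ to $A/\Phi$. Since the kernel of a homomorphism is, by definition, the kernel of its underlying $S$-sorted mapping, item (1), namely $\mathrm{Ker}(\mathrm{pr}^{\Phi}) = \Phi$, is immediate from the corresponding proposition for $S$-sorted sets and $S$-sorted equivalences.

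For item (2), let $f\colon\mathbf{A}\mor\mathbf{B}$ be a homomorphism with $\Phi\subseteq\mathrm{Ker}(f)$, where $\mathbf{B} = (B,G)$. Applying the universal property of the $S$-sorted quotient set to the underlying $S$-sorted mapping of $f$, there is a unique $S$-sorted mapping $\mathrm{p}^{\Phi,\mathrm{Ker}(f)}\colon A/\Phi\mor B$ satisfying $f = \mathrm{p}^{\Phi,\mathrm{Ker}(f)}\comp\mathrm{pr}^{\Phi}$ as $S$-sorted mappings; concretely, $\mathrm{p}^{\Phi,\mathrm{Ker}(f)}_{s}([x]_{\Phi_{s}}) = f_{s}(x)$ for every $s\in S$ and every $x\in A_{s}$. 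The only thing left to prove is that this $S$-sorted mapping is in fact a $\Sigma$-homomorphism from $\mathbf{A}/\Phi$ to $\mathbf{B}$.

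To check compatibility with the operations, fix $\sigma\colon w\mor s$ and a family $([a_{i}]_{\Phi_{w_{i}}})_{i\in\lvert w\rvert}\in (A/\Phi)_{w}$. Using the definition of $F_{\sigma}^{\mathbf{A}/\Phi}$, which sends this family to $[F_{\sigma}((a_{i})_{i\in\lvert w\rvert})]_{\Phi_{s}}$, followed by the defining equation of $\mathrm{p}^{\Phi,\mathrm{Ker}(f)}$ and the hypothesis that $f$ is a homomorphism, one computes that $\mathrm{p}^{\Phi,\mathrm{Ker}(f)}_{s}$ applied to $F_{\sigma}^{\mathbf{A}/\Phi}$ of the argument equals $f_{s}(F_{\sigma}((a_{i})_{i\in\lvert w\rvert})) = G_{\sigma}(f_{w}((a_{i})_{i\in\lvert w\rvert}))$, which in turn, applying $\mathrm{p}^{\Phi,\mathrm{Ker}(f)}_{w_{i}}([a_{i}]_{\Phi_{w_{i}}}) = f_{w_{i}}(a_{i})$ componentwise, equals $G_{\sigma}$ applied to the image of the argument under $\mathrm{p}^{\Phi,\mathrm{Ker}(f)}_{w}$. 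I expect this verification to be the main step, as it is the only place where the algebraic structure, rather than the bare $S$-sorted set structure, intervenes; it is nonetheless a direct substitution and poses no genuine obstacle.

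Finally, uniqueness is inherited at no cost: since a homomorphism is completely determined by its underlying $S$-sorted mapping, and that underlying mapping is already the unique one furnished by the universal property of the $S$-sorted quotient, the factoring homomorphism $\mathrm{p}^{\Phi,\mathrm{Ker}(f)}$ is itself unique.
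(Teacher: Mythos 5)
Your proof is correct: reducing both items to the already-stated universal property of the $S$-sorted quotient set, and then checking compatibility with the operations via the defining equation $\mathrm{p}^{\Phi,\mathrm{Ker}(f)}_{s}([x]_{\Phi_{s}}) = f_{s}(x)$, is exactly the standard argument. The paper itself states this proposition without proof, as a recalled preliminary fact, so there is nothing further to compare; your verification fills in precisely the routine steps the authors omit.
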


Given a $\Sigma$-algebra $\mathbf{A}$ and two congruences $\Phi$ and $\Psi$ on $\mathbf{A}$, if $\Phi\subseteq \Psi$, then, in the sequel, unless otherwise stated, $\mathrm{p}^{\Phi,\Psi}$ stands for the unique homomorphism from $\mathbf{A}/\Phi$ to $\mathbf{A}/\Psi$ such that $\mathrm{p}^{\Phi,\Psi}\circ \mathrm{pr}^{\Phi} = \mathrm{pr}^{\Psi}$.

\begin{remark}
Let $\mathbf{A}$ be a $\Sigma$-algebra. Then, for the congruence $\nabla^{\mathbf{A}}$, we have that $\mathbf{A}/\nabla^{\mathbf{A}}$ is isomorphic to a subalgebra of $\mathbf{1}$. In the single-sorted case, if $\pmb{\varnothing}$ is an algebra, then $\pmb{\varnothing}/\nabla^{\pmb{\varnothing}}$ is $\pmb{\varnothing}$, which is a subalgebra of $\mathbf{1}$, and $\mathrm{Sub}(\mathbf{1}) = \{\pmb{\varnothing},\mathbf{1}\}$. In the many-sorted case, for a set of sorts $S$ such that $\mathrm{card}(S)\geq 2$ and an $S$-sorted signature $\Sigma$ such that, for every $s\in S$, $\Sigma_{\lambda,s} = \varnothing$, we have that $\pmb{\varnothing}^{S}$ is a $\Sigma$-algebra and that $\pmb{\varnothing}^{S}/\nabla^{\pmb{\varnothing}^{S}}$ is $\pmb{\varnothing}^{S}$. However, in contrast with what happens in the single-sorted case, in the many-sorted case, for a set of sorts $S$ and an $S$-sorted signature $\Sigma$ such as those above, there may be $\Sigma$-algebras $\mathbf{A}$ such that $\varnothing\subset\mathrm{supp}_{S}(\mathbf{A})\subset S$. Hence, for such a type of $\Sigma$-algebras, the quotient $\Sigma$-algebra $\mathbf{A}/\nabla^{\mathbf{A}}$ will be isomorphic to a subalgebra in $\mathrm{Sub}(\mathbf{1}) - \{\pmb{\varnothing}^{S},\mathbf{1}\}$.
\end{remark}

Following this we state that the forgetful functor $\mathrm{G}_{\Sigma}$ from $\mathbf{Alg}(\Sigma)$ to
$\mathbf{Set}^{S}$ has a left adjoint $\mathbf{T}_{\Sigma}$ which assigns to an $S$-sorted set $X$ the free $\Sigma$-algebra $\mathbf{T}_{\Sigma}(X)$ on $X$.

\begin{definition}
Let $\Sigma$ be an $S$-sorted signature and $X$ an $S$-sorted set. The \emph{algebra of} $\Sigma$-\emph{rows} in $X$, denoted by $\mathbf{W}_{\Sigma}(X)$, is defined as follows:
\begin{enumerate}
\item For every $s\in S$, ${\mathrm{W}_{\Sigma}(X)}_{s}=
      \fmon{(\coprod\Sigma \bcoprod \coprod X)}$, i.e., the underlying
      $S$\nobreakdash-sorted set of $\mathbf{W}_{\Sigma}(X)$ is, for every $s\in S$,
      the set of all words on the alphabet $\coprod\Sigma \bcoprod \coprod X$.

\item For every $(w,s)\in\fmon{S}\times S$, and every $\sigma\in\Sigma_{w,s}$, the structural operation $F_{\sigma}$ associated to $\sigma$ is the mapping from $\mathrm{W}_{\Sigma}(X)_{w}$ to ${\mathrm{W}_{\Sigma}(X)}_{s}$ defined as follows:
      $$
      F_{\sigma}
      \nfunction
      {\mathrm{W}_{\Sigma}(X)_{w}}
      {{\mathrm{W}_{\Sigma}(X)}_{s}}
      {(P_{i})_{i\in\bb{w}}}
      {(\sigma)\bconcat\concat_{i\in\bb{w}}P_{i}}
      $$
      where $(\sigma)$ abbreviates $(((\sigma,(w,s)),0))$, which,
      in its turn, is obtained as indicated in the following figure
      $$\xymatrix@C=2.50pc@R=1pc{
      \Sigma_{w,s} \ar[r]^-{\inc_{\Sigma_{w,s}}} & \coprod \Sigma
      \ar[r]^-{\inc_{\coprod\Sigma}} &
      \coprod\Sigma\bcoprod\coprod X
      \ar[r]^-{\eta_{\coprod\Sigma\bcoprod\coprod X}} &
      \fmon{(\coprod\Sigma\bcoprod\coprod X)} \\
      \sigma \ar@{|->}[r]  & (\sigma,(w,s)) \ar@{|->}[r] &
      ((\sigma,(w,s)),0) \ar@{|->}[r]  &
      (((\sigma,(w,s)),0))\equiv(\sigma) }
      $$
\end{enumerate}
\end{definition}

\begin{definition}
The \emph{free} $\Sigma$-\emph{algebra on} an $S$-sorted set $X$, denoted by $\mathbf{T}_{\Sigma}(X)$, is the $\Sigma$-algebra determined by $\mathrm{Sg}_{\mathbf{W}_{\Sigma}(X)}((\{(x)\mid x\in X_{s}\})_{s\in S})$, the subalgebra of $\mathbf{W}_{\Sigma}(X)$ generated by $(\{(x)\mid x\in X_{s}\})_{s\in S}$, where, for every $s\in S$ and every $x\in X_{s}$, $(x)$ abbreviates $(((x,s),1))$, which, in its turn, is obtained as indicated in the following figure
$$
\xymatrix@C=3pc@R=1pc{
X_{s} 
      \ar[r]^-{\inc_{X_{s}}} &
\coprod X \ar[r]^-{\inc_{\coprod X}} &
\coprod\Sigma\bcoprod\coprod X
\ar[r]^-{\eta_{\coprod\Sigma\bcoprod\coprod X}} &
\fmon{(\coprod\Sigma\bcoprod\coprod X)} \\
x \ar@{|->}[r]  & (x,s) \ar@{|->}[r]  & ((x,s),1) \ar@{|->}[r]  &
(((x,s),1))\equiv(x) }
$$
\end{definition}

\begin{proposition}\label{rut}
Let $\Sigma$ be an $S$-sorted signature and $X$ an $S$-sorted set. Then, for every $s\in S$ and every $P\in
\mathrm{W}_{\Sigma}(X)_{s}$, we have that $P\in \mathrm{T}_{\Sigma}(X)_{s}$, if and only if
\begin{enumerate}
\item $P = (x)$, for a unique $x\in
      X_{s}$, or

\item $P = (\sigma)$, for a unique $\sigma\in\Sigma_{\lambda,s}$,
      or

\item $P = (\sigma)\bconcat\concat(P_{i})_{i\in\bb{w}}$, for
      a unique $w\in\fmon{S}-\{\lambda\}$, a unique
      $\sigma\in\Sigma_{w,s}$, and a unique family
      $(P_{i})_{i\in\bb{w}}$ in $\in\mathrm{T}_{\Sigma}(X)_{w}$.
\end{enumerate}
Moreover, the three possibilities are mutually exclusive.
\end{proposition}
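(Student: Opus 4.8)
The plan is to realize $\mathbf{T}_{\Sigma}(X)$ as an inductively generated set of words and then to settle the uniqueness clauses by proving unique readability of prefix (Polish) notation through an arity-counting weight function.

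I would first dispose of the membership biconditional. The implication from one of (1), (2), (3) to $P\in\mathrm{T}_{\Sigma}(X)_{s}$ is immediate from the definition of $\mathbf{T}_{\Sigma}(X)$: since $\mathrm{T}_{\Sigma}(X)=\Sg_{\mathbf{W}_{\Sigma}(X)}((\{(x)\mid x\in X_{s}\})_{s\in S})$ is a subalgebra of $\mathbf{W}_{\Sigma}(X)$ containing the variables, it contains each $(x)$ with $x\in X_{s}$, and, being closed under the structural operations, it contains the value $(\sigma)$ of the nullary $F_{\sigma}$ for $\sigma\in\Sigma_{\lambda,s}$ as well as $(\sigma)\bconcat\concat_{i\in\bb{w}}P_{i}$ for $\sigma\in\Sigma_{w,s}$ with $\bb{w}\geq 1$, whenever $(P_{i})_{i\in\bb{w}}$ already lies in $\mathrm{T}_{\Sigma}(X)_{w}$. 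For the converse, let $V=(V_{s})_{s\in S}$ be the least $S$-sorted subset of $\mathrm{W}_{\Sigma}(X)$ closed under the three formation rules (1), (2), (3). Then $V$ contains the variables and is closed under every $F_{\sigma}$, so it is a subalgebra, and minimality of $\Sg$ gives $\mathrm{T}_{\Sigma}(X)\subseteq V$; conversely every formation step stays inside $\mathrm{T}_{\Sigma}(X)$, so $V\subseteq\mathrm{T}_{\Sigma}(X)$. Hence $V=\mathrm{T}_{\Sigma}(X)$, which is precisely the asserted equivalence.

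The substantive content, and the main obstacle, is the uniqueness together with the mutual exclusivity, i.e. unique readability. I would introduce the weight homomorphism $\mu\colon\fmon{(\coprod\Sigma\bcoprod\coprod X)}\mor(\mathbb{Z},+,0)$ determined on letters by $\mu((\sigma))=\bb{w}-1$ for $\sigma\in\Sigma_{w,s}$ and $\mu((x))=-1$ for $x\in X_{s}$. By structural induction along the description of $V$ just obtained, one proves the two prefix properties: $\mu(P)=-1$ for every $P\in\mathrm{T}_{\Sigma}(X)_{s}$, whereas $\mu(P')\geq 0$ for every nonempty proper prefix $P'$ of such a $P$. Both are routine checks on the three rules; for rule (3) a nonempty proper prefix consists of $(\sigma)$ followed by finitely many complete subterms $P_{0},\dots,P_{k-1}$ and a (possibly empty) proper prefix of $P_{k}$ with $k\leq\bb{w}-1$, whence the inductive hypothesis gives $\mu\geq(\bb{w}-1)-k\geq 0$.

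Finally I would derive unique readability from these inequalities. Mutual exclusivity is settled by length: rules (1) and (2) produce one-letter words, told apart by whether the single letter lies in $\coprod X$ or in $\coprod\Sigma$ (a disjoint union, so no letter is both), whereas rule (3) forces length $\geq 2$. In case (1) (resp. (2)) the single letter determines $x$ (resp. $\sigma$). In case (3) the first letter of $P$ is $(\sigma)$ and hence encodes $\sigma$ together with its rank $(w,s)$, so $w$ and $\sigma$ are unique; writing $P=(\sigma)\bconcat Q$, the block $P_{0}$ is forced to be the shortest nonempty prefix of $Q$ with $\mu$-value $-1$, because any strictly shorter nonempty prefix is a proper prefix of the term $P_{0}$ and so has $\mu\geq 0$, while any term-prefix of $Q$ properly containing $P_{0}$ would have $P_{0}$ as a nonempty proper prefix and thus force $\mu(P_{0})\geq 0$, contradicting $\mu(P_{0})=-1$. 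Iterating peels off $P_{1},\dots,P_{\bb{w}-1}$ uniquely, so the family $(P_{i})_{i\in\bb{w}}$ in $\mathrm{T}_{\Sigma}(X)_{w}$ is unique. This establishes all three uniqueness clauses and their mutual exclusivity, completing the proof.
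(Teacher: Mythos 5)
The paper states Proposition~\ref{rut} in its preliminaries \emph{without proof} (it is recalled from the many-sorted term-algebra literature), so there is no in-paper argument to compare against; judged on its own, your proof is correct. Your two-part structure is the standard one: first identify $\mathrm{T}_{\Sigma}(X)=\Sg_{\mathbf{W}_{\Sigma}(X)}((\{(x)\mid x\in X_{s}\})_{s\in S})$ with the least $S$-sorted subset closed under the three formation rules, so that membership is equivalent to being produced by at least one rule (this uses the standard fact that the least set closed under the rules is a fixed point of the rule operator, which you invoke implicitly but correctly); then settle uniqueness and exclusivity by unique readability of prefix notation. The weight function $\mu$ with $\mu((\sigma))=\bb{w}-1$ and $\mu((x))=-1$ is well defined precisely because the paper's encoding $(((\sigma,(w,s)),0))$, $(((x,s),1))$ stores the rank and sort inside the letter, and the same encoding is what makes your case analysis close: disjointness of $\coprod\Sigma$ and $\coprod X$ separates cases (1) and (2), length separates both from case (3), and in case (3) the first letter already determines $\sigma$ and $w$, after which your two inequalities ($\mu=-1$ on terms, $\mu\geq 0$ on nonempty proper prefixes of terms) force each block in turn to be the shortest nonempty prefix of the remaining word with $\mu$-value $-1$, giving uniqueness of the family $(P_{i})_{i\in\bb{w}}$. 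The inductive verification of the prefix inequality for rule (3) is also handled correctly, including the decomposition of a proper prefix into complete blocks followed by a possibly empty proper prefix of the next block. In short: a complete and correct proof of a statement the paper leaves unproved.
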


From the above proposition it follows, immediately, the universal property of the free $\Sigma$-algebra on an $S$-sorted set $X$, as stated in the subsequent proposition.

\begin{proposition}
For every $S$-sorted set $X$, the pair $(\eta^{X},\mathbf{T}_{\Sigma}(X))$, where $\eta^{X}$, the
\emph{insertion (of the generators)} $X$ \emph{into} $\mathrm{T}_{\Sigma}(X)$, is the co-restric\-tion to
$\mathrm{T}_{\Sigma}(X)$ of the canonical embedding of $X$ into $\mathrm{W}_{\Sigma}(X)$, is a universal morphism from $X$ to $\mathbf{T}_{\Sigma}$, i.e., for every $\Sigma$-algebra $\mathbf{A}$ and every $S$-sorted mapping $f\colon X\mor A$, there exists a unique homomorphism $f^{\sharp}\colon\mathbf{T}_{\Sigma}(X)\mor\mathbf{A}$ such that $f^{\sharp}\circ \eta^{X} = f$.
\end{proposition}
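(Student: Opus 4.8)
The plan is to construct \ext{f} by structural recursion on the elements of $\mathrm{T}_{\Sigma}(X)$, using the unique readability supplied by Proposition~\ref{rut}, and then to verify successively that the resulting $S$-sorted mapping is a homomorphism, that it satisfies $\ext{f}\comp \eta^{X} = f$, and that it is the unique homomorphism with this property. Throughout, I write $F^{\mathbf{A}}_{\sigma}$ for the operations of $\mathbf{A}$ and I recall that the operations of $\mathbf{T}_{\Sigma}(X)$, being inherited from $\mathbf{W}_{\Sigma}(X)$, send a family $(P_{i})_{i\in\bb{w}}$ in $\mathrm{T}_{\Sigma}(X)_{w}$ to $(\sigma)\bconcat\concat_{i\in\bb{w}}P_{i}$.

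First I would define, for each $s\in S$, a mapping $\ext{f}_{s}\colon \mathrm{T}_{\Sigma}(X)_{s}\mor A_{s}$ by recursion on the length of the words, according to the three mutually exclusive cases of Proposition~\ref{rut}: if $P = (x)$ for the unique $x\in X_{s}$, then $\ext{f}_{s}(P) = f_{s}(x)$; if $P = (\sigma)$ for the unique $\sigma\in\Sigma_{\lambda,s}$, then $\ext{f}_{s}(P) = F^{\mathbf{A}}_{\sigma}$ evaluated at the unique element of the one-point set $A_{\lambda}$; and if $P = (\sigma)\bconcat\concat_{i\in\bb{w}}P_{i}$ for the unique $w\neq\lambda$, $\sigma\in\Sigma_{w,s}$, and $(P_{i})_{i\in\bb{w}}\in\mathrm{T}_{\Sigma}(X)_{w}$, then $\ext{f}_{s}(P) = F^{\mathbf{A}}_{\sigma}((\ext{f}_{w_{i}}(P_{i}))_{i\in\bb{w}})$. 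This recursion is legitimate precisely because Proposition~\ref{rut} guarantees that the three cases are mutually exclusive and that, within each case, the data $x$, $\sigma$, $w$, and $(P_{i})_{i\in\bb{w}}$ are uniquely determined, so the clauses never conflict; and it terminates because in the third case each $P_{i}$ is a proper, hence strictly shorter, subword of $P$.

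Next I would check the three required properties, each of which is now almost immediate. For the homomorphism condition, the third clause says, by construction, that $\ext{f}_{s}(F_{\sigma}((P_{i})_{i\in\bb{w}})) = F^{\mathbf{A}}_{\sigma}((\ext{f}_{w_{i}}(P_{i}))_{i\in\bb{w}})$ for every $\sigma\colon w\mor s$ with $w\neq\lambda$, and the second clause gives the corresponding identity for the nullary operations; hence \ext{f} is a homomorphism. For the extension property, since $\eta^{X}_{s}$ sends $x\in X_{s}$ to $(x)\in\mathrm{T}_{\Sigma}(X)_{s}$, the first clause yields $\ext{f}_{s}(\eta^{X}_{s}(x)) = f_{s}(x)$, so $\ext{f}\comp \eta^{X} = f$. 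For uniqueness, any homomorphism $g\colon\mathbf{T}_{\Sigma}(X)\mor\mathbf{A}$ with $g\comp\eta^{X} = f$ must agree with \ext{f} on each generator $(x)$; a structural induction along the three cases of Proposition~\ref{rut}, using that both $g$ and \ext{f} commute with the operations, then forces $g_{s}(P) = \ext{f}_{s}(P)$ for every $P$. (Equivalently, one may invoke that $\mathbf{T}_{\Sigma}(X)$ is by definition generated by the image of $\eta^{X}$, whence two homomorphisms agreeing there coincide.)

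The only genuine subtlety is the well-definedness of the recursive definition of \ext{f}, and this rests entirely on the unique readability of the elements of $\mathrm{T}_{\Sigma}(X)$ established in Proposition~\ref{rut}: the mutual exclusivity of the three cases together with the uniqueness of the decomposition is exactly what makes the recursion unambiguous and, via the strict decrease of word length, well-founded. Once that is in place, the homomorphism, extension, and uniqueness verifications are routine, so I expect no further obstacle.
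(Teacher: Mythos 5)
Your proposal is correct and takes essentially the same approach as the paper: the paper offers no written proof, asserting only that the universal property follows immediately from Proposition~\ref{rut}, and your construction of $f^{\sharp}$ by structural recursion justified by the unique readability (mutual exclusivity and uniqueness of the decomposition) in that proposition, followed by the routine homomorphism, extension, and uniqueness verifications, is precisely the intended derivation.
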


\begin{corollary} \label{FladjG}
The functor $\mathbf{T}_{\Sigma}$ (which assigns to an $S$-sorted set $A$, $\mathbf{T}_{\Sigma}(A)$, and to an $S$-sorted mapping  $f\colon A\mor B$, $(\eta^{B}\circ f)^{\sharp}$) is left adjoint for the forgetful functor $\G_{\Sigma}$ from $\mathbf{Alg}(\Sigma)$ to $\mathbf{Set}^{S}$.
\end{corollary}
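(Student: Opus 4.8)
The plan is to deduce the adjunction directly from the universal property established in the preceding proposition, invoking the standard correspondence between families of universal arrows and left adjoint functors (see, e.g., \cite{sM98}). That proposition provides, for every $S$-sorted set $X$, a universal arrow $\eta^{X}\colon X\mor \G_{\Sigma}(\mathbf{T}_{\Sigma}(X))$ from $X$ to the forgetful functor $\G_{\Sigma}$; what remains is to organize these data into a functor and a natural bijection, all of which is forced by the uniqueness clause of the universal property rather than by any computation with the explicit construction of $\mathbf{T}_{\Sigma}(X)$.

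First I would check that $\mathbf{T}_{\Sigma}$ is a genuine functor. On objects it is already defined, and for an $S$-sorted mapping $f\colon A\mor B$ the prescription $\mathbf{T}_{\Sigma}(f) = (\eta^{B}\comp f)^{\sharp}$ is the unique homomorphism satisfying $\mathbf{T}_{\Sigma}(f)\comp \eta^{A} = \eta^{B}\comp f$. Preservation of identities follows because both $\mathbf{T}_{\Sigma}(\id_{A})$ and $\id_{\mathbf{T}_{\Sigma}(A)}$ solve the universal problem for the mapping $\eta^{A}\comp\id_{A}$, hence coincide; preservation of composition follows because, for $f\colon A\mor B$ and $g\colon B\mor C$, both $\mathbf{T}_{\Sigma}(g)\comp \mathbf{T}_{\Sigma}(f)$ and $\mathbf{T}_{\Sigma}(g\comp f)$ solve the universal problem for $\eta^{C}\comp g\comp f$.

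Next I would exhibit the adjunction bijection. For a $\Sigma$-algebra $\mathbf{A}$ and an $S$-sorted set $X$, consider the map
$$
\Hom_{\mathbf{Alg}(\Sigma)}(\mathbf{T}_{\Sigma}(X),\mathbf{A})\longrightarrow \Hom_{\mathbf{Set}^{S}}(X,\G_{\Sigma}(\mathbf{A})),\qquad g\longmapsto \G_{\Sigma}(g)\comp \eta^{X}.
$$
The existence part of the universal property asserts that every $f\colon X\mor \G_{\Sigma}(\mathbf{A})$ equals $\G_{\Sigma}(f^{\sharp})\comp \eta^{X}$, so this map is surjective; the uniqueness part asserts that $f^{\sharp}$ is the only homomorphism with that property, so the map is injective. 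Hence it is a bijection, with inverse $f\lmapsto f^{\sharp}$.

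Finally, naturality in both variables must be verified. Naturality in $\mathbf{A}$ reduces to functoriality of $\G_{\Sigma}$ together with associativity of composition, while naturality in $X$ uses the defining equation $\mathbf{T}_{\Sigma}(f)\comp \eta^{A} = \eta^{B}\comp f$ for the action of $\mathbf{T}_{\Sigma}$ on arrows. I expect this last square — naturality in the $\mathbf{Set}^{S}$-variable — to be the only point demanding genuine care, and even there the commutativity is dictated by uniqueness in the universal property. Together these facts establish that $\mathbf{T}_{\Sigma}\dashv \G_{\Sigma}$, with unit given by the arrows $\eta^{X}$, which is the assertion of the corollary.
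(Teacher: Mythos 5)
Your proposal is correct and follows exactly the route the paper intends: the corollary is stated without proof precisely because it is the standard consequence (via Mac Lane's correspondence between families of universal arrows and left adjoints) of the preceding proposition on the universal property of $(\eta^{X},\mathbf{T}_{\Sigma}(X))$, and your argument—functoriality of $\mathbf{T}_{\Sigma}$ forced by uniqueness, the bijection $g\longmapsto \G_{\Sigma}(g)\comp\eta^{X}$ with inverse $f\longmapsto f^{\sharp}$, and naturality in both variables—is that standard derivation spelled out. Nothing is missing; the details you supply are exactly the ones the paper leaves implicit.
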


We next recall a lemma which, together with the universal property of the free $\Sigma$-algebra on an $S$-sorted set, allows one to prove that every free $\Sigma$-algebra on an $S$-sorted set is projective.

\begin{lemma}
Let $X$ be an $S$-sorted set, $\mathbf{A}$ a $\Sigma$-algebra, and $f$, $g$ two homomorphisms from $\mathbf{T}_{\Sigma}(X)$ to $\mathbf{A}$. If $f\circ\eta^{X} = g\circ\eta^{X}$, then $f = g$.
\end{lemma}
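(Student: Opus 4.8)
The plan is to exploit the fact that, by definition, $\mathbf{T}_{\Sigma}(X)$ is the subalgebra of $\mathbf{W}_{\Sigma}(X)$ generated by the image $\eta^{X}[X]$ of the generators, together with the standard principle that the locus on which two homomorphisms coincide is a subalgebra. Concretely, I would consider the $S$-sorted set $E = (E_{s})_{s\in S}$ defined, for every $s\in S$, by
\[
E_{s} = \{P\in \mathrm{T}_{\Sigma}(X)_{s}\mid f_{s}(P) = g_{s}(P)\},
\]
the ``equalizer'' of $f$ and $g$. The goal is then to show that $E$ is a subalgebra of $\mathbf{T}_{\Sigma}(X)$ which contains all the generators, whence $\Sg_{\mathbf{T}_{\Sigma}(X)}(\eta^{X}[X])\subseteq E$; since $\mathbf{T}_{\Sigma}(X)$ is the subalgebra generated by $\eta^{X}[X]$, this forces $E = \mathrm{T}_{\Sigma}(X)$, i.e. $f_{s}(P) = g_{s}(P)$ for every $s\in S$ and every $P\in \mathrm{T}_{\Sigma}(X)_{s}$, which is exactly $f = g$.

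First I would check that $E\in\mathrm{Sub}(\mathbf{T}_{\Sigma}(X))$. For a formal operation $\sigma\colon w\mor s$ and a family $(P_{i})_{i\in\lvert w\rvert}\in E_{w}$, the defining property of $E$ yields $f_{w_{i}}(P_{i}) = g_{w_{i}}(P_{i})$ for every $i\in\lvert w\rvert$, hence $f_{w}((P_{i})_{i\in\lvert w\rvert}) = g_{w}((P_{i})_{i\in\lvert w\rvert})$; using that both $f$ and $g$ are homomorphisms,
\[
f_{s}(F_{\sigma}((P_{i})_{i\in\lvert w\rvert})) = F_{\sigma}(f_{w}((P_{i})_{i\in\lvert w\rvert})) = F_{\sigma}(g_{w}((P_{i})_{i\in\lvert w\rvert})) = g_{s}(F_{\sigma}((P_{i})_{i\in\lvert w\rvert})),
\]
so $F_{\sigma}((P_{i})_{i\in\lvert w\rvert})\in E_{s}$ and $E$ is closed under the operations of $\mathbf{T}_{\Sigma}(X)$ (the case $w=\lambda$ is the empty family and subsumes the nullary operations, so $E$ automatically contains the constants). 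Next I would verify that $\eta^{X}[X]\subseteq E$: for $x\in X_{s}$ the generator $\eta^{X}_{s}(x) = (x)$ satisfies $f_{s}((x)) = (f\circ\eta^{X})_{s}(x) = (g\circ\eta^{X})_{s}(x) = g_{s}((x))$ by the hypothesis $f\circ\eta^{X} = g\circ\eta^{X}$, whence $(x)\in E_{s}$.

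With these two facts the conclusion is immediate from the monotonicity and idempotence of $\Sg_{\mathbf{T}_{\Sigma}(X)}$. I do not expect any genuine obstacle: the entire content is the two verifications above, both routine consequences of the homomorphism condition and of the defining generation property of $\mathbf{T}_{\Sigma}(X)$. An equally acceptable alternative would avoid the equalizer and argue by structural induction on $P$ using the trichotomy of Proposition~\ref{rut}: the generator case and the nullary case $P = (\sigma)$ with $\sigma\in\Sigma_{\lambda,s}$ serve as base cases (the latter because a homomorphism sends a constant to the corresponding constant of $\mathbf{A}$), while the case $P = (\sigma)\cncat\concat_{i\in\lvert w\rvert}P_{i}$ is handled exactly as in the displayed computation. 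The only point needing minor care in that variant is that the decomposition supplied by Proposition~\ref{rut} is \emph{unique}, so the inductive hypothesis applies unambiguously to the components $P_{i}$.
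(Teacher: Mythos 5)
Your proof is correct. Note that the paper itself states this lemma without proof (it is explicitly ``recalled'' as a standard fact, used only to establish that free $\Sigma$-algebras are projective), so there is no argument in the paper to compare against; your equalizer argument --- that $E=(E_{s})_{s\in S}$ with $E_{s}=\{P\in\mathrm{T}_{\Sigma}(X)_{s}\mid f_{s}(P)=g_{s}(P)\}$ is a subalgebra containing $\eta^{X}[X]$, hence all of $\mathrm{T}_{\Sigma}(X)$ since by definition $\mathbf{T}_{\Sigma}(X)$ is generated by $\eta^{X}[X]$ --- is precisely the standard way to fill this gap, and your handling of the nullary case via the empty family is sound. The inductive variant you sketch via Proposition~\ref{rut} is equally valid, and your remark that the uniqueness of the decomposition is what legitimizes the induction is the right point of care.
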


\begin{proposition}\label{FreeProj}
Let $X$ be an $S$-sorted set, then $\mathbf{T}_{\Sigma}(X)$ is projective, i.e., for every epimorphism $f\colon \mathbf{A}\mor \mathbf{B}$ and every homomorphism $g\colon \mathbf{T}_{\Sigma}(X)\mor \mathbf{B}$, there exists a homomorphism $h\colon \mathbf{T}_{\Sigma}(X)\mor \mathbf{A}$ such that $f\circ h = g$.
\end{proposition}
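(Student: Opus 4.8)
The plan is to reduce the projectivity of $\mathbf{T}_{\Sigma}(X)$ to the universal property of the free $\Sigma$-algebra together with the fact that an epimorphism in $\mathbf{Alg}(\Sigma)$ is a componentwise surjective homomorphism. First I would restrict the given homomorphism $g\colon \mathbf{T}_{\Sigma}(X)\mor \mathbf{B}$ to the generators: composing with the insertion $\eta^{X}$ yields an $S$-sorted mapping $g\circ \eta^{X}\colon X\mor B$. Since $f$ is surjective, each component $f_{s}\colon A_{s}\mor B_{s}$ is onto, so, invoking the axiom of choice, for every $s\in S$ and every $x\in X_{s}$ one can select an element $t_{s}(x)\in A_{s}$ with $f_{s}(t_{s}(x)) = (g\circ \eta^{X})_{s}(x)$. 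This defines an $S$-sorted mapping $t = (t_{s})_{s\in S}\colon X\mor A$ satisfying $f\circ t = g\circ \eta^{X}$.

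Before carrying this out one must check that such an $S$-sorted mapping $t$ into $A$ actually exists, which in the many-sorted setting is a genuine (if small) point. By Proposition~\ref{propssupport}, surjectivity of $f$ gives $\mathrm{supp}_{S}(A) = \mathrm{supp}_{S}(B)$, while the mere existence of the $S$-sorted mapping $g\circ\eta^{X}$ gives $\mathrm{supp}_{S}(X)\subseteq \mathrm{supp}_{S}(B)$; hence $\mathrm{supp}_{S}(X)\subseteq \mathrm{supp}_{S}(A)$, so whenever $X_{s}\neq\varnothing$ we also have $A_{s}\neq\varnothing$ and the selection of preimages never meets an empty fibre.

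Next I would extend $t$ to a homomorphism. By the universal property of the pair $(\eta^{X},\mathbf{T}_{\Sigma}(X))$ there is a unique homomorphism $h = t^{\sharp}\colon \mathbf{T}_{\Sigma}(X)\mor \mathbf{A}$ with $h\circ \eta^{X} = t$. It then remains to verify $f\circ h = g$. Both $f\circ h$ and $g$ are homomorphisms from $\mathbf{T}_{\Sigma}(X)$ to $\mathbf{B}$, and they agree on the generators, since $(f\circ h)\circ \eta^{X} = f\circ(h\circ \eta^{X}) = f\circ t = g\circ \eta^{X}$. By the preceding Lemma, two homomorphisms out of a free $\Sigma$-algebra that coincide after precomposition with $\eta^{X}$ are equal; hence $f\circ h = g$, as required.

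The only genuinely delicate step is the very first one, the lifting of $g\circ\eta^{X}$ along $f$: it hinges on epimorphisms in $\mathbf{Alg}(\Sigma)$ being surjective on each sort (so that the fibres over which we choose preimages are nonempty) and on the support computation above, which guarantees that the lift $t$ is a legitimate $S$-sorted mapping into $A$ rather than an ill-defined family. Everything after that is the formal machinery of freeness and the uniqueness \emph{Lemma}, so no further obstacle is expected.
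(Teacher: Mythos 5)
Your proof is correct and follows exactly the route the paper intends: the paper states Proposition~\ref{FreeProj} without a written proof, but prefaces it by saying it follows from the uniqueness Lemma together with the universal property of $(\eta^{X},\mathbf{T}_{\Sigma}(X))$, which is precisely your argument (choice-lift of $g\circ\eta^{X}$ along the surjective $f$, extension via $(\cdot)^{\sharp}$, and the Lemma to conclude $f\circ h = g$). Your extra remark on supports is harmless but redundant, since surjectivity of each $f_{s}$ already guarantees every fibre over an element of $B_{s}$ is nonempty.
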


We next recall that every $\Sigma$-algebra is a homomorphic image of a free $\Sigma$-algebra on an $S$-sorted set.

\begin{proposition}\label{AlgIsoQuotFree}
Let $\mathbf{A}$ be a $\Sigma$-algebra. Then $\mathbf{A}$ is isomorphic to a quotient of a free $\Sigma$-algebra on an $S$-sorted set.
\end{proposition}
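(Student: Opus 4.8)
The plan is to realize $\mathbf{A}$ as the homomorphic image, under a canonical surjection, of the free $\Sigma$-algebra built on its own underlying $S$-sorted set, and then to identify this image with the quotient by the kernel of that surjection. First I would take $X = A$, the underlying $S$-sorted set of $\mathbf{A} = (A,F)$, and consider the identity $S$-sorted mapping $\id_{A}\colon A\mor A$, viewing the codomain as $\G_{\Sigma}(\mathbf{A})$. By the universal property of the free $\Sigma$-algebra on $A$ (the proposition preceding Corollary~\ref{FladjG}), there is a unique homomorphism $\id_{A}^{\sharp}\colon \mathbf{T}_{\Sigma}(A)\mor \mathbf{A}$ satisfying $\id_{A}^{\sharp}\comp \eta^{A} = \id_{A}$; this is precisely the counit of the adjunction $\mathbf{T}_{\Sigma}\ladj \G_{\Sigma}$ evaluated at $\mathbf{A}$.

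The second step is to check that $\id_{A}^{\sharp}$ is surjective. Its image $\Img(\id_{A}^{\sharp})$ is a subalgebra of $\mathbf{A}$, and from $\id_{A}^{\sharp}\comp\eta^{A} = \id_{A}$ this image contains $\Img(\id_{A}) = A$. Since a subalgebra of $\mathbf{A}$ whose underlying $S$-sorted set is the whole of $A$ must coincide with $\mathbf{A}$, we conclude $\Img(\id_{A}^{\sharp}) = \mathbf{A}$, so $\id_{A}^{\sharp}$ is an epimorphism. Equivalently, one notes that $A$ is a generating subset of $\mathbf{A}$, i.e., $\Sg_{\mathbf{A}}(A) = A$.

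Finally, I would set $\Theta = \Ker(\id_{A}^{\sharp})$, which by the earlier proposition is a congruence on $\mathbf{T}_{\Sigma}(A)$. Applying the universal property of the quotient to $f = \id_{A}^{\sharp}$ with $\Phi = \Theta = \Ker(f)$ yields a unique homomorphism $\mathrm{p}^{\Theta,\Theta}\colon \mathbf{T}_{\Sigma}(A)/\Theta\mor \mathbf{A}$ with $\id_{A}^{\sharp} = \mathrm{p}^{\Theta,\Theta}\comp\mathrm{pr}^{\Theta}$. This map is surjective because $\id_{A}^{\sharp}$ is, and injective because, for $P$, $Q\in \mathrm{T}_{\Sigma}(A)_{s}$, equality of their images under $\id_{A}^{\sharp}$ forces $(P,Q)\in\Theta_{s}$, whence $[P]_{\Theta_{s}} = [Q]_{\Theta_{s}}$. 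The main obstacle is this last point: one must verify that a bijective homomorphism of $\Sigma$-algebras is in fact an isomorphism, i.e., that the componentwise-inverse $S$-sorted mapping underlying $\mathrm{p}^{\Theta,\Theta}$ is again a $\Sigma$-homomorphism. This verification is routine but is where the argument has genuine content, and it is the step I would write out in full. Once it is granted, $\mathbf{A}\iso \mathbf{T}_{\Sigma}(A)/\Theta$, exhibiting $\mathbf{A}$ as a quotient of a free $\Sigma$-algebra, as required.
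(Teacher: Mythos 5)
Your proof is correct, and it is precisely the standard argument that the paper relies on implicitly: the paper states Proposition~\ref{AlgIsoQuotFree} as a recalled fact and gives no proof of its own. Realizing $\mathbf{A}$ as the surjective image of the counit $\id_{A}^{\sharp}\colon\mathbf{T}_{\Sigma}(A)\mor\mathbf{A}$ and then identifying $\mathbf{A}$ with $\mathbf{T}_{\Sigma}(A)/\Ker(\id_{A}^{\sharp})$ via the universal property of the quotient (together with the routine check that a bijective homomorphism of $\Sigma$-algebras is an isomorphism) is exactly the canonical route, so there is nothing to object to.
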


We next define the concept of subdirect product of a family of $\Sigma$-algebras. But before doing that, for two $\Sigma$-algebras $\mathbf{A}$ and $\mathbf{B}$, from now on $\mathrm{Mon}(\mathbf{A},\mathbf{B})$ stands for the set of all monomorphisms from $\mathbf{A}$ to $\mathbf{B}$ and $\mathrm{Epi}(\mathbf{A},\mathbf{B})$ stands for the set of all epimorphisms from $\mathbf{A}$ to $\mathbf{B}$.

\begin{definition}\label{Sdprod}
Let $I$ be a set in $\boldsymbol{\mathcal{U}}$. A $\Sigma$-algebra $\mathbf{A}$ is a \emph{subdirect product} of a family of $\Sigma$-algebras $(\mathbf{A}^{i})_{i\in I}$ if it satisfies the following conditions:
\begin{enumerate}
\item $\mathbf{A}$ is a subalgebra of $\prod_{i\in I}\mathbf{A}^{i}$.

\item For every $i\in I$, $\pr^{i}\rest \mathbf{A}$ is surjective, where $\pr^{i}\rest \mathbf{A}$ is the restriction to $\mathbf{A}$ of $\pr^{i}\colon \prod_{i\in I}\mathbf{A}^{i}\mor \mathbf{A}^{i}$.
\end{enumerate}
On the other hand, we will say that an embedding ($\equiv$ injective homomorphism) $f\colon\mathbf{A}\mor \prod_{i\in I}\mathbf{A}^{i}$ is \emph{subdirect} if $f[\mathbf{A}]$, the $\Sigma$-algebra canonically associated to the subalgebra $f[A]$ of $\prod_{i\in I}\mathbf{A}^{i}$, is a subdirect product of $(\mathbf{A}^{i})_{i\in I}$. We will denote by $\mathrm{Em}_{\mathrm{sd}}(\mathbf{A},\prod_{i\in I}\mathbf{A}^{i})$ the set of all subdirect embeddings of $\mathbf{A}$ in  $\prod_{i\in I}\mathbf{A}^{i}$, i.e., the subset of $\mathrm{Mon}(\mathbf{A},\prod_{i\in I}\mathbf{A}^{i})$ defined as follows:
$$
\textstyle
\mathrm{Em}_{\mathrm{sd}}(\mathbf{A},\prod_{i\in I}\mathbf{A}^{i}) = \{f\in \mathrm{Mon}(\mathbf{A},\prod_{i\in I}\mathbf{A}^{i})\mid \forall i\in I\, (\mathrm{pr}^{i}\circ f\in \mathrm{Epi}(\mathbf{A},\mathbf{A}^{i}))\}.
$$
Moreover, we will say that two subdirect embeddings $f\colon\mathbf{A}\mor \prod_{i\in I}\mathbf{A}^{i}$ and $g\colon\mathbf{A}\mor \prod_{i\in I}\mathbf{B}^{i}$ are \emph{isomorphic} if, and only if, there exists a family $(h^{i})_{i\in I}\in\prod_{i\in I}\mathrm{Iso}(\mathbf{A}^{i},\mathbf{B}^{i})$, of isomorphisms, such that, for every $i\in I$, $h^{i}\circ \mathrm{pr}^{\mathbf{A}^{i}}\circ f = \mathrm{pr}^{\mathbf{B}^{i}}\circ g$.
\end{definition}

\section{$\Sigma$-congruence formations, $\Sigma$-algebra formations, and an Eilenberg type theorem for them.}

In this section we define, for a fixed set of sorts $S$ and a fixed $S$-sorted signature $\Sigma$, the concepts of formation of congruences with respect to $\Sigma$, of formation of algebras with respect to $\Sigma$, and of Shemetkov\!$\And$\!Skiba-formation of algebras with respect to $\Sigma$, which is a generalization to the many-sorted case of that proposed by the mentioned authors  in~\cite{shsk89} for the single-sorted case, and of which we prove that is equivalent to that of formation of algebras with respect to $\Sigma$. Moreover, we investigate the properties of the aforementioned formations and prove that there exists an isomorphism between the algebraic lattice of all $\Sigma$-algebra formations and the algebraic lattice of all $\Sigma$-congruence formations, which can be considered as an Eilenberg type theorem.

Before defining, for an $S$-sorted signature $\Sigma$, the concept of $\Sigma$-congruence formation, we next recall the concept of filter of a lattice since it will be necessary to state the definition of the just mentioned concept.

\begin{definition}
Let $\mathbf{L} = (L,\vee,\wedge)$ be a lattice. We say that $F\subseteq L$ is a \emph{filter} of $\mathbf{L}$ if it satisfies the following conditions:
\begin{enumerate}
\item $F\neq \varnothing$.
\item For every $x$, $y\in F$ we have that $x\wedge y\in F$.
\item For every $x\in F$ and $y\in L$, if $x\leq y$, then $y\in F$.
\end{enumerate}
We denote by $\mathrm{Filt}(\mathbf{L})$ the set of all filters of $\mathbf{L}$.
\end{definition}

We next define, for a many-sorted signature $\Sigma$, the notion of formation of $\Sigma$-congruences which will be used through this article. This notion was defined, for monoids, by Cosme in~\cite{cll15} on p.~53.

\begin{definition}\label{DefFormCgr}
A \emph{formation of congruences with respect to} $\Sigma$ is a function $\mathfrak{F}$ from $\boldsymbol{\mathcal{U}}^{S}$ such that the following conditions are satisfied:
\begin{enumerate}
\item For every $A\in \boldsymbol{\mathcal{U}}^{S}$, $\mathfrak{F}(A)$ is a filter of the algebraic lattice $\mathbf{Cgr}(\mathbf{T}_{\Sigma}(A))$, i.e., $\mathfrak{F}(A)$ is a nonempty subset of $\mathrm{Cgr}(\mathbf{T}_{\Sigma}(A))$, for every $\Phi$, $\Psi\in \mathfrak{F}(A)$ we have that $\Phi\cap \Psi\in \mathfrak{F}(A)$, and, for every $\Phi\in \mathfrak{F}(A)$ and every $\Psi\in \mathrm{Cgr}(\mathbf{T}_{\Sigma}(A))$, if $\Phi\subseteq \Psi$, then $\Psi\in \mathfrak{F}(A)$.

\item For every $A$, $B\in \boldsymbol{\mathcal{U}}^{S}$, every congruence $\Theta\in \mathfrak{F}(B)$, and every homomorphism $f\colon \mathbf{T}_{\Sigma}(A)\mor \mathbf{T}_{\Sigma}(B)$, if $\mathrm{pr}^{\Theta}\circ f\colon \mathbf{T}_{\Sigma}(A)\mor \mathbf{T}_{\Sigma}(B)/\Theta$ is an epimorphism, then $\mathrm{Ker}(\mathrm{pr}^{\Theta}\circ f)\in \mathfrak{F}(A)$.
\end{enumerate}

We denote by $\mathrm{Form}_{\mathrm{Cgr}}(\Sigma)$ the set of all formations of congruences with respect to $\Sigma$. Let us notice that $\mathrm{Form}_{\mathrm{Cgr}}(\Sigma)\subseteq \prod_{A\in \boldsymbol{\mathcal{U}}^{S}} \mathrm{Filt}(\mathbf{Cgr}(\mathbf{T}_{\Sigma}(A)))$, where, for every $A\in \boldsymbol{\mathcal{U}}^{S}$, $\mathrm{Filt}(\mathbf{Cgr}(\mathbf{T}_{\Sigma}(A)))$ is the set of all filters of the algebraic lattice $\mathbf{Cgr}(\mathbf{T}_{\Sigma}(A))$. Therefore a formation of congruences with respect to $\Sigma$ is a special type of choice function for $(\mathrm{Filt}(\mathbf{Cgr}(\mathbf{T}_{\Sigma}(A))))_{A\in \boldsymbol{\mathcal{U}}^{S}}$.
\end{definition}

\begin{remark}
If $\mathcal{V}$ is a variety of $\Sigma$-algebras (see~\cite{m76}, p.~57), a finitary variety of $\Sigma$-algebras (see~\cite{m76}, p.~56), or an Eilenberg's variety of $\Sigma$-algebras, then the function $\mathfrak{F}_{\mathcal{V}}$ from $\boldsymbol{\mathcal{U}}^{S}$ which assigns to $A\in \boldsymbol{\mathcal{U}}^{S}$ the set
$$
\mathfrak{F}_{\mathcal{V}}(A) = \{\Phi\in \mathrm{Cgr}(\mathbf{T}_{\Sigma}(A))\mid \mathbf{T}_{\Sigma}(A)/\Phi\in \mathcal{V}\}
$$
is a formation of congruences with respect to $\Sigma$.
\end{remark}

Since two formations of congruences $\mathfrak{F}$ and $\mathfrak{G}$ with respect to $\Sigma$ can be compared in a natural way, e.g., by stating that $\mathfrak{F}\leq \mathfrak{G}$ if, and only if, for every $A\in \boldsymbol{\mathcal{U}}^{S}$, $\mathfrak{F}(A)\subseteq \mathfrak{G}(A)$, we next proceed to investigate the properties of $\mathbf{Form}_{\mathrm{Cgr}}(\Sigma) = (\mathrm{Form}_{\mathrm{Cgr}}(\Sigma),\leq)$.

\begin{proposition}
$\mathbf{Form}_{\mathrm{Cgr}}(\Sigma)$ is a complete lattice.
\end{proposition}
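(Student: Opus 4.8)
The plan is to present $\mathbf{Form}_{\mathrm{Cgr}}(\Sigma)$ as a closure system inside a larger complete lattice. Specifically, I will regard $\mathrm{Form}_{\mathrm{Cgr}}(\Sigma)$ as a subset of $\prod_{A\in \boldsymbol{\mathcal{U}}^{S}}\mathrm{Sub}(\mathrm{Cgr}(\mathbf{T}_{\Sigma}(A)))$, ordered componentwise by $\subseteq$, and show that it is closed under arbitrary componentwise intersections. Since each factor $\mathrm{Sub}(\mathrm{Cgr}(\mathbf{T}_{\Sigma}(A)))$ is a complete lattice and the componentwise intersection is the meet in the product, once this closure property is established the statement follows from the standard fact that a subset of a complete lattice which is closed under arbitrary meets is itself a complete lattice under the induced order.

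The main step will be to prove the closure property: for any family $(\mathfrak{F}^{j})_{j\in J}$ of formations of congruences with respect to $\Sigma$, the function $\mathfrak{F}$ given by $\mathfrak{F}(A) = \bigcap_{j\in J}\mathfrak{F}^{j}(A)$, for every $A\in \boldsymbol{\mathcal{U}}^{S}$, is again such a formation. For condition (1) of Definition~\ref{DefFormCgr} I would use that an arbitrary intersection of filters of a lattice is again a filter: closure under binary meets and upward closure pass to the intersection trivially, and nonemptiness is guaranteed because the top congruence $\nabla^{\mathbf{T}_{\Sigma}(A)}$ lies in every filter, hence in the intersection. For condition (2), given $\Theta\in \mathfrak{F}(B)$ and a homomorphism $f\colon \mathbf{T}_{\Sigma}(A)\mor \mathbf{T}_{\Sigma}(B)$ with $\mathrm{pr}^{\Theta}\circ f$ an epimorphism, I would observe that $\Theta$ belongs to every $\mathfrak{F}^{j}(B)$, apply condition (2) for each $\mathfrak{F}^{j}$ to conclude $\mathrm{Ker}(\mathrm{pr}^{\Theta}\circ f)\in \mathfrak{F}^{j}(A)$, and then intersect over $j\in J$.

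With the closure property in hand, I would check that this intersection is exactly the infimum of $(\mathfrak{F}^{j})_{j\in J}$ in $\mathbf{Form}_{\mathrm{Cgr}}(\Sigma)$, and note that the empty family yields the greatest element $\mathfrak{T}$, with $\mathfrak{T}(A) = \mathrm{Cgr}(\mathbf{T}_{\Sigma}(A))$ for all $A$. Since every family then admits an infimum and a greatest element exists, suprema can be recovered in the usual way, as the infimum of the set of upper bounds (which is nonempty because it contains $\mathfrak{T}$), and this makes $\mathbf{Form}_{\mathrm{Cgr}}(\Sigma)$ a complete lattice. I expect no serious obstacle here: the only point deserving attention is that the join cannot be computed as a pointwise union, since a union of filters of $\mathbf{Cgr}(\mathbf{T}_{\Sigma}(A))$ need not be a filter, so it must be obtained indirectly as a meet of upper bounds. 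The one genuinely load-bearing fact is the automatic nonemptiness of the intersected filters, supplied by the presence of $\nabla^{\mathbf{T}_{\Sigma}(A)}$ in every filter; everything else is the routine transfer of the filter axioms and of closure condition (2) through an intersection.
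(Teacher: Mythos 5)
Your proposal is correct and follows essentially the same route as the paper: both verify that pointwise intersections of formations are again formations (filters intersect to filters, with nonemptiness supplied by $\nabla^{\mathbf{T}_{\Sigma}(A)}$, and condition (2) passing through the intersection), exhibit the greatest formation $A\mapsto \mathrm{Cgr}(\mathbf{T}_{\Sigma}(A))$, and then conclude completeness by obtaining suprema as infima of sets of upper bounds. Your explicit warning that joins cannot be computed as pointwise unions of filters is a point the paper makes only implicitly, but the mathematical content is the same.
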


\begin{proof}
It is obvious that $\mathbf{Form}_{\mathrm{Cgr}}(\Sigma)$ is an ordered set. On the other hand, if we take as choice function for the family $(\mathrm{Filt}(\mathbf{Cgr}(\mathbf{T}_{\Sigma}(A))))_{A\in \boldsymbol{\mathcal{U}}^{S}}$ the function $\mathfrak{F}$ defined, for every $A\in \boldsymbol{\mathcal{U}}^{S}$, as $\mathfrak{F}(A) = \mathrm{Cgr}(\mathbf{T}_{\Sigma}(A))$, then $\mathfrak{F}$ is a formation of congruences with respect to $\Sigma$ and, actually, the greatest one. Let us, finally, prove that, for every nonempty set $J$ in $\boldsymbol{\mathcal{U}}$ and every family $(\mathfrak{F}_{j})_{j\in J}$ in $\mathrm{Form}_{\mathrm{Cgr}}(\Sigma)$, there exists $\bigwedge_{j\in J}\mathfrak{F}_{j}$, the greatest lower bound of $(\mathfrak{F}_{j})_{i\in J}$ in $\mathbf{Form}_{\mathrm{Cgr}}(\Sigma)$. Let $\bigwedge_{j\in J}\mathfrak{F}_{j}$ be the function defined, for every $A\in \boldsymbol{\mathcal{U}}^{S}$, as $(\bigwedge_{j\in J}\mathfrak{F}_{j})(A) = \bigcap_{j\in J}\mathfrak{F}_{j}(A)$. It is straightforward to prove that, thus defined, $\bigwedge_{j\in J}\mathfrak{F}_{j}$ is a choice function for the family $(\mathrm{Filt}(\mathbf{Cgr}(\mathbf{T}_{\Sigma}(A))))_{A\in \boldsymbol{\mathcal{U}}^{S}}$ and that it satisfies the second condition in the definition of formation of congruences with respect to $\Sigma$. Moreover, for every $j\in J$, we have that $\bigwedge_{j\in J}\mathfrak{F}_{j}\leq \mathfrak{F}_{j}$ and, for every formation of congruences with respect to $\Sigma$, $\mathfrak{F}$, if, for every $j\in J$, we have that $\mathfrak{F}\leq \mathfrak{F}_{j}$, then $\mathfrak{F}\leq \bigwedge_{j\in J}\mathfrak{F}_{j}$.
 From this we can assert that the ordered set $\mathbf{Form}_{\mathrm{Cgr}}(\Sigma)$ is a complete lattice.

Let us recall that, for every nonempty set $J$ in $\boldsymbol{\mathcal{U}}$ and every family $(\mathfrak{F}_{j})_{j\in J}$ in $\mathrm{Form}_{\mathrm{Cgr}}(\Sigma)$, $\bigvee_{j\in J}\mathfrak{F}_{j}$, the least upper bound of $(\mathfrak{F}_{j})_{i\in J}$ in $\mathbf{Form}_{\mathrm{Cgr}}(\Sigma)$, is obtained as:
$$
\textstyle
\bigvee_{j\in J}\mathfrak{F}_{j} = \bigwedge\{\mathfrak{F}\in \mathrm{Form}_{\mathrm{Cgr}}(\Sigma)\mid \forall j\in J\, (\mathfrak{F}_{j}\leq \mathfrak{F})\}.
$$
Moreover, if we take as choice function for the family $(\mathrm{Filt}(\mathbf{Cgr}(\mathbf{T}_{\Sigma}(A))))_{A\in \boldsymbol{\mathcal{U}}^{S}}$ the function $\mathfrak{F}$ defined, for every $A\in \boldsymbol{\mathcal{U}}^{S}$, as $\mathfrak{F}(A) = \{\nabla^{\mathbf{T}_{\Sigma}(A)}\}$, where $\nabla^{\mathbf{T}_{\Sigma}(A)}$ is the largest congruence on $\mathbf{T}_{\Sigma}(A)$, then $\mathfrak{F}$ is a formation of congruences with respect to $\Sigma$ and, actually, the smallest one.
\end{proof}
%
%

\begin{remark}
Afterwards we will improve the above lattice-theoretic results about $\mathbf{Form}_{\mathrm{Cgr}}(\Sigma)$ by proving that it is, in fact, an algebraic lattice.
\end{remark}

We next define two operators, $\mathrm{H}$ and $\mathrm{P}_{\mathrm{fsd}}$, on $\mathrm{Alg}(\Sigma)$, i.e., two mappings of $\mathrm{Sub}(\mathrm{Alg}(\Sigma))$ into itself, which will be used afterwards.

\begin{definition}
Let $\mathcal{F}$ be a set of $\Sigma$-algebras, i.e., a subset of the $\boldsymbol{\mathcal{U}}$-large set $\mathrm{Alg}(\Sigma)$. Then
\begin{enumerate}
\item $\mathrm{H}(\mathcal{F})$ stands for the set of all homomorphic images of members of $\mathcal{F}$, i.e., for the set defined as:
    $$
    \mathrm{H}(\mathcal{F}) = \{\mathbf{A}\in\mathrm{Alg}(\Sigma)\mid \exists\, \mathbf{B}\in \mathcal{F}\,(\mathrm{Epi}(\mathbf{B},\mathbf{A})\neq\varnothing) \},\, \text{and}
    $$
\item $\mathrm{P}_{\mathrm{fsd}}(\mathcal{F})$ stands for the subset of $\mathrm{Alg}(\Sigma)$ defined as follows. For every $\Sigma$-algebra $\mathbf{A}$, we have that $\mathbf{A}\in \mathrm{P}_{\mathrm{fsd}}(\mathcal{F})$ if, and only if, for some $n\in \mathbb{N}$ and some family $(\mathbf{C}^{\alpha})_{\alpha\in n} \in \mathcal{F}^{n}$,  $\mathrm{Em}_{\mathrm{sd}}(\mathbf{A},\prod_{\alpha\in n}\mathbf{C}^{\alpha})\neq\varnothing$.
\end{enumerate}
\end{definition}

\begin{proposition}
Let $\mathfrak{F}$ be a formation of congruences with respect to $\Sigma$. Then the subset $\mathcal{F}_{\mathfrak{F}}$ of $\mathrm{Alg}(\Sigma)$ defined as follows:
$$
\mathcal{F}_{\mathfrak{F}} = \biggl\{ \mathbf{C}\in \mathrm{Alg}(\Sigma)\biggm|
\begin{gathered}
\exists\, A\in \boldsymbol{\mathcal{U}}^{S}\,\, \exists\, \Phi\in \mathfrak{F}(A)
\\[-3pt]
(\mathbf{C}\cong \mathbf{T}_{\Sigma}(A)/\Phi)
\end{gathered}
\biggr\},
$$
has the following properties:
\begin{enumerate}
\item $\mathcal{F}_{\mathfrak{F}}\neq \varnothing$.
\item If $\mathbf{C}\in \mathcal{F}_{\mathfrak{F}}$ and $\mathbf{D}$ is a $\Sigma$-algebra such that $\mathbf{D}\cong \mathbf{C}$, then $\mathbf{D}\in \mathcal{F}_{\mathfrak{F}}$, i.e., $\mathcal{F}_{\mathfrak{F}}$ is abstract.
\item $\mathrm{H}(\mathcal{F}_{\mathfrak{F}})\subseteq \mathcal{F}_{\mathfrak{F}}$, i.e., $\mathcal{F}_{\mathfrak{F}}$ is closed under the formation of homomorphic images of members of $\mathcal{F}_{\mathfrak{F}}$.
\item $\mathrm{P}_{\mathrm{fsd}}(\mathcal{F}_{\mathfrak{F}})\subseteq \mathcal{F}_{\mathfrak{F}}$, i.e., for every $\Sigma$-algebra $\mathbf{A}$, if, for some $n\in \mathbb{N}$ and some family $(\mathbf{C}^{\alpha})_{\alpha\in n}\in \mathcal{F}_{\mathfrak{F}}^{n}$, $\mathrm{Em}_{\mathrm{sd}}(\mathbf{A},\prod_{\alpha\in n}\mathbf{C}^{\alpha})\neq\varnothing$, then $\mathbf{A}\in \mathcal{F}_{\mathfrak{F}}$.
\end{enumerate}
\end{proposition}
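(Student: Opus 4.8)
The plan is to dispatch the first two properties at once and to concentrate the real work on property~(4). Properties~(1) and~(2) are formal. Since $\mathfrak{F}$ is a formation of congruences, each $\mathfrak{F}(A)$ is a filter of $\mathbf{Cgr}(\mathbf{T}_{\Sigma}(A))$, hence nonempty, so picking any $A\in\boldsymbol{\mathcal{U}}^{S}$ and any $\Phi\in\mathfrak{F}(A)$ exhibits $\mathbf{T}_{\Sigma}(A)/\Phi$ as a member of $\mathcal{F}_{\mathfrak{F}}$, which proves~(1). For~(2), if $\mathbf{C}\cong\mathbf{T}_{\Sigma}(A)/\Phi$ with $\Phi\in\mathfrak{F}(A)$ and $\mathbf{D}\cong\mathbf{C}$, then $\mathbf{D}\cong\mathbf{T}_{\Sigma}(A)/\Phi$ by transitivity of $\cong$, so $\mathbf{D}\in\mathcal{F}_{\mathfrak{F}}$; abstractness is thus immediate from the fact that $\mathcal{F}_{\mathfrak{F}}$ was defined up to isomorphism.

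For~(3) I would use only the upward-closure of the filters. Let $\mathbf{A}\in\mathrm{H}(\mathcal{F}_{\mathfrak{F}})$, witnessed by some $\mathbf{B}\in\mathcal{F}_{\mathfrak{F}}$ and an epimorphism $g\colon\mathbf{B}\to\mathbf{A}$, and write $\mathbf{B}\cong\mathbf{T}_{\Sigma}(A')/\Phi$ with $\Phi\in\mathfrak{F}(A')$ by an isomorphism $\iota$. Then $g\circ\iota\circ\mathrm{pr}^{\Phi}\colon\mathbf{T}_{\Sigma}(A')\to\mathbf{A}$ is a composite of epimorphisms, hence an epimorphism; setting $\Psi=\mathrm{Ker}(g\circ\iota\circ\mathrm{pr}^{\Phi})$, the universal property of the quotient gives $\mathbf{T}_{\Sigma}(A')/\Psi\cong\mathbf{A}$, and since this map factors through $\mathrm{pr}^{\Phi}$ we have $\Phi=\mathrm{Ker}(\mathrm{pr}^{\Phi})\subseteq\Psi$. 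Upward-closure of the filter $\mathfrak{F}(A')$ then yields $\Psi\in\mathfrak{F}(A')$, whence $\mathbf{A}\in\mathcal{F}_{\mathfrak{F}}$.

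Property~(4) is the heart of the argument. Let $\mathbf{A}$ have underlying $S$-sorted set $A$ and admit a subdirect embedding $e\colon\mathbf{A}\to\prod_{\alpha\in n}\mathbf{C}^{\alpha}$ with each $\mathbf{C}^{\alpha}\cong\mathbf{T}_{\Sigma}(A^{\alpha})/\Phi^{\alpha}$, $\Phi^{\alpha}\in\mathfrak{F}(A^{\alpha})$. Let $\pi\colon\mathbf{T}_{\Sigma}(A)\to\mathbf{A}$ be the canonical epimorphism extending $\mathrm{id}_{A}$, and let $q^{\alpha}\colon\mathbf{A}\to\mathbf{T}_{\Sigma}(A^{\alpha})/\Phi^{\alpha}$ be $\mathrm{pr}^{\alpha}\circ e$ composed with the isomorphism, which is an epimorphism by subdirectness. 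The decisive move is to route each $q^{\alpha}\circ\pi$ through the free algebra $\mathbf{T}_{\Sigma}(A^{\alpha})$, so that condition~(2) in the definition of a formation of congruences---which is stated only for homomorphisms between free algebras---becomes applicable: since $\mathbf{T}_{\Sigma}(A)$ is projective by Proposition~\ref{FreeProj} and $\mathrm{pr}^{\Phi^{\alpha}}$ is an epimorphism, there is a homomorphism $f^{\alpha}\colon\mathbf{T}_{\Sigma}(A)\to\mathbf{T}_{\Sigma}(A^{\alpha})$ with $\mathrm{pr}^{\Phi^{\alpha}}\circ f^{\alpha}=q^{\alpha}\circ\pi$. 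As the right-hand side is an epimorphism, condition~(2) gives $\mathrm{Ker}(q^{\alpha}\circ\pi)=\mathrm{Ker}(\mathrm{pr}^{\Phi^{\alpha}}\circ f^{\alpha})\in\mathfrak{F}(A)$ for each $\alpha$.

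It then remains to assemble these congruences. Because $e$ is injective, $\bigcap_{\alpha\in n}\mathrm{Ker}(q^{\alpha})=\Delta^{\mathbf{A}}$, and a short element-chase through $\pi$ shows $\bigcap_{\alpha\in n}\mathrm{Ker}(q^{\alpha}\circ\pi)=\mathrm{Ker}(\pi)$. When $n\geq 1$, closure of the filter $\mathfrak{F}(A)$ under finite intersection gives $\mathrm{Ker}(\pi)\in\mathfrak{F}(A)$, and hence $\mathbf{A}\cong\mathbf{T}_{\Sigma}(A)/\mathrm{Ker}(\pi)\in\mathcal{F}_{\mathfrak{F}}$; the degenerate case $n=0$ must be treated separately, where a subdirect embedding into the empty product forces $\mathbf{A}$ to be subfinal, $\mathrm{Ker}(\pi)=\nabla^{\mathbf{T}_{\Sigma}(A)}$, which belongs to every filter as its greatest element. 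I expect the projective-lifting step to be the main obstacle: condition~(2) cannot be invoked for $q^{\alpha}\circ\pi$ as it stands, and the whole argument hinges on first producing the lift $f^{\alpha}$ into the free algebra $\mathbf{T}_{\Sigma}(A^{\alpha})$; once this is in place, the filter axioms together with the triviality of the intersection of the kernels finish the proof.
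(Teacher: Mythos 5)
Your proof is correct and follows essentially the same route as the paper's: parts (1)--(3) coincide in substance, and for (4) both arguments present the algebra as a quotient of a free $\Sigma$-algebra, lift the composite epimorphisms through the free algebras $\mathbf{T}_{\Sigma}(A^{\alpha})$ using projectivity (Proposition~\ref{FreeProj}), invoke the second axiom of formations of congruences, and intersect the resulting kernels in the filter $\mathfrak{F}$. The only cosmetic differences are that you work with the canonical presentation $(\mathrm{id}_{A})^{\sharp}\colon \mathbf{T}_{\Sigma}(A)\mor \mathbf{A}$ and establish the equality $\bigcap_{\alpha\in n}\mathrm{Ker}(q^{\alpha}\circ\pi)=\mathrm{Ker}(\pi)$ (treating $n=0$ explicitly), whereas the paper takes an arbitrary free presentation and derives the inclusion $\bigcap_{\alpha\in n}\mathrm{Ker}(\mathrm{pr}^{\Phi^{\alpha}}\circ h^{\alpha})\subseteq \mathrm{Ker}(g)$ followed by upward closure of the filter.
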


\begin{proof}
The first property is evident (it suffices to verify that $\mathbf{1}\in \mathcal{F}_{\mathfrak{F}}$).

The second property is also obvious, since the composition of isomorphisms is an isomorphism.

To verify the third property let $\mathbf{C}$ be an element of $\mathcal{F}_{\mathfrak{F}}$ and $f\colon \mathbf{C}\mor \mathbf{D}$ an epimorphism. Since $\mathbf{C}\in \mathcal{F}_{\mathfrak{F}}$ there exists an $A\in \boldsymbol{\mathcal{U}}^{S}$ and a $\Phi\in \mathfrak{F}(A)$ such that $\mathbf{C}\cong \mathbf{T}_{\Sigma}(A)/\Phi$. Let $g$ be a fixed isomorphism from $\mathbf{T}_{\Sigma}(A)/\Phi$ to $\mathbf{C}$. Then the homomorphism $f\circ g\circ \mathrm{pr}^{\Phi}$ from $\mathbf{T}_{\Sigma}(A)$ to $\mathbf{D}$ is an epimorphism. Hence $\mathbf{T}_{\Sigma}(A)/\mathrm{Ker}(f\circ g\circ \mathrm{pr}^{\Phi})$ is isomorphic to $\mathbf{D}$. But $\Phi\subseteq \mathrm{Ker}(f\circ g\circ \mathrm{pr}^{\Phi})$. Thus $\mathrm{Ker}(f\circ g\circ \mathrm{pr}^{\Phi})\in \mathfrak{F}(A)$. Therefore, $\mathbf{D}\in \mathcal{F}_{\mathfrak{F}}$.

To verify the fourth property let $\mathbf{C}$ be a $\Sigma$-algebra such that $\mathbf{C}\in \mathrm{P}_{\mathrm{fsd}}(\mathcal{F}_{\mathfrak{F}})$. Then, by definition of $\mathrm{P}_{\mathrm{fsd}}(\mathcal{F}_{\mathfrak{F}})$, for some $n\in \mathbb{N}$ and some family $(\mathbf{C}^{\alpha})_{\alpha\in n}\in \mathcal{F}_{\mathfrak{F}}^{n}$, we have that $\mathrm{Em}_{\mathrm{sd}}(\mathbf{C},\prod_{\alpha\in n}\mathbf{C}^{\alpha})\neq\varnothing$. Hence there exists a family $(A^{\alpha})_{\alpha\in n}\in (\boldsymbol{\mathcal{U}}^{S})^{n}$ and a family $(\Phi^{\alpha})_{\alpha\in n}\in \prod_{\alpha\in n}\mathfrak{F}(A^{\alpha})$ such that, for every $\alpha \in n$, $\mathbf{C}^{\alpha}\cong \mathbf{T}_{\Sigma}(A^{\alpha})/\Phi^{\alpha}$.

Let $f\colon \mathbf{C}\mor\prod_{\alpha\in n}\mathbf{T}_{\Sigma}(A^{\alpha})/\Phi^{\alpha}$ be a subdirect embedding (here we use the notion of isomorphism between subdirect embeddings as stated in Definition~\ref{Sdprod}), $B\in\boldsymbol{\mathcal{U}}^{S}$, and $g$ an epimorphism from $\mathbf{T}_{\Sigma}(B)$ to $\mathbf{C}$ (recall that, by Proposition~\ref{AlgIsoQuotFree}, every $\Sigma$-algebra is isomorphic to a quotient of a free $\Sigma$-algebra on an $S$-sorted set). Then, since, by Proposition~\ref{FreeProj}, $\mathbf{T}_{\Sigma}(B)$ is projective, for every $\alpha\in n$, there exists a homomorphism $h^{\alpha}$ from $\mathbf{T}_{\Sigma}(B)$ to $\mathbf{T}_{\Sigma}(A^{\alpha})$ such that the following diagram
$$
\xymatrix@C=15pt@R=15pt{
\mathbf{T}_{\Sigma}(B)\ar[ddd]_-{h^{\alpha}}\ar@{+>}[rd]^*{g}& {} & {} & {}\\
{} & \mathbf{C}\ar@{{ +}{-}{>}}[rd]^*{f}\ar@{+>} @/_-3pc/[ddrr]^*[l]{\!\!\!\mathrm{pr}^{\alpha}\circ f} & {} & {} \\
{} & {} & \prod_{\alpha\in n}\mathbf{T}_{\Sigma}(A^{\alpha})/\Phi^{\alpha}\ar@{+>}[rd]^-{\mathrm{pr}^{\alpha}} & {} \\
\mathbf{T}_{\Sigma}(A^{\alpha})\ar@{+>}[rrr]_-{\mathrm{pr}^{\Phi^{\alpha}}}& {} & {} & \mathbf{T}_{\Sigma}(A^{\alpha})/\Phi^{\alpha}
}
$$
commutes. Besides, since, for every $\alpha\in n$, $\mathrm{pr}^{\alpha}\circ f\circ g$ is an epimorphism and the above diagram commutes, we have that, for every $\alpha\in n$, $\mathrm{pr}^{\Phi^{\alpha}}\circ h^{\alpha}$ is an epimorphism. Therefore, because $\mathfrak{F}$ is a formation of congruences with respect to $\Sigma$, we have that, for every $\alpha\in n$, $\mathrm{Ker}(\mathrm{pr}^{\Phi^{\alpha}}\circ h^{\alpha})\in \mathfrak{F}(B)$. Hence $\bigcap_{\alpha\in n}\mathrm{Ker}(\mathrm{pr}^{\Phi^{\alpha}}\circ h^{\alpha})\in \mathfrak{F}(B)$. We next proceed to show that the congruence $\bigcap_{\alpha\in n}\mathrm{Ker}(\mathrm{pr}^{\Phi^{\alpha}}\circ h^{\alpha})$ is included in $\mathrm{Ker}(g)$. Let $s$ be a sort in $S$ and $(P,Q)\in (\bigcap_{\alpha\in n}\mathrm{Ker}(\mathrm{pr}^{\Phi^{\alpha}}\circ h^{\alpha}))_{s} = \bigcap_{\alpha\in n}\mathrm{Ker}(\mathrm{pr}^{\Phi^{\alpha}}_{s}\circ h^{\alpha}_{s})$. Then, for every $\alpha\in n$, $\mathrm{pr}^{\Phi^{\alpha}}_{s}(h^{\alpha}_{s}(P)) = \mathrm{pr}^{\Phi^{\alpha}}_{s}(h^{\alpha}_{s}(Q))$. Thus, for every $\alpha\in n$, $\mathrm{pr}^{\alpha}_{s}(f_{s}(g_{s}(P))) = \mathrm{pr}^{\alpha}_{s}(f_{s}(g_{s}(Q)))$. So, because projections, acting conjointly, act monomorphically, $f_{s}(g_{s}(P)) = f_{s}(g_{s}(Q))$. But $f_{s}$ is a monomorphism, hence $g_{s}(P) = g_{s}(Q)$, i.e., $(P,Q)\in \mathrm{Ker}(g)_{s} = \mathrm{Ker}(g_{s})$. This proves that $\bigcap_{\alpha\in n}\mathrm{Ker}(\mathrm{pr}^{\Phi^{\alpha}}\circ h^{\alpha})\subseteq \mathrm{Ker}(g)$. Therefore $\mathrm{Ker}(g)\in \mathfrak{F}(B)$. Consequently, because $\mathbf{T}_{\Sigma}(B)/\mathrm{Ker}(g)\cong \mathbf{C}$, it follows that $\mathbf{C}\in \mathcal{F}_{\mathfrak{F}}$.
\end{proof}

\begin{remark}
If the $\Sigma$-algebra $\mathbf{B}$ is such that, for some $A\in \boldsymbol{\mathcal{U}}^{S}$ and some $\Phi\in \mathfrak{F}(A)$, $\mathbf{C}\cong \mathbf{T}_{\Sigma}(A)/\Phi$, then, by Proposition~\ref{propssupport}, $\mathrm{supp}_{S}(\mathbf{B}) = \mathrm{supp}_{S}(\mathbf{T}_{\Sigma}(A))$.
\end{remark}

We next define, for a many-sorted signature $\Sigma$, the notion of formation of $\Sigma$-algebras which will be used through this article and afterwards we will prove that it is, in fact, equivalent to that of Shemetkov and Skiba in~\cite{shsk89} (after generalizing their definition from the single-sorted case to the many-sorted case).

\begin{definition}\label{DefFormAlg}
A \emph{formation of $\Sigma$-algebras} is a set of $\Sigma$-algebras $\mathcal{F}$ such that the following conditions are satisfied:
\begin{enumerate}
\item $\mathcal{F}\neq \varnothing$.
\item For every $\mathbf{A}\in \mathcal{F}$ and every $\mathbf{B}\in \mathrm{Alg}(\Sigma)$, if $\mathbf{B}\cong \mathbf{A}$, the $\mathbf{B}\in \mathcal{F}$, i.e., $\mathcal{F}$ is abstract.

\item $\mathrm{H}(\mathcal{F})\subseteq \mathcal{F}$, i.e., $\mathcal{F}$ is closed under the formation of homomorphic images of members of $\mathcal{F}$.

\item $\mathrm{P}_{\mathrm{fsd}}(\mathcal{F})\subseteq \mathcal{F}$, i.e., for every $\Sigma$-algebra $\mathbf{A}$, if, for some $n\in \mathbb{N}$ and some family $(\mathbf{C}^{\alpha})_{\alpha\in n}\in \mathcal{F}^{n}$,
    $\mathrm{Em}_{\mathrm{sd}}(\mathbf{A},\prod_{\alpha\in n}\mathbf{C}^{\alpha})\neq\varnothing$, then $\mathbf{A}\in \mathcal{F}$.
\end{enumerate}

We denote by $\mathrm{Form}_{\mathrm{Alg}}(\Sigma)$ the set of all formations of $\Sigma$-algebras.
\end{definition}

\begin{remark}
If $\mathcal{F}$ is a formation of $\Sigma$-algebras, then, since $\mathrm{Alg}(\Sigma)\subseteq \boldsymbol{\mathcal{U}}$, we have that $\mathcal{F}\subseteq\boldsymbol{\mathcal{U}}$, i.e., $\mathcal{F}$ is a $\boldsymbol{\mathcal{U}}$-large set. Therefore  $\mathrm{Form}_{\mathrm{Alg}}(\Sigma)\subseteq\mathrm{Sub}(\mathrm{Alg}(\Sigma))\subseteq \mathrm{Sub}(\boldsymbol{\mathcal{U}})$. Hence $\mathrm{Form}_{\mathrm{Alg}}(\Sigma)$ is a legitimate set in our underlying set theory.
\end{remark}

\begin{remark}
For an $S$-sorted signature $\Sigma$, all varieties, finitary varieties, and Eilenberg's pseudovarieties of $\Sigma$-algebras are examples of formations of $\Sigma$-algebras. Moreover, $\mathrm{Sf}(\mathbf{1})$, the set of subfinal $\Sigma$-algebras, i.e., the set of all $\Sigma$-algebras which are isomorphic to a subalgebra of $\mathbf{1}$, is a formation of $\Sigma$-algebras. In fact, $\mathrm{Sf}(\mathbf{1})\neq \varnothing$ since, obviously, $\mathbf{1}\in \mathrm{Sf}(\mathbf{1})$. Let $\mathbf{A}$ be an element of $\mathrm{Sf}(\mathbf{1})$ and  $\mathbf{B}$ a $\Sigma$-algebra such that $\mathbf{B}\cong \mathbf{A}$, then, clearly, $\mathbf{B}\in \mathrm{Sf}(\mathbf{1})$. Let $\mathbf{A}$ be an element of $\mathrm{Sf}(\mathbf{1})$ and $f$ an epimorphism from $\mathbf{A}$ to a $\Sigma$-algebra $\mathbf{B}$. Then, by Proposition~\ref{propssupport}, $\mathrm{supp}_{S}(A) = \mathrm{supp}_{S}(B)$ and $f$ is an isomorphism from $\mathbf{A}$ to $\mathbf{B}$, thus $\mathbf{B}\in\mathrm{Sf}(\mathbf{1})$. Finally, let $n$ be an element of $\mathbb{N}$,  $(\mathbf{C}^{\alpha})_{\alpha\in n}$ an $n$-indexed family in $\mathrm{Sf}(\mathbf{1})$, and $\mathbf{A}$ a $\Sigma$-algebra such that there exists a subdirect embedding $f$ of $\mathbf{A}$ in $\prod_{\alpha\in n}\mathbf{C}^{\alpha}$. Then $\prod_{\alpha\in n}\mathbf{C}^{\alpha}$ is a subalgebra of $\mathbf{1}$, $f[A]$ is a subalgebra of $\mathbf{1}$, and $\mathbf{A}$ is isomorphic to $f[A]$, hence $\mathbf{A}\in \mathrm{Sf}(\mathbf{1})$.
\end{remark}

\begin{remark}
For $n = 0 = \varnothing$, we have the empty family $(\mathbf{C}^{i})_{i\in \varnothing}\in \mathcal{F}^{\varnothing}$ and $\prod_{i\in \varnothing}\mathbf{C}^{i}$ is $\mathbf{1}$, the final $\Sigma$-algebra. Therefore, if $\mathcal{F}$ is a formation of $\Sigma$-algebras, then $\mathrm{Sf}(\mathbf{1})\subseteq \mathcal{F}$.
\end{remark}

According to Shemetkov and Skiba (see~\cite{shsk89}), for a single sorted signature $\Sigma$, a set of $\Sigma$-algebras $\mathcal{F}$ is a formation of $\Sigma$-algebras if the following conditions are satisfied:
\begin{enumerate}
\item For every $\mathbf{A}\in \mathcal{F}$ and every $\mathbf{B}\in \mathrm{Alg}(\Sigma)$, if $\mathbf{B}\cong \mathbf{A}$, the $\mathbf{B}\in \mathcal{F}$, i.e., $\mathcal{F}$ is abstract.
\item $\mathrm{H}(\mathcal{F})\subseteq \mathcal{F}$, i.e., $\mathcal{F}$ is closed under the formation of homomorphic images of members of $\mathcal{F}$.
\item For every $\Sigma$-algebra $\mathbf{A}$ and every $\Phi$, $\Psi\in \mathrm{Cgr}(\mathbf{A})$, if $\mathbf{A}/\Phi$ and $\mathbf{A}/\Psi\in \mathcal{F}$, then we have that  $\mathbf{A}/\Phi\cap \Psi\in \mathcal{F}$.
\end{enumerate}

Let us point out that for the aforementioned authors a single-sorted $\Sigma$-algebra is a \emph{nonempty} set together with an arbitrary system of algebraic operations.

We next define, for the many-sorted case, the notion of Shemetkov\!$\And$\!Skiba-formation of $\Sigma$-algebras, abbreviated to ShSk-formation of $\Sigma$-algebras, and prove that they are equivalent to those stated in Definition~\ref{DefFormAlg}.

\begin{definition}
An \emph{ShSk-formation of} $\Sigma$-\emph{algebras} is a set of $\Sigma$-algebras $\mathcal{F}$ such that the following conditions are satisfied:
\begin{enumerate}
\item  $\mathcal{F}\neq\varnothing$.
\item For every $\mathbf{A}\in \mathcal{F}$ and every $\mathbf{B}\in \mathbf{Alg}(\Sigma)$, if $\mathbf{B}\cong \mathbf{A}$, the $\mathbf{B}\in \mathcal{F}$, i.e., $\mathcal{F}$ is abstract.

\item $\mathrm{H}(\mathcal{F})\subseteq \mathcal{F}$, i.e., $\mathcal{F}$ is closed under the formation of homomorphic images of members of $\mathcal{F}$.

\item For every $\Sigma$-algebra $\mathbf{A}$ and every $\Phi$, $\Psi\in \mathrm{Cgr}(\mathbf{A})$, if $\mathbf{A}/\Phi$ and $\mathbf{A}/\Psi\in \mathcal{F}$, then $\mathbf{A}/\Phi\cap \Psi\in \mathcal{F}$.
\end{enumerate}
\end{definition}

\begin{proposition}
Let $\mathcal{F}$ be a formation of $\Sigma$-algebras. Then, for every $\Sigma$-algebra $\mathbf{A}$ and every $\Phi$, $\Psi\in \mathrm{Cgr}(\mathbf{A})$, if $\mathbf{A}/\Phi$ and $\mathbf{A}/\Psi\in \mathcal{F}$, then we have that  $\mathbf{A}/\Phi\cap \Psi\in \mathcal{F}$.
\end{proposition}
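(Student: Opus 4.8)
The plan is to realize $\mathbf{A}/(\Phi\cap\Psi)$ as a subdirect product of the two algebras $\mathbf{A}/\Phi$ and $\mathbf{A}/\Psi$---both of which lie in $\mathcal{F}$ by hypothesis---and then to invoke the closure of $\mathcal{F}$ under $\mathrm{P}_{\mathrm{fsd}}$ (condition (4) of Definition~\ref{DefFormAlg}) with $n = 2$. This is the many-sorted analogue of the classical fact that an algebra modulo the meet of two congruences embeds diagonally into the product of the two quotients.

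First I would assemble the relevant homomorphisms. Since $\Phi\cap\Psi\subseteq\Phi$ and $\Phi\cap\Psi\subseteq\Psi$, the canonical homomorphisms $\mathrm{p}^{\Phi\cap\Psi,\Phi}\colon\mathbf{A}/(\Phi\cap\Psi)\mor\mathbf{A}/\Phi$ and $\mathrm{p}^{\Phi\cap\Psi,\Psi}\colon\mathbf{A}/(\Phi\cap\Psi)\mor\mathbf{A}/\Psi$ are available, each satisfying $\mathrm{p}^{\Phi\cap\Psi,\Phi}\circ\mathrm{pr}^{\Phi\cap\Psi}=\mathrm{pr}^{\Phi}$ and analogously for $\Psi$. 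I would then form the induced homomorphism $f=\left<\mathrm{p}^{\Phi\cap\Psi,\Phi},\mathrm{p}^{\Phi\cap\Psi,\Psi}\right>$ from $\mathbf{A}/(\Phi\cap\Psi)$ to $\mathbf{A}/\Phi\times\mathbf{A}/\Psi$.

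Next I would check that $f$ is a subdirect embedding in the sense of Definition~\ref{Sdprod}. For injectivity, working sortwise: if $f_{s}([a]_{(\Phi\cap\Psi)_{s}})=f_{s}([b]_{(\Phi\cap\Psi)_{s}})$, then $[a]_{\Phi_{s}}=[b]_{\Phi_{s}}$ and $[a]_{\Psi_{s}}=[b]_{\Psi_{s}}$, that is $(a,b)\in\Phi_{s}\cap\Psi_{s}=(\Phi\cap\Psi)_{s}$, so the two classes coincide; hence $f$ is a monomorphism. For the subdirect condition, $\mathrm{pr}^{0}\circ f=\mathrm{p}^{\Phi\cap\Psi,\Phi}$ and $\mathrm{pr}^{1}\circ f=\mathrm{p}^{\Phi\cap\Psi,\Psi}$, and each of these is an epimorphism because, composed on the right with the surjection $\mathrm{pr}^{\Phi\cap\Psi}$, it yields the surjective canonical projection $\mathrm{pr}^{\Phi}$ (respectively $\mathrm{pr}^{\Psi}$). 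Thus $f\in\mathrm{Em}_{\mathrm{sd}}(\mathbf{A}/(\Phi\cap\Psi),\mathbf{A}/\Phi\times\mathbf{A}/\Psi)$.

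Finally, taking $n=2$ and the family $(\mathbf{C}^{0},\mathbf{C}^{1})=(\mathbf{A}/\Phi,\mathbf{A}/\Psi)\in\mathcal{F}^{2}$, the nonemptiness of $\mathrm{Em}_{\mathrm{sd}}(\mathbf{A}/(\Phi\cap\Psi),\prod_{\alpha\in 2}\mathbf{C}^{\alpha})$ places $\mathbf{A}/(\Phi\cap\Psi)$ in $\mathrm{P}_{\mathrm{fsd}}(\mathcal{F})$, and closure gives $\mathbf{A}/(\Phi\cap\Psi)\in\mathcal{F}$. I expect no genuine obstacle here: the only points needing care are the verifications that the diagonal map is a \emph{monomorphism} and that its two coordinate projections are surjective, and both reduce immediately to the pointwise identity $(\Phi\cap\Psi)_{s}=\Phi_{s}\cap\Psi_{s}$ together with the surjectivity of canonical projections. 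The mild conceptual subtlety, worth flagging, is simply that the finite-subdirect-product closure in Definition~\ref{DefFormAlg} is precisely calibrated to supply the meet-of-congruences property demanded by Shemetkov and Skiba.
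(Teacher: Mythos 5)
Your proof is correct and follows essentially the same route as the paper's: both realize $\mathbf{A}/\Phi\cap\Psi$ as a subdirect product of $\mathbf{A}/\Phi$ and $\mathbf{A}/\Psi$ via the induced map $\langle\mathrm{p}^{\Phi\cap\Psi,\Phi},\mathrm{p}^{\Phi\cap\Psi,\Psi}\rangle$ and then apply closure under $\mathrm{P}_{\mathrm{fsd}}$. The only difference is that you spell out the sortwise verification that this map is a monomorphism with surjective coordinate projections, which the paper merely asserts.
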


\begin{proof}
Let $\mathbf{A}$ be a $\Sigma$-algebra and let $\Phi$ and $\Psi$ be congruences on $\mathbf{A}$ such that $\mathbf{A}/\Phi$ and $\mathbf{A}/\Psi\in \mathcal{F}$. Then there exists a unique homomorphism $\mathrm{p}^{\Phi\cap\Psi,\Phi}$ from $\mathbf{A}/\Phi\cap\Psi$ to $\mathbf{A}/\Phi$ such that $\mathrm{p}^{\Phi\cap\Psi,\Phi}\circ
\mathrm{pr}^{\Phi\cap\Psi} = \mathrm{pr}^{\Phi}$. In the same way, there exists a unique homomorphism $\mathrm{p}^{\Phi\cap\Psi,\Psi}$ from $\mathbf{A}/\Phi\cap\Psi$ to $\mathbf{A}/\Psi$ such that $\mathrm{p}^{\Phi\cap\Psi,\Psi}\circ \mathrm{pr}^{\Phi\cap\Psi} = \mathrm{pr}^{\Psi}$. Then, by the universal property of the product, there exists a unique homomorphism $\langle\mathrm{p}^{\Phi\cap\Psi,\Phi},\mathrm{p}^{\Phi\cap\Psi,\Psi}\rangle$ from $\mathbf{A}/\Phi\cap\Psi$ to $\mathbf{A}/\Phi\times \mathbf{A}/\Psi$ such that
$$
\mathrm{pr}^{\Phi}\circ \langle\mathrm{p}^{\Phi\cap\Psi,\Phi},\mathrm{p}^{\Phi\cap\Psi,\Psi}\rangle = \mathrm{p}^{\Phi\cap\Psi,\Phi} \text{ and } \mathrm{pr}^{\Psi}\circ \langle\mathrm{p}^{\Phi\cap\Psi,\Phi},\mathrm{p}^{\Phi\cap\Psi,\Psi}\rangle = \mathrm{p}^{\Phi\cap\Psi,\Psi}.
$$
Moreover, $\langle\mathrm{p}^{\Phi\cap\Psi,\Phi},\mathrm{p}^{\Phi\cap\Psi,\Psi}\rangle$ is an embedding and the homomorphisms $\mathrm{p}^{\Phi\cap\Psi,\Phi}$ and $\mathrm{p}^{\Phi\cap\Psi,\Psi}$ are surjective. Therefore $\langle\mathrm{p}^{\Phi\cap\Psi,\Phi},\mathrm{p}^{\Phi\cap\Psi,\Psi}\rangle$ is a subdirect embedding of $\mathbf{A}/\Phi\cap\Psi$ in $\mathbf{A}/\Phi\times \mathbf{A}/\Psi$ and, consequently, $\mathbf{A}/\Phi\cap \Psi\in \mathcal{F}$.
\end{proof}

\begin{proposition}
Let $\mathcal{F}$ be an ShSk-formation of $\Sigma$-algebras. Then, for every $\Sigma$-algebra $\mathbf{A}$, every $\mathbf{B}$, $\mathbf{C}\in \mathcal{F}$, and every subdirect embedding $f$ of $\mathbf{A}$ in $\mathbf{B}\times \mathbf{C}$, we have that $\mathbf{A}\in \mathcal{F}$.
\end{proposition}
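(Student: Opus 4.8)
The plan is to recover $\mathbf{B}$ and $\mathbf{C}$ as quotients of $\mathbf{A}$ by the kernels of the two component maps of the given subdirect embedding, and then to invoke the defining closure condition (4) of an ShSk-formation together with the fact that the intersection of these two kernels is the diagonal $\Delta^{\mathbf{A}}$.

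Concretely, I would first set $p = \mathrm{pr}^{\mathbf{B}}\circ f$ and $q = \mathrm{pr}^{\mathbf{C}}\circ f$, the two component homomorphisms of the subdirect embedding $f\colon\mathbf{A}\mor\mathbf{B}\times\mathbf{C}$. By the definition of subdirect embedding (Definition~\ref{Sdprod}), both $p$ and $q$ are epimorphisms. Put $\Phi = \mathrm{Ker}(p)$ and $\Psi = \mathrm{Ker}(q)$; these are congruences on $\mathbf{A}$. Since $p$ is a surjective homomorphism with kernel $\Phi$, the universal property of the quotient yields an isomorphism $\mathbf{A}/\Phi\cong\mathbf{B}$, and likewise $\mathbf{A}/\Psi\cong\mathbf{C}$. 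Because $\mathbf{B}$, $\mathbf{C}\in\mathcal{F}$ and $\mathcal{F}$ is abstract, it follows that $\mathbf{A}/\Phi$ and $\mathbf{A}/\Psi$ both belong to $\mathcal{F}$.

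Next I would apply condition (4) of the definition of ShSk-formation to the pair $\Phi$, $\Psi\in\mathrm{Cgr}(\mathbf{A})$, obtaining $\mathbf{A}/(\Phi\cap\Psi)\in\mathcal{F}$. The decisive point---the only step requiring more than routine bookkeeping---is the identity $\Phi\cap\Psi = \Delta^{\mathbf{A}}$. This holds because $\mathrm{Ker}(f) = \mathrm{Ker}(p)\cap\mathrm{Ker}(q) = \Phi\cap\Psi$: indeed, for each $s\in S$ and $P$, $Q\in A_{s}$ one has $f_{s}(P) = f_{s}(Q)$ if, and only if, $p_{s}(P) = p_{s}(Q)$ and $q_{s}(P) = q_{s}(Q)$, which says precisely that $(P,Q)$ lies in $\Phi_{s}\cap\Psi_{s}$. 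Since $f$ is a monomorphism, $\mathrm{Ker}(f) = \Delta^{\mathbf{A}}$, whence $\Phi\cap\Psi = \Delta^{\mathbf{A}}$.

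Finally, because $\mathbf{A}/\Delta^{\mathbf{A}}\cong\mathbf{A}$ and $\mathcal{F}$ is abstract, from $\mathbf{A}/(\Phi\cap\Psi) = \mathbf{A}/\Delta^{\mathbf{A}}\in\mathcal{F}$ I would conclude $\mathbf{A}\in\mathcal{F}$, as required. I expect no serious obstacle: the closure under homomorphic images (condition (3)) is not even needed for this particular statement, which amounts to the observation that a binary subdirect product is, up to isomorphism, the quotient $\mathbf{A}/(\Phi\cap\Psi)$ by the intersection of two kernels whose meet collapses to the diagonal, with each individual quotient already lying in $\mathcal{F}$.
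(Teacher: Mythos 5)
Your proof is correct and follows essentially the same route as the paper: both identify the congruences $\Phi = \mathrm{Ker}(\mathrm{pr}^{\mathbf{B}}\circ f)$ and $\Psi = \mathrm{Ker}(\mathrm{pr}^{\mathbf{C}}\circ f)$, observe that $\Phi\cap\Psi = \Delta^{\mathbf{A}}$ because $f$ is injective, and conclude via condition (4) and abstractness. The only cosmetic difference is that the paper packages the isomorphisms $\mathbf{A}/\Phi\cong\mathbf{B}$ and $\mathbf{A}/\Psi\cong\mathbf{C}$ through its notion of isomorphic subdirect embeddings (replacing $f$ by the canonical map into $\mathbf{A}/\Phi\times\mathbf{A}/\Psi$), whereas you obtain them directly from the homomorphism theorem.
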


\begin{proof}
Let $I$ be a set in $\boldsymbol{\mathcal{U}}$. We know that if $f\colon\mathbf{A}\mor \prod_{i\in I}\mathbf{A}^{i}$ is a subdirect embedding, then if, for every $i\in I$, we denote by $\Phi^{i}$ the congruence $\mathrm{Ker}(\mathrm{pr}^{\mathbf{A}^{i}}\circ f)$ on $\mathbf{A}$, and by $g$ the homomorphism from $\mathbf{A}$ to $\prod_{i\in I}\mathbf{A}/\Phi^{i}$ defined, for every $i\in I$, every $s\in S$, and every $a\in A_{s}$, as $g_{s}(a) = ([a]_{\Phi^{i}_{s}})_{i\in I}$, we have that $g$ is a subdirect embedding which, in addition, is isomorphic to the subdirect embedding $f$. Therefore, given the subdirect embedding $f$ of $\mathbf{A}$ in $\mathbf{B}\times \mathbf{C}$, we have that it is isomorphic to the subdirect embedding $g\colon \mathbf{A}\mor \mathbf{A}/\Phi\times \mathbf{A}/\Psi$, where $\Phi$ is the congruence $\mathrm{Ker}(\mathrm{pr}^{\mathbf{B}}\circ f)$ on $\mathbf{A}$, $\Psi$ the congruence $\mathrm{Ker}(\mathrm{pr}^{\mathbf{C}}\circ f)$ on $\mathbf{A}$, and $g$ the homomorphism from $\mathbf{A}$ to $\mathbf{A}/\Phi\times \mathbf{A}/\Psi$ defined, for every $i\in I$, every $s\in S$, and every $a\in A_{s}$, as $g_{s}(a) = ([a]_{\Phi_{s}},[a]_{\Psi_{s}})$. Since $\Phi\cap \Psi = \Delta_{\mathbf{A}}$ and $\mathbf{A}\cong \mathbf{A}/\Phi\cap \Psi$, we have that $\mathbf{A}\in \mathcal{F}$.
\end{proof}

\begin{corollary}
The notions of formation of $\Sigma$-algebras and of ShSk-formation of $\Sigma$-algebras are equivalent.
\end{corollary}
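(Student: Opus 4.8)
The plan is to observe that the two definitions share conditions (1)--(3) verbatim, so that the entire content of the equivalence lies in reconciling their respective fourth conditions: a formation of $\Sigma$-algebras asks for closure under the operator $\mathrm{P}_{\mathrm{fsd}}$ (finite subdirect products, i.e.\ all $n\in\mathbb{N}$), whereas an ShSk-formation asks only that $\mathbf{A}/\Phi,\mathbf{A}/\Psi\in\mathcal{F}$ imply $\mathbf{A}/(\Phi\cap\Psi)\in\mathcal{F}$. The two propositions immediately preceding this corollary already carry the main burden, so the corollary is essentially their packaging together with one induction on the number of factors.

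For the implication \emph{formation $\Rightarrow$ ShSk-formation} there is nothing to do beyond invoking the first of the two preceding propositions: conditions (1)--(3) transfer unchanged, and that proposition is precisely the assertion that a formation of $\Sigma$-algebras satisfies the congruence-intersection condition (4) of an ShSk-formation. Hence every formation is an ShSk-formation.

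For the converse \emph{ShSk-formation $\Rightarrow$ formation}, again (1)--(3) transfer unchanged and it remains to derive $\mathrm{P}_{\mathrm{fsd}}(\mathcal{F})\subseteq\mathcal{F}$. I would prove by induction on $n$ that whenever $\mathbf{A}$ admits a subdirect embedding $f$ into $\prod_{\alpha\in n}\mathbf{C}^{\alpha}$ with each $\mathbf{C}^{\alpha}\in\mathcal{F}$, then $\mathbf{A}\in\mathcal{F}$. For $n=1$ a subdirect embedding of $\mathbf{A}$ into the single factor $\mathbf{C}^{0}$ is at once a monomorphism and (via $\mathrm{pr}^{0}\circ f=f$) an epimorphism, hence bijective, hence an isomorphism, so $\mathbf{A}\in\mathcal{F}$ by abstractness. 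For the step $n\to n+1$, set $\Phi^{\alpha}=\mathrm{Ker}(\mathrm{pr}^{\alpha}\circ f)$, so that $\mathbf{A}/\Phi^{\alpha}\cong\mathbf{C}^{\alpha}\in\mathcal{F}$ and, since $f$ is injective, $\bigcap_{\alpha\in n+1}\Phi^{\alpha}=\Delta^{\mathbf{A}}$. Collecting the first $n$ factors, $\Psi=\bigcap_{\alpha\in n}\Phi^{\alpha}$ exhibits $\mathbf{A}/\Psi$ as a subdirect product of $(\mathbf{A}/\Phi^{\alpha})_{\alpha\in n}\cong(\mathbf{C}^{\alpha})_{\alpha\in n}$, whence $\mathbf{A}/\Psi\in\mathcal{F}$ by the inductive hypothesis. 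Because $\Psi\cap\Phi^{n}=\Delta^{\mathbf{A}}$, the assignment $a\mapsto([a]_{\Psi},[a]_{\Phi^{n}})$ is a subdirect embedding of $\mathbf{A}$ into $\mathbf{A}/\Psi\times\mathbf{A}/\Phi^{n}$, both factors lying in $\mathcal{F}$; the second of the two preceding propositions (closure of an ShSk-formation under \emph{binary} subdirect products) then yields $\mathbf{A}\in\mathcal{F}$, completing the induction.

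The step I expect to be the main obstacle is the base case $n=0$, which is exactly the requirement $\mathrm{Sf}(\mathbf{1})\subseteq\mathcal{F}$, the empty subdirect product being $\mathbf{1}$. This is a genuinely many-sorted phenomenon: by Proposition~\ref{propssupport} and the fact that $\mathrm{supp}_{S}(A)=\mathrm{supp}_{S}(A/\Phi)$, both the operator $\mathrm{H}$ and the congruence-intersection operation preserve supports, so no subfinal algebra whose support differs from those already present can be manufactured from the remaining axioms---in sharp contrast with the single-sorted case, where $\mathbf{1}$ is the unique subfinal algebra and arises as a quotient of every nonempty member, rendering the clause automatic. I would therefore treat $n=0$ on its own, reading the inclusion $\mathrm{Sf}(\mathbf{1})\subseteq\mathcal{F}$ into the ShSk-formation notion as the counterpart of the empty-product clause already recorded for formations, and restricting the inductive argument above to $n\geq 1$; this is the one place where the equivalence rests on the empty-product convention being made explicit rather than following formally from conditions (1)--(4).
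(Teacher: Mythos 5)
Your proposal is correct, and it follows the route the paper itself intends: the corollary carries no proof of its own and is meant to be the packaging of the two propositions that immediately precede it (formation $\Rightarrow$ closure under intersections of pairs of congruences; ShSk-formation $\Rightarrow$ closure under binary subdirect embeddings). Your contribution beyond that packaging is to make explicit what the paper leaves tacit: the induction on the number of factors, with the case $n=1$ settled by injective-plus-surjective-equals-isomorphism and abstractness, and the inductive step passing through $\Psi=\bigcap_{\alpha\in n}\Phi^{\alpha}$, the subdirect embedding of $\mathbf{A}/\Psi$ into $\prod_{\alpha\in n}\mathbf{A}/\Phi^{\alpha}$, and then the binary proposition applied to $\mathbf{A}\mor\mathbf{A}/\Psi\times\mathbf{A}/\Phi^{n}$. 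All of that is carried out correctly.

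Your flagged obstacle at $n=0$ is not excessive caution; it is a genuine lacuna in the paper's claim, and your support-theoretic diagnosis is exactly the right one. All four ShSk axioms preserve supports (Proposition~\ref{propssupport}, together with $\mathrm{supp}_{S}(A)=\mathrm{supp}_{S}(A/\Phi)$), so they cannot force $\mathrm{Sf}(\mathbf{1})\subseteq\mathcal{F}$. Concretely, take $\mathrm{card}(S)\geq 2$ and $\Sigma$ with no constants (for every $s\in S$, $\Sigma_{\lambda,s}=\varnothing$): the set $\mathcal{F}$ of all $\Sigma$-algebras of full support is nonempty (it contains $\mathbf{1}$), abstract, $\mathrm{H}$-closed, and satisfies the ShSk intersection condition, since every quotient of an algebra has the same support as the algebra; yet $\pmb{\varnothing}^{S}\in\mathrm{Sf}(\mathbf{1})$ is not in $\mathcal{F}$, so $\mathcal{F}$ is not a formation in the sense of Definition~\ref{DefFormAlg}. (The same failure occurs single-sortedly once the empty algebra is admitted; the clause is automatic only under Shemetkov and Skiba's convention that carriers are nonempty, which the paper records but does not import into its own definitions.) So the corollary, read literally against the paper's definitions, requires exactly the repair you propose—either build $\mathrm{Sf}(\mathbf{1})\subseteq\mathcal{F}$ into the ShSk notion, or restrict $\mathrm{P}_{\mathrm{fsd}}$ to $n\geq 1$—and your write-up is the more careful of the two.
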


Since $\mathrm{Form}_{\mathrm{Alg}}(\Sigma)\subseteq\mathrm{Sub}(\mathrm{Alg}(\Sigma))$, two formations $\mathcal{F}$ and $\mathcal{G}$ of $\Sigma$-algebras can be compared in a natural way by stating that $\mathcal{F}\leq \mathcal{G}$ if, and only if, $\mathcal{F}\subseteq \mathcal{G}$. Therefore $\mathbf{Form}_{\mathrm{Alg}}(\Sigma) = (\mathrm{Form}_{\mathrm{Alg}}(\Sigma),\leq)$ is an ordered set.

We next proceed to investigate the properties of $\mathrm{Form}_{\mathrm{Alg}}(\Sigma)$.

\begin{proposition}\label{FormAlgAlgLat}
The subset $\mathrm{Form}_{\mathrm{Alg}}(\Sigma)$ of $\mathrm{Sub}(\mathrm{Alg}(\Sigma))$ is an algebraic closure system.
\end{proposition}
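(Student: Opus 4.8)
The plan is to verify directly the two defining features of an algebraic closure system on $\mathrm{Alg}(\Sigma)$: that $\mathrm{Form}_{\mathrm{Alg}}(\Sigma)$ is closed under arbitrary intersections (so that, in particular, it contains the empty intersection $\mathrm{Alg}(\Sigma)$) and that it is closed under the union of any up-directed subfamily. Each of the four clauses in Definition~\ref{DefFormAlg}---nonemptiness, being abstract, closure under $\mathrm{H}$, and closure under $\mathrm{P}_{\mathrm{fsd}}$---will be checked separately for intersections and for directed unions.

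First I would dispatch the top element: $\mathrm{Alg}(\Sigma)$ trivially satisfies all four clauses, hence is a formation, and this covers the empty intersection. For an arbitrary nonempty family $(\mathcal{F}_{j})_{j\in J}$ in $\mathrm{Form}_{\mathrm{Alg}}(\Sigma)$ with $\mathcal{F}=\bigcap_{j\in J}\mathcal{F}_{j}$, the clauses for being abstract and for $\mathrm{H}$ pass through pointwise: if $\mathbf{A}\in\mathcal{F}$ and $\mathbf{B}\cong\mathbf{A}$ (respectively $\mathbf{B}$ is a homomorphic image of $\mathbf{A}$), then $\mathbf{B}\in\mathcal{F}_{j}$ for every $j$, whence $\mathbf{B}\in\mathcal{F}$; the same pointwise argument handles $\mathrm{P}_{\mathrm{fsd}}$. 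Nonemptiness of $\mathcal{F}$ is where I would invoke the earlier remark that $\mathrm{Sf}(\mathbf{1})\subseteq\mathcal{F}_{j}$ for every $j$ (in particular $\mathbf{1}\in\mathcal{F}_{j}$), so that $\mathbf{1}\in\mathcal{F}$ and indeed $\mathrm{Sf}(\mathbf{1})\subseteq\mathcal{F}$.

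The hard part will be the closure under up-directed unions, which is exactly the \emph{algebraic} feature and the only place where the finiteness built into $\mathrm{P}_{\mathrm{fsd}}$ is essential. Let $(\mathcal{F}_{j})_{j\in J}$ be up-directed and put $\mathcal{F}=\bigcup_{j\in J}\mathcal{F}_{j}$. Nonemptiness, abstractness, and $\mathrm{H}$-closure are immediate, since any witnessing membership already lands in a single $\mathcal{F}_{j}$, which is abstract and $\mathrm{H}$-closed. For $\mathrm{P}_{\mathrm{fsd}}$, suppose $\mathbf{A}$ admits a subdirect embedding into $\prod_{\alpha\in n}\mathbf{C}^{\alpha}$ with $n\in\mathbb{N}$ and each $\mathbf{C}^{\alpha}\in\mathcal{F}$; choosing $j_{\alpha}$ with $\mathbf{C}^{\alpha}\in\mathcal{F}_{j_{\alpha}}$ for each $\alpha\in n$, I would use up-directedness of the family to produce a single index $j$ with $\mathcal{F}_{j_{\alpha}}\subseteq\mathcal{F}_{j}$ for all $\alpha\in n$---possible precisely because $n$ is finite, and a family directed under $\subseteq$ bounds every finite subfamily. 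Then all the $\mathbf{C}^{\alpha}$ lie in $\mathcal{F}_{j}$, so $\mathbf{A}\in\mathrm{P}_{\mathrm{fsd}}(\mathcal{F}_{j})\subseteq\mathcal{F}_{j}\subseteq\mathcal{F}$. The degenerate case $n=0$, where $\prod_{\alpha\in 0}\mathbf{C}^{\alpha}=\mathbf{1}$ and hence $\mathbf{A}\in\mathrm{Sf}(\mathbf{1})$, is absorbed by the fact that $\mathrm{Sf}(\mathbf{1})$ sits inside every $\mathcal{F}_{j}$. Together these verifications show $\mathcal{F}\in\mathrm{Form}_{\mathrm{Alg}}(\Sigma)$ in both cases, establishing that $\mathrm{Form}_{\mathrm{Alg}}(\Sigma)$ is an algebraic closure system.
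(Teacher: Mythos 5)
Your proof is correct and follows essentially the same route as the paper: show $\mathrm{Alg}(\Sigma)$ itself is a formation, verify closure under nonempty intersections clause by clause (with nonemptiness secured by $\mathrm{Sf}(\mathbf{1})\subseteq\mathcal{F}_{j}$), and verify closure under up-directed unions. You are in fact more thorough than the paper at the last step, which it dismisses as ``obviously'': your observation that the finiteness of $n$ in $\mathrm{P}_{\mathrm{fsd}}$ is exactly what lets directedness produce a single index $j$ containing all the $\mathbf{C}^{\alpha}$ is the substantive point hidden behind that word.
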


\begin{proof}
It is obvious that $\mathrm{Alg}(\Sigma)$, the set of all $\Sigma$-algebras, is a formation of $\Sigma$-algebras.

Let $J$ be a nonempty set in $\boldsymbol{\mathcal{U}}$ and $(\mathcal{F}_{j})_{j\in J}$ a $J$-indexed family in $\mathrm{Form}_{\mathrm{Alg}}(\Sigma)$. Then the set $\mathcal{F}$ defined as $\mathcal{F} = \bigcap_{j\in J}\mathcal{F}_{j}\in \mathrm{Form}_{\mathrm{Alg}}(\Sigma)$. We have that $\mathcal{F}\neq\varnothing$, since, for every $j\in J$, $\mathrm{Sf}(\mathbf{1})\subseteq \mathcal{F}_{j}$. Let $\mathbf{A}$ be a $\Sigma$-algebra in $\mathcal{F}$ and let $\mathbf{B}$ be a $\Sigma$-algebra such that $\mathbf{B}\cong \mathbf{A}$. Then, for every $j\in J$, $\mathbf{A}\in \mathcal{F}_{j}$, hence, for every $j\in J$, $\mathbf{B}\in \mathcal{F}_{j}$. Therefore $\mathbf{B}\in \mathcal{F}$. It is obvious that $\mathrm{H}(\mathcal{F})\subseteq \mathcal{F}$. Finally, let us prove that $\mathrm{P}_{\mathrm{fsd}}(\mathcal{F})\subseteq \mathcal{F}$. Let $n$ be a natural number, $(\mathbf{C}^{\alpha})_{\alpha\in n}$ an $n$-family of $\Sigma$-algebras in $\mathcal{F}$, $\mathbf{A}$ a $\Sigma$-algebra, and let us suppose that there exists a subdirect embedding of $\mathbf{A}$ in $\prod_{\alpha\in n}\mathbf{C}^{\alpha}$. From the definition of $\mathcal{F}$ it follows that, for every $j\in J$ and for every $\alpha\in n$, $\mathbf{C}^{\alpha}$ belongs to $\mathcal{F}_{j}$. Hence, for every $j\in J$, $\mathbf{A}\in \mathcal{F}_{j}$. Therefore $\mathbf{A}\in \mathcal{F}$. This proves that $\mathcal{F}$ is a formation of $\Sigma$-algebras.

Let $J$ be a nonempty set in $\boldsymbol{\mathcal{U}}$ and $(\mathcal{F}_{j})_{j\in J}$ an upward directed family in $\mathrm{Form}_{\mathrm{Alg}}(\Sigma)$. Then, obviously, the set $\mathcal{F}$ defined as $\mathcal{F} = \bigcup_{j\in J}\mathcal{F}_{j}\in \mathrm{Form}_{\mathrm{Alg}}(\Sigma)$.
\end{proof}

\begin{definition}
We denote by $\mathrm{Fmg}_{\Sigma}$ the algebraic closure operator on $\mathrm{Alg}(\Sigma)$ canonically associated to the algebraic closure system $\mathrm{Form}_{\mathrm{Alg}}(\Sigma)$ and we call it the \emph{formation generating operator} for $\mathrm{Alg}(\Sigma)$.
\end{definition}

\begin{remark}
We recall that, for every $\mathcal{M}\subseteq \mathrm{Alg}(\Sigma)$, $\mathrm{Fmg}_{\Sigma}(\mathcal{M})$ is defined as follows:
$$
\textstyle
\mathrm{Fmg}_{\Sigma}(\mathcal{M}) = \bigcap\{\mathcal{F}\in \mathrm{Form}_{\mathrm{Alg}}(\Sigma)\mid \mathcal{M}\subseteq \mathcal{F}\}.
$$
Moreover, $\mathrm{Form}_{\mathrm{Alg}}(\Sigma) = \mathrm{Fix}(\mathrm{Fmg}_{\Sigma})$, the set of all fixed points of $\mathrm{Fmg}_{\Sigma}$.
\end{remark}

\begin{corollary}
$\mathbf{Form}_{\mathrm{Alg}}(\Sigma)$ is an algebraic lattice (and, for every $\mathcal{F}\in \mathrm{Form}_{\mathrm{Alg}}(\Sigma)$, $\mathcal{F}$ is compact if, and only if, there exists a finite subset $\mathcal{M}$ of $\mathrm{Alg}(\Sigma)$ such that $\mathcal{F} = \mathrm{Fmg}_{\Sigma}(\mathcal{M})$).
\end{corollary}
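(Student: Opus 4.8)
The plan is to deduce this corollary directly from Proposition~\ref{FormAlgAlgLat} by invoking the standard correspondence between algebraic closure systems and algebraic lattices. By that proposition, $\mathrm{Form}_{\mathrm{Alg}}(\Sigma)$ is an algebraic closure system on the $\boldsymbol{\mathcal{U}}$-large set $\mathrm{Alg}(\Sigma)$, and $\mathrm{Fmg}_{\Sigma}$ is its associated algebraic closure operator. The general theory then yields that the ordered set $(\mathrm{Form}_{\mathrm{Alg}}(\Sigma),\subseteq)$ is a complete lattice in which, for every family $(\mathcal{F}_{j})_{j\in J}$, the meet is $\bigwedge_{j\in J}\mathcal{F}_{j} = \bigcap_{j\in J}\mathcal{F}_{j}$ and the join is $\bigvee_{j\in J}\mathcal{F}_{j} = \mathrm{Fmg}_{\Sigma}(\bigcup_{j\in J}\mathcal{F}_{j})$. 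First I would record this, noting that the only set-theoretic point to check is legitimacy, already settled by the remark that $\mathrm{Form}_{\mathrm{Alg}}(\Sigma)\subseteq \mathrm{Sub}(\mathrm{Alg}(\Sigma))\subseteq \mathrm{Sub}(\boldsymbol{\mathcal{U}})$ is a genuine set.

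Next I would establish the characterization of compact elements, which rests on the algebraicity of $\mathrm{Fmg}_{\Sigma}$, i.e., on $\mathrm{Fmg}_{\Sigma}(\mathcal{M}) = \bigcup\{\mathrm{Fmg}_{\Sigma}(\mathcal{N})\mid \mathcal{N}\subseteq \mathcal{M},\ \mathcal{N}\text{ finite}\}$ for every $\mathcal{M}\subseteq \mathrm{Alg}(\Sigma)$. I would prove the two implications separately. If $\mathcal{F} = \mathrm{Fmg}_{\Sigma}(\mathcal{M})$ for a finite $\mathcal{M}$ and $\mathcal{F}\subseteq \bigvee_{j\in J}\mathcal{G}_{j} = \mathrm{Fmg}_{\Sigma}(\bigcup_{j\in J}\mathcal{G}_{j})$, then each member of the finite set $\mathcal{M}$ lies, by algebraicity, in the closure of some finite subset of $\bigcup_{j\in J}\mathcal{G}_{j}$; collecting the finitely many indices involved yields a finite $J_{0}\subseteq J$ with $\mathcal{M}\subseteq \mathrm{Fmg}_{\Sigma}(\bigcup_{j\in J_{0}}\mathcal{G}_{j})$, whence $\mathcal{F}\subseteq \bigvee_{j\in J_{0}}\mathcal{G}_{j}$, so $\mathcal{F}$ is compact. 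Conversely, if $\mathcal{F}$ is compact, then writing $\mathcal{F} = \mathrm{Fmg}_{\Sigma}(\mathcal{F})$ as the directed union, hence join, of the terms $\mathrm{Fmg}_{\Sigma}(\mathcal{N})$ over finite $\mathcal{N}\subseteq \mathcal{F}$, compactness forces $\mathcal{F}$ to equal a join of finitely many such terms, that is, $\mathcal{F} = \mathrm{Fmg}_{\Sigma}(\mathcal{M})$ for a finite $\mathcal{M}$.

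Finally, algebraicity of the lattice follows at once: for every $\mathcal{F}\in \mathrm{Form}_{\mathrm{Alg}}(\Sigma)$ we have $\mathcal{F} = \mathrm{Fmg}_{\Sigma}(\mathcal{F}) = \bigvee\{\mathrm{Fmg}_{\Sigma}(\mathcal{N})\mid \mathcal{N}\subseteq \mathcal{F},\ \mathcal{N}\text{ finite}\}$, exhibiting $\mathcal{F}$ as a join of compact elements. I do not expect a genuine obstacle here, since the statement is an instance of the general fact that the closed sets of an algebraic closure operator form an algebraic lattice whose compact elements are precisely the closures of finite sets; the only mild care needed is the bookkeeping with the $\boldsymbol{\mathcal{U}}$-large base set $\mathrm{Alg}(\Sigma)$, which the preceding remarks already license.
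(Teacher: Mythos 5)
Your proposal is correct and takes essentially the same route as the paper: the paper states this as an immediate corollary of Proposition~\ref{FormAlgAlgLat} (that $\mathrm{Form}_{\mathrm{Alg}}(\Sigma)$ is an algebraic closure system with associated operator $\mathrm{Fmg}_{\Sigma}$), relying on exactly the standard correspondence between algebraic closure systems and algebraic lattices that you spell out. The details you supply---meets as intersections, joins as closures of unions, the two-way compactness argument via algebraicity of $\mathrm{Fmg}_{\Sigma}$, and the representation of each formation as a directed join of closures of its finite subsets---are the standard ones and are all sound.
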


\begin{remark}
We recall that, for every $J\in \boldsymbol{\mathcal{U}}$ and every $J$-indexed family $(\mathcal{F}_{j})_{j\in J}$ in $\mathrm{Form}_{\mathrm{Alg}}(\Sigma)$, $\bigvee_{j\in J}\mathcal{F}_{j} = \mathrm{Fmg}_{\Sigma}(\bigcup_{j\in J}\mathcal{F}_{j})$.
\end{remark}

\begin{proposition}
Let $\mathcal{F}$ be a formation of $\Sigma$-algebras. Then the function $\mathfrak{F}_{\mathcal{F}}$ from   $\boldsymbol{\mathcal{U}}^{S}$ which assigns to $A\in \boldsymbol{\mathcal{U}}^{S}$ the subset
$$
\mathfrak{F}_{\mathcal{F}}(A)= \{\Phi\in \mathrm{Cgr}(\mathbf{T}_{\Sigma}(A))\mid \mathbf{T}_{\Sigma}(A)/\Phi\in \mathcal{F}\}
$$
of $\mathrm{Cgr}(\mathbf{T}_{\Sigma}(A))$, is a formation of congruences with respect to $\Sigma$.
\end{proposition}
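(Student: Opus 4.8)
The plan is to verify directly the two clauses of Definition~\ref{DefFormCgr} for the function $\mathfrak{F}_{\mathcal{F}}$. I will treat the filter condition~(1) first. For nonemptiness of $\mathfrak{F}_{\mathcal{F}}(A)$, I would exhibit the greatest congruence $\nabla^{\mathbf{T}_{\Sigma}(A)}$ as a witness: the quotient $\mathbf{T}_{\Sigma}(A)/\nabla^{\mathbf{T}_{\Sigma}(A)}$ is isomorphic to a subalgebra of $\mathbf{1}$, so it belongs to $\mathrm{Sf}(\mathbf{1})$, and since every formation of $\Sigma$-algebras contains $\mathrm{Sf}(\mathbf{1})$ and is abstract, this quotient lies in $\mathcal{F}$; thus $\nabla^{\mathbf{T}_{\Sigma}(A)}\in \mathfrak{F}_{\mathcal{F}}(A)$. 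For closure under binary meets, given $\Phi,\Psi\in \mathfrak{F}_{\mathcal{F}}(A)$ we have $\mathbf{T}_{\Sigma}(A)/\Phi,\mathbf{T}_{\Sigma}(A)/\Psi\in \mathcal{F}$, and the proposition established above---that a formation of $\Sigma$-algebras is closed under the Shemetkov--Skiba intersection condition---applied to $\mathbf{A}=\mathbf{T}_{\Sigma}(A)$ yields $\mathbf{T}_{\Sigma}(A)/(\Phi\cap\Psi)\in\mathcal{F}$, i.e.\ $\Phi\cap\Psi\in\mathfrak{F}_{\mathcal{F}}(A)$. For upward closure, if $\Phi\in\mathfrak{F}_{\mathcal{F}}(A)$, $\Psi\in\mathrm{Cgr}(\mathbf{T}_{\Sigma}(A))$, and $\Phi\subseteq\Psi$, then the canonical homomorphism $\mathrm{p}^{\Phi,\Psi}\colon\mathbf{T}_{\Sigma}(A)/\Phi\mor\mathbf{T}_{\Sigma}(A)/\Psi$ is surjective, so $\mathbf{T}_{\Sigma}(A)/\Psi$ is a homomorphic image of $\mathbf{T}_{\Sigma}(A)/\Phi\in\mathcal{F}$; since $\mathrm{H}(\mathcal{F})\subseteq\mathcal{F}$, we get $\mathbf{T}_{\Sigma}(A)/\Psi\in\mathcal{F}$, hence $\Psi\in\mathfrak{F}_{\mathcal{F}}(A)$.

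For clause~(2), suppose $A,B\in\boldsymbol{\mathcal{U}}^{S}$, $\Theta\in\mathfrak{F}_{\mathcal{F}}(B)$ (so $\mathbf{T}_{\Sigma}(B)/\Theta\in\mathcal{F}$), and $f\colon\mathbf{T}_{\Sigma}(A)\mor\mathbf{T}_{\Sigma}(B)$ is a homomorphism with $\mathrm{pr}^{\Theta}\circ f$ an epimorphism onto $\mathbf{T}_{\Sigma}(B)/\Theta$. I would apply the universal property of the quotient by $\mathrm{Ker}(\mathrm{pr}^{\Theta}\circ f)$: since $\mathrm{pr}^{\Theta}\circ f$ is surjective, the induced homomorphism from $\mathbf{T}_{\Sigma}(A)/\mathrm{Ker}(\mathrm{pr}^{\Theta}\circ f)$ to $\mathbf{T}_{\Sigma}(B)/\Theta$ is an isomorphism. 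Hence $\mathbf{T}_{\Sigma}(A)/\mathrm{Ker}(\mathrm{pr}^{\Theta}\circ f)\cong\mathbf{T}_{\Sigma}(B)/\Theta\in\mathcal{F}$, and because $\mathcal{F}$ is abstract this forces $\mathbf{T}_{\Sigma}(A)/\mathrm{Ker}(\mathrm{pr}^{\Theta}\circ f)\in\mathcal{F}$, i.e.\ $\mathrm{Ker}(\mathrm{pr}^{\Theta}\circ f)\in\mathfrak{F}_{\mathcal{F}}(A)$, as required.

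I do not expect any genuinely hard step here: each verification is a direct translation of a defining closure property of the formation $\mathcal{F}$ of $\Sigma$-algebras into the corresponding property of $\mathfrak{F}_{\mathcal{F}}$. The only point requiring care is the meet-closure in clause~(1), which is not immediate from the literal definition of a formation of $\Sigma$-algebras (closure under $\mathrm{H}$ and $\mathrm{P}_{\mathrm{fsd}}$) but rather relies on the previously proved equivalence with the ShSk-formulation; invoking that equivalence is the crux. Everything else follows from abstractness, closure under homomorphic images, the fact that $\mathrm{Sf}(\mathbf{1})\subseteq\mathcal{F}$, and the universal property of quotients.
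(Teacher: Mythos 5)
Your proposal is correct and follows essentially the same route as the paper's proof: nonemptiness via $\nabla^{\mathbf{T}_{\Sigma}(A)}$ together with $\mathrm{Sf}(\mathbf{1})\subseteq\mathcal{F}$ and abstractness, meet-closure via the previously proved ShSk intersection property applied to $\mathbf{T}_{\Sigma}(A)$, upward closure via $\mathrm{H}(\mathcal{F})\subseteq\mathcal{F}$, and clause (2) via the isomorphism $\mathbf{T}_{\Sigma}(A)/\mathrm{Ker}(\mathrm{pr}^{\Theta}\circ f)\cong\mathbf{T}_{\Sigma}(B)/\Theta$ and abstractness. Your explicit identification of the ShSk equivalence as the crux of the meet-closure step is exactly the point the paper uses (tacitly) as well.
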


\begin{proof}
Let us first prove that, for every $A\in \boldsymbol{\mathcal{U}}^{S}$, $\mathfrak{F}_{\mathcal{F}}(A)$ is a filter of the algebraic lattice $\mathbf{Cgr}(\mathbf{T}_{\Sigma}(A))$. $\mathfrak{F}_{\mathcal{F}}(A)\neq \varnothing$. In fact, $\nabla^{\mathbf{T}_{\Sigma}(A)}\in \mathfrak{F}_{\mathcal{F}}(A)$ since $\mathbf{T}_{\Sigma}(A)/\nabla^{\mathbf{T}_{\Sigma}(A)}\cong \mathbf{1}$, $\mathrm{Sf}(\mathbf{1})\subseteq \mathcal{F}$, and $\mathcal{F}$ is abstract. Let $\Phi$ and $\Psi$ be elements of $\mathfrak{F}_{\mathcal{F}}(A)$. Then, by definition of $\mathfrak{F}_{\mathcal{F}}(A)$,  $\mathbf{T}_{\Sigma}(A)/\Phi$ and $\mathbf{T}_{\Sigma}(A)/\Psi$ belong to $\mathcal{F}$. Hence $\mathbf{T}_{\Sigma}(A)/\Phi\cap\Psi\in \mathcal{F}$. Therefore $\Phi\cap\Psi\in \mathfrak{F}_{\mathcal{F}}(A)$. Let $\Phi$ be an element of $\mathfrak{F}_{\mathcal{F}}(A)$ and $\Psi$ a congruence on $\mathbf{T}_{\Sigma}(A)$ such that $\Phi\subseteq \Psi$. Then $\mathbf{T}_{\Sigma}(A)/\Psi$ is a quotient of $\mathbf{T}_{\Sigma}(A)/\Phi$. Hence $\mathbf{T}_{\Sigma}(A)/\Psi\in \mathcal{F}$. Therefore $\Psi\in \mathfrak{F}_{\mathcal{F}}(A)$. This proves that $\mathfrak{F}_{\mathcal{F}}(A)$ is a filter of the algebraic lattice $\mathbf{Cgr}(\mathbf{T}_{\Sigma}(A))$.

Let $A$ and $B$ two elements of $\boldsymbol{\mathcal{U}}^{S}$, $\Theta\in \mathfrak{F}_{\mathcal{F}}(B)$, and $f$ a homomorphism from $\mathbf{T}_{\Sigma}(A)$ to $\mathbf{T}_{\Sigma}(B)$ such that $\mathrm{pr}^{\Theta}\circ f$ is an epimorphism from $\mathbf{T}_{\Sigma}(A)$ to $\mathbf{T}_{\Sigma}(B)/\Theta$. Then $\mathbf{T}_{\Sigma}(A)/\mathrm{Ker}(\mathrm{pr}^{\Theta}\circ f)$ is isomorphic to $\mathbf{T}_{\Sigma}(B)/\Theta$. Moreover, since by hypothesis $\Theta\in \mathfrak{F}_{\mathcal{F}}(B)$, we have that $\mathbf{T}_{\Sigma}(B)/\Theta\in \mathcal{F}$. But  $\mathcal{F}$ is abstract, hence $\mathbf{T}_{\Sigma}(A)/\mathrm{Ker}(\mathrm{pr}^{\Theta}\circ f)\in \mathcal{F}$. Therefore, by definition of $\mathfrak{F}_{\mathcal{F}}(A)$, we have that $\mathrm{Ker}(\mathrm{pr}^{\Theta}\circ f)\in \mathfrak{F}_{\mathcal{F}}(A)$.
\end{proof}

Finally, we show that there exists an isomorphism between the complete lattices $\mathbf{Form}_{\mathrm{Alg}}(\Sigma)$ and $\mathbf{Form}_{\mathrm{Cgr}}(\Sigma)$, from which it follows that $\mathbf{Form}_{\mathrm{Cgr}}(\Sigma)$ is also an algebraic lattice.

\begin{proposition}\label{FormAlgFormCgrIso}
The complete lattices $\mathbf{Form}_{\mathrm{Alg}}(\Sigma)$ and $\mathbf{Form}_{\mathrm{Cgr}}(\Sigma)$ are isomorphic.
\end{proposition}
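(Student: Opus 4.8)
The plan is to exhibit the two constructions already available in this section as a pair of mutually inverse, order-preserving maps. I would set $\Theta\colon \mathbf{Form}_{\mathrm{Cgr}}(\Sigma)\to\mathbf{Form}_{\mathrm{Alg}}(\Sigma)$, $\Theta(\mathfrak{F}) = \mathcal{F}_{\mathfrak{F}}$, and $\Xi\colon \mathbf{Form}_{\mathrm{Alg}}(\Sigma)\to\mathbf{Form}_{\mathrm{Cgr}}(\Sigma)$, $\Xi(\mathcal{F}) = \mathfrak{F}_{\mathcal{F}}$; both are well defined by the two preceding propositions. Monotonicity is immediate: if $\mathfrak{F}\leq\mathfrak{G}$, then every $\mathbf{C}\cong\mathbf{T}_{\Sigma}(A)/\Phi$ with $\Phi\in\mathfrak{F}(A)$ also has $\Phi\in\mathfrak{G}(A)$, so $\mathcal{F}_{\mathfrak{F}}\subseteq\mathcal{F}_{\mathfrak{G}}$; and if $\mathcal{F}\subseteq\mathcal{G}$, then $\mathfrak{F}_{\mathcal{F}}(A)\subseteq\mathfrak{F}_{\mathcal{G}}(A)$ for every $A$ directly from the definitions. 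Since an order isomorphism between complete lattices automatically preserves arbitrary joins and meets, it then suffices to prove that $\Xi\circ\Theta$ and $\Theta\circ\Xi$ are the respective identities.

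For $\Theta\circ\Xi = \mathrm{id}$, I would fix a formation of algebras $\mathcal{F}$ and show $\mathcal{F}_{\mathfrak{F}_{\mathcal{F}}} = \mathcal{F}$. The inclusion $\mathcal{F}_{\mathfrak{F}_{\mathcal{F}}}\subseteq\mathcal{F}$ is routine: a member is isomorphic to some $\mathbf{T}_{\Sigma}(A)/\Phi$ with $\mathbf{T}_{\Sigma}(A)/\Phi\in\mathcal{F}$, and $\mathcal{F}$ is abstract. Conversely, given $\mathbf{C}\in\mathcal{F}$, Proposition~\ref{AlgIsoQuotFree} yields $A$ and $\Phi\in\mathrm{Cgr}(\mathbf{T}_{\Sigma}(A))$ with $\mathbf{C}\cong\mathbf{T}_{\Sigma}(A)/\Phi$; abstractness gives $\mathbf{T}_{\Sigma}(A)/\Phi\in\mathcal{F}$, i.e. $\Phi\in\mathfrak{F}_{\mathcal{F}}(A)$, so $\mathbf{C}\in\mathcal{F}_{\mathfrak{F}_{\mathcal{F}}}$.

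For $\Xi\circ\Theta = \mathrm{id}$, I would fix a formation of congruences $\mathfrak{F}$ and show $\mathfrak{F}_{\mathcal{F}_{\mathfrak{F}}} = \mathfrak{F}$. The inclusion $\mathfrak{F}(A)\subseteq\mathfrak{F}_{\mathcal{F}_{\mathfrak{F}}}(A)$ is trivial, since $\Phi\in\mathfrak{F}(A)$ forces $\mathbf{T}_{\Sigma}(A)/\Phi\in\mathcal{F}_{\mathfrak{F}}$. The reverse inclusion is the crux, and the step I expect to be the main obstacle. Suppose $\Phi\in\mathfrak{F}_{\mathcal{F}_{\mathfrak{F}}}(A)$, i.e. $\mathbf{T}_{\Sigma}(A)/\Phi\in\mathcal{F}_{\mathfrak{F}}$; unwinding the definition of $\mathcal{F}_{\mathfrak{F}}$ gives some $B$, some $\Psi\in\mathfrak{F}(B)$, and an isomorphism $g\colon\mathbf{T}_{\Sigma}(B)/\Psi\to\mathbf{T}_{\Sigma}(A)/\Phi$. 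The difficulty is that $\Psi$ lives over $B$ whereas we must land $\Phi$ in $\mathfrak{F}(A)$, so the bare isomorphism must be converted into data on which the second defining clause of a congruence formation can act. The way to do this is by projectivity: the composite $g^{-1}\circ\mathrm{pr}^{\Phi}\colon\mathbf{T}_{\Sigma}(A)\to\mathbf{T}_{\Sigma}(B)/\Psi$ is an epimorphism, so, since $\mathbf{T}_{\Sigma}(A)$ is projective (Proposition~\ref{FreeProj}) and $\mathrm{pr}^{\Psi}$ is an epimorphism, there is a homomorphism $f\colon\mathbf{T}_{\Sigma}(A)\to\mathbf{T}_{\Sigma}(B)$ with $\mathrm{pr}^{\Psi}\circ f = g^{-1}\circ\mathrm{pr}^{\Phi}$. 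Then $\mathrm{pr}^{\Psi}\circ f$ is an epimorphism and, because $g^{-1}$ is an isomorphism, its kernel equals $\mathrm{Ker}(\mathrm{pr}^{\Phi}) = \Phi$. Applying clause (2) of Definition~\ref{DefFormCgr} to $\Psi\in\mathfrak{F}(B)$ and $f$ yields $\Phi = \mathrm{Ker}(\mathrm{pr}^{\Psi}\circ f)\in\mathfrak{F}(A)$, completing the argument. Thus $\Theta$ and $\Xi$ are mutually inverse order isomorphisms, establishing the claimed isomorphism of complete lattices.
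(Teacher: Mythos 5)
Your proposal is correct and follows essentially the same route as the paper's own proof: the same pair of mutually inverse assignments $\mathcal{F}\mapsto\mathfrak{F}_{\mathcal{F}}$ and $\mathfrak{F}\mapsto\mathcal{F}_{\mathfrak{F}}$, the same use of Proposition~\ref{AlgIsoQuotFree} and abstractness for one composite, and the same projectivity argument (Proposition~\ref{FreeProj}) to lift the isomorphism $\mathbf{T}_{\Sigma}(A)/\Phi\cong\mathbf{T}_{\Sigma}(B)/\Psi$ to a homomorphism $\mathbf{T}_{\Sigma}(A)\to\mathbf{T}_{\Sigma}(B)$ whose kernel is $\Phi$, so that clause (2) of Definition~\ref{DefFormCgr} applies. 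The only differences are notational (your $g^{-1}$ plays the role of the paper's isomorphism $f$), and you spell out the monotonicity checks that the paper declares straightforward.
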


\begin{proof}
Let us first prove that, for every $\mathcal{F}\in \mathrm{Form}_{\mathrm{Alg}}(\Sigma)$, $\mathcal{F} = \mathcal{F}_{\mathfrak{F}_{\mathcal{F}}}$. By definition, $\mathfrak{F}_{\mathcal{F}}$ is such that, for every  $A\in \boldsymbol{\mathcal{U}}^{S}$,  $\mathfrak{F}_{\mathcal{F}}(A)$ is
$$
 \mathfrak{F}_{\mathcal{F}}(A)= \{\Phi\in \mathrm{Cgr}(\mathbf{T}_{\Sigma}(A))\mid \mathbf{T}_{\Sigma}(A)/\Phi\in \mathcal{F}\}.
$$
On the other hand, by definition, we have that
$$
\mathcal{F}_{\mathfrak{F}_{\mathcal{F}}} = \biggl\{ \mathbf{C}\in \mathrm{Alg}(\Sigma)\biggm|
\begin{gathered}
\exists\, A\in \boldsymbol{\mathcal{U}}^{S}\,\, \exists\, \Phi\in \mathfrak{F}_{\mathcal{F}}(A)
\\[-3pt]
(\mathbf{C}\cong \mathbf{T}_{\Sigma}(A)/\Phi)
\end{gathered}
\biggr\}.
$$
Let us prove that $\mathcal{F}\subseteq \mathcal{F}_{\mathfrak{F}_{\mathcal{F}}}$. Let $\mathbf{C}$ be a $\Sigma$-algebra in $\mathcal{F}$. Then, since every $\Sigma$-algebra is isomorphic to a quotient of a free  $\Sigma$-algebra, there exists an $A\in \boldsymbol{\mathcal{U}}^{S}$ and a congruence $\Phi$ on $\mathbf{T}_{\Sigma}(A)$ such that $\mathbf{C}\cong \mathbf{T}_{\Sigma}(A)/\Phi$. But $\mathcal{F}$ is abstract, hence $\mathbf{T}_{\Sigma}(A)/\Phi\in \mathcal{F}$. Therefore $\Phi\in \mathfrak{F}_{\mathcal{F}}(A)$ and, consequently, $\mathbf{C}\in \mathcal{F}_{\mathfrak{F}_{\mathcal{F}}}$. The proof of the converse inclusion is straightforward and the details are left to the reader. Thus we have that $\mathcal{F} = \mathcal{F}_{\mathfrak{F}_{\mathcal{F}}}$.

We next prove that, for every $\mathfrak{F}\in \mathrm{Form}_{\mathrm{Cgr}}(\Sigma)$, $\mathfrak{F} = \mathfrak{F}_{\mathcal{F}_{\mathfrak{F}}}$. By definition $\mathcal{F}_{\mathfrak{F}}$ is
$$
\mathcal{F}_{\mathfrak{F}} = \biggl\{ \mathbf{C}\in \mathrm{Alg}(\Sigma)\biggm|
\begin{gathered}
\exists\, A\in \boldsymbol{\mathcal{U}}^{S}\,\, \exists\, \Phi\in \mathfrak{F}(A)
\\[-3pt]
(\mathbf{C}\cong \mathbf{T}_{\Sigma}(A)/\Phi)
\end{gathered}
\biggr\}.
$$
On the other hand, by definition, we have that, for every  $A\in \boldsymbol{\mathcal{U}}^{S}$,  $\mathfrak{F}_{\mathcal{F}_{\mathfrak{F}}}(A)$ is
$$
 \mathfrak{F}_{\mathcal{F}_{\mathfrak{F}}}(A) = \{\Phi\in \mathrm{Cgr}(\mathbf{T}_{\Sigma}(A))\mid \mathbf{T}_{\Sigma}(A)/\Phi\in \mathcal{F}_{\mathfrak{F}}\}.
$$

Let us prove that $\mathfrak{F} \leq \mathfrak{F}_{\mathcal{F}_{\mathfrak{F}}}$. Let $A$ be an element of  $\boldsymbol{\mathcal{U}}^{S}$ and let $\Phi$ be a congruence in $\mathfrak{F}(A)$. Then, by definition of $\mathcal{F}_{\mathfrak{F}}$,  $\mathbf{T}_{\Sigma}(A)/\Phi\in \mathcal{F}_{\mathfrak{F}}$. Hence $\Phi\in \mathfrak{F}_{\mathcal{F}_{\mathfrak{F}}}(A)$. Now let us prove that $\mathfrak{F}_{\mathcal{F}_{\mathfrak{F}}}\leq \mathfrak{F}$. Let $A$ be an element of  $\boldsymbol{\mathcal{U}}^{S}$ and let $\Phi$ be a congruence in $\mathfrak{F}_{\mathcal{F}_{\mathfrak{F}}}(A)$. Then, by definition of $\mathfrak{F}_{\mathcal{F}_{\mathfrak{F}}}(A)$, $\mathbf{T}_{\Sigma}(A)/\Phi\in \mathcal{F}_{\mathfrak{F}}$. Hence, by definition of $\mathcal{F}_{\mathfrak{F}}$, there exists a $B\in \boldsymbol{\mathcal{U}}^{S}$ and a $\Psi\in \mathfrak{F}(A)$ such that $\mathbf{T}_{\Sigma}(A)/\Phi\cong \mathbf{T}_{\Sigma}(B)/\Psi$. Let $f$ be a fixed isomorphism from $\mathbf{T}_{\Sigma}(A)/\Phi$ to $\mathbf{T}_{\Sigma}(B)/\Psi$ and let us consider the following diagram:
$$
\xymatrix@C=60pt{
\mathbf{T}_{\Sigma}(A)\ar@{+>}[r]^-{\mathrm{pr}^{\Phi}} 
\ar@{+>}[rd]^*{\,\, f\circ \mathrm{pr}^{\Phi}}&
\mathbf{T}_{\Sigma}(A)/\Phi\ar @{+>}[d]^-{f}\\
\mathbf{T}_{\Sigma}(B)\ar @{+>}[r]_-{\mathrm{pr}^{\Psi}}&
\mathbf{T}_{\Sigma}(B)/\Psi
}
$$
Then, since every free $\Sigma$-algebra is projective, there exists a homomorphism $g$ from $\mathbf{T}_{\Sigma}(A)$ to $\mathbf{T}_{\Sigma}(B)$ such that $\mathrm{pr}^{\Psi}\circ g = f\circ \mathrm{pr}^{\Phi}$. Since $\mathfrak{F}$ is a $\Sigma$-congruence formation, $\mathrm{Ker}(\mathrm{pr}^{\Psi}\circ g)\in \mathfrak{F}(A)$. But $\mathrm{Ker}(\mathrm{pr}^{\Psi}\circ g) = \mathrm{Ker}(f\circ \mathrm{pr}^{\Phi})$ and $\mathrm{Ker}(f\circ \mathrm{pr}^{\Phi}) = \Phi$, consequently $\Phi\in \mathfrak{F}(A)$. Thus we have that $\mathfrak{F} = \mathfrak{F}_{\mathcal{F}_{\mathfrak{F}}}$.

Since in the category $\mathbf{Poset}$, of partially ordered sets, an isomorphism preserves all existing infima and suprema and, in addition, in the category $\mathbf{CLat}$, of complete lattices, isomorphisms coincide with order isomorphisms, to prove that the complete lattices  $\mathbf{Form}_{\mathrm{Alg}}(\Sigma)$ and $\mathbf{Form}_{\mathrm{Cgr}}(\Sigma)$ are isomorphic it suffices to verify that the bijection $\theta_{\Sigma}$ from $\mathrm{Form}_{\mathrm{Alg}}(\Sigma)$ to $\mathrm{Form}_{\mathrm{Cgr}}(\Sigma)$ which sends $\mathcal{F}$ to $\theta_{\Sigma}(\mathcal{F}) = \mathfrak{F}_{\mathcal{F}}$---with inverse the mapping $\theta_{\Sigma}^{-1}$ from $\mathrm{Form}_{\mathrm{Cgr}}(\Sigma)$ to $\mathrm{Form}_{\mathrm{Alg}}(\Sigma)$ which sends $\mathfrak{F}$ to $\theta_{\Sigma}^{-1}(\mathfrak{F}) = \mathcal{F}_{\mathfrak{F}}$---is such that both $\theta_{\Sigma}$ and $\theta_{\Sigma}^{-1}$ are order-preserving.
But this is straightforward.
\end{proof}

Taking into account that $\mathbf{ALat}$, the category of algebraic lattices, is the full subcategory of $\mathbf{CLat}$ determined by the algebraic lattices and that $\mathbf{ALat}$ is isomorphism-closed, we obtain immediately the following corollary.

\begin{corollary}
$\mathbf{Form}_{\mathrm{Cgr}}(\Sigma)$ is an algebraic lattice.
\end{corollary}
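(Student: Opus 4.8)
The plan is to deduce the statement directly from two facts already established in this section: the corollary following Proposition~\ref{FormAlgAlgLat}, which asserts that $\mathbf{Form}_{\mathrm{Alg}}(\Sigma)$ is an algebraic lattice, and Proposition~\ref{FormAlgFormCgrIso}, which exhibits the order isomorphism $\theta_{\Sigma}$ between the complete lattices $\mathbf{Form}_{\mathrm{Alg}}(\Sigma)$ and $\mathbf{Form}_{\mathrm{Cgr}}(\Sigma)$. Since both lattices are complete, $\theta_{\Sigma}$ and $\theta_{\Sigma}^{-1}$ preserve all existing infima and suprema, so the whole content of the proof reduces to the observation that the property of \emph{being} an algebraic lattice transports across such an isomorphism.

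First I would recall what must be checked, namely that a complete lattice is algebraic exactly when its compact elements are join-dense in it. Accordingly I would verify two points. For the first, that $\theta_{\Sigma}$ carries compact elements to compact elements: compactness of an element $x$ is expressed purely in terms of up-directed joins, since $x$ is compact precisely when $x\leq \bigvee D$ for an up-directed family $D$ forces $x\leq d$ for some $d\in D$; because $\theta_{\Sigma}$ is an order isomorphism of complete lattices it preserves up-directed families together with their suprema and reflects the order, whence it restricts to a bijection between the compact elements of $\mathbf{Form}_{\mathrm{Alg}}(\Sigma)$ and those of $\mathbf{Form}_{\mathrm{Cgr}}(\Sigma)$. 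For the second, that join-density is preserved: if every element of $\mathbf{Form}_{\mathrm{Alg}}(\Sigma)$ is the join of the compact elements below it, then applying $\theta_{\Sigma}$---which preserves joins and, by the first point, sends exactly those compacts to the compacts below the image---yields the same representation in $\mathbf{Form}_{\mathrm{Cgr}}(\Sigma)$.

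This is precisely the content of the preceding remark that $\mathbf{ALat}$ is an isomorphism-closed full subcategory of $\mathbf{CLat}$, so I would phrase the conclusion categorically: algebraicity is invariant under isomorphism in $\mathbf{CLat}$, the lattice $\mathbf{Form}_{\mathrm{Alg}}(\Sigma)$ is algebraic, and $\theta_{\Sigma}$ is such an isomorphism by Proposition~\ref{FormAlgFormCgrIso}; therefore $\mathbf{Form}_{\mathrm{Cgr}}(\Sigma)$ is an algebraic lattice.

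I expect no genuine obstacle here. The entire force of the result already sits in Proposition~\ref{FormAlgFormCgrIso} and in the algebraicity established for $\mathbf{Form}_{\mathrm{Alg}}(\Sigma)$; the only remaining task is the routine remark that compactness and join-density are order-theoretic invariants, which is immediate once one notes that an order isomorphism of complete lattices commutes with up-directed suprema.
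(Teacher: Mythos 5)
Your proposal is correct and follows exactly the paper's own route: the paper deduces the corollary from Proposition~\ref{FormAlgFormCgrIso} together with the algebraicity of $\mathbf{Form}_{\mathrm{Alg}}(\Sigma)$, citing precisely the fact that $\mathbf{ALat}$ is an isomorphism-closed full subcategory of $\mathbf{CLat}$. Your unpacking of why compactness and join-density transport across an order isomorphism of complete lattices is just a more explicit rendering of that same observation.
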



\begin{remark}
Let $J$ be a nonempty set in $\boldsymbol{\mathcal{U}}$ and $(\mathfrak{F}_{j})_{j\in J}$ an upward directed family in $\mathrm{Form}_{\mathrm{Cgr}}(\Sigma)$. Then the function $\mathfrak{F}$ defined, for every $A\in \boldsymbol{\mathcal{U}}^{S}$, as $\mathfrak{F}(A) = \bigcup_{j\in J}\mathfrak{F}_{j}(A)$ is the least upper bound of $(\mathfrak{F}_{j})_{i\in J}$ in $\mathbf{Form}_{\mathrm{Cgr}}(\Sigma)$. Moreover, since $\mathbf{Form}_{\mathrm{Cgr}}(\Sigma)$ is an algebraic lattice, it is meet-continuous, i.e., for every $\mathfrak{F}$ in $\mathrm{Form}_{\mathrm{Cgr}}(\Sigma)$, every nonempty set $J$ in $\boldsymbol{\mathcal{U}}$, and every upward directed family $(\mathfrak{F}_{j})_{j\in J}$ in $\mathrm{Form}_{\mathrm{Cgr}}(\Sigma)$ we have that
$$
\textstyle
\mathfrak{F}\wedge\bigvee_{j\in J}\mathfrak{F}_{j} = \bigvee_{j\in J} (\mathfrak{F}\wedge \mathfrak{F}_{j}).
$$
\end{remark}

\section{Elementary translations and translations.} \hfill

In this section we define, for a $\Sigma$-algebra the concepts of elementary translation and of translation with respect to it, which, in turn, will allow us to define, in the following section, the concept of congruence cogenerated by an $S$-sorted subset of the underlying $S$-sorted set of a $\Sigma$-algebra. Moreover, we investigate the relationships between the translations and the homomorphisms between $\Sigma$-algebras. To the best of our knowledge, the elementary translations and the translations were defined, for the many-sorted case, by Matthiessen in~\cite{m76} on $\mathrm{p}.\, 10$, and in \cite{m78} on $\mathrm{p}.\, 198$,

\begin{definition}
Let $\mathbf{A}$ be a $\Sigma$-algebra and $t\in S$. Then we denote by $\mathrm{Etl}_{t}(\mathbf{A})$ the subset $(\mathrm{Etl}_{t}(\mathbf{A})_{s})_{s\in S}$ of $(\mathrm{Hom}(A_{t},A_{s}))_{s\in S}$ defined, for every $s\in S$, as follows: For every mapping $T\in \mathrm{Hom}(A_{t},A_{s})$, $T\in \mathrm{Etl}_{t}(\mathbf{A})_{s}$ if and only if there is a word $w\in \fmon{S}-\{\lambda\}$, an $i\in \bb{w}$, a $\sigma\in \Sigma_{w,s}$, a family $(a_{j})_{j\in i}\in\prod_{j\in i}A_{w_{j}}$, and a family $(a_{k})_{k\in \bb{w}-(i+1)} \in\prod_{k\in \bb{w}-(i+1)}A_{w_{k}}$ such that $w_{i} = t$ and, for every $x\in A_{t}$, $T(x) =
F_{\sigma}(a_{0},\ldots,a_{i-1},x,a_{i+1},\ldots,a_{\bb{w}-1})$. We call the elements of $\mathrm{Etl}_{t}(\mathbf{A})_{s}$ the $t$-\emph{elementary translations of sort} $s$ for $\mathbf{A}$.
\end{definition}

\begin{definition}
Let $\mathbf{A}$ be a $\Sigma$-algebra and $t\in S$. Then we denote by $\mathrm{Tl}_{t}(\mathbf{A})$ the subset
$(\mathrm{Tl}_{t}(\mathbf{A})_{s})_{s\in S}$ of $(\mathrm{Hom}(A_{t},A_{s}))_{s\in S}$ defined, for every $s\in S$, as follows: For every mapping $T\in \mathrm{Hom}(A_{t},A_{s})$, $T\in \mathrm{Tl}_{t}(\mathbf{A})_{s}$ if, and only if, there is an $n\in \mathbb{N}-1$, a word $(s_{j})_{j\in n+1}\in S^{n+1}$, and a family $(T_{j})_{j\in n}$ such that $s_{0} = t$, $s_{n} = s$, $T_{0}\in \mathrm{Etl}_{t}(\mathbf{A})_{s_{1}}$, $T_{1}\in \mathrm{Etl}_{s_{1}}(\mathbf{A})_{s_{2}}$, \ldots, $T_{n-1}\in \mathrm{Etl}_{s_{n-1}}(\mathbf{A})_{s}$ and $T = T_{n-1}\comp\cdots\comp T_{0}$. We call the elements of $\mathrm{Tl}_{t}(\mathbf{A})_{s}$ the $t$-\emph{translations of sort} $s$ for $\mathbf{A}$. Besides, for every $s\in S$, the mapping  $\mathrm{id}_{A_{s}}$ will be viewed as an element of $\mathrm{Tl}_{s}(\mathbf{A})_{s}$.
\end{definition}

\begin{remark}
The $S\times S$-sorted set $(\mathrm{Tl}_{t}(\mathbf{A})_{s})_{(t,s)\in S\times S}$ determines a category $\mathbf{Tl}(\mathbf{A})$ whose objects are the sorts $s\in S$ and in which, for every $(t,s)\in S\times S$, $\mathrm{Hom}_{\mathbf{Tl}(\mathbf{A})}(t,s)$, the hom-set from $t$ to $s$, is $\mathrm{Tl}_{t}(\mathbf{A})_{s}$.
\end{remark}

Given a $\Sigma$-algebra $\mathbf{A}$ and a translation $T\in \mathrm{Tl}_{t}(\mathbf{A})_{s}$ we next define the action of $T[\cdot]$ and $T^{-1}[\cdot]$ on a subset $L\subseteq A$, as well as the actions of $T[\cdot]$ on a subset $X\subseteq A_{t}$ and of $T^{-1}[\cdot]$ on a subset $Y\subseteq A_{s}$.

\begin{definition}\label{DAntiTrans}
Let $\mathbf{A}$ be a $\Sigma$-algebra, $L\subseteq A$, $s$, $t\in S$, $X\subseteq A_{t}$, $Y\subseteq A_{s}$, and $T\in \mathrm{Tl}_{t}(\mathbf{A})_{s}$. Then
\begin{enumerate}
\item $T[L]$ denotes the subset of $A$ defined as follows: $T[L]_{s} = T[L_{t}]$ and $T[L]_{u} = \varnothing$, if $u\neq s$. Therefore, $T[L] = \delta^{s,T[L_{t}]}$.
\item $T^{-1}[L]$ denotes the subset of $A$ defined as follows: $T^{-1}[L]_{t} = T^{-1}[L_{s}]$ and $T^{-1}[L]_{u} = \varnothing$, if $u\neq t$. Therefore, $T^{-1}[L] = \delta^{t,T^{-1}[L_{s}]}$.
\item $T[X]$ denotes $T[\delta^{t,X}]$.
\item $T^{-1}[Y]$ denotes $T^{-1}[\delta^{s,Y}]$.
\end{enumerate}
\end{definition}

We next provide, by using the notions of elementary translation and of translation, two characterizations of the congruences on a $\Sigma$-algebra which will be applied afterwards, in Section 5, to prove the existence of the congruence cogenerated  by an $S$-sorted subset of the underlying $S$-sorted set of a $\Sigma$-algebra. This shows, in particular, the significance of the notions of elementary translation and of translation.

\begin{proposition}\label{CharacCong}
Let $\mathbf{A}$ be a $\Sigma$-algebra and $\Phi$ an $S$-sorted equivalence on $A$. Then the following conditions are equivalent:
\begin{enumerate}
\item $\Phi$ is a congruence on $\mathbf{A}$.
\item $\Phi$ is a closed under the elementary translations on
$\mathbf{A}$, i.e., for every every $t$, $s\in S$, every $x$, $y\in A_{t}$, and every $T\in \mathrm{Etl}_{t}(\mathbf{A})_{s}$, if $(x,y)\in \Phi_{t}$, then $(T(x),T(y))\in \Phi_{s}$.
\item $\Phi$ is a closed under the translations on $\mathbf{A}$, i.e., for every every $t$, $s\in S$, every $x$, $y\in A_{t}$, and every $T\in \mathrm{Tl}_{t}(\mathbf{A})_{s}$, if $(x,y)\in \Phi_{t}$, then $(T(x),T(y))\in \Phi_{s}$.
\end{enumerate}
\end{proposition}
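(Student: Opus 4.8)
The plan is to establish the equivalence by proving the cycle $(1) \Rightarrow (2) \Rightarrow (3) \Rightarrow (1)$, exploiting that an elementary translation is precisely the length-one case of a translation and that each component $\Phi_s$ of an $S$-sorted equivalence is in particular a transitive relation. For $(1) \Rightarrow (2)$, I would fix $t$, $s\in S$, a pair $(x,y)\in \Phi_t$, and an elementary translation $T\in \mathrm{Etl}_t(\mathbf{A})_s$. Unfolding the definition produces a word $w$, an index $i\in \bb{w}$ with $w_i = t$, an operation $\sigma\colon w\mor s$, and families $(a_j)_{j\in i}$, $(a_k)_{k\in \bb{w}-(i+1)}$ with $T(z) = F_{\sigma}(a_0,\ldots,a_{i-1},z,a_{i+1},\ldots,a_{\bb{w}-1})$. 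I would then form the two tuples of $A_w$ that carry $x$, respectively $y$, in coordinate $i$ and the common entries elsewhere; these are coordinatewise $\Phi$-related (reflexivity covers the shared coordinates and $(x,y)\in \Phi_t=\Phi_{w_i}$ covers coordinate $i$), so the defining implication of a congruence delivers $(T(x),T(y))\in \Phi_s$.

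For $(2) \Rightarrow (3)$, I would argue by induction on the length $n\geq 1$ of the factorization $T = T_{n-1}\comp \cdots \comp T_0$ of a translation, treating the identity translations $\mathrm{id}_{A_s}\in \mathrm{Tl}_s(\mathbf{A})_s$ trivially since they preserve the given pair verbatim. The base case $n=1$ is exactly hypothesis $(2)$. For the inductive step I would write $T = T_{n-1}\comp T'$, where $T'$ is a translation of length $n-1$ into the intermediate sort $s_{n-1}$; the inductive hypothesis carries $(x,y)\in \Phi_t$ to $(T'(x),T'(y))\in \Phi_{s_{n-1}}$, and one further application of $(2)$ to the elementary translation $T_{n-1}\in \mathrm{Etl}_{s_{n-1}}(\mathbf{A})_s$ completes the step.

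For $(3) \Rightarrow (1)$, I would first observe that, since every elementary translation is a translation of length one, $(3)$ already yields closure under elementary translations. Fixing $\sigma\colon w\mor s$ with $w\neq \lambda$ and coordinatewise $\Phi$-related tuples $a$, $b\in A_w$, I would interpolate between $F_{\sigma}(a)$ and $F_{\sigma}(b)$ by altering one coordinate at a time: for each $k\in \bb{w}$ the assignment $z\mapsto F_{\sigma}(b_0,\ldots,b_{k-1},z,a_{k+1},\ldots,a_{\bb{w}-1})$ is an elementary translation in $\mathrm{Etl}_{w_k}(\mathbf{A})_s$, and applying it to the $\Phi_{w_k}$-related pair $(a_k,b_k)$ produces a $\Phi_s$-related pair of successive interpolants. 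Chaining these $\bb{w}$ relations by the transitivity of $\Phi_s$ then yields $(F_{\sigma}(a),F_{\sigma}(b))\in \Phi_s$, which is the congruence condition.

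The bulk of the work is definitional unfolding; the one place demanding care is the interpolation in $(3)\Rightarrow(1)$. There the bookkeeping of which coordinates already carry $b$-entries and which still carry $a$-entries must be aligned so that each single-coordinate change is genuinely an elementary translation of the required sort, and the transitivity of $\Phi_s$ must be invoked explicitly to splice the $\bb{w}$ one-step relations into the desired pair. I expect this telescoping to be the main, though still elementary, obstacle.
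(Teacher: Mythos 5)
Your proposal is correct and takes essentially the same approach as the paper: the reflexivity argument for $(1)\Rightarrow(2)$, iteration of $(2)$ along the factorization into elementary translations for $(2)\Rightarrow(3)$, and the one-coordinate-at-a-time interpolation spliced together by transitivity of $\Phi_{s}$ for recovering the congruence property. The only difference is organizational---you arrange the implications as a cycle $(1)\Rightarrow(2)\Rightarrow(3)\Rightarrow(1)$, while the paper proves $(1)\Leftrightarrow(2)$ and $(2)\Leftrightarrow(3)$ separately; your $(3)\Rightarrow(1)$ step is exactly the paper's $(3)\Rightarrow(2)$ observation followed by its $(2)\Rightarrow(1)$ telescoping argument.
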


\begin{proof}
Let us first prove that (1) and (2) are equivalent.

Let us suppose that $\Phi$ is a congruence on $\mathbf{A}$. We want to show that $\Phi$ is closed under the elementary translations on $\mathbf{A}$. Let $t$ and $s$ be elements of $S$ and $T$ a $t$-elementary translation of sort $s$ for $\mathbf{A}$. Then $T\colon A_{t}\mor A_{s}$ and there is a word $w\in \fmon{S}-\{\lambda\}$, an $i\in \bb{w}$, a $\sigma\in
\Sigma_{w,s}$, a family $(a_{j})_{j\in i}\in\prod_{j\in i}A_{w_{j}}$, and a family $(a_{k})_{k\in \bb{w}-(i+1)}
\in\prod_{k\in \bb{w}-(i+1)}A_{w_{k}}$ such that $w_{i} = t$ and, for every $z\in A_{t}$, $T(z) =
F_{\sigma}(a_{0},\ldots,a_{i-1},z,a_{i+1},\ldots,a_{\bb{w}-1})$. Let $x$ and $y$ be elements of $A_{t}$ such that $(x,y)\in \Phi_{t}$. Since, for every $j\in i$, $(a_{j},a_{j})\in \Phi_{w_{j}}$, for every $k\in \bb{w}-(i+1)$, $(a_{k},a_{k})\in \Phi_{w_{k}}$, and, in addition, $(x,y)\in \Phi_{t} = \Phi_{w_{i}}$, then $(T(x),T(y))\in \Phi_{s}$.

Reciprocally, let us suppose that, for every $t$, $s\in S$, every $x$, $y\in A_{t}$, and every $T\in \mathrm{Etl}_{t}(\mathbf{A})_{s}$, if $(x,y)\in \Phi_{t}$, then $(T(x),T(y))\in \Phi_{s}$. We want to show that  $\Phi$ is a congruence on $\mathbf{A}$. Let $(w,u)\in (S^{\star}-\{\lambda\})\times S$, $\sigma\colon w\mor u$,
and $a = (a_{i})_{i\in\bb{w}}$, $b = (b_{i})_{i\in\bb{w}}\in A_{w}$ such that, for every $i\in \bb{w}$ we have that $(a_{i}, b_{i})\in \Phi_{w_{i}}$. We now define, for every $i\in\bb{w}$, $T_{i}$, the $w_{i}$-elementary translations of sort $u$ for $\mathbf{A}$, as the mapping from $A_{w_{i}}$ to $A_{u}$ which sends $x\in A_{w_{i}}$ to $F_{\sigma}(b_{0},\ldots,b_{i-1},x,a_{i+1},\ldots,a_{\bb{w}-1})$ in $A_{u}$. Then $F_{\sigma}(a_{0},\ldots,a_{\bb{w}-1}) = T_{0}(a_{0})$ and $(T_{0}(a_{0}),T_{0}(b_{0}))\in \Phi_{w_{0}}$. But $T_{0}(b_{0}) = T_{1}(a_{1})$ and $(T_{1}(a_{1}),T_{1}(b_{1}))\in \Phi_{w_{1}}$. By proceeding in the same way we, finally, come to $T_{\bb{w}-2}(b_{\bb{w}-2}) = T_{\bb{w}-1}(a_{\bb{w}-1})$, $(T_{\bb{w}-1}(a_{\bb{w}-1}), T_{\bb{w}-1}(b_{\bb{w}-1}))\in \Phi_{w_{\bb{w}-1}}$, and $T_{\bb{w}-1}(b_{\bb{w}-1}) = F_{\sigma}(b_{0},\ldots,b_{\bb{w}-1})$. Therefore $(F_{\sigma}(a), F_{\sigma}(b))\in \Phi_{u}$.

We shall now proceed to verify that (2) and (3) are equivalent.

Since every elementary translations on $\mathbf{A}$ is a translation on $\mathbf{A}$, it is obvious that if $\Phi$ is closed under the translations on $\mathbf{A}$, then $\Phi$ is closed under the elementary translations on $\mathbf{A}$.

Reciprocally, let us suppose that $\Phi$ is closed under the elementary translations on $\mathbf{A}$. We want to show that $\Phi$ is closed under the translations on $\mathbf{A}$. Let $t$ and $s$ be elements of $S$, $x$, $y$ elements of $A_{t}$, $T\in \mathrm{Tl}_{t}(\mathbf{A})_{s}$, and let us suppose that $(x,y)\in \Phi_{t}$. Then there is an $n\in \mathbb{N}-1$, a word $(s_{j})_{j\in n+1}\in S^{n+1}$, and a family $(T_{j})_{j\in n}$ such that $s_{0} = t$, $s_{n} = s$, $T_{0}\in \mathrm{Etl}_{t}(\mathbf{A})_{s_{1}}$, $T_{1}\in \mathrm{Etl}_{s_{1}}(\mathbf{A})_{s_{2}}$, \ldots, $T_{n-1}\in \mathrm{Etl}_{s_{n-1}}(\mathbf{A})_{s}$ and $T = T_{n-1}\comp\cdots\comp T_{0}$. Then, from $(x,y)\in \Phi_{t} = \Phi_{s_{0}}$, we infer that $(T_{0}(x),T_{0}(y))\in \Phi_{s_{1}}$. By proceeding in the same way we, finally, come to $(T_{n-1}(\ldots(T_{0}(x))\ldots),T_{n-1}(\ldots(T_{0}(y))\ldots))\in \Phi_{s} = \Phi_{s_{n}}$, i.e., to $(T(x),T(y))\in \Phi_{s}$.
\end{proof}

We next investigate the relationships between the translations and the homomorphisms between $\Sigma$-algebras.

\begin{proposition}\label{TlandHom}
Let $f\colon \mathbf{A}\mor \mathbf{B}$ be a homomorphism. Then, for every $t$, $s\in S$ and every $T\in \mathrm{Tl}_{t}(\mathbf{A})_{s}$, there exists a $T^{f}\in\mathrm{Tl}_{t}(\mathbf{B})_{s}$ such that $f_{s}\circ T = T^{f}\circ f_{t}$. Moreover, if $f$ is an epimorphism, then, for every $t$, $s\in S$ and every $U\in\mathrm{Tl}_{t}(\mathbf{B})_{s}$, there exists a $T\in \mathrm{Tl}_{t}(\mathbf{A})_{s}$ such that $T^{f} = U$.
\end{proposition}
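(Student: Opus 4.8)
The plan is to prove the statement first for elementary translations, where everything reduces to the defining identity of a $\Sigma$-homomorphism, and then to propagate it along composites by a straightforward telescoping induction. First I would treat an elementary translation $T\in\mathrm{Etl}_{t}(\mathbf{A})_{s}$, say determined by a word $w\in\fmon{S}-\{\lambda\}$, an index $i\in\bb{w}$ with $w_{i}=t$, an operation $\sigma\in\Sigma_{w,s}$, and parameters $(a_{j})_{j\in i}$ and $(a_{k})_{k\in\bb{w}-(i+1)}$, so that $T(x)=F_{\sigma}(a_{0},\dots,a_{i-1},x,a_{i+1},\dots,a_{\bb{w}-1})$. I would define $T^{f}\in\mathrm{Etl}_{t}(\mathbf{B})_{s}$ to be the elementary translation of $\mathbf{B}$ built from the \emph{same} data $w$, $i$, $\sigma$ but with each parameter replaced by its image under $f$, i.e.\ $T^{f}(y)=G_{\sigma}(f_{w_{0}}(a_{0}),\dots,f_{w_{i-1}}(a_{i-1}),y,f_{w_{i+1}}(a_{i+1}),\dots)$ for $y\in B_{t}$. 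The required identity is then immediate from the homomorphism equation applied to $\sigma$: for $x\in A_{t}$,
$$f_{s}(T(x))=f_{s}(F_{\sigma}(\dots,x,\dots))=G_{\sigma}(\dots,f_{w_{i-1}}(a_{i-1}),f_{t}(x),f_{w_{i+1}}(a_{i+1}),\dots)=T^{f}(f_{t}(x)).$$

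Next I would pass to an arbitrary translation $T\in\mathrm{Tl}_{t}(\mathbf{A})_{s}$. By definition $T$ comes with a decomposition $T=T_{n-1}\comp\cdots\comp T_{0}$ through a sequence of sorts $t=s_{0},s_{1},\dots,s_{n}=s$ with $T_{j}\in\mathrm{Etl}_{s_{j}}(\mathbf{A})_{s_{j+1}}$, the identity case $\mathrm{id}_{A_{s}}\in\mathrm{Tl}_{s}(\mathbf{A})_{s}$ being handled by setting its image to be $\mathrm{id}_{B_{s}}$. Fixing such a decomposition, I would put $T^{f}=T_{n-1}^{f}\comp\cdots\comp T_{0}^{f}$, which lies in $\mathrm{Tl}_{t}(\mathbf{B})_{s}$ since each $T_{j}^{f}$ is an elementary translation of $\mathbf{B}$ of the matching sorts. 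Applying the elementary case $f_{s_{j+1}}\circ T_{j}=T_{j}^{f}\circ f_{s_{j}}$ from the outside inward gives the telescoping chain $f_{s}\circ T=T_{n-1}^{f}\circ f_{s_{n-1}}\circ T_{n-2}\circ\cdots\circ T_{0}=\cdots=T^{f}\circ f_{t}$, which proves the first assertion. Since only the existence of some $T^{f}$ is claimed, there is no need to verify independence of the chosen decomposition.

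For the second assertion I would use that an epimorphism $f$ of $\Sigma$-algebras is surjective in each sort (this is the convention under which the first isomorphism theorem is applied throughout Section~3, e.g.\ in Proposition~\ref{FreeProj}), so that every $f_{s}$ is onto. Given $U\in\mathrm{Tl}_{t}(\mathbf{B})_{s}$ with a decomposition $U=U_{n-1}\comp\cdots\comp U_{0}$ through sorts $t=s_{0},\dots,s_{n}=s$, I would lift each elementary piece separately: if $U_{j}\in\mathrm{Etl}_{s_{j}}(\mathbf{B})_{s_{j+1}}$ is determined by data $w,i,\sigma$ and parameters $(b_{l})_{l}$ in the various $B_{w_{l}}$, then each $B_{w_{l}}$ is nonempty and $f_{w_{l}}$ is surjective, so I may choose preimages $a_{l}\in A_{w_{l}}$ with $f_{w_{l}}(a_{l})=b_{l}$, and let $T_{j}\in\mathrm{Etl}_{s_{j}}(\mathbf{A})_{s_{j+1}}$ be the elementary translation built from $w,i,\sigma$ and the $a_{l}$. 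By the construction of the first part, $T_{j}^{f}=U_{j}$. Setting $T=T_{n-1}\comp\cdots\comp T_{0}\in\mathrm{Tl}_{t}(\mathbf{A})_{s}$ and computing $T^{f}$ from this very decomposition yields $T^{f}=T_{n-1}^{f}\comp\cdots\comp T_{0}^{f}=U_{n-1}\comp\cdots\comp U_{0}=U$.

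The computations are routine; the two points I expect to require genuine care are the bookkeeping of sorts and indices in the lifting step (ensuring in particular that the lifted data again define a \emph{legitimate} elementary translation over $\mathbf{A}$, so that $w_{i}=t$ is preserved and each $A_{w_{l}}$ is nonempty), and the preliminary fact that an epimorphism is surjective on each component, which is exactly what guarantees the existence of the preimages $a_{l}$.
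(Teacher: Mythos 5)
Your proposal is correct and follows essentially the same route as the paper's proof: treat elementary translations by transporting the defining data $(w,i,\sigma)$ and replacing each parameter by its $f$-image (the homomorphism identity gives $f_{s}\circ T = T^{f}\circ f_{t}$), extend to general translations by composing along a chosen decomposition, and for the epimorphism case lift the parameters of each elementary factor via sortwise surjectivity and compose the lifts. The only differences are cosmetic: you make the telescoping computation and the nonemptiness/sort bookkeeping explicit, which the paper leaves implicit.
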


\begin{proof}
If $T$ is a $t$-elementary translation of sort $s$ for $\mathbf{A}$, for some $t$ and $s$ in $S$, then there is a word $w\in \fmon{S}-\{\lambda\}$, an $i\in \bb{w}$, a $\sigma\in \Sigma_{w,s}$, a family $(a_{j})_{j\in i}\in\prod_{j\in i}A_{w_{j}}$, and a family $(a_{k})_{k\in \bb{w}-(i+1)}\in\prod_{k\in \bb{w}-(i+1)}A_{w_{k}}$ such that $w_{i} = t$ and, for every $x\in A_{t}$, $T(x) = F_{\sigma}(a_{0},\ldots,a_{i-1},x,a_{i+1},\ldots,a_{\bb{w}-1})$. Then it suffices to take as $T^{f}$ precisely the mapping from $B_{t}$ to $B_{s}$ defined, for every $y\in B_{t}$, as follows:
$$
T^{f}(y) = F_{\sigma}(f_{w_{0}}(a_{0}),\ldots,f_{w_{i-1}}(a_{i-1}),y,f_{w_{i+1}}(a_{i+1}),\ldots,f_{w_{\bb{w}-1}}(a_{\bb{w}-1})).
$$
If $T\in\mathrm{Tl}_{t}(\mathbf{A})_{s}$, for some $t$ and $s$ in $S$, and $T$ is not a $t$-elementary translation of sort $s$ for $\mathbf{A}$, then there is an $n\in \mathbb{N}-1$, a word $(s_{j})_{j\in n+1}\in S^{n+1}$, and a family $(T_{j})_{j\in n}$ such that $s_{0} = t$, $s_{n} = s$, $T_{0}\in \mathrm{Etl}_{t}(\mathbf{A})_{s_{1}}$, $T_{1}\in \mathrm{Etl}_{s_{1}}(\mathbf{A})_{s_{2}}$, \ldots, $T_{n-1}\in \mathrm{Etl}_{s_{n-1}}(\mathbf{A})_{s}$ and $T = T_{n-1}\comp\cdots\comp T_{0}$. Then it suffices to take as $T^{f}$ precisely the mapping from $B_{t}$ to $B_{s}$ defined as
$T^{f} = T_{n-1}^{f}\comp\cdots\comp T_{0}^{f}$. Let us notice that if $T$ is $\mathrm{id}_{{A}_{t}}$, for some $t\in S$, then it suffices to take as $T^{f}$ precisely $\mathrm{id}_{{B}_{t}}$.

We next prove that if $f$ is an epimorphism, then, for every $t$, $s\in S$ and every $U\in\mathrm{Tl}_{t}(\mathbf{B})_{s}$, there exists a $T\in \mathrm{Tl}_{t}(\mathbf{A})_{s}$ such that $T^{f} = U$. If $U$ is a $t$-elementary translation of sort $s$ for $\mathbf{B}$, for some $t$ and $s$ in $S$, then and there is a word $w\in \fmon{S}-\{\lambda\}$, an $i\in \bb{w}$, a $\sigma\in\Sigma_{w,s}$, a family $(b_{j})_{j\in i}\in\prod_{j\in i}B_{w_{j}}$, and a family $(b_{k})_{k\in \bb{w}-(i+1)}\in\prod_{k\in \bb{w}-(i+1)}B_{w_{k}}$ such that $w_{i} = t$ and, for every $y\in B_{t}$, $U(y) = F_{\sigma}(b_{0},\ldots,b_{i-1},y,b_{i+1},\ldots,b_{\bb{w}-1})$. Then, since $f$ is an epimorphism, there exists a family $(a_{j})_{j\in i}\in\prod_{j\in i}A_{w_{j}}$ and a family $(a_{k})_{k\in \bb{w}-(i+1)}\in\prod_{k\in \bb{w}-(i+1)}A_{w_{k}}$ such that, for every $j\in i$, $f_{w_{j}}(a_{j}) = b_{j}$, and, for every $k\in \bb{w}-(i+1)$, $f_{w_{k}}(a_{k}) = b_{k}$. Then, after fixing $(a_{j})_{j\in i}$ and $(a_{k})_{k\in \bb{w}-(i+1)}$, it suffices to take as $T$ precisely the mapping from $A_{t}$ to $A_{s}$ defined, for every $x\in A_{t}$, as follows:
$$
T(x) = F_{\sigma}(a_{0},\ldots,a_{i-1},x,a_{i+1},\ldots,a_{\bb{w}-1}).
$$

If $U\in\mathrm{Tl}_{t}(\mathbf{B})_{s}$, for some $t$ and $s$ in $S$, and $U$ is not a $t$-elementary translation of sort $s$ for $\mathbf{B}$, then there is an $n\in \mathbb{N}-1$, a word $(s_{j})_{j\in n+1}\in S^{n+1}$, and a family $(U_{j})_{j\in n}$ such that $s_{0} = t$, $s_{n} = s$, $U_{0}\in \mathrm{Etl}_{t}(\mathbf{A})_{s_{1}}$, $U_{1}\in \mathrm{Etl}_{s_{1}}(\mathbf{A})_{s_{2}}$, \ldots, $U_{n-1}\in \mathrm{Etl}_{s_{n-1}}(\mathbf{A})_{s}$ and $U = U_{n-1}\comp\cdots\comp U_{0}$. Then, after choosing, for every $i\in n$, a $T_{i}$ such that $T_{i}^{f} = U_{i}$, it suffices to take as $T$ precisely the mapping from $A_{t}$ to $A_{s}$ defined as $T = T_{n-1}\comp\cdots\comp T_{0}$.
\end{proof}

\section{Congruence cogenerated  by an $S$-sorted subset of the underlying $S$-sorted set of a $\Sigma$-algebra.}\hfill

In this section, for a $\Sigma$-algebra $\mathbf{A}$, we define a mapping $\Omega^{\mathbf{A}}$ from $\mathrm{Sub}(A)$ to $\mathrm{Cgr}(\mathbf{A})$ which assigns to a subset $L$ of $A$ the so-called congruence cogenerated by $L$, and investigate its properties. In particular, we provide a description of the equivalence classes of $A/\Omega^{\mathbf{A}}(L)$, which will be used in the final section of this article.

Let $\mathbf{A}$ be a $\Sigma$-algebra and $L\subseteq A$. Then $L$ has associated, among others, the $S$-sorted equivalence
$\mathrm{Ker}(\mathrm{ch}^{L})$ on $A$, determined by the character, $\mathrm{ch}^{L}$, of the $S$-sorted subset $L$ of the  underlying $S$-sorted set $A$ of the $\Sigma$-algebra $\mathbf{A}$. Recall that $\mathrm{ch}^{L}$ is the $S$-sorted mapping from  $A$ to $(2)_{s\in S}$ whose $s$th coordinate, for $s\in S$, is
$\mathrm{ch}^{L}_{s}$, the characteristic mapping of $L_{s}$. So, for every $s\in S$, we have that:
$$
  \mathrm{Ker}(\mathrm{ch}^{L})_{s} = \{\,(x,y)\in A^{2}_{s}\mid x\in
  L_{s}\longleftrightarrow y\in L_{s}\,\}.
$$

In what follows we will prove that there exists an $S$-congruence $\Omega^{\mathbf{A}}(L)$ on $\mathbf{A}$, the $S$-congruence cogenerated by the $S$-sorted equivalence $\mathrm{Ker}(\mathrm{ch}^{L})$, which saturates $L$, i.e., which is such that  $\Omega^{\mathbf{A}}(L)\subseteq \mathrm{Ker}(\mathrm{ch}^{L})$, and that it is, in addition, the largest $S$-congruence on $\mathbf{A}$ which has such a property.

In the theory of formal languages a congruence of the type $\Omega^{\mathbf{A}}(L)$ is called the syntactic congruence determined by $L$, and they were defined by Sch\"{u}tzenberger (in~\cite{Sch55} on p.~10) for monoids (he speaks of: ``demi-groupes contenant un \'{e}l\'{e}ment neutre''). Let us add that in~\cite{sl74} on pp.~32--33, S{\l}omi\'{n}ski proved, among other results, that, for a single-sorted algebra $\mathbf{A}$ and for an equivalence relation $\Phi$ on $A$, there exists the greatest congruence on $\mathbf{A}$ contained in $\Phi$ (this is also valid for the many-sorted case).

\begin{definition}
Let $\mathbf{A}$ be a $\Sigma$-algebra and $L\subseteq A$. Then we denote by $\Omega^{\mathbf{A}}(L)$ the binary relation on $A$ defined, for every $t\in S$, as follows:
$$
\Omega^{\mathbf{A}}(L)_{t} = \biggl\{ (x,y)\in A^{2}_{t}\biggm|
\begin{gathered}
\forall\, s\in S\,\,\forall\, T\in \mathrm{Tl}_{t}(\mathbf{A})_{s}\,
\\[-3pt]
(T(x)\in L_{s}\leftrightarrow T(y)\in L_{s})
\end{gathered}
\biggr\}.
$$
\end{definition}

\begin{proposition}\label{CharacCogenCong}
Let $\mathbf{A}$ be a $\Sigma$-algebra and $L\subseteq A$. Then
\begin{enumerate}
\item $\Omega^{\mathbf{A}}(L)$ is a congruence on $\mathbf{A}$.

\item $\Omega^{\mathbf{A}}(L)\subseteq \mathrm{Ker}(\mathrm{ch}^{L})$.

\item For every congruence $\Phi$ on $\mathbf{A}$, if $\Phi\subseteq \mathrm{Ker}(\mathrm{ch}^{L})$, then $\Phi\subseteq \Omega^{\mathbf{A}}(L)$.
\end{enumerate}
In other words, $\Omega^{\mathbf{A}}(L)$ is the greatest congruence on $\mathbf{A}$ which saturates $L$.
\end{proposition}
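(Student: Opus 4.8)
The plan is to establish the three listed items in order, all of them resting on the translation characterization of congruences from Proposition~\ref{CharacCong} together with the fact that translations are closed under composition.

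First I would check that $\Omega^{\mathbf{A}}(L)$ is an $S$-sorted equivalence on $A$. For each $t\in S$, reflexivity, symmetry and transitivity of $\Omega^{\mathbf{A}}(L)_{t}$ follow at once from the corresponding properties of the biconditional $\leftrightarrow$ that defines the relation: the condition ``$T(x)\in L_{s}\leftrightarrow T(y)\in L_{s}$ for all $s\in S$ and all $T\in\mathrm{Tl}_{t}(\mathbf{A})_{s}$'' is manifestly reflexive, symmetric, and transitive in the pair $(x,y)$. This step is purely routine.

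For item (1) I would then invoke Proposition~\ref{CharacCong}: since $\Omega^{\mathbf{A}}(L)$ is already an $S$-sorted equivalence, it is a congruence if, and only if, it is closed under the elementary translations. So fix $t,u\in S$, a pair $(x,y)\in\Omega^{\mathbf{A}}(L)_{t}$, and an elementary translation $U\in\mathrm{Etl}_{t}(\mathbf{A})_{u}$; I must show $(U(x),U(y))\in\Omega^{\mathbf{A}}(L)_{u}$, i.e.\ that $T(U(x))\in L_{s}\leftrightarrow T(U(y))\in L_{s}$ for every $s\in S$ and every $T\in\mathrm{Tl}_{u}(\mathbf{A})_{s}$. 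The crucial observation---and the only non-formal point in the whole argument---is that $T\circ U$ is again a $t$-translation of sort $s$, that is, $T\circ U\in\mathrm{Tl}_{t}(\mathbf{A})_{s}$; this is exactly the statement that a composite of translations is a translation, which is the content of the category $\mathbf{Tl}(\mathbf{A})$ recorded in the remark following the definition of translation (concatenate the chain of elementary translations defining $U$ with the chain defining $T$). Granting this, the defining condition for $(x,y)\in\Omega^{\mathbf{A}}(L)_{t}$, applied to the translation $T\circ U\in\mathrm{Tl}_{t}(\mathbf{A})_{s}$, yields $(T\circ U)(x)\in L_{s}\leftrightarrow (T\circ U)(y)\in L_{s}$, which is precisely what was wanted.

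Items (2) and (3) are then immediate. For (2), given $(x,y)\in\Omega^{\mathbf{A}}(L)_{t}$, I would specialize the defining condition to $s=t$ and the identity translation $\mathrm{id}_{A_{t}}\in\mathrm{Tl}_{t}(\mathbf{A})_{t}$, obtaining $x\in L_{t}\leftrightarrow y\in L_{t}$, i.e.\ $(x,y)\in\mathrm{Ker}(\mathrm{ch}^{L})_{t}$. For (3), let $\Phi$ be a congruence with $\Phi\subseteq\mathrm{Ker}(\mathrm{ch}^{L})$ and take $(x,y)\in\Phi_{t}$; for any $s\in S$ and $T\in\mathrm{Tl}_{t}(\mathbf{A})_{s}$, closure of $\Phi$ under translations (the third clause of Proposition~\ref{CharacCong}) gives $(T(x),T(y))\in\Phi_{s}\subseteq\mathrm{Ker}(\mathrm{ch}^{L})_{s}$, hence $T(x)\in L_{s}\leftrightarrow T(y)\in L_{s}$; as $s$ and $T$ were arbitrary, $(x,y)\in\Omega^{\mathbf{A}}(L)_{t}$. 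The final assertion---that $\Omega^{\mathbf{A}}(L)$ is the greatest congruence saturating $L$---is just the conjunction of (1)--(3), once one recalls that a congruence $\Phi$ saturates $L$ exactly when $\Phi\subseteq\mathrm{Ker}(\mathrm{ch}^{L})$. The main obstacle is thus the composition-closure of translations used in (1); everything else is formal manipulation of the biconditional.
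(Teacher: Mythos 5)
Your proposal is correct and follows essentially the same route as the paper: item (1) via Proposition~\ref{CharacCong}, item (2) via the identity translation, and item (3) via closure of congruences under translations. The only difference is that you spell out the detail the paper leaves implicit in (1)---namely that $T\circ U\in\mathrm{Tl}_{t}(\mathbf{A})_{s}$ for $U$ elementary and $T$ a translation, so that the defining biconditional transfers---which is exactly the intended reading of the paper's one-line appeal to Proposition~\ref{CharacCong}.
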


\begin{proof}
To prove (1) it suffices to take into account Proposition~\ref{CharacCong}. To prove (2), given $t\in S$ and $(x,y)\in \Omega^{\mathbf{A}}(L)_{t}$, it suffices to consider $\mathrm{id}_{A_{t}}\in \mathrm{Tl}_{t}(\mathbf{A})_{t}$, to conclude that $x\in L_{t}$ if, and only if, $y\in L_{t}$, i.e., that $(x,y)\in \mathrm{Ker}(\mathrm{ch}^{L})_{t}$. We now proceed to prove (3).
Let $\Phi$ be a congruence on $\mathbf{A}$ such that $\Phi\subseteq \mathrm{Ker}(\mathrm{ch}^{L})$, i.e., such that, for every $s\in S$ and every $x$, $y\in A_{s}$, if $(x,y)\in \Phi_{s}$, then $x\in L_{s}$ if, and only if, $y\in L_{s}$. We want to show that, for every $t\in S$, $\Phi_{t}\subseteq \Omega^{\mathbf{A}}(L)_{t}$. Let $t$ be an element of $S$ and $(x,y)\in \Phi_{t}$. Then, since $\Phi$ is a congruence on $\mathbf{A}$, for every $s\in S$ and every $T\in \mathrm{Tl}_{t}(\mathbf{A})_{s}$, we have that $(T(x),T(y))\in \Phi_{s}$. Hence, by the hypothesis on $\Phi$, $T(x)\in L_{s}$ if, and only if, $T(y)\in L_{s}$. Therefore $\Phi\subseteq \Omega^{\mathbf{A}}(L)$.
\end{proof}

\begin{definition}
Let $\mathbf{A}$ be a $\Sigma$-algebra and $L\subseteq A$. Then we call $\Omega^{\mathbf{A}}(L)$ the congruence on $\mathbf{A}$ \emph{cogenerated} by $L$ (or the \emph{syntactic} congruence on $\mathbf{A}$ determined by $L$).
\end{definition}

The following picture illustrates the position of the congruence $\Omega^{\mathbf{A}}(L)$ in the lattice $\mathbf{Cgr}(\mathbf{A})$.
$$
\xymatrix@R=2.5pc@C=2.5pc{
{} &
*[o]{\circ} \save[]+<0pt,10pt>*{\nabla^{\mathbf{A}}}\restore
\ar@{-}@/^1.0pc/[rd]
\ar@{-}@/_2pc/[dd]
&
{}  \\
{} & {} &
*[o]{\circ} \save[]+<20pt,0pt>*{\Omega^{\mathbf{A}}(L)}\restore
\ar@{-}@/_1pc/[ld]
\ar@{-}@/^1pc/[ld]
\\
{} &
*[o]{\circ} \save[]+<0pt,-10pt>*{\Delta^{\mathbf{A}}}\restore
& {}
}
$$

\begin{remark}
Let $L$ be a subset of a semigroup (or monoid). Then the syntactic (or principal) congruence of $L$ (also called the two-sided principal congruence of $L$) falls under the notion of congruence cogenerated by $L$.
\end{remark}

\begin{proposition}\label{CharacSatCCog}
Let $\mathbf{A}$ be a $\Sigma$-algebra, $L$ a subset of $A$, and $\Phi\in\mathrm{Cgr}(\mathbf{A})$. Then $L\in \Phi\!-\!\mathrm{Sat}(A)$, i.e., $L = [L]^{\Phi}$, if, and only if,  $\Phi\subseteq\Omega^{\mathbf{A}}(L)$.
\end{proposition}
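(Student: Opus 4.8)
The plan is to deduce the statement from Proposition~\ref{CharacCogenCong}, which already characterizes $\Omega^{\mathbf{A}}(L)$ as the greatest congruence on $\mathbf{A}$ contained in $\mathrm{Ker}(\mathrm{ch}^{L})$. The whole proposition will then hinge on a single auxiliary fact, purely about $S$-sorted equivalences and not involving the algebra structure at all: for any $\Phi\in\mathrm{Eqv}(A)$, the subset $L$ is $\Phi$-saturated if, and only if, $\Phi\subseteq\mathrm{Ker}(\mathrm{ch}^{L})$.

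First I would establish this auxiliary equivalence by unwinding the definitions. By the definition of $\mathrm{Ker}(\mathrm{ch}^{L})$, the inclusion $\Phi\subseteq\mathrm{Ker}(\mathrm{ch}^{L})$ says exactly that, for every $s\in S$ and every $(x,y)\in\Phi_{s}$, one has $x\in L_{s}$ if, and only if, $y\in L_{s}$. On the other hand, since $[L]^{\Phi}_{s}=\bigcup_{x\in L_{s}}[x]_{\Phi_{s}}$ and since $L\subseteq [L]^{\Phi}$ always holds, $L$ is $\Phi$-saturated exactly when $[L]^{\Phi}\subseteq L$, i.e. when, for every $s\in S$, any $a\in A_{s}$ admitting some $x\in L_{s}$ with $(x,a)\in\Phi_{s}$ already lies in $L_{s}$. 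If $\Phi\subseteq\mathrm{Ker}(\mathrm{ch}^{L})$ and $a\in[L]^{\Phi}_{s}$, pick $x\in L_{s}$ with $(x,a)\in\Phi_{s}$; then $x\in L_{s}$ forces $a\in L_{s}$, whence $[L]^{\Phi}=L$. Conversely, if $L=[L]^{\Phi}$ and $(x,y)\in\Phi_{s}$ with $x\in L_{s}$, then $y\in[x]_{\Phi_{s}}\subseteq[L]^{\Phi}_{s}=L_{s}$; using the symmetry of $\Phi_{s}$ for the reverse direction, this yields $x\in L_{s}\Leftrightarrow y\in L_{s}$, i.e. $\Phi\subseteq\mathrm{Ker}(\mathrm{ch}^{L})$.

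With the auxiliary fact in hand, both implications follow immediately. If $L\in\Phi\!-\!\mathrm{Sat}(A)$, then $\Phi\subseteq\mathrm{Ker}(\mathrm{ch}^{L})$, and since $\Phi$ is a congruence on $\mathbf{A}$, Proposition~\ref{CharacCogenCong}(3) gives $\Phi\subseteq\Omega^{\mathbf{A}}(L)$. Conversely, if $\Phi\subseteq\Omega^{\mathbf{A}}(L)$, then by Proposition~\ref{CharacCogenCong}(2) we have $\Omega^{\mathbf{A}}(L)\subseteq\mathrm{Ker}(\mathrm{ch}^{L})$, so $\Phi\subseteq\mathrm{Ker}(\mathrm{ch}^{L})$ and the auxiliary fact returns $L=[L]^{\Phi}$.

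I expect the only mild obstacle to be the bookkeeping in the auxiliary equivalence—specifically remembering that the trivial inclusion $L\subseteq[L]^{\Phi}$ lets one replace saturation by the reverse inclusion $[L]^{\Phi}\subseteq L$, and that the symmetry of $\Phi_{s}$ is needed for the converse direction. No genuine difficulty arises, since Proposition~\ref{CharacCogenCong} has already done all the congruence-theoretic work via the translations, and the present statement only re-expresses membership in $\Phi\!-\!\mathrm{Sat}(A)$ through the inclusion $\Phi\subseteq\mathrm{Ker}(\mathrm{ch}^{L})$.
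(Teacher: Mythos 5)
Your proof is correct and follows essentially the same route as the paper: both directions ultimately rest on Proposition~\ref{CharacCogenCong}, i.e., on $\Omega^{\mathbf{A}}(L)$ being the greatest congruence contained in $\mathrm{Ker}(\mathrm{ch}^{L})$. The only difference is bookkeeping: you isolate as an explicit lemma the purely equivalence-level fact that $L=[L]^{\Phi}$ if, and only if, $\Phi\subseteq\mathrm{Ker}(\mathrm{ch}^{L})$, whereas the paper leaves this identification implicit in the forward direction and, for the converse, runs the corresponding element-chase directly from the definition of $\Omega^{\mathbf{A}}(L)$ using the identity translation $\mathrm{id}_{A_{t}}\in\mathrm{Tl}_{t}(\mathbf{A})_{t}$.
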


\begin{proof}
Let us suppose that $L = [L]^{\Phi}$. Then, since $\Omega^{\mathbf{A}}(L)$ is the greatest congruence on $\mathbf{A}$ such that $L = [L]^{\Omega^{\mathbf{A}}(L)}$, we have that $\Phi\subseteq\Omega^{\mathbf{A}}(L)$.

Reciprocally, let us suppose that $\Phi\subseteq\Omega^{\mathbf{A}}(L)$. Let $t$ be a sort in $S$ and $y\in [L]^{\Phi}_{t} = \bigcup_{x\in L_{t}}[x]_{\Phi_{t}}$. Then there exists an $x\in L_{t}$ such that $y\in [x]_{\Phi_{t}}$. Hence $(x,y)\in \Phi_{t}$. Therefore $(x,y)\in \Omega^{\mathbf{A}}(L)_{t}$. Hence, for every $s\in S$ and every $T\in \mathrm{Tl}_{t}(\mathbf{A})_{s}$,
$T(x)\in L_{s}$ if, and only if, $T(y)\in L_{s}$. Thus, for $s = t$ and $T = \mathrm{id}_{A_{t}}$, $x\in L_{t}$ if, and only if, $y\in L_{t}$. Consequently $y\in L_{t}$. This proves that $[L]^{\Phi}_{t}\subseteq L_{t}$. So $L\in \Phi\!-\!\mathrm{Sat}(A)$.
\end{proof}

\begin{proposition}\label{RepCongInterCCogKroneckerDelta}
Let $\mathbf{A}$ be a $\Sigma$-algebra and $\Phi\in\mathrm{Cgr}(\mathbf{A})$. Then
$$
\textstyle
\Phi = \bigcap\{\Omega^{\mathbf{A}}(\delta^{s,[a]_{\Phi_{s}}})\mid s\in S \And a\in A_{s}\}.
$$
\end{proposition}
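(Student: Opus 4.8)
The plan is to prove the two inclusions separately, after recording that the right-hand side is genuinely a congruence on $\mathbf{A}$: each $\Omega^{\mathbf{A}}(\delta^{s,[a]_{\Phi_s}})$ is a congruence by Proposition~\ref{CharacCogenCong}(1), and $\mathrm{Cgr}(\mathbf{A})$ is an algebraic closure system on $A\times A$, hence closed under arbitrary intersections. Thus $\bigcap\{\Omega^{\mathbf{A}}(\delta^{s,[a]_{\Phi_s}})\mid s\in S \text{ and } a\in A_s\}$ is again a congruence, and its $t$-component is the intersection of the corresponding $t$-components.

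For the inclusion $\Phi\subseteq\bigcap\{\ldots\}$, I would fix $s\in S$ and $a\in A_s$ and show that $\delta^{s,[a]_{\Phi_s}}$ is $\Phi$-saturated; Proposition~\ref{CharacSatCCog} then yields precisely $\Phi\subseteq\Omega^{\mathbf{A}}(\delta^{s,[a]_{\Phi_s}})$, and since $s$ and $a$ are arbitrary, containment in the intersection follows. The saturation check is routine: at the sort $s$ one has $[\delta^{s,[a]_{\Phi_s}}]^{\Phi}_s=\bigcup_{x\in[a]_{\Phi_s}}[x]_{\Phi_s}=[a]_{\Phi_s}$, because every $x$ occurring in the union lies in the single $\Phi_s$-class $[a]_{\Phi_s}$; at every sort $u\neq s$ the component is empty and saturates to $\varnothing$. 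Hence $[\delta^{s,[a]_{\Phi_s}}]^{\Phi}=\delta^{s,[a]_{\Phi_s}}$, as required.

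For the reverse inclusion $\bigcap\{\ldots\}\subseteq\Phi$, the key observation---and the only mildly delicate point---is that one does not need all terms of the intersection but a single, well-chosen one. Fix $t\in S$ and let $(x,y)$ belong to the $t$-component of the intersection; in particular, choosing the index $s=t$ and $a=x$, one has $(x,y)\in\Omega^{\mathbf{A}}(\delta^{t,[x]_{\Phi_t}})_t$. Unwinding the definition of $\Omega^{\mathbf{A}}$ with the translation $\mathrm{id}_{A_t}\in\mathrm{Tl}_t(\mathbf{A})_t$ (always available by the final clause of the definition of translation) gives $x\in\delta^{t,[x]_{\Phi_t}}_t\leftrightarrow y\in\delta^{t,[x]_{\Phi_t}}_t$, i.e. $x\in[x]_{\Phi_t}\leftrightarrow y\in[x]_{\Phi_t}$. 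Since $x\in[x]_{\Phi_t}$ trivially, we conclude $y\in[x]_{\Phi_t}$, that is $(x,y)\in\Phi_t$. As $t$ was arbitrary, this establishes $\bigcap\{\ldots\}\subseteq\Phi$ and completes the proof. The main work is thus the saturation computation for the forward inclusion, while the reverse inclusion rests entirely on the diagonal choice $s=t$, $a=x$ together with the identity translation.
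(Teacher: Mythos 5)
Your proposal is correct and follows essentially the same route as the paper: the forward inclusion via the $\Phi$-saturation of the deltas $\delta^{s,[a]_{\Phi_s}}$ (the paper invokes this through Proposition~\ref{CharacSatCCog} just as you do), and the reverse inclusion via the diagonal term $\Omega^{\mathbf{A}}(\delta^{t,[x]_{\Phi_t}})$ together with the fact that the cogenerated congruence refines $\mathrm{Ker}(\mathrm{ch}^{L})$. The only cosmetic difference is that the paper phrases the reverse inclusion contrapositively, whereas you argue directly through the identity translation; the substance is identical.
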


\begin{proof}
It is straightforward to verify that, for every $s\in S$ and every $a\in A_{s}$, $\delta^{s,[a]_{\Phi_{s}}}$ is $\Phi$-saturated. Hence $\Phi \subseteq \bigcap\{\Omega^{\mathbf{A}}(\delta^{s,[a]_{\Phi_{s}}})\mid s\in S \And a\in A_{s}\}$.

Reciprocally, let $s$ be an element of $S$ and $a$, $b\in A_{s}$. If $(a,b)\nin \Phi_{s}$, then $(a,b)\nin \mathrm{Ker}(\mathrm{ch}^{\delta^{s,[a]_{\Phi_{s}}}})_{s}$. Hence $(a,b)\nin \Omega^{\mathbf{A}}(\delta^{s,[a]_{\Phi_{s}}})_{s}$. Therefore we have that $\bigcap\{\Omega^{\mathbf{A}}(\delta^{s,[a]_{\Phi_{s}}})\mid s\in S \! \And\! a\in A_{s}\}\subseteq \Phi$.
\end{proof}

\begin{remark}
Let $\mathbf{A}$ be a $\Sigma$-algebra. Then
$$
\textstyle
\Delta^{\mathbf{A}} = \bigcap\{\Omega^{\mathbf{A}}(\delta^{s,a})\mid s\in S\! \And\! a\in A_{s}\}.
$$
\end{remark}

\begin{proposition}\label{Compl}
Let $\mathbf{A}$ be a $\Sigma$-algebra and $L$ a subset of $A$. Then $\Omega^{\mathbf{A}}(L) = \Omega^{\mathbf{A}}(\complement_{A}L)$.
\end{proposition}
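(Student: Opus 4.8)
The plan is to prove the equality by showing that, for every sort $t\in S$, the two sets $\Omega^{\mathbf{A}}(L)_{t}$ and $\Omega^{\mathbf{A}}(\complement_{A}L)_{t}$ have exactly the same defining membership condition, so that they coincide as subsets of $A^{2}_{t}$. Recall that, by definition,
$$
\Omega^{\mathbf{A}}(L)_{t} = \{(x,y)\in A^{2}_{t}\mid \forall\, s\in S\,\,\forall\, T\in \mathrm{Tl}_{t}(\mathbf{A})_{s}\,\, (T(x)\in L_{s}\leftrightarrow T(y)\in L_{s})\},
$$
and that $\Omega^{\mathbf{A}}(\complement_{A}L)_{t}$ is given by the same formula with $L$ replaced by $\complement_{A}L$ throughout.

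First I would record that complementation of $S$-sorted subsets is taken componentwise, so that $(\complement_{A}L)_{s} = A_{s}-L_{s}$ for every $s\in S$; consequently, for any $z\in A_{s}$, the statement $z\in (\complement_{A}L)_{s}$ is precisely the negation of the statement $z\in L_{s}$. The key observation is then the propositional tautology: for arbitrary propositions $P$ and $Q$, the biconditional $P\leftrightarrow Q$ holds if, and only if, $\neg P\leftrightarrow \neg Q$ holds. Applying this with $P$ the proposition ``$T(x)\in L_{s}$'' and $Q$ the proposition ``$T(y)\in L_{s}$'', we obtain that, for every $s\in S$ and every $T\in \mathrm{Tl}_{t}(\mathbf{A})_{s}$,
$$
(T(x)\in L_{s}\leftrightarrow T(y)\in L_{s}) \quad\text{iff}\quad (T(x)\in (\complement_{A}L)_{s}\leftrightarrow T(y)\in (\complement_{A}L)_{s}).
$$
Since this equivalence holds for each choice of $s$ and $T$, the universally quantified conditions defining $\Omega^{\mathbf{A}}(L)_{t}$ and $\Omega^{\mathbf{A}}(\complement_{A}L)_{t}$ are logically identical; hence the two sets coincide for every $t\in S$, which yields $\Omega^{\mathbf{A}}(L) = \Omega^{\mathbf{A}}(\complement_{A}L)$.

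There is essentially no obstacle here, as the argument reduces to the displayed tautology. As an alternative route that invokes the results already established, one could instead argue via saturation: by Proposition~\ref{SatOperator}, for any $\Phi\in\mathrm{Cgr}(\mathbf{A})$ and any $X\subseteq A$, if $X$ is $\Phi$-saturated then so is $\complement_{A}X$; applying this to $\complement_{A}L$ as well and using $\complement_{A}\complement_{A}L = L$, one sees that a congruence $\Phi$ saturates $L$ if, and only if, it saturates $\complement_{A}L$. By Proposition~\ref{CharacCogenCong}, $\Omega^{\mathbf{A}}(L)$ and $\Omega^{\mathbf{A}}(\complement_{A}L)$ are respectively the greatest congruences on $\mathbf{A}$ saturating $L$ and saturating $\complement_{A}L$; being the greatest elements of the same set of congruences, they must be equal. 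Either approach settles the statement immediately.
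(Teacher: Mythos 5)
Your proof is correct. In fact, the paper states Proposition~\ref{Compl} without any proof at all (the authors evidently regard it as immediate from the definition), so there is no argument of theirs to compare against; your write-up supplies exactly the routine verification they omitted. Your first route is the natural one: since $(\complement_{A}L)_{s} = A_{s}-L_{s}$, membership in the complement is the negation of membership in $L_{s}$, and the tautology $(P\leftrightarrow Q)\leftrightarrow(\neg P\leftrightarrow \neg Q)$ makes the defining conditions of $\Omega^{\mathbf{A}}(L)_{t}$ and $\Omega^{\mathbf{A}}(\complement_{A}L)_{t}$ literally identical, quantifier by quantifier. Your alternative route is also sound and fits the paper's toolkit: Proposition~\ref{SatOperator} gives that saturation is stable under complementation, so together with $\complement_{A}\complement_{A}L = L$ a congruence saturates $L$ if, and only if, it saturates $\complement_{A}L$; then Proposition~\ref{CharacCogenCong} (or, more directly, Proposition~\ref{CharacSatCCog}, which identifies the congruences saturating $L$ with those below $\Omega^{\mathbf{A}}(L)$) exhibits both cogenerated congruences as the greatest element of the same set, forcing equality. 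The first argument is more elementary and self-contained; the second generalizes better, since it shows that $\Omega^{\mathbf{A}}$ is constant on any pair of subsets that are saturated by exactly the same congruences.
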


\begin{proposition}\label{InterCCog and CCogInter}
Let $\mathbf{A}$ be a $\Sigma$-algebra, $J$ a nonempty set in $\ensuremath{\boldsymbol{\mathcal{U}}}$, and $(L^{j})_{j\in J}$ a $J$-indexed family of subsets of $A$. Then $\bigcap_{j\in J}\Omega^{\mathbf{A}}(L^{j})\subseteq \Omega^{\mathbf{A}}(\bigcap_{j\in J}L^{j})$.
\end{proposition}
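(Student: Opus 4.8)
The plan is to prove the stated inclusion directly at the level of $S$-sorted relations, working one sort at a time. Fix a sort $t\in S$ and a pair $(x,y)\in \bigl(\bigcap_{j\in J}\Omega^{\mathbf{A}}(L^{j})\bigr)_{t}=\bigcap_{j\in J}\Omega^{\mathbf{A}}(L^{j})_{t}$; the goal is to show that $(x,y)\in \Omega^{\mathbf{A}}(\bigcap_{j\in J}L^{j})_{t}$. Unwinding the definition of $\Omega^{\mathbf{A}}$, this amounts to verifying, for an arbitrary $s\in S$ and an arbitrary translation $T\in \mathrm{Tl}_{t}(\mathbf{A})_{s}$, the biconditional $T(x)\in (\bigcap_{j\in J}L^{j})_{s}\leftrightarrow T(y)\in (\bigcap_{j\in J}L^{j})_{s}$.

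The key step is to use that intersections of $S$-sorted sets are computed componentwise, so that $(\bigcap_{j\in J}L^{j})_{s}=\bigcap_{j\in J}L^{j}_{s}$; hence $T(x)\in (\bigcap_{j\in J}L^{j})_{s}$ is equivalent to the assertion that $T(x)\in L^{j}_{s}$ for every $j\in J$, and likewise for $y$. For each individual $j$, the hypothesis $(x,y)\in \Omega^{\mathbf{A}}(L^{j})_{t}$ gives precisely $T(x)\in L^{j}_{s}\leftrightarrow T(y)\in L^{j}_{s}$. Since a biconditional that holds for \emph{every} $j$ transfers to the universally quantified statements, I obtain $(\forall j\in J\; T(x)\in L^{j}_{s})\leftrightarrow(\forall j\in J\; T(y)\in L^{j}_{s})$, which is exactly the biconditional required. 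As $s$ and $T$ were arbitrary, $(x,y)\in \Omega^{\mathbf{A}}(\bigcap_{j\in J}L^{j})_{t}$, completing the argument.

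I expect no real obstacle here: the whole content is the elementary observation that the universal quantifier over $j$ commutes with the biconditional \emph{because} each $j$-indexed biconditional holds separately. (Note that nonemptiness of $J$ is not even strictly needed for this direct route, since for empty $J$ both sides collapse to $A_{s}$; it is harmless.)

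As an alternative I would package the same fact through the saturation machinery already developed. Put $\Phi=\bigcap_{j\in J}\Omega^{\mathbf{A}}(L^{j})$, which is a congruence because $\mathrm{Cgr}(\mathbf{A})$ is a closure system. By Proposition~\ref{CharacSatCCog}, the inclusion $\Phi\subseteq \Omega^{\mathbf{A}}(L^{j})$ is equivalent to $L^{j}$ being $\Phi$-saturated, so $L^{j}=[L^{j}]^{\Phi}$ for each $j$. Since $[\cdot]^{\Phi}$ is completely multiplicative by Proposition~\ref{SatOperator} and $J$ is nonempty, $[\bigcap_{j\in J}L^{j}]^{\Phi}=\bigcap_{j\in J}[L^{j}]^{\Phi}=\bigcap_{j\in J}L^{j}$, i.e.\ $\bigcap_{j\in J}L^{j}$ is $\Phi$-saturated; applying Proposition~\ref{CharacSatCCog} once more yields $\Phi\subseteq \Omega^{\mathbf{A}}(\bigcap_{j\in J}L^{j})$. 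This second route also clarifies why the nonemptiness hypothesis is stated, as the complete multiplicativity of the saturation operator is formulated for nonempty families.
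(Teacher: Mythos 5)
Your proof is correct, and your first route is genuinely more elementary than the paper's. The paper argues structurally: each $\Omega^{\mathbf{A}}(L^{j})$ is contained in $\mathrm{Ker}(\mathrm{ch}^{L^{j}})$, the inclusion $\bigcap_{j\in J}\mathrm{Ker}(\mathrm{ch}^{L^{j}})\subseteq \mathrm{Ker}(\mathrm{ch}^{\bigcap_{j\in J}L^{j}})$ is immediate, and then item (3) of Proposition~\ref{CharacCogenCong} (the cogenerated congruence is the \emph{greatest} congruence below the kernel of the characteristic map) finishes; note that this tacitly uses that $\bigcap_{j\in J}\Omega^{\mathbf{A}}(L^{j})$ is a congruence. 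Your first route instead unwinds the definition of $\Omega^{\mathbf{A}}$ at the level of translations and rests only on the observation that a $J$-indexed family of biconditionals implies the biconditional of the universally quantified statements; it needs neither the congruence property of the intersection nor any maximality lemma, and, as you observe, it works even for $J=\varnothing$. Your second route is essentially the paper's proof in saturation dress, since Proposition~\ref{CharacSatCCog} is equivalent to the combination of items (2) and (3) of Proposition~\ref{CharacCogenCong}. One caution there: the complete multiplicativity of $[\cdot]^{\Phi}$ asserted in Proposition~\ref{SatOperator} is false for arbitrary families (it already fails for two disjoint singletons lying in a single $\Phi$-class, where the saturation of the intersection is empty but the intersection of the saturations is not), so you should not lean on it as stated; however, you apply it only to the already $\Phi$-saturated sets $L^{j}$, and for saturated sets the needed fact---that they are closed under intersections---is immediate (it is what underlies Proposition~\ref{CABA Saturades}), so your argument stands with that citation adjusted.
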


\begin{proof}
To prove that $\bigcap_{j\in J}\Omega^{\mathbf{A}}(L^{j}) \subseteq \Omega^{\mathbf{A}}(\bigcap_{j\in J}L^{j})$ it suffices to prove that $\bigcap_{j\in J}\Omega^{\mathbf{A}}(L^{j}) \subseteq \mathrm{Ker}(\mathrm{ch}^{\bigcap_{j\in J}L^{j}})$. But $\bigcap_{j\in J}\Omega^{\mathbf{A}}(L^{j}) \subseteq \bigcap_{j\in J}\mathrm{Ker}(\mathrm{ch}^{L^{j}})$ and $\bigcap_{j\in J}\mathrm{Ker}(\mathrm{ch}^{L^{j}}) \subseteq \mathrm{Ker}(\mathrm{ch}^{\bigcap_{j\in J}L^{j}})$, thus $\bigcap_{j\in J}\Omega^{\mathbf{A}}(L^{j}) \subseteq \mathrm{Ker}(\mathrm{ch}^{\bigcap_{j\in J}L^{j}})$.
\end{proof}

\begin{proposition}\label{TAntiTrans}
Let $\mathbf{A}$ be a $\Sigma$-algebra, $L$ a subset of $A$, $t$, $s\in S$, and $T\in\mathrm{Tl}_{t}(\mathbf{A})_{s}$. Then $\Omega^{\mathbf{A}}(L)\subseteq \Omega^{\mathbf{A}}(T^{-1}[L])$.
\end{proposition}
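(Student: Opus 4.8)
The plan is to reduce the asserted inclusion to a single saturation statement and then read it off directly from the definition of $\Omega^{\mathbf{A}}(L)$. First I would note that, by Proposition~\ref{CharacCogenCong}, $\Omega^{\mathbf{A}}(L)$ is a congruence on $\mathbf{A}$. Applying the characterization of Proposition~\ref{CharacSatCCog} to the subset $T^{-1}[L]$ of $A$ and to the congruence $\Phi = \Omega^{\mathbf{A}}(L)$, one has that $\Omega^{\mathbf{A}}(L)\subseteq\Omega^{\mathbf{A}}(T^{-1}[L])$ holds if, and only if, $T^{-1}[L]$ is $\Omega^{\mathbf{A}}(L)$-saturated, i.e. $T^{-1}[L] = [T^{-1}[L]]^{\Omega^{\mathbf{A}}(L)}$. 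Thus the whole proposition is equivalent to showing that the preimage $T^{-1}[L]$ is saturated under $\Omega^{\mathbf{A}}(L)$.

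Next I would unwind what this saturation means componentwise. By Definition~\ref{DAntiTrans}, $T^{-1}[L] = \delta^{t,T^{-1}[L_{s}]}$, so its component is $T^{-1}[L_{s}]$ at the sort $t$ and $\varnothing$ at every sort $u\neq t$. Since $[\cdot]^{\Omega^{\mathbf{A}}(L)}$ is extensive, it suffices to verify the reverse inclusion $[T^{-1}[L]]^{\Omega^{\mathbf{A}}(L)}_{u}\subseteq (T^{-1}[L])_{u}$ for each sort $u$. For $u\neq t$ this is immediate, because the $\Omega^{\mathbf{A}}(L)$-saturation of an empty component is again empty.

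The crux is the sort-$t$ component. Here I would take any $y\in [T^{-1}[L_{s}]]^{\Omega^{\mathbf{A}}(L)_{t}}$, so that there is some $x\in T^{-1}[L_{s}]$ with $(x,y)\in\Omega^{\mathbf{A}}(L)_{t}$; by definition of $T^{-1}[L_{s}]$ this means $T(x)\in L_{s}$. Now I would apply the defining condition of $\Omega^{\mathbf{A}}(L)_{t}$ to the \emph{very} translation $T\in\mathrm{Tl}_{t}(\mathbf{A})_{s}$ appearing in the statement and to the sort $s$, obtaining $T(x)\in L_{s}\leftrightarrow T(y)\in L_{s}$. Since $T(x)\in L_{s}$, it follows that $T(y)\in L_{s}$, that is, $y\in T^{-1}[L_{s}]$. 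This establishes $[T^{-1}[L_{s}]]^{\Omega^{\mathbf{A}}(L)_{t}}\subseteq T^{-1}[L_{s}]$, hence the saturation of $T^{-1}[L]$, and therefore the claimed inclusion.

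I do not expect any serious obstacle in this argument; everything is a matter of correctly invoking Proposition~\ref{CharacSatCCog} and unwinding Definition~\ref{DAntiTrans}. The only point requiring a little care is the recognition that $T$ is itself one of the admissible test translations quantified over in the definition of $\Omega^{\mathbf{A}}(L)$, which is precisely what forces the preimage $T^{-1}[L]$ to be $\Omega^{\mathbf{A}}(L)$-saturated.
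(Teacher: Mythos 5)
Your proof is correct. Note, first, that the paper states Proposition~\ref{TAntiTrans} \emph{without} proof, so there is no official argument to compare against; judged on its own merits, your reasoning is sound: Proposition~\ref{CharacSatCCog} (applicable because $\Omega^{\mathbf{A}}(L)$ is a congruence by Proposition~\ref{CharacCogenCong}) converts the claimed inclusion into the statement that $T^{-1}[L]$ is $\Omega^{\mathbf{A}}(L)$-saturated; the components at sorts $u\neq t$ are empty by Definition~\ref{DAntiTrans} and hence trivially saturated; and at sort $t$ the translation $T$ itself, used as one of the test translations quantified over in the definition of $\Omega^{\mathbf{A}}(L)$, gives exactly $T(x)\in L_{s}\leftrightarrow T(y)\in L_{s}$, which is what saturation requires. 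It is worth contrasting this with the more direct argument one might expect here: taking $(x,y)\in\Omega^{\mathbf{A}}(L)_{u}$ at an arbitrary sort $u$ and verifying the defining condition of $\Omega^{\mathbf{A}}(T^{-1}[L])_{u}$ for every sort $v$ and every $U\in\mathrm{Tl}_{u}(\mathbf{A})_{v}$; that route needs the fact that the composite $T\circ U$ is again a translation (the category structure of $\mathbf{Tl}(\mathbf{A})$ mentioned in the paper's remark), whereas your route needs only the single translation $T$ together with the saturation characterization. Your direction of argument is also the mirror image of how the paper later exploits this proposition: in the proof that Definitions~\ref{Def1FRL} and~\ref{Def2FRL} are equivalent, the authors pass from the inclusion $\Omega^{\mathbf{T}_{\Sigma}(A)}(L)\subseteq\Omega^{\mathbf{T}_{\Sigma}(A)}(T^{-1}[L])$ to the saturation of $T^{-1}[L]$, i.e., they traverse the same equivalence in the opposite direction.
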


\begin{proposition}\label{TAntiHom}
Let $f$ be a homomorphism from $\mathbf{A}$ to $\mathbf{B}$ and $M$ a subset of $B$. Then
$
(f\times f)^{-1}[\Omega^{\mathbf{B}}(M)]\subseteq \Omega^{\mathbf{A}}(f^{-1}[M]).
$
Moreover, if $f$ is an epimorphism, then
$
(f\times f)^{-1}[\Omega^{\mathbf{B}}(M)] = \Omega^{\mathbf{A}}(f^{-1}[M]).
$
\end{proposition}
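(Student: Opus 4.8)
The plan is to reduce everything to the defining formula of the cogenerated congruence and to exploit the compatibility between translations and homomorphisms recorded in Proposition~\ref{TlandHom}. Fix $t\in S$. Unravelling the definitions, a pair $(x,y)\in A^{2}_{t}$ lies in $((f\times f)^{-1}[\Omega^{\mathbf{B}}(M)])_{t}$ exactly when $(f_{t}(x),f_{t}(y))\in \Omega^{\mathbf{B}}(M)_{t}$, and it lies in $\Omega^{\mathbf{A}}(f^{-1}[M])_{t}$ exactly when, for every $s\in S$ and every $T\in \mathrm{Tl}_{t}(\mathbf{A})_{s}$, one has $f_{s}(T(x))\in M_{s}$ if, and only if, $f_{s}(T(y))\in M_{s}$---here I use that $(f^{-1}[M])_{s} = f_{s}^{-1}[M_{s}]$, so that $T(z)\in (f^{-1}[M])_{s}$ means precisely $f_{s}(T(z))\in M_{s}$.

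For the inclusion I would take $(x,y)$ with $(f_{t}(x),f_{t}(y))\in \Omega^{\mathbf{B}}(M)_{t}$ and verify the displayed biconditional for arbitrary $s\in S$ and $T\in \mathrm{Tl}_{t}(\mathbf{A})_{s}$. By Proposition~\ref{TlandHom} there is a $T^{f}\in \mathrm{Tl}_{t}(\mathbf{B})_{s}$ with $f_{s}\circ T = T^{f}\circ f_{t}$, so that $f_{s}(T(x)) = T^{f}(f_{t}(x))$ and likewise for $y$. Since $(f_{t}(x),f_{t}(y))\in \Omega^{\mathbf{B}}(M)_{t}$ and $T^{f}$ is a $t$-translation of sort $s$ for $\mathbf{B}$, the very definition of $\Omega^{\mathbf{B}}(M)$ gives $T^{f}(f_{t}(x))\in M_{s}$ if, and only if, $T^{f}(f_{t}(y))\in M_{s}$, which is exactly the required biconditional. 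This settles the first assertion.

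For the equality, assuming $f$ an epimorphism, I would prove the reverse inclusion $\Omega^{\mathbf{A}}(f^{-1}[M])\subseteq (f\times f)^{-1}[\Omega^{\mathbf{B}}(M)]$. Take $(x,y)\in \Omega^{\mathbf{A}}(f^{-1}[M])_{t}$; I must show $(f_{t}(x),f_{t}(y))\in \Omega^{\mathbf{B}}(M)_{t}$, that is, that for every $s\in S$ and every $U\in \mathrm{Tl}_{t}(\mathbf{B})_{s}$ one has $U(f_{t}(x))\in M_{s}$ if, and only if, $U(f_{t}(y))\in M_{s}$. Here the second (surjectivity) clause of Proposition~\ref{TlandHom} is decisive: since $f$ is an epimorphism, every such $U$ is of the form $T^{f}$ for some $T\in \mathrm{Tl}_{t}(\mathbf{A})_{s}$, whence $U(f_{t}(x)) = f_{s}(T(x))$ and $U(f_{t}(y)) = f_{s}(T(y))$. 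Applying the membership of $(x,y)$ in $\Omega^{\mathbf{A}}(f^{-1}[M])_{t}$ to this particular $T$ yields $f_{s}(T(x))\in M_{s}$ if, and only if, $f_{s}(T(y))\in M_{s}$, i.e.\ the desired biconditional for $U$.

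The whole argument is essentially a dictionary translation through the identity $f_{s}\circ T = T^{f}\circ f_{t}$, so I expect no serious computational obstacle. The single point that genuinely requires the epimorphism hypothesis---and hence the only place where the first inclusion can fail to be an equality in general---is the need to realize every target translation $U\in \mathrm{Tl}_{t}(\mathbf{B})_{s}$ as some $T^{f}$. Without surjectivity of $f$ there may be translations of $\mathbf{B}$ not induced from $\mathbf{A}$, and these extra translations impose additional separating conditions that can make $\Omega^{\mathbf{A}}(f^{-1}[M])$ strictly larger than $(f\times f)^{-1}[\Omega^{\mathbf{B}}(M)]$.
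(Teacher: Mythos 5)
Your proof is correct: both inclusions follow exactly as you argue, by unwinding the definition of the cogenerated congruence componentwise and invoking the two clauses of Proposition~\ref{TlandHom} (the identity $f_{s}\circ T = T^{f}\circ f_{t}$ for the first inclusion, and the surjectivity of $T\mapsto T^{f}$ under the epimorphism hypothesis for the reverse one). The paper states this proposition without proof, and your argument is precisely the intended route, since Proposition~\ref{TlandHom} is developed in Section~4 expressly to relate translations to homomorphisms for this kind of transfer result.
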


\begin{proposition}
Let $f$ be a homomorphism from $\mathbf{A}$ to $\mathbf{B}$, $M$ a subset of $B$, and $t\in S$. Then
$$
\textstyle
(f_{t}\times f_{t})^{-1}[\Omega^{\mathbf{B}}(M)_{t}]\subseteq \bigcap\{\Omega^{\mathbf{A}}(f^{-1}[U^{-1}[M]])_{t}\mid U\in \mathrm{Tl}_{t}(\mathbf{B})_{s}\! \And\! s\in S\}.
$$
Moreover, if $f$ is an epimorphism, then
$$
\textstyle
(f_{t}\times f_{t})^{-1}[\Omega^{\mathbf{B}}(M)_{t}] = \bigcap\{\Omega^{\mathbf{A}}(f^{-1}[U^{-1}[M]])_{t}\mid U\in \mathrm{Tl}_{t}(\mathbf{B})_{s}\! \And\! s\in S\}.
$$
\end{proposition}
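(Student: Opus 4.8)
The plan is to prove the two inclusions separately, after unfolding each side into an explicit condition on translations. First I would record what each side says. A pair $(x,y)\in A_t^2$ lies in $(f_t\times f_t)^{-1}[\Omega^{\mathbf{B}}(M)_t]$ exactly when $(f_t(x),f_t(y))\in\Omega^{\mathbf{B}}(M)_t$, that is, when for every $s\in S$ and every $V\in\mathrm{Tl}_t(\mathbf{B})_s$ one has $V(f_t(x))\in M_s\leftrightarrow V(f_t(y))\in M_s$. For the right-hand side the crucial bookkeeping observation is that, by Definition~\ref{DAntiTrans}, $U^{-1}[M]=\delta^{t,U^{-1}[M_s]}$, hence $f^{-1}[U^{-1}[M]]=\delta^{t,f_t^{-1}[U^{-1}[M_s]]}$ is supported at the single sort $t$. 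Consequently, when computing $\Omega^{\mathbf{A}}(f^{-1}[U^{-1}[M]])_t$ every sort $s'\neq t$ contributes a vacuously true biconditional, so only translations $T\in\mathrm{Tl}_t(\mathbf{A})_t$ matter, and membership reduces to: for all $T\in\mathrm{Tl}_t(\mathbf{A})_t$, $U(f_t(T(x)))\in M_s\leftrightarrow U(f_t(T(y)))\in M_s$. Thus the right-hand side is the set of $(x,y)$ for which this holds for all $s$, all $U\in\mathrm{Tl}_t(\mathbf{B})_s$, and all $T\in\mathrm{Tl}_t(\mathbf{A})_t$.

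For the inclusion $\subseteq$ I would argue term by term. Fixing $s$ and $U\in\mathrm{Tl}_t(\mathbf{B})_s$, Proposition~\ref{TAntiTrans} applied to $\mathbf{B}$ gives $\Omega^{\mathbf{B}}(M)\subseteq\Omega^{\mathbf{B}}(U^{-1}[M])$; applying the monotone operator $(f\times f)^{-1}[\farg]$ and then Proposition~\ref{TAntiHom} yields $(f\times f)^{-1}[\Omega^{\mathbf{B}}(M)]\subseteq(f\times f)^{-1}[\Omega^{\mathbf{B}}(U^{-1}[M])]\subseteq\Omega^{\mathbf{A}}(f^{-1}[U^{-1}[M]])$. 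Passing to the $t$-component and intersecting over all $U$ and all $s$ delivers the inclusion. Alternatively, and more explicitly, I would start from $(x,y)$ in the left-hand side and use Proposition~\ref{TlandHom} to rewrite $f_t(T(x))=T^f(f_t(x))$ with $T^f\in\mathrm{Tl}_t(\mathbf{B})_t$, so that $U(f_t(T(x)))=(U\comp T^f)(f_t(x))$; since translations compose (the category structure $\mathbf{Tl}(\mathbf{B})$), $U\comp T^f\in\mathrm{Tl}_t(\mathbf{B})_s$, and applying the left-hand condition with $V=U\comp T^f$ gives precisely the required biconditional.

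For the reverse inclusion I would exploit that $\mathrm{id}_{A_t}$ is an element of $\mathrm{Tl}_t(\mathbf{A})_t$. Given $(x,y)$ in the right-hand side, specialising the inner quantifier to $T=\mathrm{id}_{A_t}$ shows that for every $s$ and every $U\in\mathrm{Tl}_t(\mathbf{B})_s$ one has $U(f_t(x))\in M_s\leftrightarrow U(f_t(y))\in M_s$; intersecting over all $s$ and all $U$ is exactly the statement $(f_t(x),f_t(y))\in\Omega^{\mathbf{B}}(M)_t$, so $(x,y)\in(f_t\times f_t)^{-1}[\Omega^{\mathbf{B}}(M)_t]$. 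It is worth noting that this argument uses no surjectivity: the outer intersection already ranges over \emph{all} of $\mathrm{Tl}_t(\mathbf{B})_s$, so—unlike in Proposition~\ref{TAntiHom}, where only the translations of the form $T^f$ are at one's disposal—the epimorphism hypothesis is not actually needed here, and the equality holds for every homomorphism $f$.

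The main obstacle is not any single computation but the correct reduction of the right-hand side: one must notice that $f^{-1}[U^{-1}[M]]$ is concentrated at the single sort $t$, which is what collapses the defining condition of $\Omega^{\mathbf{A}}(f^{-1}[U^{-1}[M]])_t$ to translations $t\to t$, and one must invoke closure of translations under composition to see that $U\comp T^f$ is again a legitimate element of $\mathrm{Tl}_t(\mathbf{B})_s$. Once these two points are secured, both inclusions are short.
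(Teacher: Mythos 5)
Your proof is correct. Note first that the paper states this proposition \emph{without} proof (it appears, unproved, immediately after Propositions~\ref{TAntiTrans} and~\ref{TAntiHom}), so there is no argument of the authors to compare yours against; what you wrote supplies the missing proof, and it is exactly the derivation the surrounding material is set up to support. Your two reductions are right: by Definition~\ref{DAntiTrans}, $f^{-1}[U^{-1}[M]] = \delta^{t,f_{t}^{-1}[U^{-1}[M_{s}]]}$ is concentrated at sort $t$, so in the defining condition of $\Omega^{\mathbf{A}}(f^{-1}[U^{-1}[M]])_{t}$ every sort other than $t$ contributes a vacuous biconditional and only translations in $\mathrm{Tl}_{t}(\mathbf{A})_{t}$ matter; and the first inclusion follows either by chaining Proposition~\ref{TAntiTrans} (applied in $\mathbf{B}$), monotonicity of $(f\times f)^{-1}[\cdot]$, and Proposition~\ref{TAntiHom}, or directly via Proposition~\ref{TlandHom} together with closure of translations under composition.

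Your remark on the ``moreover'' clause is also correct and worth stressing: the epimorphism hypothesis is redundant, and the displayed equality holds for \emph{every} homomorphism. The reverse inclusion needs nothing beyond $\mathrm{id}_{A_{t}}\in \mathrm{Tl}_{t}(\mathbf{A})_{t}$: specializing to the identity turns membership of $(x,y)$ in $\Omega^{\mathbf{A}}(f^{-1}[U^{-1}[M]])_{t}$ into the biconditional $U(f_{t}(x))\in M_{s}\leftrightarrow U(f_{t}(y))\in M_{s}$, and letting $(s,U)$ range over the index set of the outer intersection reproduces verbatim the defining condition of $(f_{t}(x),f_{t}(y))\in\Omega^{\mathbf{B}}(M)_{t}$. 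This is precisely where the present statement differs from Proposition~\ref{TAntiHom}, whose equality genuinely requires surjectivity: for non-surjective $f$ only translations of $\mathbf{B}$ of the form $T^{f}$ can be realized from $\mathbf{A}$, and a translation of $\mathbf{B}$ whose parameters lie outside the image of $f$ may separate $f_{t}(x)$ from $f_{t}(y)$ even when no translation of $\mathbf{A}$ does. Here the intersection over all $U\in\mathrm{Tl}_{t}(\mathbf{B})_{s}$ reinstates exactly those missing separating conditions, so your argument proves a (slightly) stronger statement than the one in the paper.
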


\begin{remark}
For every $\Sigma$-algebra $\mathbf{A}$, $\Omega^{\mathbf{A}}$ can be considered as the component at $\mathbf{A}$ of a natural transformation $\Omega$ between two contravariant functors from a suitable category of $\Sigma$-algebras to the category $\mathbf{Set}$. In fact, let us consider the category $\mathbf{Alg}(\Sigma)_{\mathrm{epi}}$, with objects the $\Sigma$-algebras and morphisms the epimorphisms between $\Sigma$-algebras. Then we have, on the one hand, the functor $\mathrm{P}^{-}$ from $\mathbf{Alg}(\Sigma)^{\mathrm{op}}_{\mathrm{epi}}$ to $\mathbf{Set}$ which assigns to a $\Sigma$-algebra $\mathbf{A}$ the set $\mathrm{Sub}(A)$, and to an epimorphism $f\colon \mathbf{A}\mor \mathbf{B}$ the mapping $f^{-1}[\cdot]$ from $\mathrm{Sub}(B)$ to  $\mathrm{Sub}(A)$, and, on the other hand, the functor $\mathrm{Cgr}$ from $\mathbf{Alg}(\Sigma)^{\mathrm{op}}_{\mathrm{epi}}$ to $\mathbf{Set}$ which assigns to a $\Sigma$-algebra $\mathbf{A}$ the set $\mathrm{Cgr}(\mathbf{A})$, and to an epimorphism $f\colon \mathbf{A}\mor \mathbf{B}$ the mapping $(f\times f)^{-1}[\cdot]$ from $\mathrm{Cgr}(\mathbf{B})$ to $\mathrm{Cgr}(\mathbf{A})$. Then the mapping $\Omega$ from $\mathrm{Alg}(\Sigma)$ to $\mathrm{Mor}(\mathbf{Set})$ which sends a $\Sigma$-algebra $\mathbf{A}$ to the mapping $\Omega^{\mathbf{A}}$ from $\mathrm{Sub}(A)$ to $\mathrm{Cgr}(\mathbf{A})$ which assigns to a subset $L$ of $A$ precisely $\Omega^{\mathbf{A}}(L)$ is a natural transformation from $\mathrm{P}^{-}$ to $\mathrm{Cgr}$, because, for every epimorphism $f\colon \mathbf{A}\mor \mathbf{B}$, the following diagram
$$\xymatrix{
\mathrm{Sub}(A) \ar[r]^-{\Omega^{\mathbf{A}}} & \mathrm{Cgr}(\mathbf{A})\\
\mathrm{Sub}(B)\ar[u]^-{f^{-1}[\cdot]}\ar[r]_-{\Omega^{\mathbf{B}}} & \mathrm{Cgr}(\mathbf{B})\ar[u]_-{(f\times f)^{-1}[\cdot]}
   }
$$
commutes, i.e., for every $M\subseteq B$, $(f\times f)^{-1}[\Omega^{\mathbf{B}}(M)] = \Omega^{\mathbf{A}}(f^{-1}[M])$.
\end{remark}

\begin{remark}
Let $\mathbf{A}$ be a $\Sigma$-algebra. It is clear that given $L$, $L'\subseteq A$ such that $L\subseteq L'$, if $\Omega^{\mathbf{A}}(L)\subseteq \mathrm{Ker}(\mathrm{ch}^{L'})$, then $\Omega^{\mathbf{A}}(L)\subseteq \Omega^{\mathbf{A}}(L')$. However, in general, the mapping $\Omega^{\mathbf{A}}$ is not necessarily isotone, i.e., order-preserving.
\end{remark}

We next state, for a $\Sigma$-algebra $\mathbf{A}$, a necessary and sufficient condition for $\Omega^{\mathbf{A}}$ to be isotone.

\begin{proposition}
Let $\mathbf{A}$ be a $\Sigma$-algebra. Then the mapping $\Omega^{\mathbf{A}}$ is isotone if, and only if, for every nonempty set $J$ in $\ensuremath{\boldsymbol{\mathcal{U}}}$ and every $J$-indexed family of subsets $(L^{j})_{j\in J}$ of $A$, $\Omega^{\mathbf{A}}(\bigcap_{j\in J}L^{j}) = \bigcap_{j\in J}\Omega^{\mathbf{A}}(L^{j})$.
\end{proposition}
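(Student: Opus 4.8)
The plan is to prove the two implications separately, observing at the outset that one half of the asserted intersection identity is already available for free: Proposition~\ref{InterCCog and CCogInter} gives, for every nonempty family $(L^{j})_{j\in J}$ of subsets of $A$, the inclusion $\bigcap_{j\in J}\Omega^{\mathbf{A}}(L^{j})\subseteq \Omega^{\mathbf{A}}(\bigcap_{j\in J}L^{j})$ with no hypothesis whatsoever. Hence in both directions the real content concerns only the reverse inclusion $\Omega^{\mathbf{A}}(\bigcap_{j\in J}L^{j})\subseteq \bigcap_{j\in J}\Omega^{\mathbf{A}}(L^{j})$ and its relation to order-preservation.

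For the implication ``intersection identity $\Rightarrow$ isotone'', I would specialize the identity to a two-element index set. Given $L\subseteq L'\subseteq A$, take $J=\{0,1\}$ (which is nonempty) and the family $L^{0}=L$, $L^{1}=L'$; then $\bigcap_{j\in J}L^{j}=L\cap L'=L$, so the hypothesis yields $\Omega^{\mathbf{A}}(L)=\Omega^{\mathbf{A}}(L)\cap\Omega^{\mathbf{A}}(L')$, whence $\Omega^{\mathbf{A}}(L)\subseteq\Omega^{\mathbf{A}}(L')$. This is precisely the isotonicity of $\Omega^{\mathbf{A}}$.

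For the converse ``isotone $\Rightarrow$ intersection identity'', I would combine the free inclusion above with monotonicity. Since $\bigcap_{j\in J}L^{j}\subseteq L^{k}$ for every $k\in J$, isotonicity of $\Omega^{\mathbf{A}}$ gives $\Omega^{\mathbf{A}}(\bigcap_{j\in J}L^{j})\subseteq \Omega^{\mathbf{A}}(L^{k})$ for each $k$, and intersecting over $k$ produces $\Omega^{\mathbf{A}}(\bigcap_{j\in J}L^{j})\subseteq \bigcap_{j\in J}\Omega^{\mathbf{A}}(L^{j})$. Together with Proposition~\ref{InterCCog and CCogInter} this yields the desired equality.

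I do not expect a genuine obstacle here: the substantive work has already been done in Proposition~\ref{InterCCog and CCogInter}, and everything else is a purely order-theoretic manipulation. The only point requiring a little care is the forward direction, where one must keep in mind that the identity is quantified over all nonempty $J$ and then exploit it for the minimal interesting case $\mathrm{card}(J)=2$; it is exactly this instance that encodes order-preservation, since for subsets $L\subseteq L'$ is equivalent to $L\cap L'=L$.
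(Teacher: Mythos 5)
Your proposal is correct and follows essentially the same route as the paper's own proof: the forward direction specializes the intersection identity to the two-element family $\{L, L'\}$ with $L\subseteq L'$ (so $L\cap L'=L$), and the converse combines isotonicity applied to $\bigcap_{j\in J}L^{j}\subseteq L^{k}$ with the unconditional inclusion from Proposition~\ref{InterCCog and CCogInter}. No gaps; this matches the paper's argument step for step.
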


\begin{proof}
Let us suppose that for every nonempty set $J$ and every family of subsets $(L^{j})_{j\in J}$ of $A$, $\Omega^{\mathbf{A}}(\bigcap_{j\in J}L^{j}) = \bigcap_{j\in J}\Omega^{\mathbf{A}}(L^{j})$. Then given $L$, $L'\subseteq A$ such that $L\subseteq L'$, we have that $L\cap L' = L$. Therefore $\Omega^{\mathbf{A}}(L\cap L') = \Omega^{\mathbf{A}}(L)$. But $\Omega^{\mathbf{A}}(L\cap L') = \Omega^{\mathbf{A}}(L)\cap \Omega^{\mathbf{A}}(L')$. Hence $\Omega^{\mathbf{A}}(L)\subseteq \Omega^{\mathbf{A}}(L')$.

Reciprocally, let us suppose that $\Omega^{\mathbf{A}}$ is isotone. Then given a nonempty set $J$ and a family of subsets $(L^{j})_{j\in J}$ of $A$, since, for every $j\in J$, $\bigcap_{j\in J}L^{j}\subseteq L^{j}$, we have that, for every $j\in J$, $\Omega^{\mathbf{A}}(\bigcap_{j\in J}L^{j})\subseteq \Omega^{\mathbf{A}}(L^{j})$. Therefore $\Omega^{\mathbf{A}}(\bigcap_{j\in J}L^{j}) \subseteq \bigcap_{j\in J}\Omega^{\mathbf{A}}(L^{j})$. The inclusion $\bigcap_{j\in J}\Omega^{\mathbf{A}}(L^{j}) \subseteq \Omega^{\mathbf{A}}(\bigcap_{j\in J}L^{j})$ holds by Proposition~\ref{InterCCog and CCogInter}.
\end{proof}

We next provide, for a $\Sigma$-algebra $\mathbf{A}$ and a subset $L$ of $A$, a description of the equivalence classes of $A/\Omega^{\mathbf{A}}(L)$, the underlying $S$-sorted set of $\mathbf{A}/\Omega^{\mathbf{A}}(L)$. This description will be used afterwards, in the final section, to prove that two definitions of the concept of formation of regular languages with respect to $\Sigma$ are equivalent.

\begin{proposition}\label{DesClasCog}
Let $\mathbf{A}$ be a $\Sigma$-algebra, $L\subseteq A$, $t\in S$, and $a\in A_{t}$. If we denote by $\mathcal{X}_{L,t,a}$ the subset of $\mathrm{Sub}(A_{t})$ defined as follows:
$$
\mathcal{X}_{L,t,a} = \biggl\{ X\in \mathrm{Sub}(A_{t})\biggm|
\begin{gathered}
\exists\, s\in S\,\exists\, T\in \mathrm{Tl}_{t}(\mathbf{A})_{s}
\\[-3pt]
(X = T^{-1}[L_{s}] \And T(a)\in L_{s})
\end{gathered}
\biggr\},
$$
and by $\overline{\mathcal{X}}_{L,t,a}$ the subset of $\mathrm{Sub}(A_{t})$ defined as follows:
$$
\overline{\mathcal{X}}_{L,t,a} = \biggl\{ X\in \mathrm{Sub}(A_{t})\biggm|
\begin{gathered}
\exists\, s\in S\,\exists\, T\in \mathrm{Tl}_{t}(\mathbf{A})_{s}
\\[-3pt]
(X = T^{-1}[L_{s}] \And T(a)\nin L_{s})
\end{gathered}
\biggr\},
$$
then $[a]_{\Omega^{\mathbf{A}}(L)_{t}} = \bigcap\mathcal{X}_{L,t,a}-\bigcup\overline{\mathcal{X}}_{L,t,a}$.
\end{proposition}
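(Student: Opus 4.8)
The plan is to unwind both sides into explicit first-order conditions on an element $y\in A_{t}$ and to observe that these conditions coincide. By the definition of $\Omega^{\mathbf{A}}(L)_{t}$, the equivalence class of $a$ is
$$[a]_{\Omega^{\mathbf{A}}(L)_{t}} = \{\, y\in A_{t}\mid \forall\, s\in S\,\forall\, T\in \mathrm{Tl}_{t}(\mathbf{A})_{s}\;(T(a)\in L_{s}\leftrightarrow T(y)\in L_{s})\,\},$$
so the whole argument reduces to rewriting the right-hand side of the claim as exactly this set of $y$'s. Throughout I will use that, for $T\in \mathrm{Tl}_{t}(\mathbf{A})_{s}$ and $L_{s}\subseteq A_{s}$, one has $y\in T^{-1}[L_{s}]$ if, and only if, $T(y)\in L_{s}$.

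First I would compute the intersection. Since the members of $\mathcal{X}_{L,t,a}$ are precisely the sets $T^{-1}[L_{s}]$ with $T(a)\in L_{s}$, an element $y$ lies in $\bigcap\mathcal{X}_{L,t,a}$ exactly when, for every $s$ and every $T\in \mathrm{Tl}_{t}(\mathbf{A})_{s}$, $T(a)\in L_{s}$ implies $y\in T^{-1}[L_{s}]$, i.e. $T(a)\in L_{s}\Rightarrow T(y)\in L_{s}$. Next I would compute the union: $y\in \bigcup\overline{\mathcal{X}}_{L,t,a}$ holds exactly when there exist $s$ and $T$ with $T(a)\nin L_{s}$ and $T(y)\in L_{s}$. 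Hence $y$ lies outside $\bigcup\overline{\mathcal{X}}_{L,t,a}$ precisely when, for all $s$ and $T$, $T(a)\nin L_{s}\Rightarrow T(y)\nin L_{s}$, which is the contrapositive of $T(y)\in L_{s}\Rightarrow T(a)\in L_{s}$.

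Then the set-difference $\bigcap\mathcal{X}_{L,t,a}-\bigcup\overline{\mathcal{X}}_{L,t,a}$ consists of those $y\in A_{t}$ satisfying both universal conditions simultaneously, namely $T(a)\in L_{s}\Rightarrow T(y)\in L_{s}$ and $T(y)\in L_{s}\Rightarrow T(a)\in L_{s}$ for all $s$ and all $T\in \mathrm{Tl}_{t}(\mathbf{A})_{s}$. The conjunction of these two implications is exactly the biconditional $T(a)\in L_{s}\leftrightarrow T(y)\in L_{s}$, so the resulting set is precisely $[a]_{\Omega^{\mathbf{A}}(L)_{t}}$, which proves the equality.

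The only point requiring a little care is the degenerate behaviour of the empty family. Because $\mathrm{id}_{A_{t}}\in \mathrm{Tl}_{t}(\mathbf{A})_{t}$, taking $s=t$ and $T=\mathrm{id}_{A_{t}}$ shows $L_{t}\in \mathcal{X}_{L,t,a}$ when $a\in L_{t}$ and $L_{t}\in \overline{\mathcal{X}}_{L,t,a}$ when $a\nin L_{t}$, so at least one of the two families is always nonempty; in particular $a$ itself always lies in the set-difference. When $\mathcal{X}_{L,t,a}$ is empty (which can occur only if $a\nin L_{t}$ and no translation carries $a$ into $L$), the convention $\bigcap\varnothing = A_{t}$ in $\mathrm{Sub}(A_{t})$ matches the vacuously true universal condition computed above, and $\bigcup\varnothing = \varnothing$ matches the vacuously false existential condition, so the identification goes through in every case. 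The argument is thus entirely definitional; there is no substantive obstacle, and the only genuine care needed is faithful logical bookkeeping of the two implications together with these empty-family conventions.
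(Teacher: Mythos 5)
Your proof is correct and follows essentially the same route as the paper's: the paper establishes the two inclusions by element-chasing, which amounts to exactly the logical bookkeeping you carry out in one pass (membership in $\bigcap\mathcal{X}_{L,t,a}$ gives the implication $T(a)\in L_{s}\Rightarrow T(y)\in L_{s}$, avoidance of $\bigcup\overline{\mathcal{X}}_{L,t,a}$ gives the contrapositive of the converse, and together they yield the biconditional defining $\Omega^{\mathbf{A}}(L)_{t}$). Your explicit remark about the convention $\bigcap\varnothing = A_{t}$ within $\mathrm{Sub}(A_{t})$ is a welcome point of care that the paper's proof leaves implicit.
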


\begin{proof}
We first prove that $[a]_{\Omega^{\mathbf{A}}(L)_{t}}\subseteq\bigcap\mathcal{X}_{L,t,a}-\bigcup\overline{\mathcal{X}}_{L,t,a}$.
Let $b$ be an element of $A_{t}$ such that $b\in [a]_{\Omega^{\mathbf{A}}(L)_{t}}$. Then $(a,b)\in \Omega^{\mathbf{A}}(L)_{t}$. Hence, for every $s\in S$ and every $T\in \mathrm{Tl}_{t}(\mathbf{A})_{s}$, we have that $T(a)\in L_{s}$ if, and only if, $T(b)\in L_{s}$. We want to show that $b\in \bigcap\mathcal{X}_{L,t,a}$ and $b\nin \bigcup\overline{\mathcal{X}}_{L,t,a}$. Let $s$ be an element of $S$ and $T$ and element of $\mathrm{Tl}_{t}(\mathbf{A})_{s}$. We want to verify that $b$ belongs to $ T^{-1}[L_{s}]$ when $T(a)\in L_{s}$. But, if $T(a)\in L_{s}$, then, by hypothesis, we have, in particular, that $T(b)\in L_{s}$, hence $b\in T^{-1}[L_{s}]$. Let $s$ be an element of $S$ and $T$ and element of $\mathrm{Tl}_{t}(\mathbf{A})_{s}$. We want to verify that $b$ does not belongs to $ T^{-1}[L_{s}]$ when $T(a)\nin L_{s}$. But, if $T(a)\nin L_{s}$, then, by hypothesis, we have, in particular, that $T(b)\nin L_{s}$, hence $b\nin T^{-1}[L_{s}]$.

We next prove that $\bigcap\mathcal{X}_{L,t,a}-\bigcup\overline{\mathcal{X}}_{L,t,a}\subseteq [a]_{\Omega^{\mathbf{A}}(L)_{t}}$.
Let $b$ be an element of $A_{t}$ such that $b\in \bigcap\mathcal{X}_{L,t,a}-\bigcup\overline{\mathcal{X}}_{L,t,a}$. Then $b\in \bigcap\mathcal{X}_{L,t,a}$ \emph{and} $b\nin \bigcup\overline{\mathcal{X}}_{L,t,a}$. Let $s$ be an element of $S$, $T$ and element of $\mathrm{Tl}_{t}(\mathbf{A})_{s}$, and let us suppose that $T(a)\in L_{s}$. Then $b\in T^{-1}[L_{s}]$. Hence $T(b)\in L_{s}$. Let $s$ be an element of $S$, $T$ and element of $\mathrm{Tl}_{t}(\mathbf{A})_{s}$, and let us suppose that $T(a)\nin L_{s}$. Then $b\nin T^{-1}[L_{s}]$. Hence $T(b)\nin L_{s}$. Thus we have that if $T(a)\in L_{s}$, then $T(b)\in L_{s}$ \emph{and} if $T(a)\nin L_{s}$, then $T(b)\nin L_{s}$. Therefore $T(a)\in L_{s}$ if, and only if, $T(b)\in L_{s}$.
\end{proof}

\section{$\Sigma$-finite index congruence formation, $\Sigma$-regular language formation, and an Eilenberg type theorem for them.}

In this section, we define, for an $S$-sorted signature $\Sigma$ and under a suitable condition on the free $\Sigma$-algebras, the concepts of formation of finite index congruences with respect to $\Sigma$, of formation of finite $\Sigma$-algebras, of formation of regular languages with respect to $\Sigma$, and of $\mathrm{BPS}$-formation of regular languages  with respect to $\Sigma$, which is a generalization to the many-sorted case of that proposed in~\cite{bps14} on p.~1748, and of which we prove that is equivalent to that of formation of regular languages with respect to $\Sigma$. Moreover, we investigate the properties of the aforementioned formations and prove that the algebraic lattices of all $\Sigma$-finite index congruence formations, of all $\Sigma$-finite algebra formations, and of all $\Sigma$-regular language formations are isomorphic.

In the remainder of this section, following a strongly rooted tradition in the fields of formal languages and automata, we agree to call languages the subsets of $\mathrm{T}_{\Sigma}(A)$, the underlying $S$-sorted set of a free $\Sigma$-algebra $\mathbf{T}_{\Sigma}(A)$ on an $S$-sorted set $A$.

\begin{proposition}\label{Cong2LangBasic}
Let $\mathfrak{F}$ be a formation of congruences with respect to $\Sigma$, then the function $\mathcal{L}_{\mathfrak{F}}$ from   $\ensuremath{\boldsymbol{\mathcal{U}}}^{S}$ which assigns to $A\in \ensuremath{\boldsymbol{\mathcal{U}}}^{S}$ the subset
\begin{align}
\mathcal{L}_{\mathfrak{F}}(A) &= \{L\in \mathrm{Sub}(\mathrm{T}_{\Sigma}(A))\mid \exists\, \Phi\in\mathfrak{F}(A)\,(L = [L]^{\Phi})\}\notag \\
 &= \{L\in \mathrm{Sub}(\mathrm{T}_{\Sigma}(A))\mid \Omega^{\mathbf{T}_{\Sigma}(A)}(L)\in \mathfrak{F}(A)\},\notag
\end{align}
of $\mathrm{Sub}(\mathrm{T}_{\Sigma}(A))$ has the following properties:
\begin{enumerate}
\item For every $A\in \ensuremath{\boldsymbol{\mathcal{U}}}^{S}$,  $\nabla^{\mathbf{T}_{\Sigma}(A)}\!-\!\mathrm{Sat}(\mathrm{T}_{\Sigma}(A))\subseteq \mathcal{L}_{\mathfrak{F}}(A)$. In particular, $\varnothing^{S}$ and $\mathrm{T}_{\Sigma}(A)$ are languages in $\mathcal{L}_{\mathfrak{F}}(A)$.

\item For every $A\in \ensuremath{\boldsymbol{\mathcal{U}}}^{S}$ and every languages $L$ and $L'$ in $\mathcal{L}_{\mathfrak{F}}(A)$,
     $$
     (\Omega^{\mathbf{T}_{\Sigma}(A)}(L)\cap \Omega^{\mathbf{T}_{\Sigma}(A)}(L'))\!-\!\mathrm{Sat}(\mathrm{T}_{\Sigma}(A))\subseteq \mathcal{L}_{\mathfrak{F}}(A).
     $$

\item For every $B\in \ensuremath{\boldsymbol{\mathcal{U}}}^{S}$, every language $M\in \mathcal{L}_{\mathfrak{F}}(B)$, and every homomorphism $f\colon \mathbf{T}_{\Sigma}(A)\mor \mathbf{T}_{\Sigma}(B)$, if \[\mathrm{pr}^{\Omega^{\mathbf{T}_{\Sigma}(B)}(M)}\circ f\colon \mathbf{T}_{\Sigma}(A)\mor \mathbf{T}_{\Sigma}(B)/\Omega^{\mathbf{T}_{\Sigma}(B)}(M)\] is an epimorphism, then $\mathrm{Ker}(\mathrm{pr}^{\Omega^{\mathbf{T}_{\Sigma}(B)}(M)}\circ f)\!-\!\mathrm{Sat}(\mathrm{T}_{\Sigma}(A))\subseteq \mathcal{L}_{\mathfrak{F}}(A)$.
\end{enumerate}
\end{proposition}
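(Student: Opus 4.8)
The plan is to reduce all three statements to a single mechanism: a subset $L$ lies in $\mathcal{L}_{\mathfrak{F}}(A)$ precisely when its cogenerated congruence $\Omega^{\mathbf{T}_{\Sigma}(A)}(L)$ belongs to the filter $\mathfrak{F}(A)$. First I would verify the equivalence of the two displayed descriptions of $\mathcal{L}_{\mathfrak{F}}(A)$, since it is the hinge of everything that follows. If $L = [L]^{\Phi}$ for some $\Phi\in\mathfrak{F}(A)$, then Proposition~\ref{CharacSatCCog} gives $\Phi\subseteq \Omega^{\mathbf{T}_{\Sigma}(A)}(L)$, and because $\mathfrak{F}(A)$ is an upward-closed filter we conclude $\Omega^{\mathbf{T}_{\Sigma}(A)}(L)\in\mathfrak{F}(A)$. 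Conversely, if $\Omega^{\mathbf{T}_{\Sigma}(A)}(L)\in\mathfrak{F}(A)$, then taking $\Phi = \Omega^{\mathbf{T}_{\Sigma}(A)}(L)$ works, since by Proposition~\ref{CharacCogenCong} the congruence $\Omega^{\mathbf{T}_{\Sigma}(A)}(L)$ saturates $L$, i.e.\ $L = [L]^{\Omega^{\mathbf{T}_{\Sigma}(A)}(L)}$. Thus membership in $\mathcal{L}_{\mathfrak{F}}(A)$ is governed entirely by whether the cogenerated congruence lands in $\mathfrak{F}(A)$.

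With this in hand, each item follows the same two-move pattern: convert a saturation hypothesis on a language into a congruence inclusion via Proposition~\ref{CharacSatCCog}, then invoke a closure property of $\mathfrak{F}$ to place the relevant congruence in $\mathfrak{F}(A)$. For item (1), if $L = [L]^{\nabla^{\mathbf{T}_{\Sigma}(A)}}$ then Proposition~\ref{CharacSatCCog} forces $\nabla^{\mathbf{T}_{\Sigma}(A)}\subseteq \Omega^{\mathbf{T}_{\Sigma}(A)}(L)$, hence $\Omega^{\mathbf{T}_{\Sigma}(A)}(L) = \nabla^{\mathbf{T}_{\Sigma}(A)}$; since the top element of $\mathbf{Cgr}(\mathbf{T}_{\Sigma}(A))$ lies in every filter, $L\in\mathcal{L}_{\mathfrak{F}}(A)$. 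The ``in particular'' clause is then immediate because $\varnothing^{S}$ and $\mathrm{T}_{\Sigma}(A)$ are $\nabla^{\mathbf{T}_{\Sigma}(A)}$-saturated (cf.\ the remark after Proposition~\ref{NablaSat}). For item (2), writing $\Phi = \Omega^{\mathbf{T}_{\Sigma}(A)}(L)$ and $\Psi = \Omega^{\mathbf{T}_{\Sigma}(A)}(L')$, the equivalence above gives $\Phi,\Psi\in\mathfrak{F}(A)$, so $\Phi\cap\Psi\in\mathfrak{F}(A)$ by the meet-closure of the filter; any $M$ with $M = [M]^{\Phi\cap\Psi}$ then satisfies $\Phi\cap\Psi\subseteq\Omega^{\mathbf{T}_{\Sigma}(A)}(M)$, whence $\Omega^{\mathbf{T}_{\Sigma}(A)}(M)\in\mathfrak{F}(A)$ by upward closure and $M\in\mathcal{L}_{\mathfrak{F}}(A)$.

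Item (3) is the only place where the genuinely defining axiom of a formation of congruences---condition (2) of Definition~\ref{DefFormCgr}, concerning homomorphisms and kernels---is used, so this is where the real content sits. Setting $\Theta = \Omega^{\mathbf{T}_{\Sigma}(B)}(M)$, the hypothesis $M\in\mathcal{L}_{\mathfrak{F}}(B)$ gives $\Theta\in\mathfrak{F}(B)$; since $\mathrm{pr}^{\Theta}\circ f$ is an epimorphism, Definition~\ref{DefFormCgr}(2) yields $\mathrm{Ker}(\mathrm{pr}^{\Theta}\circ f)\in\mathfrak{F}(A)$. Then any language $N$ saturated by $\mathrm{Ker}(\mathrm{pr}^{\Theta}\circ f)$ satisfies $\mathrm{Ker}(\mathrm{pr}^{\Theta}\circ f)\subseteq\Omega^{\mathbf{T}_{\Sigma}(A)}(N)$ by Proposition~\ref{CharacSatCCog}, and upward closure of $\mathfrak{F}(A)$ finishes the argument.

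I do not expect a serious obstacle here: the proposition is essentially a transport of the filter and formation axioms of $\mathfrak{F}$ across the correspondence $L\mapsto\Omega^{\mathbf{T}_{\Sigma}(A)}(L)$. The one point demanding care is the preliminary equivalence of the two descriptions of $\mathcal{L}_{\mathfrak{F}}(A)$, since the whole argument silently uses the direction ``$L\in\mathcal{L}_{\mathfrak{F}}(A)\Rightarrow\Omega^{\mathbf{T}_{\Sigma}(A)}(L)\in\mathfrak{F}(A)$'' to extract a congruence in $\mathfrak{F}$ from a saturation hypothesis; without it, the filter closure properties could not be applied.
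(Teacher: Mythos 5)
Your proof is correct and follows essentially the same route as the paper's: each item is handled by converting a saturation hypothesis into membership of the relevant congruence in the filter $\mathfrak{F}(A)$, invoking the filter axioms (top element, meet-closure, upward closure) for items (1) and (2) and the defining axiom of Definition~\ref{DefFormCgr}(2) for item (3). The only difference is that you make explicit the equivalence of the two displayed descriptions of $\mathcal{L}_{\mathfrak{F}}(A)$ (via Propositions~\ref{CharacSatCCog} and~\ref{CharacCogenCong}), which the paper asserts in the statement and uses silently in its proof---a worthwhile clarification, but not a different argument.
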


\begin{proof}
That $\nabla^{\mathbf{T}_{\Sigma}(A)}\!-\!\mathrm{Sat}(\mathrm{T}_{\Sigma}(A))\subseteq \mathcal{L}_{\mathfrak{F}}(A)$ follows from the fact that the congruence $\nabla^{\mathbf{T}_{\Sigma}(A)}$ belongs to $\mathfrak{F}(A)$.

Let $L$ and $L'$  be languages in  $\mathcal{L}_{\mathfrak{F}}(A)$. Then $\Omega^{\mathbf{T}_{\Sigma}(A)}(L)$ and  $\Omega^{\mathbf{T}_{\Sigma}(A)}(L')$ are congruences in $\mathfrak{F}(A)$. Thus $\Omega^{\mathbf{T}_{\Sigma}(A)}(L)\cap\Omega^{\mathbf{T}_{\Sigma}(A)}(L')$ is a congruence in $\mathfrak{F}(A)$. Therefore
$(\Omega^{\mathbf{T}_{\Sigma}(A)}(L)\cap \Omega^{\mathbf{T}_{\Sigma}(A)}(L'))\!-\!\mathrm{Sat}(\mathrm{T}_{\Sigma}(A))\subseteq \mathcal{L}_{\mathfrak{F}}(A)$.

For every $B\in \boldsymbol{\mathcal{U}}^{S}$, every language $M\in \mathcal{L}_{\mathfrak{F}}(B)$, and every homomorphism $f\colon \mathbf{T}_{\Sigma}(A)\mor \mathbf{T}_{\Sigma}(B)$, if \[\mathrm{pr}^{\Omega^{\mathbf{T}_{\Sigma}(B)}(M)}\circ f\colon \mathbf{T}_{\Sigma}(A)\mor \mathbf{T}_{\Sigma}(B)/\Omega^{\mathbf{T}_{\Sigma}(B)}(M)\] is an epimorphism, then $\mathrm{Ker}(\mathrm{pr}^{\Omega^{\mathbf{T}_{\Sigma}(B)}(M)}\circ f)$ is a congruence in $\mathfrak{F}(A)$. Therefore $\mathrm{Ker}(\mathrm{pr}^{\Omega^{\mathbf{T}_{\Sigma}(B)}(M)}\circ f)\!-\!\mathrm{Sat}(\mathrm{T}_{\Sigma}(A))\subseteq \mathcal{L}_{\mathfrak{F}}(A)$.
\end{proof}

\begin{remark}
If $\mathfrak{F}$ is a formation of congruences with respect to $\Sigma$, then, for every $A\in \ensuremath{\boldsymbol{\mathcal{U}}}^{S}$, $\mathcal{L}_{\mathfrak{F}}(A) = \bigcup_{\Phi\in \mathfrak{F}(A)}\Phi\!-\!\mathrm{Sat}(\mathrm{T}_{\Sigma}(A)) = \bigcup_{\Phi\in \mathfrak{F}(A)}\mathrm{Fix}([\cdot]^{\Phi})$.
\end{remark}

\begin{remark}
Given an $S$-sorted set $A$, an $L\in \mathcal{L}_{\mathfrak{F}}(A)$, and a $\Psi\in \mathrm{Cgr}(\mathbf{T}_{\Sigma}(A))$, if $\Omega^{\mathbf{T}_{\Sigma}(A)}(L)\subseteq \Psi$, then $[L]^{\Psi}\in \mathcal{L}_{\mathfrak{F}}(A)$. In fact, from $L\in \mathcal{L}_{\mathfrak{F}}(A)$ we infer that $L = [L]^{\Phi}$, for some $\Phi\in \mathfrak{F}(A)$. But $\Phi\subseteq \Omega^{\mathbf{T}_{\Sigma}(A)}(L)$, because $\Omega^{\mathbf{T}_{\Sigma}(A)}(L)$ is the greatest congruence on $\mathbf{T}_{\Sigma}(A)$ which saturates $L$. Hence $\Phi\subseteq \Psi$. Therefore, by Proposition~\ref{PropIncSat}, $[[L]^{\Psi}]^{\Phi} = [L]^{\Psi}$. So $[L]^{\Psi}\in \mathcal{L}_{\mathfrak{F}}(A)$.
\end{remark}

\begin{corollary}
For every $L\in \mathcal{L}_{\mathfrak{F}}(A)$, $\Omega^{\mathbf{T}_{\Sigma}(A)}(L)\!-\!\mathrm{Sat}(\mathrm{T}_{\Sigma}(A))\subseteq \mathcal{L}_{\mathfrak{F}}(A)$, entails that, for every $s$, $t\in S$, every $T\in \mathrm{Tl}_{t}(\mathbf{T}_{\Sigma}(A))_{s}$, and every $L\in \mathcal{L}_{\mathfrak{F}}(A)$, $T^{-1}[L]\in\mathcal{L}_{\mathfrak{F}}(A)$, i.e., $\mathcal{L}_{\mathfrak{F}}$ is closed under the inverse image of translations.
\end{corollary}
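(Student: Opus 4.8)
The plan is to read the corollary as two linked claims: first the closure property $(\ast)$ that for every $L\in\mathcal{L}_{\mathfrak{F}}(A)$ one has $\Omega^{\mathbf{T}_{\Sigma}(A)}(L)\!-\!\mathrm{Sat}(\mathrm{T}_{\Sigma}(A))\subseteq\mathcal{L}_{\mathfrak{F}}(A)$, which repackages the preceding remark, and then the consequence that $\mathcal{L}_{\mathfrak{F}}$ is closed under inverse images of translations. I would carry these out in that order, since the second is an immediate application of the first.

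First I would establish $(\ast)$. Fix $L\in\mathcal{L}_{\mathfrak{F}}(A)$; by the second description of $\mathcal{L}_{\mathfrak{F}}(A)$ in Proposition~\ref{Cong2LangBasic} this means $\Omega^{\mathbf{T}_{\Sigma}(A)}(L)\in\mathfrak{F}(A)$. Let $X$ be an arbitrary $\Omega^{\mathbf{T}_{\Sigma}(A)}(L)$-saturated subset of $\mathrm{T}_{\Sigma}(A)$. By Proposition~\ref{CharacSatCCog}, applied to the subset $X$ with congruence $\Omega^{\mathbf{T}_{\Sigma}(A)}(L)$, being $\Omega^{\mathbf{T}_{\Sigma}(A)}(L)$-saturated is equivalent to $\Omega^{\mathbf{T}_{\Sigma}(A)}(L)\subseteq\Omega^{\mathbf{T}_{\Sigma}(A)}(X)$. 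Since $\mathfrak{F}(A)$ is a filter of $\mathbf{Cgr}(\mathbf{T}_{\Sigma}(A))$ it is upward closed, so from $\Omega^{\mathbf{T}_{\Sigma}(A)}(L)\in\mathfrak{F}(A)$ I would conclude $\Omega^{\mathbf{T}_{\Sigma}(A)}(X)\in\mathfrak{F}(A)$, that is, $X\in\mathcal{L}_{\mathfrak{F}}(A)$, proving $(\ast)$.

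Next, given $s,t\in S$, a translation $T\in\mathrm{Tl}_{t}(\mathbf{T}_{\Sigma}(A))_{s}$, and $L\in\mathcal{L}_{\mathfrak{F}}(A)$, I would invoke Proposition~\ref{TAntiTrans} to get $\Omega^{\mathbf{T}_{\Sigma}(A)}(L)\subseteq\Omega^{\mathbf{T}_{\Sigma}(A)}(T^{-1}[L])$. By Proposition~\ref{CharacSatCCog} this inclusion says precisely that $T^{-1}[L]$ is $\Omega^{\mathbf{T}_{\Sigma}(A)}(L)$-saturated, i.e. $T^{-1}[L]\in\Omega^{\mathbf{T}_{\Sigma}(A)}(L)\!-\!\mathrm{Sat}(\mathrm{T}_{\Sigma}(A))$. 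Applying $(\ast)$ then yields $T^{-1}[L]\in\mathcal{L}_{\mathfrak{F}}(A)$, which is the asserted closure under inverse images of translations.

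The argument is short and I do not anticipate a serious obstacle; the only point requiring care is that Proposition~\ref{CharacSatCCog} is used in opposite directions in the two halves—first to convert saturatedness of $X$ into the congruence inclusion exploited by the filter property, and then to convert the congruence inclusion furnished by Proposition~\ref{TAntiTrans} back into saturatedness of $T^{-1}[L]$. I would also note that both Proposition~\ref{CharacSatCCog} and Proposition~\ref{TAntiTrans} hold for an arbitrary $\Sigma$-algebra, so applying them to $\mathbf{T}_{\Sigma}(A)$ requires no appeal to the finiteness-of-support hypothesis standing in force throughout this section.
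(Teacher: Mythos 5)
Your proof is correct and follows essentially the same route as the paper, whose entire proof is the citation of Proposition~\ref{TAntiTrans}: you simply make explicit the two routine conversions via Proposition~\ref{CharacSatCCog} (congruence inclusion $\leftrightarrow$ saturatedness) and the filter property that the paper leaves implicit, including a verification of the antecedent $(\ast)$ that the corollary's ``entails'' phrasing already takes as given.
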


\begin{proof}
It follows from Proposition~\ref{TAntiTrans}.
\end{proof}

\begin{corollary}\label{CorolariAtoms}
For every pair of congruences $\Phi$ and $\Psi$ on $\mathbf{T}_{\Sigma}(A)$, every sort $s$ in $S$, and every term $P\in\mathrm{T}_{\Sigma}(A)_s$. If $\delta^{s,[P]_{\Phi_s}}$ and $\delta^{s,[P]_ {\Psi_s}}$ are languages in $\mathcal{L}_{\mathfrak{F}}(A)$, then so is $\delta^{s,[P]_{(\Phi\cap\Psi)_s}}$.
\end{corollary}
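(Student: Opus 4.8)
The plan is to reduce the statement to Proposition~\ref{Cong2LangBasic}(2) by exhibiting $\delta^{s,[P]_{(\Phi\cap\Psi)_s}}$ as the intersection of the two given languages and then checking that this intersection is saturated by the meet of their cogenerated congruences. First I would record the key algebraic identity. Writing $L = \delta^{s,[P]_{\Phi_s}}$ and $L' = \delta^{s,[P]_{\Psi_s}}$, which lie in $\mathcal{L}_{\mathfrak{F}}(A)$ by hypothesis, I note that $(\Phi\cap\Psi)_s = \Phi_s\cap\Psi_s$, so the equivalence class satisfies $[P]_{(\Phi\cap\Psi)_s} = [P]_{\Phi_s}\cap[P]_{\Psi_s}$; since the delta construction and the intersection of $S$-sorted sets are both computed componentwise, we have $\delta^{s,X\cap Y} = \delta^{s,X}\cap\delta^{s,Y}$. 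Combining these yields
$$
\delta^{s,[P]_{(\Phi\cap\Psi)_s}} = L\cap L',
$$
which reduces the claim to proving that $L\cap L'\in\mathcal{L}_{\mathfrak{F}}(A)$.

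Next I would verify that $L\cap L'$ is saturated by $\Theta = \Omega^{\mathbf{T}_{\Sigma}(A)}(L)\cap\Omega^{\mathbf{T}_{\Sigma}(A)}(L')$. By Proposition~\ref{CharacCogenCong}, $\Omega^{\mathbf{T}_{\Sigma}(A)}(L)$ saturates $L$, so $L = [L]^{\Omega^{\mathbf{T}_{\Sigma}(A)}(L)}$; since $\Theta\subseteq\Omega^{\mathbf{T}_{\Sigma}(A)}(L)$, Corollary~\ref{IncSat} gives $L\in\Theta\!-\!\mathrm{Sat}(\mathrm{T}_{\Sigma}(A))$, and symmetrically $L'\in\Theta\!-\!\mathrm{Sat}(\mathrm{T}_{\Sigma}(A))$. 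Because the operator $[\cdot]^{\Theta}$ is completely multiplicative by Proposition~\ref{SatOperator}, one obtains $[L\cap L']^{\Theta} = [L]^{\Theta}\cap[L']^{\Theta} = L\cap L'$, that is, $L\cap L'\in\Theta\!-\!\mathrm{Sat}(\mathrm{T}_{\Sigma}(A))$.

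Finally, since $L$ and $L'$ belong to $\mathcal{L}_{\mathfrak{F}}(A)$, Proposition~\ref{Cong2LangBasic}(2) applies and gives $\Theta\!-\!\mathrm{Sat}(\mathrm{T}_{\Sigma}(A))\subseteq\mathcal{L}_{\mathfrak{F}}(A)$, whence $L\cap L' = \delta^{s,[P]_{(\Phi\cap\Psi)_s}}\in\mathcal{L}_{\mathfrak{F}}(A)$, as required. There is no genuinely hard obstacle here: the whole statement is an instance of the closure of $\mathcal{L}_{\mathfrak{F}}(A)$ under the saturation of intersections of cogenerated congruences already encapsulated in Proposition~\ref{Cong2LangBasic}(2). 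The only point requiring care is the bookkeeping identity $\delta^{s,[P]_{(\Phi\cap\Psi)_s}} = L\cap L'$, since it is precisely this recognition of the target atom as an intersection that makes the proposition directly applicable.
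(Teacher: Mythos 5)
Your proof is correct, but it takes a somewhat different route from the paper's. Both arguments pivot on the same bookkeeping identity $\delta^{s,[P]_{\Phi_s}}\cap\delta^{s,[P]_{\Psi_s}}=\delta^{s,[P]_{(\Phi\cap\Psi)_s}}$; the difference is how the intersection is then placed in $\mathcal{L}_{\mathfrak{F}}(A)$. You show that $L\cap L'$ is saturated by $\Theta=\Omega^{\mathbf{T}_{\Sigma}(A)}(L)\cap\Omega^{\mathbf{T}_{\Sigma}(A)}(L')$ and quote Proposition~\ref{Cong2LangBasic}(2), thereby reusing a closure property already proved about $\mathcal{L}_{\mathfrak{F}}$. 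The paper instead works directly with the filter $\mathfrak{F}(A)$: from $\Omega^{\mathbf{T}_{\Sigma}(A)}(L),\Omega^{\mathbf{T}_{\Sigma}(A)}(L')\in\mathfrak{F}(A)$ it gets $\Theta\in\mathfrak{F}(A)$, then uses the inclusion $\Theta\subseteq\Omega^{\mathbf{T}_{\Sigma}(A)}(L\cap L')$ of Proposition~\ref{InterCCog and CCogInter} together with upward closure of the filter to conclude $\Omega^{\mathbf{T}_{\Sigma}(A)}(L\cap L')\in\mathfrak{F}(A)$, i.e., $L\cap L'\in\mathcal{L}_{\mathfrak{F}}(A)$. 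The two routes are equivalent via Proposition~\ref{CharacSatCCog} (a language is $\Theta$-saturated if, and only if, $\Theta$ lies below its cogenerated congruence), so what you gain is a shorter derivation from an already-established statement, while the paper's version re-runs the congruence-level filter argument and stays entirely inside $\mathfrak{F}$. One caution about a citation you make: you invoke the ``completely multiplicative'' clause of Proposition~\ref{SatOperator} to get $[L\cap L']^{\Theta}=[L]^{\Theta}\cap[L']^{\Theta}$. In the instance you need, where $L$ and $L'$ are already $\Theta$-saturated, this identity is valid; but as stated for arbitrary subsets it can fail (take two distinct singletons inside a single $\nabla$-class: the saturation of their empty intersection is empty, while the intersection of their saturations is not). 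The safer reference for your step is Proposition~\ref{CABA Saturades}, which gives directly that $\Theta\!-\!\mathrm{Sat}(\mathrm{T}_{\Sigma}(A))$ is closed under binary intersections.
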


\begin{proof}
Let $s$ be a sort in $S$ and $P\in\mathbf{T}_{\Sigma}(A)_s$. If $\delta^{s,[P]_{\Phi_s}}$ and $\delta^{s,[P]_ {\Psi_s}}$ are languages in $\mathcal{L}_{\mathfrak{F}}(A)$, then  $\Omega^{\mathbf{T}_{\Sigma}(A)}(\delta^{s,[P]_{\Phi_s}})\in \mathfrak{F}(A)$ and
$\Omega^{\mathbf{T}_{\Sigma}(A)}(\delta^{s,[P]_{\Psi_s}})\in \mathfrak{F}(A)$. Hence $\Omega^{\mathbf{T}_{\Sigma}(A)}(\delta^{s,[P]_{\Phi_s}})\cap \Omega^{\mathbf{T}_{\Sigma}(A)}(\delta^{s,[P]_{\Psi_s}})\in \mathfrak{F}(A)$. But we have that $$\Omega^{\mathbf{T}_{\Sigma}(A)}(\delta^{s,[P]_{\Phi_s}})\cap \Omega^{\mathbf{T}_{\Sigma}(A)}(\delta^{s,[P]_{\Psi_s}})\subseteq \Omega^{\mathbf{T}_{\Sigma}(A)}(\delta^{s,[P]_{\Phi_s}}\cap \delta^{s,[P]_{\Psi_s}}).$$ Hence, $\Omega^{\mathbf{T}_{\Sigma}(A)}(\delta^{s,[P]_{\Phi_s}}\cap \delta^{s,[P]_{\Psi_s}})\in \mathfrak{F}(A)$. On the other hand, we have that $$\delta^{s,[P]_{\Phi_s}}\cap \delta^{s,[P]_{\Psi_s}}=\delta^{s,[P]_{\Phi_s\cap\Psi_s}}=\delta^{s,[P]_{(\Phi\cap\Psi)_s}}.$$ Thus $\Omega^{\mathbf{T}_{\Sigma}(A)}(\delta^{s,[P]_{\Phi_s}}\cap \delta^{s,[P]_{\Psi_s}}) = \Omega^{\mathbf{T}_{\Sigma}(A)}(\delta^{s,[P]_{(\Phi\cap\Psi)_s}})$. Therefore, $\delta^{s,[P]_{(\Phi\cap\Psi)_s}}\in \mathfrak{F}(A)$.
\end{proof}

\begin{corollary}\label{CorolariAlgebraBooleana}
If $L, L'\in \mathcal{L}_{\mathfrak{F}}(A)$, then $L\cup L'$, $L\cap L'$ and $\complement_{\mathbf{T}_{\Sigma}(A)}L$ are in $\mathcal{L}_{\mathfrak{F}}(A)$ i.e., $\mathcal{L}_{\mathfrak{F}}(A)$ is a Boolean subalgebra of $\mathbf{Sub}(\mathrm{T}_{\Sigma}(A))$, the Boolean algebra of all subsets of the underlying $S$-sorted set of the free $\Sigma$-algebra $\mathbf{T}_{\Sigma}(A)$ on $A$.
\end{corollary}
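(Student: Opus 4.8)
The plan is to exhibit a single Boolean algebra that simultaneously contains $L$ and $L'$ and is entirely contained in $\mathcal{L}_{\mathfrak{F}}(A)$; the three required closure properties then come essentially for free from its Boolean structure. I would set $\Phi = \Omega^{\mathbf{T}_{\Sigma}(A)}(L)\cap \Omega^{\mathbf{T}_{\Sigma}(A)}(L')$, which is a congruence on $\mathbf{T}_{\Sigma}(A)$. By part~(2) of Proposition~\ref{Cong2LangBasic} we have $\Phi\!-\!\mathrm{Sat}(\mathrm{T}_{\Sigma}(A))\subseteq \mathcal{L}_{\mathfrak{F}}(A)$, so it suffices to prove that the Boolean combinations $L\cup L'$, $L\cap L'$, and $\complement_{\mathbf{T}_{\Sigma}(A)}L$ all belong to $\Phi\!-\!\mathrm{Sat}(\mathrm{T}_{\Sigma}(A))$.

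First I would verify that $L$ and $L'$ themselves lie in $\Phi\!-\!\mathrm{Sat}(\mathrm{T}_{\Sigma}(A))$. By Proposition~\ref{CharacCogenCong}, $\Omega^{\mathbf{T}_{\Sigma}(A)}(L)$ saturates $L$, that is, $L\in \Omega^{\mathbf{T}_{\Sigma}(A)}(L)\!-\!\mathrm{Sat}(\mathrm{T}_{\Sigma}(A))$, and symmetrically for $L'$. Since $\Phi\subseteq \Omega^{\mathbf{T}_{\Sigma}(A)}(L)$ and $\Phi\subseteq \Omega^{\mathbf{T}_{\Sigma}(A)}(L')$, the antitonicity of saturation (Corollary~\ref{IncSat}) gives $\Omega^{\mathbf{T}_{\Sigma}(A)}(L)\!-\!\mathrm{Sat}(\mathrm{T}_{\Sigma}(A))\subseteq \Phi\!-\!\mathrm{Sat}(\mathrm{T}_{\Sigma}(A))$ together with the analogous inclusion for $L'$; hence both $L$ and $L'$ belong to $\Phi\!-\!\mathrm{Sat}(\mathrm{T}_{\Sigma}(A))$. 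This simultaneous membership of $L$ and $L'$ in \emph{one} common saturation class is the conceptual crux of the argument.

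To conclude, I would invoke Proposition~\ref{CABA Saturades}: $\Phi\!-\!\mathbf{Sat}(\mathrm{T}_{\Sigma}(A))$ is a complete atomic Boolean algebra ordered by $\subseteq$. Because, by Proposition~\ref{SatOperator}, the operator $[\cdot]^{\Phi}$ is completely additive, completely multiplicative, and sends relative complements of fixed points to fixed points, the lattice operations and the complementation of this CABA coincide with the set-theoretic $\cup$, $\cap$, and $\complement_{\mathbf{T}_{\Sigma}(A)}$. Therefore, from $L, L'\in \Phi\!-\!\mathrm{Sat}(\mathrm{T}_{\Sigma}(A))$ I obtain at once $L\cup L'$, $L\cap L'$, $\complement_{\mathbf{T}_{\Sigma}(A)}L\in \Phi\!-\!\mathrm{Sat}(\mathrm{T}_{\Sigma}(A))\subseteq \mathcal{L}_{\mathfrak{F}}(A)$, which is precisely the asserted Boolean subalgebra property.

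Since this route is essentially bookkeeping over already-established lemmas, no serious obstacle arises; the only point demanding care is the reduction flagged above, namely passing to the single congruence $\Phi = \Omega^{\mathbf{T}_{\Sigma}(A)}(L)\cap \Omega^{\mathbf{T}_{\Sigma}(A)}(L')$ so that $L$ and $L'$ become elements of the same Boolean algebra. Alternatively, one could argue componentwise: closure under complement is immediate from $\Omega^{\mathbf{T}_{\Sigma}(A)}(\complement_{\mathbf{T}_{\Sigma}(A)}L) = \Omega^{\mathbf{T}_{\Sigma}(A)}(L)$ (Proposition~\ref{Compl}); closure under intersection follows from $\Omega^{\mathbf{T}_{\Sigma}(A)}(L)\cap \Omega^{\mathbf{T}_{\Sigma}(A)}(L')\in \mathfrak{F}(A)$ together with $\Omega^{\mathbf{T}_{\Sigma}(A)}(L)\cap \Omega^{\mathbf{T}_{\Sigma}(A)}(L')\subseteq \Omega^{\mathbf{T}_{\Sigma}(A)}(L\cap L')$ (Proposition~\ref{InterCCog and CCogInter}) and the upward closure of the filter $\mathfrak{F}(A)$; and closure under union then follows by De~Morgan. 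Both presentations are valid, and I would favour the Boolean-algebra one for its economy.
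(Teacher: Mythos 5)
Your main argument is correct and is essentially the paper's own proof: the paper likewise reduces to a single congruence saturating both $L$ and $L'$ (taking $\Phi\cap\Psi$ for arbitrary witnesses $\Phi,\Psi\in\mathfrak{F}(A)$ from the filter, rather than your canonical choice $\Omega^{\mathbf{T}_{\Sigma}(A)}(L)\cap\Omega^{\mathbf{T}_{\Sigma}(A)}(L')$), and then concludes via Corollary~\ref{IncSat} and Proposition~\ref{CABA Saturades} exactly as you do. Your alternative componentwise route via Propositions~\ref{Compl} and~\ref{InterCCog and CCogInter} is also sound, but the argument you favour is the one the paper uses.
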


\begin{proof}
Let $L$ and $L'$ be languages in $\mathcal{L}_{\mathfrak{F}}(A)$, then $L$ is $\Phi$-saturated and $L'$ is $\Psi$-saturated for some congruences $\Phi$ and $\Psi$ in $\mathfrak{F}(A)$. Since $\mathfrak{F}$ is a formation of congruences $\Phi\cap\Psi\in \mathfrak{F}(A)$. We conclude, using Corollary~\ref{IncSat}, that $L$ and $L'$ are  $\Phi\cap\Psi$-saturated and, by Proposition~\ref{CABA Saturades}, that $L\cup L'$, $L\cap L'$ and $\complement_{\mathbf{T}_{\Sigma}(A)}L$ are in $\mathcal{L}_{\mathfrak{F}}(A)$.
\end{proof}

\begin{corollary}\label{CorolariAntiimatge}
Let $A$ and $B$ be two $S$-sorted sets, $f$ an homomorphism from $\mathbf{T}_{\Sigma}(A)$ to $\mathbf{T}_{\Sigma}(B)$, and $M\in\mathcal{L}_{\mathfrak{F}}(B)$, then $f^{-1}[M]\in\mathcal{L}_{\mathfrak{F}}(A)$.
\end{corollary}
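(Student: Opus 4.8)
The plan is to reduce the claim to a single membership assertion and then exploit the filter structure of $\mathfrak{F}(A)$. Writing $\Theta = \Omega^{\mathbf{T}_{\Sigma}(B)}(M)$, the hypothesis $M\in\mathcal{L}_{\mathfrak{F}}(B)$ gives, by the second description of $\mathcal{L}_{\mathfrak{F}}$ in Proposition~\ref{Cong2LangBasic}, that $\Theta\in\mathfrak{F}(B)$, and, by Proposition~\ref{CharacSatCCog}, that $M=[M]^{\Theta}$. Since $\mathcal{L}_{\mathfrak{F}}(A)$ is, by definition, the set of those $L$ with $\Omega^{\mathbf{T}_{\Sigma}(A)}(L)\in\mathfrak{F}(A)$, the goal becomes to prove $\Omega^{\mathbf{T}_{\Sigma}(A)}(f^{-1}[M])\in\mathfrak{F}(A)$. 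Because $\mathfrak{F}(A)$ is a filter of $\mathbf{Cgr}(\mathbf{T}_{\Sigma}(A))$, and hence upward closed, it suffices to exhibit a single congruence in $\mathfrak{F}(A)$ contained in $\Omega^{\mathbf{T}_{\Sigma}(A)}(f^{-1}[M])$.

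First I would produce the natural candidate. By Proposition~\ref{TAntiHom} we have $(f\times f)^{-1}[\Theta]\subseteq\Omega^{\mathbf{T}_{\Sigma}(A)}(f^{-1}[M])$, and one checks at once that $(f\times f)^{-1}[\Theta]=\mathrm{Ker}(\mathrm{pr}^{\Theta}\circ f)$, since two terms are $(f\times f)^{-1}[\Theta]$-related exactly when their $f$-images are identified by $\mathrm{pr}^{\Theta}$. Thus everything reduces to showing $\mathrm{Ker}(\mathrm{pr}^{\Theta}\circ f)\in\mathfrak{F}(A)$. Were $\mathrm{pr}^{\Theta}\circ f$ surjective, this would be immediate from clause~(2) of Definition~\ref{DefFormCgr}; the real work lies in dispensing with that surjectivity.

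To handle the general case I would corestrict $\mathrm{pr}^{\Theta}\circ f$ to its image $\mathbf{C}$, a subalgebra of $\mathbf{T}_{\Sigma}(B)/\Theta$, obtaining an epimorphism $g$ from $\mathbf{T}_{\Sigma}(A)$ onto $\mathbf{C}$ with $\mathrm{Ker}(g)=\mathrm{Ker}(\mathrm{pr}^{\Theta}\circ f)$, so that $\mathbf{T}_{\Sigma}(A)/\mathrm{Ker}(\mathrm{pr}^{\Theta}\circ f)\cong\mathbf{C}$. The hard part is exactly here: a formation need not be closed under subalgebras, so membership of $\mathbf{C}$ in the associated algebra formation $\mathcal{F}_{\mathfrak{F}}$ (equivalently $\mathrm{Ker}(g)\in\mathfrak{F}(A)$) is not automatic from $\mathbf{C}$ being a \emph{sub}algebra of the member $\mathbf{T}_{\Sigma}(B)/\Theta$. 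I would bridge this gap using the finiteness hypothesis of this section together with the Boolean and atomic machinery already assembled. Namely, $\mathbf{T}_{\Sigma}(B)/\Theta$ is finite, so by Proposition~\ref{CABA Saturades} the language $M$ is a finite union of atoms $\delta^{s,[P]_{\Theta_{s}}}$, each of which lies in $\mathcal{L}_{\mathfrak{F}}(B)$ because it is $\Theta$-saturated and $\Theta\in\mathfrak{F}(B)$. Using Proposition~\ref{DesClasCog} (with $L=M$) to write each class $[P]_{\Theta_{s}}$ as a finite Boolean combination of inverse-translation images $U^{-1}[M]$, and invoking the fact, established just above, that $\mathcal{L}_{\mathfrak{F}}$ is closed under inverse images of translations together with Corollary~\ref{CorolariAlgebraBooleana} for closure under finite Boolean operations, I would realize the syntactic congruence $\Omega^{\mathbf{T}_{\Sigma}(A)}(f^{-1}[M])$ as a \emph{finite} meet of congruences each belonging to $\mathfrak{F}(A)$; it is precisely the finiteness of this meet that lets the filter property of $\mathfrak{F}(A)$ close the argument.

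Finally, since $f^{-1}[\cdot]$ commutes with unions and complements, $f^{-1}[M]=\bigcup f^{-1}[\delta^{s,[P]_{\Theta_{s}}}]$ is a finite union, so once each $f^{-1}[\delta^{s,[P]_{\Theta_{s}}}]$ has been shown to lie in $\mathcal{L}_{\mathfrak{F}}(A)$, Corollary~\ref{CorolariAlgebraBooleana} delivers $f^{-1}[M]\in\mathcal{L}_{\mathfrak{F}}(A)$. I expect the single genuine obstacle to be the non-surjectivity of $\mathrm{pr}^{\Theta}\circ f$, that is, the passage through the subalgebra $\mathbf{C}$; every remaining step is either a direct appeal to Proposition~\ref{TAntiHom}, Proposition~\ref{CharacSatCCog}, or the previously proved closure properties of $\mathcal{L}_{\mathfrak{F}}$, or the routine compatibility of $f^{-1}[\cdot]$ with Boolean operations.
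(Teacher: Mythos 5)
Your opening reduction is exactly the argument that the paper's one-line proof (``It follows from Proposition~\ref{TAntiHom}'') intends: by Proposition~\ref{TAntiHom}, $(f\times f)^{-1}[\Theta]=\mathrm{Ker}(\mathrm{pr}^{\Theta}\circ f)\subseteq\Omega^{\mathbf{T}_{\Sigma}(A)}(f^{-1}[M])$, so by upward closure of the filter $\mathfrak{F}(A)$ everything reduces to $\mathrm{Ker}(\mathrm{pr}^{\Theta}\circ f)\in\mathfrak{F}(A)$. Your diagnosis of the sore point is also exactly right: that membership comes from clause (2) of Definition~\ref{DefFormCgr} only when $\mathrm{pr}^{\Theta}\circ f$ is an epimorphism, a hypothesis which appears in clause (3) of Proposition~\ref{Cong2LangBasic} and in condition $(\mathrm{BPS}\,4)$ of Definition~\ref{Def2FRL}, but not in the statement of this corollary. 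The genuine gap is in your third and fourth paragraphs, where you claim to dispense with surjectivity. The proposed bridge is circular: the atoms $\delta^{s,[P]_{\Theta_{s}}}$, the translations $U\in\mathrm{Tl}_{t}(\mathbf{T}_{\Sigma}(B))_{s}$, and the Boolean combinations supplied by Proposition~\ref{DesClasCog} and Corollary~\ref{CorolariAlgebraBooleana} all live over $B$, producing languages in $\mathcal{L}_{\mathfrak{F}}(B)$ and congruences in $\mathfrak{F}(B)$; the only way to convert them into congruences in $\mathfrak{F}(A)$, or into languages in $\mathcal{L}_{\mathfrak{F}}(A)$, is to pull them back along $f$ --- which is precisely the closure property being proved (your final decomposition $f^{-1}[M]=\bigcup f^{-1}[\delta^{s,[P]_{\Theta_{s}}}]$ just reproduces the same problem for each atom). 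Moreover, the finiteness you invoke is not in force here: this corollary precedes the standing assumption on supports, $\mathfrak{F}$ is an arbitrary formation of congruences, and $\Theta=\Omega^{\mathbf{T}_{\Sigma}(B)}(M)$ need not have finite index, so $M$ need not be a finite union of atoms.

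No repair is possible, because without the epimorphism hypothesis the statement is false. Take $S$ a single sort and $\Sigma$ the signature of groups, and let $\mathcal{F}$ be the class of all $\Sigma$-algebras isomorphic to a finite power $A_{5}^{k}$ of the alternating group $A_{5}$; since homomorphic images and finite subdirect products of powers of a non-abelian simple group are again such powers, $\mathcal{F}$ is a formation of $\Sigma$-algebras, and $\mathfrak{F}=\mathfrak{F}_{\mathcal{F}}$ is a formation of congruences. Let $B=\{b_{1},b_{2}\}$, let $\pi\colon\mathbf{T}_{\Sigma}(B)\to\mathbf{A}_{5}$ be the surjective evaluation homomorphism determined by a generating pair $(g_{1},g_{2})$ with $g_{1}$ of order $5$, and put $M=\pi^{-1}[\{g_{1}\}]$. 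Since $A_{5}$ is simple and $\{g_{1}\}$ is a proper nonempty subset, $\Omega^{\mathbf{T}_{\Sigma}(B)}(M)=\mathrm{Ker}(\pi)\in\mathfrak{F}(B)$, so $M\in\mathcal{L}_{\mathfrak{F}}(B)$. Now let $A=\{a\}$ and let $f\colon\mathbf{T}_{\Sigma}(A)\to\mathbf{T}_{\Sigma}(B)$ send $a$ to the term $(b_{1})$. Then $\pi\circ f$ has image the cyclic subgroup $\langle g_{1}\rangle\cong C_{5}$, and, applying the epimorphism case of Proposition~\ref{TAntiHom} to the corestriction of $\pi\circ f$ onto $C_{5}$, the syntactic algebra of $f^{-1}[M]$ is $C_{5}$, which is isomorphic to no $A_{5}^{k}$; hence $\Omega^{\mathbf{T}_{\Sigma}(A)}(f^{-1}[M])\notin\mathfrak{F}(A)$ and $f^{-1}[M]\notin\mathcal{L}_{\mathfrak{F}}(A)$. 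So the defect lies in the statement itself (and in the paper's own one-line proof, which tacitly uses the missing hypothesis): the corollary must assume that $\mathrm{pr}^{\Omega^{\mathbf{T}_{\Sigma}(B)}(M)}\circ f$ is an epimorphism, and under that hypothesis your first paragraph already constitutes a complete proof, identical to the paper's.
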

\begin{proof} It follows from Proposition~\ref{TAntiHom}.
\end{proof}

\begin{definition}
Let $\mathbf{A}$ be a $\Sigma$-algebra and $\Phi\in \mathrm{Cgr}(\mathbf{A})$. We say that $\Phi$ is of \emph{finite index} if the $S$-sorted set $A/\Phi$ is finite. We denote by $\mathrm{Cgr}_{\mathrm{fi}}(\mathbf{A})$ the set of all congruences on $\mathbf{A}$ of finite index.
\end{definition}

\begin{remark}
Let $\mathbf{A}$ be a $\Sigma$-algebra. Then $\mathrm{Cgr}_{\mathrm{fi}}(\mathbf{A})\neq\varnothing$ if, and only if, $\mathrm{supp}_{S}(\mathbf{A})$ is finite. Therefore, if $\mathrm{card}(S)<\aleph_{0}$, then, for every every $\Sigma$-algebra $\mathbf{A}$, $\mathrm{Cgr}_{\mathrm{fi}}(\mathbf{A})\neq\varnothing$. Let us notice that the category $\mathbf{Sgr}\!-\!\mathbf{Act}(\mathbf{Set})$, of left actions of semigroups on sets, which has as objects ordered quadruples $(S,A,\cdot,\lambda)$ where $S$ and $A$ are sets, $\cdot$ a binary operation on $S$, and $\lambda$ a left action of the semigroup $\mathbf{S} = (S,\cdot)$ on the set $A$, thus $\lambda\colon S\times A\mor A$ such that, for every $x$, $y\in S$ and every $a\in A$, $\lambda(x\cdot y,a) = \lambda(x,\lambda(y,x))$, and, as morphisms from $(S,A,\cdot,\lambda)$ to $(S',A',\cdot',\lambda')$ the ordered pairs $(f,g)$ where $f$ is a homomorphism from $\mathbf{S}$ to $\mathbf{S}'$ and $g$ a mapping from $A$ to $A'$ such that, for every $x\in S$ and every $a\in A$, $g(\lambda(x,a)) = \lambda'(f(x),g(a))$, satisfies the above condition. Some further examples are the following: the category $\mathbf{Mon}\!-\!\mathbf{Act}(\mathbf{Set})$, of left actions of monoids on sets, the category $\mathbf{Grp}\!-\!\mathbf{Act}(\mathbf{Set})$, of left actions of groups on sets, and the category $\mathbf{Mod} = \mathbf{CRng}\!-\!\mathbf{Act(AbGrp)}$ of (left) actions of commutative rings on abelian groups, all of which are defined in the same way as was defined the category $\mathbf{Sgr}\!-\!\mathbf{Act}(\mathbf{Set})$.
\end{remark}

\begin{proposition}
Let $\mathbf{A}$ be a $\Sigma$-algebra such that $\mathrm{supp}_{S}(\mathbf{A})$ is finite. Then $\mathrm{Cgr}_{\mathrm{fi}}(\mathbf{A})$ is a filter of the algebraic lattice $\mathbf{Cgr}(\mathbf{A})$.
\end{proposition}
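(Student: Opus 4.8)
The plan is to verify directly the three defining conditions of a filter (nonemptiness, closure under binary meets, and upward closure) for the subset $\mathrm{Cgr}_{\mathrm{fi}}(\mathbf{A})$ of the lattice $\mathbf{Cgr}(\mathbf{A})$, recalling that the meet of two congruences in this lattice is their intersection. Throughout I would lean on the two-part characterization recorded above---an $S$-sorted set is finite if, and only if, its support is finite and each of its components over the support is finite---together with the facts, also recorded above, that $\mathrm{supp}_{S}(A) = \mathrm{supp}_{S}(A/\Phi)$ for every $\Phi\in\mathrm{Cgr}(\mathbf{A})$ and the behaviour of $\mathrm{supp}_{S}$ under surjections and products from Proposition~\ref{propssupport}. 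For nonemptiness, I would observe that $\nabla^{\mathbf{A}}\in\mathrm{Cgr}_{\mathrm{fi}}(\mathbf{A})$: the quotient $A/\nabla^{\mathbf{A}}$ is subfinal, so it has support equal to the (finite, by hypothesis) support of $\mathbf{A}$ and all components of cardinality at most $1$, whence it is finite. This is exactly the content of the remark preceding the proposition, since $\mathrm{supp}_{S}(\mathbf{A})$ is assumed finite.

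For upward closure, given $\Phi\in\mathrm{Cgr}_{\mathrm{fi}}(\mathbf{A})$ and a congruence $\Psi$ with $\Phi\subseteq\Psi$, I would use the canonical surjective homomorphism $\mathrm{p}^{\Phi,\Psi}\colon\mathbf{A}/\Phi\mor\mathbf{A}/\Psi$. Since $A/\Phi$ is finite and $\mathrm{p}^{\Phi,\Psi}$ is surjective, Proposition~\ref{propssupport} yields $\mathrm{supp}_{S}(A/\Psi) = \mathrm{supp}_{S}(A/\Phi)$, which is finite, while each component $(A/\Psi)_{s}$ is a surjective image of the finite set $(A/\Phi)_{s}$, hence finite. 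Thus $A/\Psi$ is finite and $\Psi\in\mathrm{Cgr}_{\mathrm{fi}}(\mathbf{A})$.

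For closure under meets, given $\Phi,\Psi\in\mathrm{Cgr}_{\mathrm{fi}}(\mathbf{A})$ I would consider the $S$-sorted mapping from $A/(\Phi\cap\Psi)$ to $(A/\Phi)\times(A/\Psi)$ sending, in each sort $s$, the class $[a]_{(\Phi\cap\Psi)_{s}}$ to $([a]_{\Phi_{s}},[a]_{\Psi_{s}})$, which is well defined and injective componentwise because $(\Phi\cap\Psi)_{s} = \Phi_{s}\cap\Psi_{s}$. The support of $A/(\Phi\cap\Psi)$ equals $\mathrm{supp}_{S}(\mathbf{A})$ and is therefore finite, and each component embeds into the finite set $(A/\Phi)_{s}\times(A/\Psi)_{s}$, so $A/(\Phi\cap\Psi)$ is finite; hence $\Phi\cap\Psi\in\mathrm{Cgr}_{\mathrm{fi}}(\mathbf{A})$.

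None of the three steps presents a genuine obstacle. The only point demanding care is the consistent use of the two-part finiteness criterion for $S$-sorted sets, since for $S$-sorted sets neither finiteness of the support nor finiteness of the components alone suffices; this is precisely where the hypothesis that $\mathrm{supp}_{S}(\mathbf{A})$ is finite enters, as it is what guarantees that the support of every quotient remains finite.
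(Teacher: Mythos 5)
Your proposal is correct and follows essentially the same route as the paper, which simply asserts the three filter conditions (nonemptiness via $\nabla^{\mathbf{A}}$, closure under binary intersection, and upward closure) and leaves their verification to the reader as easy. Your write-up supplies exactly those routine details---the subfinal quotient for $\nabla^{\mathbf{A}}$, the surjection $\mathrm{p}^{\Phi,\Psi}$ for upward closure, and the componentwise embedding of $A/(\Phi\cap\Psi)$ into $(A/\Phi)\times(A/\Psi)$ for meets---correctly keeping track of the two-part finiteness criterion for $S$-sorted sets.
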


\begin{proof}
It is easy to verify the following properties. (1) That $\nabla^{\mathbf{A}}\in \mathrm{Cgr}_{\mathrm{fi}}(\mathbf{A})$. (2) That,  for every $\Phi$ and $\Psi\in \mathrm{Cgr}_{\mathrm{fi}}(\mathbf{A})$, $\Phi\cap \Psi\in\mathrm{Cgr}_{\mathrm{fi}}(\mathbf{A})$. And (3) that, for every $\Phi\in \mathrm{Cgr}_{\mathrm{fi}}(\mathbf{A})$ and every $\Psi\in \mathrm{Cgr}(\mathbf{A})$, if $\Phi\subseteq \Psi$, then $\Psi\in \mathrm{Cgr}_{\mathrm{fi}}(\mathbf{A})$.
\end{proof}

In what follows, we assume that, for every $A\in \boldsymbol{\mathcal{U}}^{S}$, $\mathrm{supp}_{S}(\mathbf{T}_{\Sigma}(A))$ is finite.

\begin{remark}
If $S$ is finite, then, obviously, for every $A\in \boldsymbol{\mathcal{U}}^{S}$, $\mathrm{supp}_{S}(\mathbf{T}_{\Sigma}(A))$ is finite. If $S$ is infinite, then there exists an $A\in \boldsymbol{\mathcal{U}}^{S}$ such that $\mathrm{supp}_{S}(\mathbf{T}_{\Sigma}(A))$ is infinite, e.g., for $A = 1 = (1)_{s\in S}$, we have that $\mathrm{supp}_{S}(\mathbf{T}_{\Sigma}(1)) = S$, thus $\mathrm{supp}_{S}(\mathbf{T}_{\Sigma}(1))$ is infinite. Hence, if, for every $A\in \boldsymbol{\mathcal{U}}^{S}$, $\mathrm{supp}_{S}(\mathbf{T}_{\Sigma}(A))$ is finite, then $S$ is finite. Therefore, the following conditions are equivalent: (1) for every $A\in \boldsymbol{\mathcal{U}}^{S}$, $\mathrm{supp}_{S}(\mathbf{T}_{\Sigma}(A))$ is finite and (2) $S$ is  finite.
\end{remark}

\begin{definition}
Let $\mathfrak{F}$ be a formation of congruences with respect to $\Sigma$ as in Definition~\ref{DefFormCgr}. We say that $\mathfrak{F}$ is a \emph{formation of finite index congruences with respect to} $\Sigma$ if, for every $A\in \boldsymbol{\mathcal{U}}^{S}$, $\mathfrak{F}(A)\subseteq \mathrm{Cgr}_{\mathrm{fi}}(\mathbf{T}_{\Sigma}(A))$. We denote by $\mathrm{Form}_{\mathrm{Cgr}_{\mathrm{fi}}}(\Sigma)$ the set of all formations of finite index congruences with respect to $\Sigma$. Let us notice that $\mathrm{Form}_{\mathrm{Cgr}_{\mathrm{fi}}}(\Sigma)\subseteq \prod_{A\in \boldsymbol{\mathcal{U}}^{S}}\Downarrow\! \mathrm{Cgr}_{\mathrm{fi}}(\mathbf{T}_{\Sigma}(A))$, where $\Downarrow\! \mathrm{Cgr}_{\mathrm{fi}}(\mathbf{T}_{\Sigma}(A))$ is the subset of $\mathrm{Filt}(\mathbf{Cgr}(\mathbf{T}_{\Sigma}(A)))$ consisting of all filters of $\mathbf{Cgr}(\mathbf{T}_{\Sigma}(A))$ which are included in $\mathrm{Cgr}_{\mathrm{fi}}(\mathbf{T}_{\Sigma}(A))$. Therefore, a formation of finite index congruences with respect to $\Sigma$ is a choice function for $(\Downarrow\!\mathrm{Cgr}_{\mathrm{fi}}(\mathbf{T}_{\Sigma}(A)))_{A\in \boldsymbol{\mathcal{U}}^{S}}$.
\end{definition}

\begin{remark}
If $\mathfrak{F}$ is a formation of finite index congruences with respect to $\Sigma$, then, in particular, for every $A\in \boldsymbol{\mathcal{U}}^{S}$, $\mathfrak{F}(A)$ is a filter (which in its turn is included in the filter $\mathrm{Cgr}_{\mathrm{fi}}(\mathbf{T}_{\Sigma}(A))$).
\end{remark}

Since two formations of finite index congruences $\mathfrak{F}$ and $\mathfrak{G}$ with respect to $\Sigma$ can be compared in a natural way, e.g., by stating that $\mathfrak{F}\leq \mathfrak{G}$ if, and only if, for every $A\in \boldsymbol{\mathcal{U}}^{S}$, $\mathfrak{F}(A)\subseteq \mathfrak{G}(A)$, we next proceed to investigate the properties of $\mathbf{Form}_{\mathrm{Cgr}_{\mathrm{fi}}}(\Sigma) = (\mathrm{Form}_{\mathrm{Cgr}_{\mathrm{fi}}}(\Sigma),\leq)$.

\begin{proposition}
$\mathbf{Form}_{\mathrm{Cgr}_{\mathrm{fi}}}(\Sigma)$ is a complete lattice.
\end{proposition}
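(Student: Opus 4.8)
The plan is to follow exactly the strategy used in the proof that $\mathbf{Form}_{\mathrm{Cgr}}(\Sigma)$ is a complete lattice, adding at each step the verification that the finite-index constraint is preserved. Since $\mathbf{Form}_{\mathrm{Cgr}_{\mathrm{fi}}}(\Sigma)$ is obviously an ordered set, it suffices to produce a greatest element together with arbitrary nonempty infima: a poset with a top element in which every nonempty family has a greatest lower bound is automatically a complete lattice, the supremum of any family being the infimum of its set of upper bounds, which is nonempty because it contains the top.

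First I would exhibit the greatest element. Consider the choice function $\mathfrak{F}$ defined by $\mathfrak{F}(A) = \mathrm{Cgr}_{\mathrm{fi}}(\mathbf{T}_{\Sigma}(A))$ for every $A\in \boldsymbol{\mathcal{U}}^{S}$. By the preceding proposition---which applies because, under the standing hypothesis, $\mathrm{supp}_{S}(\mathbf{T}_{\Sigma}(A))$ is finite---each $\mathfrak{F}(A)$ is a filter of $\mathbf{Cgr}(\mathbf{T}_{\Sigma}(A))$, so the first defining condition of a formation of congruences holds. For the second condition, let $A$, $B\in \boldsymbol{\mathcal{U}}^{S}$, $\Theta\in \mathfrak{F}(B)$, and $f\colon \mathbf{T}_{\Sigma}(A)\mor \mathbf{T}_{\Sigma}(B)$ be such that $\mathrm{pr}^{\Theta}\circ f$ is an epimorphism. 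Then $\mathbf{T}_{\Sigma}(A)/\mathrm{Ker}(\mathrm{pr}^{\Theta}\circ f)\cong \mathbf{T}_{\Sigma}(B)/\Theta$, and the latter is finite because $\Theta$ is of finite index; hence $\mathrm{Ker}(\mathrm{pr}^{\Theta}\circ f)\in \mathrm{Cgr}_{\mathrm{fi}}(\mathbf{T}_{\Sigma}(A)) = \mathfrak{F}(A)$. Thus $\mathfrak{F}$ is a formation of finite index congruences, and it is plainly the greatest one.

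Next I would construct infima. Given a nonempty set $J$ in $\boldsymbol{\mathcal{U}}$ and a family $(\mathfrak{F}_{j})_{j\in J}$ in $\mathrm{Form}_{\mathrm{Cgr}_{\mathrm{fi}}}(\Sigma)$, define $\bigwedge_{j\in J}\mathfrak{F}_{j}$ componentwise by $(\bigwedge_{j\in J}\mathfrak{F}_{j})(A) = \bigcap_{j\in J}\mathfrak{F}_{j}(A)$. The verification that this is again a formation of congruences is identical to the corresponding step for $\mathbf{Form}_{\mathrm{Cgr}}(\Sigma)$: an intersection of filters all containing the common element $\nabla^{\mathbf{T}_{\Sigma}(A)}$ is again a filter, and the second condition passes to intersections. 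The only new point is the finite-index constraint, and it is immediate: fixing any $j_{0}\in J$ (this is where nonemptiness of $J$ is used), $\bigcap_{j\in J}\mathfrak{F}_{j}(A)\subseteq \mathfrak{F}_{j_{0}}(A)\subseteq \mathrm{Cgr}_{\mathrm{fi}}(\mathbf{T}_{\Sigma}(A))$. One then checks, exactly as before, that $\bigwedge_{j\in J}\mathfrak{F}_{j}$ is the greatest lower bound of the family.

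Combining the two previous steps yields completeness, with joins obtained from the formula $\bigvee_{j\in J}\mathfrak{F}_{j} = \bigwedge\{\mathfrak{F}\mid \forall j\in J\,(\mathfrak{F}_{j}\leq \mathfrak{F})\}$; note that, in contrast with the passage to infima, these joins need not coincide with those computed in $\mathbf{Form}_{\mathrm{Cgr}}(\Sigma)$, so $\mathbf{Form}_{\mathrm{Cgr}_{\mathrm{fi}}}(\Sigma)$ is not presented as a sublattice of it. In particular the least element is the choice function $A\mapsto \{\nabla^{\mathbf{T}_{\Sigma}(A)}\}$, which lands in $\mathrm{Cgr}_{\mathrm{fi}}(\mathbf{T}_{\Sigma}(A))$ precisely because $\mathbf{T}_{\Sigma}(A)/\nabla^{\mathbf{T}_{\Sigma}(A)}$ is subfinal and the support is finite. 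There is no serious obstacle here; the whole argument is a routine adaptation of the earlier completeness proof, and the only points demanding attention are the preservation of the finite-index property at each step---most delicately for the greatest element, where it rests on the isomorphism $\mathbf{T}_{\Sigma}(A)/\mathrm{Ker}(\mathrm{pr}^{\Theta}\circ f)\cong \mathbf{T}_{\Sigma}(B)/\Theta$ transporting finiteness from $B$ to $A$---together with the repeated tacit use of the standing hypothesis that every $\mathrm{supp}_{S}(\mathbf{T}_{\Sigma}(A))$ is finite.
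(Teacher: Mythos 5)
Your proof is correct and follows essentially the same route as the paper's: the greatest element is the choice function $A\mapsto \mathrm{Cgr}_{\mathrm{fi}}(\mathbf{T}_{\Sigma}(A))$, nonempty infima are computed componentwise as intersections (with the finite-index constraint trivially inherited), and completeness follows since an ordered set with a top element and all nonempty infima is a complete lattice, joins being infima of upper bounds. The only difference is that you spell out details the paper treats as routine, notably that the greatest element satisfies the second formation condition because the isomorphism $\mathbf{T}_{\Sigma}(A)/\mathrm{Ker}(\mathrm{pr}^{\Theta}\circ f)\cong \mathbf{T}_{\Sigma}(B)/\Theta$ transports finiteness of index.
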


\begin{proof}
It is obvious that $\mathbf{Form}_{\mathrm{Cgr}_{\mathrm{fi}}}(\Sigma)$ is an ordered set. On the other hand, if we take as choice function for the family $(\mathrm{Cgr}_{\mathrm{fi}}(\mathbf{T}_{\Sigma}(A)))_{A\in \boldsymbol{\mathcal{U}}^{S}}$ the function  $\mathfrak{F}$ which associates with each $A$ in $\boldsymbol{\mathcal{U}}^{S}$ precisely
$$
\mathfrak{F}(A) = \mathrm{Cgr}_{\mathrm{fi}}(\mathbf{T}_{\Sigma}(A)),
$$
then $\mathfrak{F}$ is a formation of finite index congruences with respect to $\Sigma$ and, actually, the greatest one. The condition on the supports of the free $\Sigma$-algebras guarantees the existence of finite index congruences. Let us, finally, prove that, for every nonempty set $J$ in $\ensuremath{\boldsymbol{\mathcal{U}}}$ and every family $(\mathfrak{F}_{j})_{j\in J}$ in $\mathrm{Form}_{\mathrm{Cgr}_{\mathrm{fi}}}(\Sigma)$, there exists $\bigwedge_{j\in J}\mathfrak{F}_{j}$, the greatest lower bound of $(\mathfrak{F}_{j})_{i\in J}$ in $\mathbf{Form}_{\mathrm{Cgr}_{\mathrm{fi}}}(\Sigma)$. Let $\bigwedge_{j\in J}\mathfrak{F}_{j}$ be the function defined, for every $A\in \boldsymbol{\mathcal{U}}^{S}$, as $(\bigwedge_{j\in J}\mathfrak{F}_{j})(A) = \bigcap_{j\in J}\mathfrak{F}_{j}(A)$. Thus defined $\bigwedge_{j\in J}\mathfrak{F}_{j}\in \mathrm{Form}_{\mathrm{Cgr}_{\mathrm{fi}}}(\Sigma)$. In fact, for every $j\in J$, the set $\mathfrak{F}_{j}(A)$ contains only finite index congruences. Therefore the same happens with $(\bigwedge_{j\in J}\mathfrak{F}_{j})(A)$. Moreover, for every $j\in J$, we have that $\bigwedge_{j\in J}\mathfrak{F}_{j}\leq \mathfrak{F}_{j}$ and, for every formation of finite index congruences with respect to $\Sigma$, $\mathfrak{F}$, if, for every $j\in J$, we have that $\mathfrak{F}\leq \mathfrak{F}_{j}$, then $\mathfrak{F}\leq \bigwedge_{j\in J}\mathfrak{F}_{j}$.
From this we can assert that the ordered set $\mathbf{Form}_{\mathrm{Cgr}_{\mathrm{fi}}}(\Sigma)$ is a complete lattice.

For every set $J$ in $\ensuremath{\boldsymbol{\mathcal{U}}}$ and every family $(\mathfrak{F}_{j})_{j\in J}$ in $\mathrm{Form}_{\mathrm{Cgr}_{\mathrm{fi}}}(\Sigma)$, $\bigvee_{j\in J}\mathfrak{F}_{j}$, the least upper bound of $(\mathfrak{F}_{j})_{i\in J}$ in $\mathbf{Form}_{\mathrm{Cgr}_{\mathrm{fi}}}(\Sigma)$ is obtained in the standard way.
\end{proof}
%
%

\begin{remark}
Later on, after having defined and investigated the notion of formation of finite $\Sigma$-algebras, we will improve the above lattice-theoretic results about $\mathrm{Form}_{\mathrm{Cgr}_{\mathrm{fi}}}(\Sigma)$ by proving that it is, in fact, an algebraic lattice.
\end{remark}


We next define the notion of formation of finite $\Sigma$-algebras (recall that we are assuming that $S$ is finite or, what is equivalent, that, for every $A\in \boldsymbol{\mathcal{U}}^{S}$, $\mathrm{supp}_{S}(\mathbf{T}_{\Sigma}(A))$ is finite).

\begin{definition}
We denote by $\mathrm{Alg}_{\mathrm{f}}(\Sigma)$ the set of all finite $\Sigma$-algebras.
\end{definition}

\begin{definition}
Let $\mathcal{F}$ be a formation of $\Sigma$-algebras as in Definition~\ref{DefFormAlg}. We say that $\mathcal{F}$ is a \emph{formation of finite} $\Sigma$-\emph{algebras} if $\mathcal{F}\subseteq \mathrm{Alg}_{\mathrm{f}}(\Sigma)$. We denote by $\mathrm{Form}_{\mathrm{Alg}_{\mathrm{f}}}(\Sigma)$ the set of all formations of finite $\Sigma$-algebras.
\end{definition}

Since $\mathrm{Form}_{\mathrm{Alg}_{\mathrm{f}}}(\Sigma)\subseteq\mathrm{Alg}_{\mathrm{f}}(\Sigma)$, two formations $\mathcal{F}$ and $\mathcal{G}$ of finite $\Sigma$-algebras can be compared in a natural way by stating that $\mathcal{F}\leq \mathcal{G}$ if, and only if, $\mathcal{F}\subseteq \mathcal{G}$. Therefore $\mathbf{Form}_{\mathrm{Alg}_{\mathrm{f}}}(\Sigma) = (\mathrm{Form}_{\mathrm{Alg}_{\mathrm{f}}}(\Sigma),\leq)$ is an ordered set.

We next proceed to investigate the properties of $\mathrm{Form}_{\mathrm{Alg}_{\mathrm{f}}}(\Sigma)$.

\begin{proposition}
The subset $\mathrm{Form}_{\mathrm{Alg}_{\mathrm{f}}}(\Sigma)$ of $\mathrm{Sub}(\mathrm{Alg}_{\mathrm{f}}(\Sigma))$ is an algebraic closure system.
\end{proposition}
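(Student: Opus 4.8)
The plan is to mirror the proof of Proposition~\ref{FormAlgAlgLat}, verifying the three defining properties of an algebraic closure system on the carrier $\mathrm{Alg}_{\mathrm{f}}(\Sigma)$: that $\mathrm{Alg}_{\mathrm{f}}(\Sigma)$ itself is a formation of finite $\Sigma$-algebras, that $\mathrm{Form}_{\mathrm{Alg}_{\mathrm{f}}}(\Sigma)$ is closed under nonempty intersections, and that it is closed under upward directed unions.

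First I would check that $\mathrm{Alg}_{\mathrm{f}}(\Sigma)$ is a formation of $\Sigma$-algebras; since it is trivially contained in itself, this makes it the greatest element of $\mathrm{Form}_{\mathrm{Alg}_{\mathrm{f}}}(\Sigma)$. Nonemptiness and abstractness are immediate, and closure under $\mathrm{H}$ holds because a homomorphic image (quotient) of a finite $\Sigma$-algebra is finite. The one point requiring the standing hypothesis is closure under $\mathrm{P}_{\mathrm{fsd}}$: given $\mathbf{A}$ with $\mathrm{Em}_{\mathrm{sd}}(\mathbf{A},\prod_{\alpha\in n}\mathbf{C}^{\alpha})\neq\varnothing$ and each $\mathbf{C}^{\alpha}$ finite, one uses Proposition~\ref{propssupport}(3) to see that $\prod_{\alpha\in n}\mathbf{C}^{\alpha}$ has finite support (a finite intersection of finite supports) and finite fibres, hence is finite, so its subalgebra $\mathbf{A}$ is finite; for the degenerate case $n=0$ the product is $\mathbf{1}$, and here the assumption that $S$ is finite (equivalently, that every $\mathrm{supp}_{S}(\mathbf{T}_{\Sigma}(A))$ is finite) is exactly what guarantees that $\mathbf{1}$, and with it every subfinal algebra in $\mathrm{Sf}(\mathbf{1})$, is finite.

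For the intersection, given a nonempty family $(\mathcal{F}_{j})_{j\in J}$ in $\mathrm{Form}_{\mathrm{Alg}_{\mathrm{f}}}(\Sigma)$, the set $\bigcap_{j\in J}\mathcal{F}_{j}$ is a formation of $\Sigma$-algebras by the verbatim argument of Proposition~\ref{FormAlgAlgLat}; its nonemptiness follows from $\mathrm{Sf}(\mathbf{1})\subseteq \mathcal{F}_{j}$ for every $j$, and it is clearly contained in $\mathrm{Alg}_{\mathrm{f}}(\Sigma)$, so it is a formation of finite $\Sigma$-algebras. For an upward directed family $(\mathcal{F}_{j})_{j\in J}$, the union $\bigcup_{j\in J}\mathcal{F}_{j}$ is again a formation of $\Sigma$-algebras exactly as in Proposition~\ref{FormAlgAlgLat}---directedness being used to place any finite subfamily witnessing membership under $\mathrm{P}_{\mathrm{fsd}}$ inside a single $\mathcal{F}_{j}$---and remains contained in $\mathrm{Alg}_{\mathrm{f}}(\Sigma)$.

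The only genuinely new ingredient beyond Proposition~\ref{FormAlgAlgLat}, and hence the step I expect to carry the weight of the argument, is confirming that $\mathrm{Alg}_{\mathrm{f}}(\Sigma)$ is closed under $\mathrm{P}_{\mathrm{fsd}}$ and nonempty, both of which hinge on the finiteness of $S$; everything else is the intersection/union bookkeeping of the earlier proof together with the trivial observation that the constraint $\subseteq \mathrm{Alg}_{\mathrm{f}}(\Sigma)$ is preserved by both intersections and unions.
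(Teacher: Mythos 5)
Your proposal is correct and follows exactly the route the paper intends: the paper's own proof is simply the remark that the argument is similar to that of Proposition~\ref{FormAlgAlgLat}, and your write-up spells out that similarity faithfully (top element, nonempty intersections, directed unions). You also correctly isolate the only genuinely new ingredient, namely that the standing hypothesis that $S$ is finite is what makes $\mathbf{1}$, the subfinal algebras, and finite products of finite algebras finite, so that $\mathrm{Alg}_{\mathrm{f}}(\Sigma)$ is itself closed under $\mathrm{H}$ and $\mathrm{P}_{\mathrm{fsd}}$.
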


\begin{proof}
The proof is similar to the proof of Proposition~\ref{FormAlgAlgLat}.
\end{proof}

\begin{corollary}
$\mathbf{Form}_{\mathrm{Alg}_{\mathrm{f}}}(\Sigma)$ is an algebraic lattice.
\end{corollary}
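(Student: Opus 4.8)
The plan is to appeal to the standard correspondence between algebraic closure systems and algebraic lattices: for any set $X$, the collection of closed sets of an algebraic closure system on $X$, ordered by inclusion, is an algebraic lattice (see, e.g., \cite{gG11}). Since the immediately preceding proposition establishes that $\mathrm{Form}_{\mathrm{Alg}_{\mathrm{f}}}(\Sigma)$ is an algebraic closure system on $\mathrm{Alg}_{\mathrm{f}}(\Sigma)$, the conclusion follows at once, in exact parallel with the passage from Proposition~\ref{FormAlgAlgLat} to the corresponding corollary for $\mathbf{Form}_{\mathrm{Alg}}(\Sigma)$.

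Made explicit, I would first introduce the algebraic closure operator $\mathrm{Fmg}^{\mathrm{f}}_{\Sigma}$ on $\mathrm{Alg}_{\mathrm{f}}(\Sigma)$ canonically associated to this algebraic closure system, so that $\mathrm{Form}_{\mathrm{Alg}_{\mathrm{f}}}(\Sigma) = \mathrm{Fix}(\mathrm{Fmg}^{\mathrm{f}}_{\Sigma})$ and, for every $\mathcal{M}\subseteq \mathrm{Alg}_{\mathrm{f}}(\Sigma)$, $\mathrm{Fmg}^{\mathrm{f}}_{\Sigma}(\mathcal{M}) = \bigcap\{\mathcal{F}\in \mathrm{Form}_{\mathrm{Alg}_{\mathrm{f}}}(\Sigma)\mid \mathcal{M}\subseteq \mathcal{F}\}$. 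Because an algebraic closure operator is finitary, every formation $\mathcal{F}\in \mathrm{Form}_{\mathrm{Alg}_{\mathrm{f}}}(\Sigma)$ is the directed union of the closures $\mathrm{Fmg}^{\mathrm{f}}_{\Sigma}(\mathcal{M})$ of its finite subsets $\mathcal{M}\subseteq \mathcal{F}$, hence is the join in $\mathbf{Form}_{\mathrm{Alg}_{\mathrm{f}}}(\Sigma)$ of these finitely generated members. One then checks that the compact elements of $\mathbf{Form}_{\mathrm{Alg}_{\mathrm{f}}}(\Sigma)$ are precisely the $\mathrm{Fmg}^{\mathrm{f}}_{\Sigma}(\mathcal{M})$ with $\mathcal{M}$ a finite subset of $\mathrm{Alg}_{\mathrm{f}}(\Sigma)$, and that these are closed under finite joins; together with the previous observation, this yields that the lattice is algebraic.

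The verification is entirely routine and mirrors the one already carried out for $\mathbf{Form}_{\mathrm{Alg}}(\Sigma)$, so there is no genuine obstacle here. The only point deserving a moment's attention is that the closure system now lives inside $\mathrm{Alg}_{\mathrm{f}}(\Sigma)$ rather than $\mathrm{Alg}(\Sigma)$, so one must confirm that the meets (intersections) and directed joins (directed unions) formed in $\mathbf{Form}_{\mathrm{Alg}_{\mathrm{f}}}(\Sigma)$ remain within the finite $\Sigma$-algebras. This is immediate: finiteness of members is preserved throughout, and the preceding proposition has already guaranteed closure of $\mathrm{Form}_{\mathrm{Alg}_{\mathrm{f}}}(\Sigma)$ under arbitrary intersections and under directed unions, which is exactly what the notion of algebraic closure system requires.
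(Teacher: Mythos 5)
Your proposal is correct and takes essentially the same route as the paper: the corollary is obtained immediately from the preceding proposition that $\mathrm{Form}_{\mathrm{Alg}_{\mathrm{f}}}(\Sigma)$ is an algebraic closure system on $\mathrm{Alg}_{\mathrm{f}}(\Sigma)$, via the standard fact that the closed sets of an algebraic closure system, ordered by inclusion, form an algebraic lattice. Your explicit introduction of the closure operator $\mathrm{Fmg}^{\mathrm{f}}_{\Sigma}$ and the identification of the compact elements merely spells out what the paper leaves implicit, in exact parallel with its treatment of $\mathrm{Fmg}_{\Sigma}$ and the corollary for $\mathbf{Form}_{\mathrm{Alg}}(\Sigma)$.
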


\begin{proposition}
The complete lattices $\mathbf{Form}_{\mathrm{Alg}_{\mathrm{f}}}(\Sigma)$ and $\mathbf{Form}_{\mathrm{Cgr}_{\mathrm{fi}}}(\Sigma)$ are isomorphic.
\end{proposition}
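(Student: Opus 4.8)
The plan is to show that the isomorphism $\theta_{\Sigma}$ of Proposition~\ref{FormAlgFormCgrIso} restricts to the desired isomorphism, so that essentially no new lattice-theoretic work is required; the whole content reduces to checking that $\theta_{\Sigma}$ and its inverse respect the finiteness conditions defining the two sublattices. The pivotal observation, immediate from the definitions, is that a congruence $\Phi$ on $\mathbf{T}_{\Sigma}(A)$ lies in $\mathrm{Cgr}_{\mathrm{fi}}(\mathbf{T}_{\Sigma}(A))$ if, and only if, the quotient $\Sigma$-algebra $\mathbf{T}_{\Sigma}(A)/\Phi$ is finite, since both assertions say precisely that the underlying $S$-sorted set $\mathrm{T}_{\Sigma}(A)/\Phi$ is finite.

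First I would verify that $\theta_{\Sigma}$ maps $\mathrm{Form}_{\mathrm{Alg}_{\mathrm{f}}}(\Sigma)$ into $\mathrm{Form}_{\mathrm{Cgr}_{\mathrm{fi}}}(\Sigma)$. Given $\mathcal{F}\in \mathrm{Form}_{\mathrm{Alg}_{\mathrm{f}}}(\Sigma)$, we already know that $\mathfrak{F}_{\mathcal{F}}$, defined by $\mathfrak{F}_{\mathcal{F}}(A)=\{\Phi\in \mathrm{Cgr}(\mathbf{T}_{\Sigma}(A))\mid \mathbf{T}_{\Sigma}(A)/\Phi\in \mathcal{F}\}$, is a formation of congruences with respect to $\Sigma$. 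Since $\mathcal{F}\subseteq \mathrm{Alg}_{\mathrm{f}}(\Sigma)$, every $\Phi\in \mathfrak{F}_{\mathcal{F}}(A)$ has $\mathbf{T}_{\Sigma}(A)/\Phi$ finite, whence $\Phi\in \mathrm{Cgr}_{\mathrm{fi}}(\mathbf{T}_{\Sigma}(A))$ by the pivotal observation; thus $\mathfrak{F}_{\mathcal{F}}(A)\subseteq \mathrm{Cgr}_{\mathrm{fi}}(\mathbf{T}_{\Sigma}(A))$ for every $A\in \boldsymbol{\mathcal{U}}^{S}$, and $\mathfrak{F}_{\mathcal{F}}$ is a formation of finite index congruences. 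Dually, I would check that $\theta_{\Sigma}^{-1}$ maps $\mathrm{Form}_{\mathrm{Cgr}_{\mathrm{fi}}}(\Sigma)$ into $\mathrm{Form}_{\mathrm{Alg}_{\mathrm{f}}}(\Sigma)$: for $\mathfrak{F}\in \mathrm{Form}_{\mathrm{Cgr}_{\mathrm{fi}}}(\Sigma)$, the set $\mathcal{F}_{\mathfrak{F}}$ is a formation of $\Sigma$-algebras, and each of its members is isomorphic to some $\mathbf{T}_{\Sigma}(A)/\Phi$ with $\Phi\in \mathfrak{F}(A)\subseteq \mathrm{Cgr}_{\mathrm{fi}}(\mathbf{T}_{\Sigma}(A))$, hence finite; therefore $\mathcal{F}_{\mathfrak{F}}\subseteq \mathrm{Alg}_{\mathrm{f}}(\Sigma)$ and $\mathcal{F}_{\mathfrak{F}}$ is a formation of finite $\Sigma$-algebras.

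Once both maps are seen to land in the correct sublattices, the remaining facts are inherited from Proposition~\ref{FormAlgFormCgrIso}. The identities $\mathcal{F}=\mathcal{F}_{\mathfrak{F}_{\mathcal{F}}}$ and $\mathfrak{F}=\mathfrak{F}_{\mathcal{F}_{\mathfrak{F}}}$ hold verbatim on the sublattices, because their proofs use nothing beyond the fact that every $\Sigma$-algebra is isomorphic to a quotient of a free one (Proposition~\ref{AlgIsoQuotFree}, which applies in particular to finite algebras via their finite underlying $S$-sorted set) together with the projectivity of the free algebras; hence the restrictions of $\theta_{\Sigma}$ and $\theta_{\Sigma}^{-1}$ are mutually inverse bijections. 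Order-preservation in both directions is likewise inherited, since the order on each sublattice is induced from the ambient lattice. Thus $\theta_{\Sigma}$ restricts to an order isomorphism, which in the category of complete lattices coincides with a lattice isomorphism, between $\mathbf{Form}_{\mathrm{Alg}_{\mathrm{f}}}(\Sigma)$ and $\mathbf{Form}_{\mathrm{Cgr}_{\mathrm{fi}}}(\Sigma)$.

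The only point requiring genuine care, and hence the main obstacle, is the bookkeeping that guarantees nothing escapes either sublattice under $\theta_{\Sigma}$ or $\theta_{\Sigma}^{-1}$; but this rests entirely on the definitional equivalence between finite index congruences and finite quotient algebras, together with the standing hypothesis that each $\mathrm{supp}_{S}(\mathbf{T}_{\Sigma}(A))$ is finite, which ensures $\mathrm{Cgr}_{\mathrm{fi}}(\mathbf{T}_{\Sigma}(A))\neq\varnothing$ so that the sublattices are well behaved. Consequently there is no substantive difficulty beyond a careful transcription of the earlier argument into the finite setting.
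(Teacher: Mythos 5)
Your proposal is correct and follows essentially the same route as the paper: both take the bijection $\theta_{\Sigma}$ of Proposition~\ref{FormAlgFormCgrIso}, check that $\theta_{\Sigma}$ and $\theta_{\Sigma}^{-1}$ respect the finiteness conditions (finite algebras correspond to finite index congruences, via the observation that $\Phi\in\mathrm{Cgr}_{\mathrm{fi}}(\mathbf{T}_{\Sigma}(A))$ exactly when $\mathbf{T}_{\Sigma}(A)/\Phi$ is finite), and conclude that the bi-restriction is the desired isomorphism of complete lattices.
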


\begin{proof}
Consider the bijections $\theta_{\Sigma}$ and $\theta_{\Sigma}^{-1}$ defined in Proposition~\ref{FormAlgFormCgrIso}. Let $\mathcal{F}$ be a formation of finite $\Sigma$-algebras. Then $\mathfrak{F}_{\mathcal{F}}$, the value of $\theta_{\Sigma}$ at $\mathcal{F}$, which is the function from $\boldsymbol{\mathcal{U}}^{S}$ which assigns to $A\in \boldsymbol{\mathcal{U}}^{S}$ the subset
$$
\mathfrak{F}_{\mathcal{F}}(A) = \{\Phi\in \mathrm{Cgr}(\mathbf{T}_{\Sigma}(A))\mid \mathbf{T}_{\Sigma}(A)/\Phi\in \mathcal{F}\}
$$
of $\mathrm{Cgr}(\mathbf{T}_{\Sigma}(A))$ is a formation of congruences with respect to $\Sigma$.
Moreover, for every $A\in \boldsymbol{\mathcal{U}}^{S}$, $\mathfrak{F}_{\mathcal{F}}(A)$ only contains finite index congruences. Therefore $\mathfrak{F}_{\mathcal{F}}(A)\subseteq \mathrm{Cgr}_{\mathrm{fi}}(\mathbf{T}_{\Sigma}(A))$.

Reciprocally, let $\mathfrak{F}$ be a formation of finite index congruences with respect to $\Sigma$. Then $\mathcal{F}_{\mathfrak{F}}$, the value of $\theta^{-1}_{\Sigma}$ at $\mathfrak{F}$, which is
$$
\mathcal{F}_{\mathfrak{F}} = \biggl\{ \mathbf{C}\in \mathrm{Alg}(\Sigma)\biggm|
\begin{gathered}
\exists\, A\in \boldsymbol{\mathcal{U}}^{S}\,\, \exists\, \Phi\in \mathfrak{F}(A)
\\[-3pt]
(\mathbf{C}\cong \mathbf{T}_{\Sigma}(A)/\Phi)
\end{gathered}
\biggr\},
$$
is a formation of $\Sigma$-algebras. Moreover, $\mathcal{F}_{\mathfrak{F}}$ contains only finite $\Sigma$-algebras. Therefore $\mathcal{F}_{\mathfrak{F}}\subseteq \mathrm{Alg}_{\mathrm{f}}(\Sigma)$.

From the above it follows that the bi-restriction of $\theta_{\Sigma}$ to $\mathbf{Form}_{\mathrm{Alg}_{\mathrm{f}}}(\Sigma)$ and $\mathbf{Form}_{\mathrm{Cgr}_{\mathrm{fi}}}(\Sigma)$ is an isomorphism between the complete lattices  $\mathbf{Form}_{\mathrm{Alg}_{\mathrm{f}}}(\Sigma)$ and $\mathbf{Form}_{\mathrm{Cgr}_{\mathrm{fi}}}(\Sigma)$.
\end{proof}

\begin{corollary}\label{FormCgrfiAlg}
$\mathbf{Form}_{\mathrm{Cgr}_{\mathrm{fi}}}(\Sigma)$ is an algebraic lattice.
\end{corollary}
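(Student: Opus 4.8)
The plan is to derive this corollary directly from the two results immediately preceding it, in exact parallel with the derivation of the earlier corollary that $\mathbf{Form}_{\mathrm{Cgr}}(\Sigma)$ is an algebraic lattice. The preceding proposition exhibits, via the bi-restriction of the bijection $\theta_{\Sigma}$ of Proposition~\ref{FormAlgFormCgrIso}, an isomorphism of complete lattices between $\mathbf{Form}_{\mathrm{Alg}_{\mathrm{f}}}(\Sigma)$ and $\mathbf{Form}_{\mathrm{Cgr}_{\mathrm{fi}}}(\Sigma)$; and the corollary just before that proposition records that $\mathbf{Form}_{\mathrm{Alg}_{\mathrm{f}}}(\Sigma)$ is an algebraic lattice. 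Thus the entire task reduces to transporting algebraicity across a complete-lattice isomorphism.

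First I would invoke the fact---already used once in this article for $\mathbf{Form}_{\mathrm{Cgr}}(\Sigma)$---that $\mathbf{ALat}$, the category of algebraic lattices, is the full subcategory of $\mathbf{CLat}$ determined by the algebraic lattices and is isomorphism-closed. Since in $\mathbf{CLat}$ the isomorphisms coincide with the order isomorphisms, and an order isomorphism of complete lattices preserves all existing suprema and infima, it carries compact elements to compact elements and carries the representation of an arbitrary element as a supremum of compact elements to the corresponding representation in the target. Hence the property of being algebraic is invariant under isomorphism in $\mathbf{CLat}$, which is precisely the content of $\mathbf{ALat}$ being isomorphism-closed.

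Applying this to the isomorphism of the preceding proposition, together with the algebraicity of $\mathbf{Form}_{\mathrm{Alg}_{\mathrm{f}}}(\Sigma)$, yields at once that $\mathbf{Form}_{\mathrm{Cgr}_{\mathrm{fi}}}(\Sigma)$ is an algebraic lattice. There is, honestly, no substantial obstacle: the corollary is a purely formal consequence, and the only point one might pause over---that algebraicity is genuinely transported and not merely completeness---is settled by the observation above that order isomorphisms preserve suprema and therefore compactness. This also improves the earlier result that $\mathbf{Form}_{\mathrm{Cgr}_{\mathrm{fi}}}(\Sigma)$ is a complete lattice, exactly as anticipated in the remark following that proposition.
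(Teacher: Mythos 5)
Your proof is correct and follows exactly the paper's own route: the paper derives this corollary from the isomorphism $\mathbf{Form}_{\mathrm{Alg}_{\mathrm{f}}}(\Sigma)\cong\mathbf{Form}_{\mathrm{Cgr}_{\mathrm{fi}}}(\Sigma)$ together with the algebraicity of $\mathbf{Form}_{\mathrm{Alg}_{\mathrm{f}}}(\Sigma)$, using the same observation (already invoked for $\mathbf{Form}_{\mathrm{Cgr}}(\Sigma)$) that $\mathbf{ALat}$ is isomorphism-closed in $\mathbf{CLat}$. Your additional remark that order isomorphisms preserve suprema and hence compact elements is precisely the justification the paper leaves implicit.
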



\begin{remark}
Let $J$ be a nonempty set in $\ensuremath{\boldsymbol{\mathcal{U}}}$ and $(\mathfrak{F}_{j})_{j\in J}$ an upward directed family in $\mathrm{Form}_{\mathrm{Cgr}_{\mathrm{fi}}}(\Sigma)$. Then the function $\mathfrak{F}$ defined, for every $A\in \boldsymbol{\mathcal{U}}^{S}$, as $\mathfrak{F}(A) = \bigcup_{j\in J}\mathfrak{F}_{j}(A)$ is the least upper bound of $(\mathfrak{F}_{j})_{i\in J}$ in $\mathbf{Form}_{\mathrm{Cgr}_{\mathrm{fi}}}(\Sigma)$. Moreover, since $\mathbf{Form}_{\mathrm{Cgr}_{\mathrm{fi}}}(\Sigma)$ is an algebraic lattice, it is meet-continuous, i.e., for every $\mathfrak{F}$ in $\mathrm{Form}_{\mathrm{Cgr}_{\mathrm{fi}}}(\Sigma)$, every nonempty set $J$ in $\ensuremath{\boldsymbol{\mathcal{U}}}$, and every upward directed family $(\mathfrak{F}_{j})_{j\in J}$ in $\mathrm{Form}_{\mathrm{Cgr}_{\mathrm{fi}}}(\Sigma)$ we have that
$$
\textstyle
\mathfrak{F}\wedge\bigvee_{j\in J}\mathfrak{F}_{j} = \bigvee_{j\in J} (\mathfrak{F}\wedge \mathfrak{F}_{j}).
$$
\end{remark}


We next define the concepts of regular language and of formation of regular languages with respect to $\Sigma$.

\begin{definition} Let $\mathbf{A}$ be an $\Sigma$-algebra such that $\mathrm{supp}_{S}(\mathbf{A})$ is finite and $L\subseteq A$. We say that $L$ is a \emph{regular language} if $\Omega^{\mathbf{A}}(L)\in \mathrm{Cgr}_{\mathrm{fi}}(\mathbf{A})$. We denote by $\mathrm{Lang_{r}}(\mathbf{A})$ the set of all $L\subseteq A$ such that $L$ is regular.
\end{definition}

\begin{remark}
Let $\mathbf{A}$ be an $\Sigma$-algebra such that $\mathrm{supp}_{S}(\mathbf{A})$ is finite and $L\subseteq A$. Then $L$ can be regular and not finite.
\end{remark}

\begin{definition}\label{Def1FRL}
A \emph{formation of regular languages with respect to} $\Sigma$ is a function $\mathcal{L}$ from $\boldsymbol{\mathcal{U}}^{S}$  such that, for every $A\in \boldsymbol{\mathcal{U}}^{S}$, $\mathcal{L}(A)\subseteq \mathrm{Lang_{r}}(\mathbf{T}_{\Sigma}(A))$, and the following conditions are satisfied:
\begin{enumerate}

\item For every $A\in \boldsymbol{\mathcal{U}}^{S}$,   $\nabla^{\mathbf{T}_{\Sigma}(A)}\!-\!\mathrm{Sat}(\mathrm{T}_{\Sigma}(A))\subseteq \mathcal{L}(A)$.

\item For every $A\in \boldsymbol{\mathcal{U}}^{S}$ and every $L$ and $L'\in \mathcal{L}(A)$,
      $$
      (\Omega^{\mathbf{T}_{\Sigma}(A)}(L)\cap \Omega^{\mathbf{T}_{\Sigma}(A)}(L'))\!-\!\mathrm{Sat}(\mathrm{T}_{\Sigma}(A))\subseteq \mathcal{L}(A).
      $$

\item For every $B\in \boldsymbol{\mathcal{U}}^{S}$, every $M\in \mathcal{L}(B)$, and every homomorphism $f\colon \mathbf{T}_{\Sigma}(A)\mor \mathbf{T}_{\Sigma}(B)$, if \[\mathrm{pr}^{\Omega^{\mathbf{T}_{\Sigma}(B)}(M)}\circ f\colon \mathbf{T}_{\Sigma}(A)\mor \mathbf{T}_{\Sigma}(B)/\Omega^{\mathbf{T}_{\Sigma}(B)}(M)\] is an epimorphism, then $\mathrm{Ker}(\mathrm{pr}^{\Omega^{\mathbf{T}_{\Sigma}(B)}(M)}\circ f)\!-\!\mathrm{Sat}(\mathrm{T}_{\Sigma}(A))\subseteq \mathcal{L}(A)$.
\end{enumerate}

We denote by $\mathrm{Form}_{\mathrm{Lang_{r}}}(\Sigma)$ the set of all formations of regular languages with respect to $\Sigma$. Notice that $\mathrm{Form}_{\mathrm{Lang_{r}}}(\Sigma)\subseteq \prod_{A\in \boldsymbol{\mathcal{U}}^{S}} \mathrm{Sub}(\mathrm{Lang_{r}}(\mathbf{T}_{\Sigma}(A)))$. Therefore a formation of regular languages with respect to $\Sigma$ is a special type of choice function for $(\mathrm{Sub}(\mathrm{Lang_{r}}(\mathbf{T}_{\Sigma}(A))))_{A\in \boldsymbol{\mathcal{U}}^{S}}$.
\end{definition}

Two formations of regular languages with respect to $\Sigma$, $\mathcal{L}$ and $\mathcal{M}$, can be compared in a natural way, e.g., by stating that $\mathcal{L}\leq \mathcal{M}$ if, and only if, for every $A\in \boldsymbol{\mathcal{U}}^{S}$, $\mathcal{L}(A)\subseteq \mathcal{M}(A)$. We denote by $\mathbf{Form}_{\mathrm{Lang_{r}}}(\Sigma) = (\mathrm{Form}_{\mathrm{Lang_{r}}}(\Sigma),\leq)$ the corresponding ordered set.

Before proving that there exists an isomorphism between the complete lattice of all formations of finite index congruence with respect to $\Sigma$ and the complete lattice of all formations of regular languages with respect to $\Sigma$, which, ultimately, will be an isomorphism between algebraic lattices, we provide next an alternative but, as we will prove below, equivalent definition of formation of regular languages with respect to $\Sigma$ by means of, among others, translations and Boolean operations. Let us point out that this alternative definition is a generalization to the many-sorted case of that proposed in~\cite{bps14} on p.~1748. For this reason, we call the just mentioned formation of regular languages with respect to $\Sigma$ a $\mathrm{BPS}$-formation of regular languages with respect to $\Sigma$.

\begin{definition}\label{Def2FRL}
A \emph{BPS-formation of regular languages with respect to} $\Sigma$ is a function $\mathcal{L}$ from $\boldsymbol{\mathcal{U}}^{S}$ such that, for every $A\in \boldsymbol{\mathcal{U}}^{S}$, $\mathcal{L}(A)\subseteq \mathrm{Lang_{r}}(\mathbf{T}_{\Sigma}(A))$, and the following conditions are satisfied:
\begin{enumerate}

\item[(BPS 1)] For every $A\in \boldsymbol{\mathcal{U}}^{S}$,   $\nabla^{\mathbf{T}_{\Sigma}(A)}\!-\!\mathrm{Sat}(\mathrm{T}_{\Sigma}(A))\subseteq \mathcal{L}(A)$.

\item[(BPS 2)] For every $A\in \boldsymbol{\mathcal{U}}^{S}$, every $L\in \mathcal{L}(A)$, every $s$, $t\in S$, and every $T\in \mathrm{Tl}_{t}(\mathbf{T}_{\Sigma}(A))_{s}$, the language $T^{-1}[L]\in\mathcal{L}(A)$, i.e., $\mathcal{L}(A)$ is closed under the inverse image of translations.

\item[(BPS 3)] For every $A\in \boldsymbol{\mathcal{U}}^{S}$ and every $L, L'\in \mathcal{L}(A)$, $L\cup L'$, $L\cap L'$ and $\complement_{\mathbf{T}_{\Sigma}(A)}L$ are in $\mathcal{L}(A)$ i.e., $\mathcal{L}(A)$ is a Boolean subalgebra of $\mathbf{Sub}(\mathrm{T}_{\Sigma}(A))$.

\item[(BPS 4)] For every $A, B\in \boldsymbol{\mathcal{U}}^{S}$, every $M\in \mathcal{L}(B)$, and every homomorphism $f\colon \mathbf{T}_{\Sigma}(A)\mor \mathbf{T}_{\Sigma}(B)$, if \[\mathrm{pr}^{\Omega^{\mathbf{T}_{\Sigma}(B)}(M)}\circ f\colon \mathbf{T}_{\Sigma}(A)\mor \mathbf{T}_{\Sigma}(B)/\Omega^{\mathbf{T}_{\Sigma}(B)}(M)\] is an epimorphism, then $f^{-1}[M]\in\mathcal{L}(A)$.
\end{enumerate}

If we denote by $\mathrm{Form}^{\mathrm{BPS}}_{\mathrm{Lang_{r}}}(\Sigma)$ the set of all BPS-formations of regular languages with respect to $\Sigma$, then $\mathrm{Form}^{\mathrm{BPS}}_{\mathrm{Lang_{r}}}(\Sigma)\subseteq \prod_{A\in \boldsymbol{\mathcal{U}}^{S}} \mathrm{Sub}(\mathrm{Lang_{r}}(\mathbf{T}_{\Sigma}(A)))$. Therefore a BPS-formation of regular languages with respect to $\Sigma$ is a special type of choice function for the family $(\mathrm{Sub}(\mathrm{Lang_{r}}(\mathbf{T}_{\Sigma}(A))))_{A\in \boldsymbol{\mathcal{U}}^{S}}$.
\end{definition}

\begin{proposition} Definitions \ref{Def1FRL} and \ref{Def2FRL} are equivalent.
\end{proposition}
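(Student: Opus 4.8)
The plan is to prove that the conditions (1)--(3) of Definition~\ref{Def1FRL} and the conditions (BPS~1)--(BPS~4) of Definition~\ref{Def2FRL} single out exactly the same functions $\mathcal{L}$, by establishing the two implications separately. Since the constraint $\mathcal{L}(A)\subseteq\mathrm{Lang_{r}}(\mathbf{T}_{\Sigma}(A))$ and the conditions (1) and (BPS~1) are literally common to both definitions, I would dispose of those at once and concentrate all the work on the cogenerated congruence $\Omega^{\mathbf{T}_{\Sigma}(A)}$ and its properties proved in Sections 4 and 5.

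For the first implication, assume $\mathcal{L}$ is a formation of regular languages in the sense of Definition~\ref{Def1FRL} and verify (BPS~2), (BPS~3), (BPS~4). The organizing remark is that, taking $L'=L$ in condition~(2), one has $\Omega^{\mathbf{T}_{\Sigma}(A)}(L)\!-\!\mathrm{Sat}(\mathrm{T}_{\Sigma}(A))\subseteq\mathcal{L}(A)$ for every $L\in\mathcal{L}(A)$. Then (BPS~2) follows because, by Proposition~\ref{TAntiTrans}, $\Omega^{\mathbf{T}_{\Sigma}(A)}(L)\subseteq\Omega^{\mathbf{T}_{\Sigma}(A)}(T^{-1}[L])$, so $T^{-1}[L]$ is $\Omega^{\mathbf{T}_{\Sigma}(A)}(L)$-saturated by Proposition~\ref{CharacSatCCog} and hence lies in $\mathcal{L}(A)$. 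For (BPS~3), with $\Phi=\Omega^{\mathbf{T}_{\Sigma}(A)}(L)\cap\Omega^{\mathbf{T}_{\Sigma}(A)}(L')$, Corollary~\ref{IncSat} shows $L$ and $L'$ are both $\Phi$-saturated, and since $\Phi\!-\!\mathrm{Sat}(\mathrm{T}_{\Sigma}(A))$ is a Boolean subalgebra of $\mathbf{Sub}(\mathrm{T}_{\Sigma}(A))$ by Propositions~\ref{SatOperator} and~\ref{CABA Saturades}, the sets $L\cup L'$, $L\cap L'$, $\complement_{\mathbf{T}_{\Sigma}(A)}L$ are $\Phi$-saturated and so belong to $\mathcal{L}(A)$ by condition~(2). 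Finally, writing $g=\mathrm{pr}^{\Omega^{\mathbf{T}_{\Sigma}(B)}(M)}\circ f$, one has $\mathrm{Ker}(g)=(f\times f)^{-1}[\Omega^{\mathbf{T}_{\Sigma}(B)}(M)]\subseteq\Omega^{\mathbf{T}_{\Sigma}(A)}(f^{-1}[M])$ by Proposition~\ref{TAntiHom}, so $f^{-1}[M]$ is $\mathrm{Ker}(g)$-saturated by Proposition~\ref{CharacSatCCog} and therefore lies in $\mathcal{L}(A)$ by condition~(3), giving (BPS~4).

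For the converse implication, assume $\mathcal{L}$ is a BPS-formation and recover conditions~(2) and~(3). The decisive ingredient will be a lemma: for every $L\in\mathcal{L}(A)$, $s\in S$ and $P\in\mathrm{T}_{\Sigma}(A)_{s}$, the atom $\delta^{s,[P]_{\Omega^{\mathbf{T}_{\Sigma}(A)}(L)_{s}}}$ belongs to $\mathcal{L}(A)$. To prove it I would read the class description of Proposition~\ref{DesClasCog}, $[P]_{\Omega^{\mathbf{T}_{\Sigma}(A)}(L)_{s}}=\bigcap\mathcal{X}_{L,s,P}-\bigcup\overline{\mathcal{X}}_{L,s,P}$, as an identity of $S$-sorted sets, namely $\delta^{s,[P]_{\Omega^{\mathbf{T}_{\Sigma}(A)}(L)_{s}}}=\bigcap\{T^{-1}[L]\}\cap\complement_{\mathbf{T}_{\Sigma}(A)}\bigcup\{T^{-1}[L]\}$ (the translates ranging over $\mathcal{X}_{L,s,P}$, respectively $\overline{\mathcal{X}}_{L,s,P}$), where each $T^{-1}[L]\in\mathcal{L}(A)$ by (BPS~2). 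The crucial finiteness point is that $L$ is regular, so $\mathbf{T}_{\Sigma}(A)/\Omega^{\mathbf{T}_{\Sigma}(A)}(L)$ is finite, and each $T^{-1}[L]$ is $\Omega^{\mathbf{T}_{\Sigma}(A)}(L)$-saturated (Proposition~\ref{TAntiTrans}); hence only finitely many distinct such translates occur, the intersection and union above are finite, and (BPS~3) applies, with the degenerate cases of empty families covered by (BPS~1) via Proposition~\ref{NablaSat}. Condition~(2) then follows: for $L,L'\in\mathcal{L}(A)$ the atoms of $(\Omega^{\mathbf{T}_{\Sigma}(A)}(L)\cap\Omega^{\mathbf{T}_{\Sigma}(A)}(L'))\!-\!\mathrm{Sat}$ factor as $\delta^{s,[P]_{(\Omega(L)\cap\Omega(L'))_{s}}}=\delta^{s,[P]_{\Omega(L)_{s}}}\cap\delta^{s,[P]_{\Omega(L')_{s}}}$, each factor in $\mathcal{L}(A)$ by the lemma, so the atom lies in $\mathcal{L}(A)$ by (BPS~3), and since the quotient is finite every such saturated language is a finite union of atoms, whence (BPS~3) concludes.

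For condition~(3) I would argue without atoms. With $g=\mathrm{pr}^{\Omega^{\mathbf{T}_{\Sigma}(B)}(M)}\circ f$ an epimorphism, every $\mathrm{Ker}(g)$-saturated language $N$ can be written, using the characterization of saturated sets and the surjectivity of $g$, as $N=f^{-1}[M']$ with $M'=(\mathrm{pr}^{\Omega^{\mathbf{T}_{\Sigma}(B)}(M)})^{-1}[\mathcal{Z}]$ an $\Omega^{\mathbf{T}_{\Sigma}(B)}(M)$-saturated subset of $\mathrm{T}_{\Sigma}(B)$. The lemma applied over $B$ yields $\Omega^{\mathbf{T}_{\Sigma}(B)}(M)\!-\!\mathrm{Sat}(\mathrm{T}_{\Sigma}(B))\subseteq\mathcal{L}(B)$, so $M'\in\mathcal{L}(B)$; moreover $\Omega^{\mathbf{T}_{\Sigma}(B)}(M)\subseteq\Omega^{\mathbf{T}_{\Sigma}(B)}(M')$ by Proposition~\ref{CharacSatCCog}, so $\mathrm{pr}^{\Omega^{\mathbf{T}_{\Sigma}(B)}(M')}\circ f$ factors as the composite of $g$ with the canonical projection $\mathbf{T}_{\Sigma}(B)/\Omega^{\mathbf{T}_{\Sigma}(B)}(M)\mor\mathbf{T}_{\Sigma}(B)/\Omega^{\mathbf{T}_{\Sigma}(B)}(M')$ and is again an epimorphism, whence (BPS~4) for $M'$ gives $N=f^{-1}[M']\in\mathcal{L}(A)$. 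I expect the main obstacle to be precisely the lemma and its use in condition~(2): correctly reading Proposition~\ref{DesClasCog} as a Boolean combination of inverse translates and, above all, justifying the finiteness that converts the a priori infinite families $\mathcal{X}_{L,s,P}$, $\overline{\mathcal{X}}_{L,s,P}$ into finitely many distinct $\Omega^{\mathbf{T}_{\Sigma}(A)}(L)$-saturated sets, so that only the finite Boolean operations guaranteed by (BPS~3) are required.
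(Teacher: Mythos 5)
Your proof is correct, and most of it runs parallel to the paper's own argument: the forward direction uses exactly the paper's ingredients (taking $L'=L$ in condition (2) to get closure under $\Omega^{\mathbf{T}_{\Sigma}(A)}(L)$-saturation, then Propositions~\ref{TAntiTrans}, \ref{CharacSatCCog}, \ref{TAntiHom} and the Boolean-algebra structure of the saturated sets), and in the converse direction your key lemma---that each atom $\delta^{s,[P]_{\Omega^{\mathbf{T}_{\Sigma}(A)}(L)_{s}}}$ with $L\in\mathcal{L}(A)$ lies in $\mathcal{L}(A)$---is proved just as in the paper, by reading Proposition~\ref{DesClasCog} as a Boolean combination of inverse translates $T^{-1}[L]$ and using the finite index of $\Omega^{\mathbf{T}_{\Sigma}(A)}(L)$ to see that only finitely many distinct (because $\Omega^{\mathbf{T}_{\Sigma}(A)}(L)$-saturated) translates occur; you are in fact more explicit than the paper about the degenerate case $\mathcal{X}_{L,s,P}=\varnothing$, which you cover via Proposition~\ref{NablaSat} and (BPS~1). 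The genuine divergence is in recovering condition~(3) of Definition~\ref{Def1FRL}. The paper argues atom by atom: for each atom $\delta^{s,[P]_{\Psi_{s}}}$, where $\Psi=\mathrm{Ker}(\mathrm{pr}^{\Omega^{\mathbf{T}_{\Sigma}(B)}(M)}\circ f)$, it forms the language $K=[f[\delta^{s,[P]_{\Psi_{s}}}]]^{\Omega^{\mathbf{T}_{\Sigma}(B)}(M)}$, applies (BPS~4) to $K$, and then verifies by an element chase that $f^{-1}[K]=\delta^{s,[P]_{\Psi_{s}}}$. You instead represent an arbitrary $\Psi$-saturated language $N$ at once as $N=f^{-1}[M']$, where $M'=(\mathrm{pr}^{\Omega^{\mathbf{T}_{\Sigma}(B)}(M)})^{-1}[\mathcal{Z}]$ is $\Omega^{\mathbf{T}_{\Sigma}(B)}(M)$-saturated, note that $M'\in\mathcal{L}(B)$ by the lemma over $B$, observe that $\mathrm{pr}^{\Omega^{\mathbf{T}_{\Sigma}(B)}(M')}\circ f$ is still an epimorphism because $\Omega^{\mathbf{T}_{\Sigma}(B)}(M)\subseteq\Omega^{\mathbf{T}_{\Sigma}(B)}(M')$ (Proposition~\ref{CharacSatCCog}), and apply (BPS~4) a single time. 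Your route is shorter, needs no pointwise verification, dispenses with atoms and with the finite index of $\Psi$ at this step (only the finiteness of $\Omega^{\mathbf{T}_{\Sigma}(B)}(M)$ enters, through the lemma over $B$); what the paper's atomwise computation buys is an explicit description of the preimage language and uniformity with its treatment of condition~(2). Both arguments are sound.
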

\begin{proof}
Let $\mathcal{L}$ be a formation of regular languages in the sense of Definition~\ref{Def1FRL}. Let us check that it fulfils all the conditions set out in Definition~\ref{Def2FRL}. Let $A$ be an $S$-sorted set. Then, obviously, $(\mathrm{BPS}\, 1)$ is fulfilled. Let us verify $(\mathrm{BPS}\, 2)$. Let $A$ be an $S$-sorted set, $L\in \mathcal{L}(A)$, $s$, $t\in S$, and $T\in \mathrm{Tl}_{t}(\mathbf{T}_{\Sigma}(A))_{s}$. By Corollary~\ref{TAntiTrans}, $\Omega^{\mathbf{T}_{\Sigma}(A)}(L)\subseteq \Omega^{\mathbf{T}_{\Sigma}(A)}(T^{-1}[L])$. Hence, by Corollary~\ref{IncSat}, $T^{-1}[L]$ is $\Omega^{\mathbf{T}_{\Sigma}(A)}(L)$-saturated. But, by definition, all the $\Omega^{\mathbf{T}_{\Sigma}(A)}(L)$-saturated languages belong to $\mathcal{L}(A)$. Therefore $T^{-1}[L]\in\mathcal{L}(A)$. Now let us verify $(\mathrm{BPS}\, 3)$. Let $A$ be an $S$-sorted set and $L, L'\in \mathcal{L}(A)$. Let us notice that $\Omega^{\mathbf{T}_{\Sigma}(A)}(L)\cap\Omega^{\mathbf{T}_{\Sigma}(A)}(L')$ is a congruence included in both $\Omega^{\mathbf{T}_{\Sigma}(A)}(L)$ and $\Omega^{\mathbf{T}_{\Sigma}(A)}(L')$. Hence, by Corollary~\ref{IncSat}, $L$ and $L'$ are saturated languages for the congruence $\Omega^{\mathbf{T}_{\Sigma}(A)}(L)\cap\Omega^{\mathbf{T}_{\Sigma}(A)}(L')$. By Proposition~\ref{CABA Saturades} $L\cup L'$, $L\cap L'$, and $\complement_{\mathbf{T}_{\Sigma}(A)}L$ are also $\Omega^{\mathbf{T}_{\Sigma}(A)}(L)\cap\Omega^{\mathbf{T}_{\Sigma}(A)}(L')$-saturated and, thus, languages in $\mathcal{L}(A)$. Finally, let us verify $(\mathrm{BPS}\, 4)$. Let $A$ and $B$  be $S$-sorted sets, $M\in \mathcal{L}(B)$, and $f\colon \mathbf{T}_{\Sigma}(A)\mor \mathbf{T}_{\Sigma}(B)$ a homomorphism such that $\mathrm{pr}^{\Omega^{\mathbf{T}_{\Sigma}(B)}(M)}\circ f$ is an epimorphism. Then, by definition, all languages saturated for the congruence $\mathrm{Ker}(\mathrm{pr}^{\Omega^{\mathbf{T}_{\Sigma}(B)}(M)}\circ f)$ are in $\mathcal{L}(A)$. By Proposition~\ref{TAntiHom}, $\mathrm{Ker}(\mathrm{pr}^{\Omega^{\mathbf{T}_{\Sigma}(B)}(M)}\circ f)\subseteq \Omega^{\mathbf{A}}(f^{-1}[M])$. Thus, by Corollary~\ref{IncSat}, $f^{-1}[M]$ is saturated for the congruence $\mathrm{Ker}(\mathrm{pr}^{\Omega^{\mathbf{T}_{\Sigma}(B)}(M)}\circ f)$ and we conclude that it belongs to $\mathcal{L}(A)$.

Reciprocally, let $\mathcal{L}$ be a $\mathrm{BPS}-$formation of regular languages. Let us check that it fulfils all the conditions set out in Definition~\ref{Def1FRL}. Let $A$ be an $S$-sorted set. Then, obviously, $(1)$ is fulfilled. Let $A$ be an $S$-sorted set and $L$, $L'\in \mathcal{L}(A)$. In the sequel, to simplify notation, $\Phi$ stands for the congruence $\Omega^{\mathbf{T}_{\Sigma}(A)}(L)\cap \Omega^{\mathbf{T}_{\Sigma}(A)}(L')$. We want to show that all the saturated languages for the congruence $\Phi$ are in $\mathcal{L}(A)$. Since $L$ and $L'$ are regular languages, the congruences $\Omega^{\mathbf{T}_{\Sigma}(A)}(L)$ and $\Omega^{\mathbf{T}_{\Sigma}(A)}(L')$ have finite index. Hence $\Phi$ has also finite index. In particular, $\mathrm{supp}_{S}({\mathbf{T}}_{\Sigma}(A)/\Phi)$ is finite and, for every $t\in \mathrm{supp}_{S}({\mathbf{T}}_{\Sigma}(A)/\Phi)$, ${\mathrm{T}}_{\Sigma}(A)_{t}/\Phi_{t}$ is finite. Let $K$ be a saturated language for the congruence $\Phi$. Then $K$ will also have finite support. 
To verify this last assertion it suffices to show that, for every $t\in \mathrm{supp}_{S}(K)$, the language $\delta^{t,K_t}$ is a language in $\mathcal{L}(A)$. Let $t$ be an element of $\mathrm{supp}_{S}(K)$. Since $K$ is a saturated language for $\Phi$, it follows that $K_{t} = \bigcup_{P\in K_{t}}[P]_{\Phi_{t}}$. On the other hand, since $\Phi$ has finite index, there exists a finite number of equivalence classes with respect to $\Phi_{t}$. Thus, the above union is finite and so it suffices to show that, for every $t\in \mathrm{supp}_{S}(K)$ and every $P\in K_t$, the language $\delta^{t,[P]_{\Phi_{t}}}$ belongs to $\mathcal{L}(A)$. But, since $[P]_{\Phi_{t}}=[P]_{\Omega^{\mathbf{T}_{\Sigma}(A)}(L)_t}\cap [P]_{\Omega^{\mathbf{T}_{\Sigma}(A)}(L')_t}$, we only need to show that both $\delta^{t,[P]_{\Omega^{\mathbf{T}_{\Sigma}(A)}(L)_t}}$ and $\delta^{t,[P]_{\Omega^{\mathbf{T}_{\Sigma}(A)}(L')_t}}$ are languages in $\mathcal{L}(A)$. However, by Proposition~\ref{DesClasCog}, $[P]_{\Omega^{\mathbf{T}_{\Sigma}(A)}(L)_t}$ has the following representation
\begin{equation}\label{Eq1}
\textstyle
[P]_{\Omega^{\mathbf{A}}(L)_{t}} = \bigcap\mathcal{X}_{L,t,P}-\bigcup\overline{\mathcal{X}}_{L,t,P},
\end{equation}
where  $\mathcal{X}_{L,t,P}$ denotes the subset of $\mathrm{Sub}({T}_{\Sigma}(A)_{t})$ defined as follows:
$$
\mathcal{X}_{L,t,P} = \biggl\{ X\in \mathrm{Sub}({T}_{\Sigma}(A)_{t})\biggm|
\begin{gathered}
\exists\, s\in S\,\exists\, T\in \mathrm{Tl}_{t}(\mathbf{T}_{\Sigma}(A))_{s}
\\[-3pt]
(X = T^{-1}[L_{s}] \And T(P)\in L_{s})
\end{gathered}
\biggr\},
$$
and by $\overline{\mathcal{X}}_{L,t,P}$ the subset of $\mathrm{Sub}(A_{t})$ defined as follows:
$$
\overline{\mathcal{X}}_{L,t,P} = \biggl\{ X\in \mathrm{Sub}({T}_{\Sigma}(A)_{t})\biggm|
\begin{gathered}
\exists\, s\in S\,\exists\, T\in \mathrm{Tl}_{t}(\mathbf{T}_{\Sigma}(A))_{s}
\\[-3pt]
(X = T^{-1}[L_{s}] \And T(P)\nin L_{s})
\end{gathered}
\biggr\}.
$$

Let us consider a sort $t \in S$ and an $X\in \mathcal{X}_{L,t,P}$. Then, by definition, there exists a translation $T\in \mathrm{Tl}_{t}(\mathbf{A})_{s}$ such that $X = T^{-1}[L_{s}]$. But, by Definition~\ref{DAntiTrans}, we have that $T^{-1}[L_{s}] = T^{-1}[\delta^{s,L_{s}}]$, hence $X = T^{-1}[\delta^{s,L_{s}}]$. On the other hand, $\delta^{s,L_{s}}$ can be represented as $L\cap\delta^{s,{T}_{\Sigma}(A)_{s}}$. But, by Proposition~\ref{NablaSat}, $\delta^{s,{T}_{\Sigma}(A)_{s}}$ is a $\nabla^{\mathbf{T}_{\Sigma}(A)}$-saturated language and, thus, is a language in $\mathcal{L}(A)$. Therefore, since, by hypothesis, $L$ is a language in $\mathcal{L}(A)$ and $\mathcal{L}(A)$ is closed under finite intersections, $\delta^{s,L_{s}}$ is a language in $\mathcal{L}(A)$, and, consequently, the language $\delta^{t,X} = \delta^{t,T^{-1}[\delta^{s,L_{s}}]}$ belongs to $\mathcal{L}(A)$, since $\mathcal{L}(A)$ is closed under inverse images of translations.
On the other hand, by Proposition~\ref{TAntiTrans}, for every $T\in \mathrm{Tl}_{t}(\mathbf{T}_{\Sigma}(A))_{s}$ we have that $\Omega^{\mathbf{T}_{\Sigma}(A)}(\delta^{s,L_{s}})\subseteq \Omega^{\mathbf{T}_{\Sigma}(A)}(T^{-1}[\delta^{s,L_{s}}])$, and, in addition to this, we have that $\Omega^{{\mathbf{T}}_{\Sigma}(A)}(L)\subseteq \Omega^{\mathbf{T}_{\Sigma}(A)}(\delta^{s,L_{s}})$. Hence, for every $t\in S$ and every $X\in \mathcal{X}_{L,t,P}$, the language $\delta^{t,X}$ is $\Omega^{{\mathbf{T}}_{\Sigma}(A)}(L)$-saturated. Since $\Omega^{{\mathbf{T}}_{\Sigma}(A)}(L)$ has finite index, there exists only a finite number of $\Omega^{{\mathbf{T}}_{\Sigma}(A)}(L)$-saturated languages. Therefore, the families $\mathcal{X}_{L,t,P}$ and  $\overline{\mathcal{X}}_{L,t,P}$ are finite. Hence, from equation \ref{Eq1}, we have that $\delta^{t,[P]_{\Omega^{\mathbf{T}_{\Sigma}(A)}(L)_t}}$ can be represented by using Boolean operations involving only a finite number of languages in $\mathcal{L}(A)$, and, thus, $\delta^{t,[P]_{\Omega^{\mathbf{T}_{\Sigma}(A)}(L)_t}}$ is a language in $\mathcal{L}(A)$. An analogous argument can be used to conclude that $\delta^{t,[P]_{\Omega^{\mathbf{T}_{\Sigma}(A)}(L')_t}}$ also belongs to $\mathcal{L}(A)$. Therefore all the saturated languages for the congruence $\Phi = \Omega^{\mathbf{T}_{\Sigma}(A)}(L)\cap \Omega^{\mathbf{T}_{\Sigma}(A)}(L')$ are in $\mathcal{L}(A)$. Finally, let us verify $(4)$. Let $A$ and $B$ be $S$-sorted sets, $M\in \mathcal{L}(B)$, and $f\colon \mathbf{T}_{\Sigma}(A)\mor \mathbf{T}_{\Sigma}(B)$ a homomorphism such that $\mathrm{pr}^{\Omega^{\mathbf{T}_{\Sigma}(B)}(M)}\circ f$ is an epimorphism. In the sequel, to simplify notation, $\Psi$ stands for the congruence $\mathrm{Ker}(\mathrm{pr}^{\Omega^{\mathbf{T}_{\Sigma}(B)}(M)}\circ f)$. We need to show that all languages saturated for the congruence $\Psi$  belong to $\mathcal{L}(A)$. Let us notice that the congruence $\Psi$ has also finite index and we can proceed as in the second case, that is, we only need to prove that, for every $s\in S$ and every $P\in {T}_{\Sigma}(A)_{s}$, the language $\delta^{s, [P]_{\Psi_{s}}}$ belongs to $\mathcal{L}(A)$. The statement will follow since the remaining $\Psi$-saturated languages are finite unions of these atomic languages.

Consider the language $K$ in $\mathrm{Sub}({T}_{\Sigma}(B))$ defined by
$$
K=[f[\delta^{s,[P]_{\Psi_s}}]]^{\Omega^{\mathbf{T}_{\Sigma}(B)}(M)},
$$
that is, $K$ is the $\Omega^{\mathbf{T}_{\Sigma}(B)}(M)$-saturation of the language $f[\delta^{s,[P]_{\Psi_s}}]$. Let us notice  that in the second case we have proved that, if $M$ is a language in $\mathcal{L}(B)$, then all the $\Omega^{\mathbf{T}_{\Sigma}(B)}(M)$-saturated languages are in $\mathcal{L}(B)$, thus $K$ is a language in $\mathcal{L}(B)$. It follows from Proposition~\ref{CharacSatCCog} that $\Omega^{\mathbf{T}_{\Sigma}(B)}(M)\subseteq \Omega^{\mathbf{T}_{\Sigma}(B)}(K)$ and we conclude that
$$
\mathrm{pr}^{\Omega^{\mathbf{T}_{\Sigma}(B)}(K)}\circ f\colon \mathbf{T}_{\Sigma}(A)\mor \mathbf{T}_{\Sigma}(B)/\Omega^{\mathbf{T}_{\Sigma}(B)}(K)
$$
is an epimorphism. Then $f^{-1}[K]\in\mathcal{L}(A)$. We claim that the languages $\delta^{s, [P]_{\Psi_{s}}}$ and $f^{-1}[K]$ are equal.

In fact, let $t$ be a sort in $S$. If $t\neq s$, then, on the one hand, $\delta^{s, [P]_{\Psi_{s}}}_{} t= \varnothing$. On the other hand, we have that
$$
K_t=[f_t[\delta^{s,[P]_{\Psi_s}}_{t}]]^{\Omega^{\mathbf{T}_{\Sigma}(B)}(M)_{t}} = [\varnothing]^{\Omega^{\mathbf{T}_{\Sigma}(B)}(M)_{t}} = \varnothing.
$$
It follows that $(f^{-1}[K])_{t} = f^{-1}_{t}[K_{t}] = f^{-1}_{t}[\varnothing]=\varnothing$.

Now, for the case $t=s$, we have, on the one hand, that $\delta^{s, [P]_{\Psi_{s}}}_{s} = [P]_{\Psi_{s}}$. On the other hand, $(f^{-1}[K])_{s} = f^{-1}_s[K_s]$, where
$$
K_{s} = [f_{s}[\delta^{s,[P]_{\Psi_s}}_{s}]]^{\Omega^{\mathbf{T}_{\Sigma}(B)}(M)_{s}} = [f_{s}[[P]_{\Psi_{s}}]]^{\Omega^{\mathbf{T}_{\Sigma}(B)}(M)_{s}}.
$$
Let $R$ be an element of $[P]_{\Psi_{s}}$, then $(f_{s}(R),f_{s}(P))\in \Omega^{\mathbf{T}_{\Sigma}(B)}(M)_{s}$. Since $P\in [P]_{\Psi_{s}}$, then $f_{s}(P)\in f_{s}[[P]_{\Psi_{s}}]$ and we conclude that $f_{s}(R)\in K_{s}$. It follows that $R$ is a term in $(f^{-1}[K])_{s}$. For the converse, let $R$ be a term in $(f^{-1}[K])_{s}$, then $f_{s}(R)$ is a term in $K_{s}$, that is there exists a term $Q\in f_{s}[[P]_{\Psi_{s}}]$ such that $(Q,f_{s}(R))\in \Omega^{\mathbf{T}_{\Sigma}(B)}(M)_{s}$. Since $Q\in f_{s}[[P]_{\Psi_{s}}]$, there exists some term $P'\in [P]_{\Psi_{s}}$ such that $Q = f_{s}(P')$. That is, $(f_{s}(P'),f_{s}(R))\in \Omega^{\mathbf{T}_{\Sigma}(B)}(M)_{s}$. We conclude that $(P',R)\in \Psi_{s}$. But since $P'\in [P]_{\Psi_{s}}$, we can assert  that $R\in [P]_{\Psi_{s}}$. Hereby completing our proof.
\end{proof}


The following result about the fact that $\mathbf{Form}_{\mathrm{Lang_{r}}}(\Sigma)$ is a complete lattice may be proved in much the same way as those previously stated for other types of many-sorted formations. So the details are left to the reader.

\begin{proposition}
$\mathbf{Form}_{\mathrm{Lang_{r}}}(\Sigma)$ is a complete lattice.
\end{proposition}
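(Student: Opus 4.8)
The plan is to follow the template already used for the analogous propositions on $\mathbf{Form}_{\mathrm{Cgr}}(\Sigma)$ and $\mathbf{Form}_{\mathrm{Cgr}_{\mathrm{fi}}}(\Sigma)$: exhibit a greatest element and show that arbitrary meets exist, from which completeness follows by the standard argument $\bigvee_{j\in J}\mathcal{L}_{j} = \bigwedge\{\mathcal{L}\mid \forall j\in J\,(\mathcal{L}_{j}\leq \mathcal{L})\}$, the set on the right being nonempty precisely because it contains the top element.

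First I would note that $\mathbf{Form}_{\mathrm{Lang_{r}}}(\Sigma)$ is an ordered set, which is immediate from the componentwise definition of $\leq$. Next I would show that the choice function $\mathcal{L}$ defined, for every $A\in\boldsymbol{\mathcal{U}}^{S}$, by $\mathcal{L}(A) = \mathrm{Lang_{r}}(\mathbf{T}_{\Sigma}(A))$ is a formation of regular languages, and hence the greatest one. The three conditions of Definition~\ref{Def1FRL} all reduce, in this case, to a single observation: if $\Theta$ is a congruence of finite index on $\mathbf{T}_{\Sigma}(A)$ and $K$ is $\Theta$-saturated, then $K$ is regular. Indeed, by Proposition~\ref{CharacSatCCog}, $K = [K]^{\Theta}$ yields $\Theta\subseteq\Omega^{\mathbf{T}_{\Sigma}(A)}(K)$, and since $\mathrm{Cgr}_{\mathrm{fi}}(\mathbf{T}_{\Sigma}(A))$ is a filter (hence upward closed), $\Omega^{\mathbf{T}_{\Sigma}(A)}(K)$ is of finite index as well. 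It then remains only to check that the congruences arising in the three conditions are of finite index: the congruence $\nabla^{\mathbf{T}_{\Sigma}(A)}$ in (1); the intersection $\Omega^{\mathbf{T}_{\Sigma}(A)}(L)\cap\Omega^{\mathbf{T}_{\Sigma}(A)}(L')$ of two finite index congruences in (2); and $\mathrm{Ker}(\mathrm{pr}^{\Omega^{\mathbf{T}_{\Sigma}(B)}(M)}\circ f)$ in (3), which is of finite index because $\mathbf{T}_{\Sigma}(A)/\mathrm{Ker}(\mathrm{pr}^{\Omega^{\mathbf{T}_{\Sigma}(B)}(M)}\circ f)$ is isomorphic, the composite being an epimorphism, to the finite algebra $\mathbf{T}_{\Sigma}(B)/\Omega^{\mathbf{T}_{\Sigma}(B)}(M)$.

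Then I would prove that arbitrary meets exist. Given a nonempty family $(\mathcal{L}_{j})_{j\in J}$ in $\mathrm{Form}_{\mathrm{Lang_{r}}}(\Sigma)$, define $\bigwedge_{j\in J}\mathcal{L}_{j}$ by $(\bigwedge_{j\in J}\mathcal{L}_{j})(A) = \bigcap_{j\in J}\mathcal{L}_{j}(A)$ for every $A\in\boldsymbol{\mathcal{U}}^{S}$. Since each $\mathcal{L}_{j}(A)\subseteq\mathrm{Lang_{r}}(\mathbf{T}_{\Sigma}(A))$, so is the intersection, so no separate regularity check is needed here. Each of the three defining conditions passes to the intersection, because it holds for every member $\mathcal{L}_{j}$ and has the form ``a certain saturation-closure set is contained in $\mathcal{L}_{j}(A)$''; intersecting over $j\in J$ preserves the containment. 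Hence $\bigwedge_{j\in J}\mathcal{L}_{j}$ is a formation of regular languages, and it is plainly the greatest lower bound of the family.

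Finally, completeness follows: a poset possessing a greatest element and all infima of nonempty families is a complete lattice, the suprema being recovered in the usual way. I do not expect a serious obstacle, since the argument is a routine transfer of the earlier completeness proofs; the only point carrying genuine content is the verification that the top choice function is a formation, i.e.\ the finite-index bookkeeping for conditions (1)--(3), which is exactly where the standing hypothesis that $\mathrm{supp}_{S}(\mathbf{T}_{\Sigma}(A))$ is finite for every $A$ is used.
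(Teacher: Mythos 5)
Your proof is correct and takes exactly the route the paper intends: the paper omits the argument, saying it ``may be proved in much the same way'' as the completeness results for $\mathbf{Form}_{\mathrm{Cgr}}(\Sigma)$ and $\mathbf{Form}_{\mathrm{Cgr}_{\mathrm{fi}}}(\Sigma)$, and your proof (greatest element given by the full choice function $A\mapsto\mathrm{Lang_{r}}(\mathbf{T}_{\Sigma}(A))$, nonempty infima computed componentwise by intersection, completeness by the standard top-plus-meets argument) is precisely that transfer. The finite-index verifications you supply for conditions (1)--(3) of Definition~\ref{Def1FRL}, via Proposition~\ref{CharacSatCCog} and the filter property of $\mathrm{Cgr}_{\mathrm{fi}}(\mathbf{T}_{\Sigma}(A))$ under the standing support hypothesis, correctly fill in the details the paper leaves to the reader.
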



\begin{proposition}\label{Cong2LangEnFinit}
Let $\mathfrak{F}$ be a formation of finite index congruences with respect to $\Sigma$, then the function $\mathcal{L}_{\mathfrak{F}}$ from   $\boldsymbol{\mathcal{U}}^{S}$ which assigns to $A\in \boldsymbol{\mathcal{U}}^{S}$ the subset
\begin{align} \mathcal{L}_{\mathfrak{F}}(A) &= \{L\in \mathrm{Sub}(\mathrm{T}_{\Sigma}(A))\mid \exists\, \Phi\in\mathfrak{F}(A)\,(L = [L]^{\Phi})\}\notag \\ &= \{L\in \mathrm{Sub}(\mathrm{T}_{\Sigma}(A))\mid \Omega^{\mathbf{T}_{\Sigma}(A)}(L)\in \mathfrak{F}(A)\}.\notag
\end{align}
of $\mathrm{Sub}(\mathrm{T}_{\Sigma}(A))$ is a formation of regular languages with respect to $\Sigma$.
\end{proposition}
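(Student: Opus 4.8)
The plan is to observe that nearly everything required has already been established for arbitrary congruence formations in Proposition~\ref{Cong2LangBasic}, so that the only genuinely new ingredient is the finiteness condition that upgrades ``language'' to ``regular language''. First I would note that a formation of finite index congruences with respect to $\Sigma$ is, by its very definition, a formation of congruences with respect to $\Sigma$, subject to the extra requirement $\mathfrak{F}(A)\subseteq \mathrm{Cgr}_{\mathrm{fi}}(\mathbf{T}_{\Sigma}(A))$ for every $A\in \boldsymbol{\mathcal{U}}^{S}$. Hence Proposition~\ref{Cong2LangBasic} applies verbatim to $\mathfrak{F}$, and its three conclusions are precisely conditions (1), (2), and (3) of Definition~\ref{Def1FRL} for the function $\mathcal{L}_{\mathfrak{F}}$. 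Thus the closure conditions of the target statement are obtained for free, and I would simply invoke Proposition~\ref{Cong2LangBasic}.

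The remaining obligation is the membership condition built into Definition~\ref{Def1FRL}, namely that for every $A\in \boldsymbol{\mathcal{U}}^{S}$ one has $\mathcal{L}_{\mathfrak{F}}(A)\subseteq \mathrm{Lang_{r}}(\mathbf{T}_{\Sigma}(A))$. For this I would use the second description of $\mathcal{L}_{\mathfrak{F}}(A)$, that is, $\mathcal{L}_{\mathfrak{F}}(A) = \{L\mid \Omega^{\mathbf{T}_{\Sigma}(A)}(L)\in \mathfrak{F}(A)\}$. Given $L\in \mathcal{L}_{\mathfrak{F}}(A)$, this says $\Omega^{\mathbf{T}_{\Sigma}(A)}(L)\in \mathfrak{F}(A)$; since $\mathfrak{F}$ is a formation of finite index congruences, $\mathfrak{F}(A)\subseteq \mathrm{Cgr}_{\mathrm{fi}}(\mathbf{T}_{\Sigma}(A))$, and therefore $\Omega^{\mathbf{T}_{\Sigma}(A)}(L)$ is of finite index, which is exactly the definition of $L$ being a regular language. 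Consequently $\mathcal{L}_{\mathfrak{F}}(A)\subseteq \mathrm{Lang_{r}}(\mathbf{T}_{\Sigma}(A))$. If one prefers instead to start from the first description, taking $\Phi\in \mathfrak{F}(A)$ with $L = [L]^{\Phi}$, then Proposition~\ref{CharacSatCCog} gives $\Phi\subseteq \Omega^{\mathbf{T}_{\Sigma}(A)}(L)$, and the filter (upward-closure) property of $\mathfrak{F}(A)$ forces $\Omega^{\mathbf{T}_{\Sigma}(A)}(L)\in \mathfrak{F}(A)$, recovering the same conclusion.

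There is no serious obstacle: the proposition is essentially a corollary of Proposition~\ref{Cong2LangBasic} together with the defining inclusion $\mathfrak{F}(A)\subseteq \mathrm{Cgr}_{\mathrm{fi}}(\mathbf{T}_{\Sigma}(A))$. The only point deserving a word of care is confirming that the two set-theoretic descriptions of $\mathcal{L}_{\mathfrak{F}}(A)$ genuinely coincide---this rests on Proposition~\ref{CharacSatCCog} (which identifies $L=[L]^{\Phi}$ with $\Phi\subseteq \Omega^{\mathbf{T}_{\Sigma}(A)}(L)$) together with the fact that each $\mathfrak{F}(A)$ is upward closed---so that the finiteness argument, most cleanly phrased through the $\Omega^{\mathbf{T}_{\Sigma}(A)}(L)\in \mathfrak{F}(A)$ form, is legitimate.
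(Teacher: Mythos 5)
Your proposal is correct and follows essentially the same route as the paper: the paper likewise deduces regularity of each $L\in\mathcal{L}_{\mathfrak{F}}(A)$ from $\Omega^{\mathbf{T}_{\Sigma}(A)}(L)\in\mathfrak{F}(A)\subseteq\mathrm{Cgr}_{\mathrm{fi}}(\mathbf{T}_{\Sigma}(A))$ and then invokes Proposition~\ref{Cong2LangBasic} for the remaining closure conditions. Your extra remark that the two descriptions of $\mathcal{L}_{\mathfrak{F}}(A)$ coincide via Proposition~\ref{CharacSatCCog} and the upward closure of the filter $\mathfrak{F}(A)$ is a sound justification of a point the paper leaves implicit.
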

\begin{proof}
Let $A$ be an $S$-sorted set and $L$ a language in $ \mathcal{L}_{\mathfrak{F}}(A)$. Then $\Omega^{\mathbf{T}_{\Sigma}(A)}(L)$ is a congruence in $\mathfrak{F}(A)$ and, therefore, it is a finite index congruence. It follows that all the languages in $ \mathcal{L}_{\mathfrak{F}}(A)$ are regular. The remaining properties follow from Proposition~\ref{Cong2LangBasic}.
\end{proof}

\begin{proposition}\label{Lang2CongEnFinit}
Let $\mathcal{L}$ be a formation of regular languages with respect to $\Sigma$, then the function $\mathfrak{F}_{\mathcal{L}}$ from $\boldsymbol{\mathcal{U}}^{S}$ which assigns to $A\in \boldsymbol{\mathcal{U}}^{S}$ the subset
$$
\mathfrak{F}_{\mathcal{L}}(A) = \{\Phi\in\mathrm{Cgr}_{\mathrm{fi}}(\mathbf{T}_{\Sigma}(A))\mid  \Phi\!-\!\mathrm{Sat}(\mathrm{T}_{\Sigma}(A))\subseteq \mathcal{L}(A)\}
$$
of $\mathrm{Cgr}_{\mathrm{fi}}(\mathbf{T}_{\Sigma}(A))$ is a formation of finite index congruences with respect to $\Sigma$.
\end{proposition}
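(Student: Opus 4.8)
The plan is to verify that $\mathfrak{F}_{\mathcal{L}}$ satisfies the two defining conditions of a formation of congruences (Definition~\ref{DefFormCgr}); the requirement $\mathfrak{F}_{\mathcal{L}}(A)\subseteq\mathrm{Cgr}_{\mathrm{fi}}(\mathbf{T}_{\Sigma}(A))$ that upgrades it to a \emph{finite index} formation is built into the definition of $\mathfrak{F}_{\mathcal{L}}(A)$ and so needs no argument. Before starting I would record, via the equivalence of Definitions~\ref{Def1FRL} and~\ref{Def2FRL} proved above, that $\mathcal{L}$ is also a BPS-formation; hence, by condition $(\mathrm{BPS}\, 3)$, each $\mathcal{L}(A)$ is a Boolean subalgebra of $\mathbf{Sub}(\mathrm{T}_{\Sigma}(A))$, in particular closed under finite unions and finite intersections. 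This closure is used repeatedly below.

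First I would check that, for every $A$, the set $\mathfrak{F}_{\mathcal{L}}(A)$ is a filter of $\mathbf{Cgr}(\mathbf{T}_{\Sigma}(A))$. Nonemptiness follows because $\nabla^{\mathbf{T}_{\Sigma}(A)}$ is of finite index (the standing hypothesis makes $\mathrm{supp}_{S}(\mathbf{T}_{\Sigma}(A))$ finite, so $\mathrm{T}_{\Sigma}(A)/\nabla^{\mathbf{T}_{\Sigma}(A)}$ is finite) and all its saturated languages lie in $\mathcal{L}(A)$ by condition~(1) of Definition~\ref{Def1FRL}. Upward closure is immediate from Corollary~\ref{IncSat}: if $\Phi\in\mathfrak{F}_{\mathcal{L}}(A)$ and $\Phi\subseteq\Psi$, then $\Psi$ is of finite index (the finite index congruences form a filter) and $\Psi\!-\!\mathrm{Sat}(\mathrm{T}_{\Sigma}(A))\subseteq\Phi\!-\!\mathrm{Sat}(\mathrm{T}_{\Sigma}(A))\subseteq\mathcal{L}(A)$.

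The substantive part of the filter condition is closure under binary meets, and this is the step I expect to be the main obstacle. Given $\Phi,\Psi\in\mathfrak{F}_{\mathcal{L}}(A)$, put $\Theta=\Phi\cap\Psi$, which is again of finite index. I would show $\Theta\!-\!\mathrm{Sat}(\mathrm{T}_{\Sigma}(A))\subseteq\mathcal{L}(A)$ by exploiting that, by Proposition~\ref{CABA Saturades}, $\Theta\!-\!\mathrm{Sat}(\mathrm{T}_{\Sigma}(A))$ is a complete atomic Boolean algebra whose atoms are the $\delta^{t,[P]_{\Theta_{t}}}$. Because $\Theta$ is of finite index there are only finitely many such atoms, so every $\Theta$-saturated language is a \emph{finite} union of atoms, and it therefore suffices, by the Boolean closure of $\mathcal{L}(A)$, to place each atom in $\mathcal{L}(A)$. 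Here the crucial observation is that $\Theta_{t}=\Phi_{t}\cap\Psi_{t}$ gives $[P]_{\Theta_{t}}=[P]_{\Phi_{t}}\cap[P]_{\Psi_{t}}$, whence $\delta^{t,[P]_{\Theta_{t}}}=\delta^{t,[P]_{\Phi_{t}}}\cap\delta^{t,[P]_{\Psi_{t}}}$; since $\delta^{t,[P]_{\Phi_{t}}}$ is $\Phi$-saturated and $\delta^{t,[P]_{\Psi_{t}}}$ is $\Psi$-saturated, both lie in $\mathcal{L}(A)$, and so does their intersection. An easy induction then yields closure under arbitrary finite meets.

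Finally I would verify the second condition of Definition~\ref{DefFormCgr}. Let $\Theta\in\mathfrak{F}_{\mathcal{L}}(B)$ and let $f\colon\mathbf{T}_{\Sigma}(A)\mor\mathbf{T}_{\Sigma}(B)$ be a homomorphism with $\mathrm{pr}^{\Theta}\circ f$ an epimorphism; write $\Xi=\mathrm{Ker}(\mathrm{pr}^{\Theta}\circ f)$, which is of finite index since $\mathbf{T}_{\Sigma}(A)/\Xi\cong\mathbf{T}_{\Sigma}(B)/\Theta$. Using Proposition~\ref{RepCongInterCCogKroneckerDelta} I would write $\Theta=\bigcap_{i}\Omega^{\mathbf{T}_{\Sigma}(B)}(M_{i})$ as a \emph{finite} intersection of cogenerated congruences, where the $M_{i}=\delta^{s,[b]_{\Theta_{s}}}$ are $\Theta$-saturated and hence belong to $\mathcal{L}(B)$; note $\Theta\subseteq\Omega^{\mathbf{T}_{\Sigma}(B)}(M_{i})$ by Proposition~\ref{CharacSatCCog}. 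Because $\mathrm{pr}^{\Theta}\circ f$ is epi and $\mathrm{pr}^{\Omega^{\mathbf{T}_{\Sigma}(B)}(M_{i})}$ factors through $\mathrm{pr}^{\Theta}$, each $\mathrm{pr}^{\Omega^{\mathbf{T}_{\Sigma}(B)}(M_{i})}\circ f$ is again an epimorphism, so condition~(3) of Definition~\ref{Def1FRL} applies and shows that $\Xi_{i}:=\mathrm{Ker}(\mathrm{pr}^{\Omega^{\mathbf{T}_{\Sigma}(B)}(M_{i})}\circ f)\in\mathfrak{F}_{\mathcal{L}}(A)$, being of finite index with all its saturated languages in $\mathcal{L}(A)$. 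Since inverse image commutes with intersection, $\Xi=(f\times f)^{-1}[\Theta]=\bigcap_{i}\Xi_{i}$, and the finite meet closure just established gives $\Xi\in\mathfrak{F}_{\mathcal{L}}(A)$, as required. Thus $\mathfrak{F}_{\mathcal{L}}$ is a formation of finite index congruences with respect to $\Sigma$.
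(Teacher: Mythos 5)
Your proof is correct, and on the decisive second condition it takes a genuinely different route from the paper's. For the filter part the two arguments essentially coincide: finite index of $\Phi\cap\Psi$, the atoms $\delta^{t,[P]_{(\Phi\cap\Psi)_{t}}}=\delta^{t,[P]_{\Phi_{t}}}\cap\delta^{t,[P]_{\Psi_{t}}}$, finiteness of the set of atoms, and Boolean closure of $\mathcal{L}(A)$; the only difference is that you derive Boolean closure from the BPS equivalence (condition $(\mathrm{BPS}\,3)$), whereas the paper invokes Corollaries~\ref{CorolariAtoms} and~\ref{CorolariAlgebraBooleana}, which are literally stated for the families $\mathcal{L}_{\mathfrak{F}}$ attached to a congruence formation rather than for an abstract $\mathcal{L}$ --- your route is, if anything, the cleaner one. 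The real divergence is in the homomorphism condition. The paper fixes an arbitrary $\mathrm{Ker}(\mathrm{pr}^{\Theta}\circ f)$-saturated language $L$, forms the auxiliary language $K=[f[L]]^{\Theta}\in\mathcal{L}(B)$, shows that $\mathrm{pr}^{\Omega^{\mathbf{T}_{\Sigma}(B)}(K)}\circ f$ is an epimorphism, and then proves by a term-by-term chase that $L$ is $\mathrm{Ker}(\mathrm{pr}^{\Omega^{\mathbf{T}_{\Sigma}(B)}(K)}\circ f)$-saturated, so that $L\in\mathcal{L}(A)$ by condition (3) of Definition~\ref{Def1FRL}. You instead decompose the congruence itself, writing $\Theta=\bigcap_{i}\Omega^{\mathbf{T}_{\Sigma}(B)}(M_{i})$ with the finitely many classes $M_{i}=\delta^{s,[b]_{\Theta_{s}}}\in\mathcal{L}(B)$ (Proposition~\ref{RepCongInterCCogKroneckerDelta} together with finite index), pull each $\Omega^{\mathbf{T}_{\Sigma}(B)}(M_{i})$ back along $f$ via condition (3), and finish with the identity $\mathrm{Ker}(\mathrm{pr}^{\Theta}\circ f)=(f\times f)^{-1}[\Theta]=\bigcap_{i}(f\times f)^{-1}[\Omega^{\mathbf{T}_{\Sigma}(B)}(M_{i})]$ and the finite-meet closure of the filter $\mathfrak{F}_{\mathcal{L}}(A)$ that you have just established. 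What each approach buys: the paper's is self-contained at the level of individual languages and never needs to decompose $\Theta$, while yours replaces the element chase by a structural argument that reuses the filter property and the cogenerated-congruence representation, which makes the proof shorter and more conceptual. Two points you leave implicit and should record explicitly: the intersection coming from Proposition~\ref{RepCongInterCCogKroneckerDelta} is finite precisely because $\Theta$ has finite index and finite support, so only finitely many distinct classes $[b]_{\Theta_{s}}$ occur; and each $\Xi_{i}$ is of finite index because $\Xi\subseteq\Xi_{i}$, $\Xi$ has finite index, and $\mathrm{Cgr}_{\mathrm{fi}}(\mathbf{T}_{\Sigma}(A))$ is upward closed.
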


\begin{proof}
Let $A$ be an $S$-sorted set. Then $\nabla^{\mathbf{T}_{\Sigma}(A)}\in \mathfrak{F}_{\mathcal{L}}(A)$ since we have that $\nabla^{\mathbf{T}_{\Sigma}(A)}\!-\!\mathrm{Sat}(\mathrm{T}_{\Sigma}(A))\subseteq \mathcal{L}(A)$ and $\nabla^{\mathbf{T}_{\Sigma}(A)}\in \mathrm{Cgr}_{\mathrm{fi}}(\mathbf{T}_{\Sigma}(A))$ (recall that $\mathrm{T}_{\Sigma}(A)$ is a regular language, which implies that $\mathrm{supp}_{S}(\mathbf{T}_{\Sigma}(A))$ is finite).

Let $\Phi$ be an element of $\mathfrak{F}_{\mathcal{L}}(A)$ and $\Psi$ a congruence on $\mathbf{T}_{\Sigma}(A)$ such that  $\Phi\subseteq \Psi$. Then, by Corollary~\ref{IncSat}, we have that $\Psi\!-\!\mathrm{Sat}(\mathrm{T}_{\Sigma}(A))$ is included in $\Phi\!-\!\mathrm{Sat}(\mathrm{T}_{\Sigma}(A))$. Moreover, $\Psi$ has finite index, since $\mathrm{T}_{\Sigma}(A)/\Psi$ is a quotient of $\mathrm{T}_{\Sigma}(A)/\Phi$. Hence, $\Psi\in \mathfrak{F}_{\mathcal{L}}(A)$.

Let $\Phi$ and $\Psi$ be congruences in $\mathfrak{F}_{\mathcal{L}}(A)$. Since $\mathrm{T}_{\Sigma}(A)/(\Psi\cap\Phi)$ can be subdirectly embedded in the product $\mathrm{T}_{\Sigma}(A)/\Psi\times \mathrm{T}_{\Sigma}(A)/\Phi$ we have that $\Phi\cap \Psi$ has finite index. Moreover, since $\Phi$ and $\Psi$ are congruences in $\mathfrak{F}_{\mathcal{L}}(A)$, we have that  $\Psi\!-\!\mathrm{Sat}(\mathbf{T}_{\Sigma}(A))$ and $\Phi\!-\!\mathrm{Sat}(\mathbf{T}_{\Sigma}(A))$ are included in $\mathcal{L}(A)$. Then, for every sort $s$ in $S$ and every term $P\in\mathrm{T}_{\Sigma}(A)_s$, the languages $\delta^{s,[P]_{\Phi_s}}$ and $\delta^{s,[P]_{\Psi_s}}$ belong to $\mathcal{L}(A)$, hence, from Corollary~\ref{CorolariAtoms}, the language $\delta^{s, [P]_{(\Psi\cap\Phi)_s}}$ belongs to $\mathcal{L}(A)$. On the other hand, since $\Phi\cap \Psi$ has finite index, any $\Phi\cap \Psi$-saturated language can be represented as a finite union of such Kronecker's deltas. Hence, from  Corollary~\ref{CorolariAlgebraBooleana}, we conclude that $L$ is a language in $\mathcal{L}(A)$. Therefore all $\Phi\cap\Psi$-saturated languages belong to $\mathcal{L}(A)$.

Finally, let $B$ be an $S$-sorted set, $\Theta\in \mathfrak{F}_{\mathcal{L}}(B) $, and $f$ an homomorphism $f\colon \mathbf{T}_{\Sigma}(A)\mor \mathbf{T}_{\Sigma}(B)$ such that
$$
\mathrm{pr}^{\Theta}\circ f\colon \mathbf{T}_{\Sigma}(A)\mor \mathbf{T}_{\Sigma}(B)/\Theta
$$
is an epimorphism. Since $\mathbf{T}_{\Sigma}(A)/\mathrm{Ker}(\mathrm{pr}^{\Theta}\circ f)$ is isomorphic to $\mathbf{T}_{\Sigma}(B)/\Theta$, we have that $\mathrm{Ker}(\mathrm{pr}^{\Theta}\circ f)$ has finite index. We next prove that if $L$ is an element of $\mathrm{Ker}(\mathrm{pr}^{\Theta}\circ f)\!-\!\mathrm{Sat}(\mathrm{T}_{\Sigma}(A))$, then $L$ belongs to $\mathcal{L}(A)$. In fact, since $\Theta\in\mathfrak{F}_{\mathcal{L}}(B)$, we have that $[f[L]]^{\Theta}$ is a language in $\mathcal{L}(B)$. On the other hand, since $\mathrm{pr}^{\Theta}\circ f$ is surjective, $\Theta\subseteq \Omega^{\mathbf{T}_{\Sigma}(B)}([f[L]]^{\Theta})$, and the triangle in the following diagram commutes
$$\xymatrix@C=40pt{
\mathbf{T}_{\Sigma}(A) \ar[r]^-{f} & \mathbf{T}_{\Sigma}(B) \ar[d]_-{\mathrm{pr}^{\Theta}} \ar[rrd]^*[l]{\mathrm{pr}^{\Omega^{\mathbf{T}_{\Sigma}(B)}([f[L]]^{\Theta})}} & {} & {}\\
{} & \mathbf{T}_{\Sigma}(B)/\Theta \ar[rr]_-{\mathrm{p}^{\Theta,\Omega^{\mathbf{T}_{\Sigma}(B)}([f[L]]^{\Theta})}} & {} & \mathbf{T}_{\Sigma}(B)/\Omega^{\mathbf{T}_{\Sigma}(B)}([f[L]]^{\Theta})
}
$$
we have that $\mathrm{pr}^{\Omega^{\mathbf{T}_{\Sigma}(B)}([f[L]]^{\Theta})}\circ f$ is surjective. Thus, since $\mathcal{L}$ is a formation of regular languages with respect to $\Sigma$, every   $\mathrm{Ker}(\mathrm{pr}^{\Omega^{\mathbf{T}_{\Sigma}(B)}([f[L]]^{\Theta})}\circ f)$-saturated language belong to $\mathcal{L}(A)$. We claim that $L$ is $\mathrm{Ker}(\mathrm{pr}^{\Omega^{\mathbf{T}_{\Sigma}(B)}([f[L]]^{\Theta})}\circ f)$-saturated. Let $s$ be a sort in $S$, and $P$, $Q\in\mathrm{T}_{\Sigma}(A)_{s}$ such that $P\in L_{s}$ and $(P,Q)\in \mathrm{Ker}(\mathrm{pr}^{\Omega^{\mathbf{T}_{\Sigma}(B)}([f[L]]^{\Theta})}\circ f)$. Then $(f_s(P),f_s(Q))\in \Omega^{\mathbf{T}_{\Sigma}(B)}([f[L]]^{\Theta})_{s}\subseteq\mathrm{Ker}(\mathrm{ch}^{[f[L]]^{\Theta}})_{s}$. However, as we know that $P\in L_{s}$, we have that $f_{s}(P)\in f_{s}[L_{s}]=f[L]_{s}\subseteq [f[L]]^{\Theta}_{s}$. Hence, $f_{s}(Q)\in [f[L]]^{\Theta}_{s}$. Consequently, there exists a term $R\in f[L]_{s}$ such that $(f_s(Q),R)\in \Theta_{s}$. Now, from $R\in f[L]_{s} = f_{s}[L_{s}]$ we infer that there exist a term $R'\in L_{s}$ such that $f_{s}(R')=R$. Hence, $(f_{s}(Q),f_{s}(R'))\in \Theta_{s}$ and $(Q,R')\in \mathrm{Ker}(\mathrm{pr}^{\Theta}\circ f)_s$. But $L$ is $\mathrm{Ker}(\mathrm{pr}^{\Theta}\circ f)$-saturated, therefore $Q\in L_{s}$.
\end{proof}

Finally, we prove that there exists an isomorphism between the complete lattices $\mathbf{Form}_{\mathrm{Cgr}_{\mathrm{fi}}}(\Sigma)$ and $\mathbf{Form}_{\mathrm{Lang_{r}}}(\Sigma)$, from which it follows that $\mathbf{Form}_{\mathrm{Lang_{r}}}(\Sigma)$ is also an algebraic lattice.

\begin{proposition}
The complete lattices $\mathbf{Form}_{\mathrm{Cgr}_{\mathrm{fi}}}(\Sigma)$ and $\mathbf{Form}_{\mathrm{Lang_{r}}}(\Sigma)$ are isomorphic.
\end{proposition}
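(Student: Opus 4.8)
The plan is to show that the two assignments already constructed, namely $\mathfrak{F}\mapsto \mathcal{L}_{\mathfrak{F}}$ of Proposition~\ref{Cong2LangEnFinit} and $\mathcal{L}\mapsto \mathfrak{F}_{\mathcal{L}}$ of Proposition~\ref{Lang2CongEnFinit}, are mutually inverse order-preserving bijections between $\mathrm{Form}_{\mathrm{Cgr}_{\mathrm{fi}}}(\Sigma)$ and $\mathrm{Form}_{\mathrm{Lang_{r}}}(\Sigma)$. Exactly as in the proof of Proposition~\ref{FormAlgFormCgrIso}, isomorphisms in $\mathbf{CLat}$ coincide with order isomorphisms, so it suffices to verify that both maps preserve the order (which is immediate, since both orders are defined componentwise by inclusion and both constructions are monotone in $\mathfrak{F}$ respectively $\mathcal{L}$) and that the two round-trip identities $\mathcal{L}_{\mathfrak{F}_{\mathcal{L}}} = \mathcal{L}$ and $\mathfrak{F}_{\mathcal{L}_{\mathfrak{F}}} = \mathfrak{F}$ hold for every $\mathcal{L}\in \mathrm{Form}_{\mathrm{Lang_{r}}}(\Sigma)$ and every $\mathfrak{F}\in \mathrm{Form}_{\mathrm{Cgr}_{\mathrm{fi}}}(\Sigma)$.

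For the identity $\mathcal{L}_{\mathfrak{F}_{\mathcal{L}}} = \mathcal{L}$, I would fix $A\in \boldsymbol{\mathcal{U}}^{S}$ and argue both inclusions directly. If $L\in \mathcal{L}(A)$, then $L$ is regular, so $\Omega^{\mathbf{T}_{\Sigma}(A)}(L)\in \mathrm{Cgr}_{\mathrm{fi}}(\mathbf{T}_{\Sigma}(A))$; applying condition (2) of Definition~\ref{Def1FRL} with $L' = L$ yields $\Omega^{\mathbf{T}_{\Sigma}(A)}(L)\!-\!\mathrm{Sat}(\mathrm{T}_{\Sigma}(A))\subseteq \mathcal{L}(A)$, whence $\Omega^{\mathbf{T}_{\Sigma}(A)}(L)\in \mathfrak{F}_{\mathcal{L}}(A)$ and so $L\in \mathcal{L}_{\mathfrak{F}_{\mathcal{L}}}(A)$. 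Conversely, if $L\in \mathcal{L}_{\mathfrak{F}_{\mathcal{L}}}(A)$, then $\Omega^{\mathbf{T}_{\Sigma}(A)}(L)\in \mathfrak{F}_{\mathcal{L}}(A)$, so every $\Omega^{\mathbf{T}_{\Sigma}(A)}(L)$-saturated language lies in $\mathcal{L}(A)$; since $L$ is saturated by its own syntactic congruence (Proposition~\ref{CharacCogenCong}(2), or equivalently Proposition~\ref{CharacSatCCog} applied to $\Phi = \Omega^{\mathbf{T}_{\Sigma}(A)}(L)$), it follows that $L\in \mathcal{L}(A)$.

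The harder identity is $\mathfrak{F}_{\mathcal{L}_{\mathfrak{F}}} = \mathfrak{F}$. The inclusion $\mathfrak{F}(A)\subseteq \mathfrak{F}_{\mathcal{L}_{\mathfrak{F}}}(A)$ is routine: a $\Phi\in \mathfrak{F}(A)$ has finite index, and every $\Phi$-saturated $L$ satisfies $L = [L]^{\Phi}$ with $\Phi\in \mathfrak{F}(A)$, hence $L\in \mathcal{L}_{\mathfrak{F}}(A)$, so $\Phi\in \mathfrak{F}_{\mathcal{L}_{\mathfrak{F}}}(A)$. For the reverse inclusion, given a finite index congruence $\Phi$ with $\Phi\!-\!\mathrm{Sat}(\mathrm{T}_{\Sigma}(A))\subseteq \mathcal{L}_{\mathfrak{F}}(A)$, I would invoke Proposition~\ref{RepCongInterCCogKroneckerDelta} to write $\Phi = \bigcap\{\Omega^{\mathbf{T}_{\Sigma}(A)}(\delta^{s,[P]_{\Phi_{s}}})\mid s\in S,\ P\in \mathrm{T}_{\Sigma}(A)_{s}\}$. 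Each $\delta^{s,[P]_{\Phi_{s}}}$ is $\Phi$-saturated, hence lies in $\Phi\!-\!\mathrm{Sat}(\mathrm{T}_{\Sigma}(A))\subseteq \mathcal{L}_{\mathfrak{F}}(A)$, so by definition of $\mathcal{L}_{\mathfrak{F}}$ one has $\Omega^{\mathbf{T}_{\Sigma}(A)}(\delta^{s,[P]_{\Phi_{s}}})\in \mathfrak{F}(A)$. The decisive point is that, since $\Phi$ has finite index, the family of classes $\{[P]_{\Phi_{s}}\mid s\in S,\ P\in \mathrm{T}_{\Sigma}(A)_{s}\}$ is finite (each $\mathrm{T}_{\Sigma}(A)_{s}/\Phi_{s}$ is finite and $\mathrm{supp}_{S}(\mathbf{T}_{\Sigma}(A))$ is finite by the standing hypothesis), so the intersection above is a \emph{finite} intersection of members of the filter $\mathfrak{F}(A)$, and closure of a filter under finite meets gives $\Phi\in \mathfrak{F}(A)$.

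I expect this finiteness reduction to be the main obstacle; it is precisely what forces the assumption that $\mathrm{supp}_{S}(\mathbf{T}_{\Sigma}(A))$ is finite and is what separates this finite-index correspondence from the unrestricted one of Proposition~\ref{FormAlgFormCgrIso}, where no such reduction to finite meets is available. Once the three displayed identities are established, $\mathfrak{F}\mapsto \mathcal{L}_{\mathfrak{F}}$ is an order isomorphism, and since $\mathbf{Form}_{\mathrm{Cgr}_{\mathrm{fi}}}(\Sigma)$ is an algebraic lattice by Corollary~\ref{FormCgrfiAlg} and $\mathbf{ALat}$ is isomorphism-closed, $\mathbf{Form}_{\mathrm{Lang_{r}}}(\Sigma)$ is an algebraic lattice as well.
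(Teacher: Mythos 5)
Your proposal is correct and follows essentially the same route as the paper's proof: both round-trip identities are established exactly as in the paper, with the decisive step for $\mathfrak{F}_{\mathcal{L}_{\mathfrak{F}}}\leq\mathfrak{F}$ being the representation of $\Phi$ via Proposition~\ref{RepCongInterCCogKroneckerDelta} as an intersection of cogenerated congruences of the saturated deltas $\delta^{s,[P]_{\Phi_{s}}}$, reduced to a \emph{finite} intersection by the finite index hypothesis and absorbed by the filter $\mathfrak{F}(A)$. The only cosmetic difference is that you invoke the second (equivalent) description of $\mathcal{L}_{\mathfrak{F}}(A)$ to get $\Omega^{\mathbf{T}_{\Sigma}(A)}(\delta^{s,[P]_{\Phi_{s}}})\in\mathfrak{F}(A)$ directly, where the paper re-derives it from some saturating $\Psi\in\mathfrak{F}(A)$ and upward closure of the filter.
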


\begin{proof}
Let us first prove that, for every $\mathfrak{F}\in \mathrm{Form}_{\mathrm{Cgr}_{\mathrm{fi}}}(\Sigma)$, $\mathfrak{F} = \mathfrak{F}_{\mathcal{L}_{\mathfrak{F}}}$. By definition, $\mathcal{L}_{\mathfrak{F}}$ is such that, for every  $A\in \boldsymbol{\mathcal{U}}^{S}$,  $\mathcal{L}_{\mathfrak{F}}(A)$ is
$$
\mathcal{L}_{\mathfrak{F}}(A) = \{L\in \mathrm{Sub}(\mathrm{T}_{\Sigma}(A))\mid \exists\, \Phi\in\mathfrak{F}(A)\,(L = [L]^{\Phi})\}
$$
On the other hand, by definition, we have that, for every  $A\in \boldsymbol{\mathcal{U}}^{S}$,  $\mathfrak{F}_{\mathcal{L}_{\mathfrak{F}}}(A)$ is
$$
 \mathfrak{F}_{\mathcal{L}_{\mathfrak{F}}}(A) = \{\Phi\in\mathrm{Cgr}_{\mathrm{fi}}(\mathbf{T}_{\Sigma}(A))\mid\Phi\!-\!\mathrm{Sat}(\mathrm{T}_{\Sigma}(A))\subseteq \mathcal{L}_{\mathfrak{F}}(A)\}.
$$

Let us prove that $\mathfrak{F} \leq \mathfrak{F}_{\mathcal{L}_{\mathfrak{F}}}$. Let $A$ be an element of  $\boldsymbol{\mathcal{U}}^{S}$ and let $\Phi$ be a congruence in $\mathfrak{F}(A)$. Then $\Phi$ has finite index and all $\Phi$-saturated languages belong to $\mathcal{L}_{\mathfrak{F}}(A)$. Therefore $\Phi$ belongs to $\mathfrak{F}_{\mathcal{L}_{\mathfrak{F}}}(A)$.

Let us now prove that $\mathfrak{F}_{\mathcal{L}_{\mathfrak{F}}} \leq \mathfrak{F}$. Let $\Phi$ be a congruence in $\mathfrak{F}_{\mathcal{L}_{\mathfrak{F}}}(A)$. Then $\Phi$ has finite index and all $\Phi$-saturated languages belong to $\mathcal{L}_{\mathfrak{F}}(A)$. Since, for every $s\in S$ and every term $P\in\mathrm{T}_{\Sigma}(A)_s$, the language $\delta^{s,[P]_{\Phi_s}}$ is $\Phi$-saturated, there are congruences $\Psi^{(s,[P]_{\Phi_s})}$ in $\mathfrak{F}(A)$ for which $\delta^{s,[P]_{\Phi_s}}$ is $\Psi^{(s,[P]_{\Phi_s})}$-saturated. From this it follows that $\Psi^{(s,[P]_{\Phi_s})}\subseteq \Omega^{\mathbf{T}_{\Sigma}(A)}(\delta^{s,[P]_{\Phi_s}})$. Hence, by Proposition~\ref{RepCongInterCCogKroneckerDelta}, we have that
$$
\textstyle
\Phi = \bigcap\{\Omega^{\mathbf{T}_{\Sigma}(A)}(\delta^{s,[P]_{\Phi_{s}}})\mid s\in S\! \And\! P\in\mathrm{T}_{\Sigma}(A)_s\}.
$$
But since $\Phi$ has finite index, the last intersection is finite. Consequently $\Phi$ is a congruence in $\mathfrak{F}(A)$.

We next prove that, for every $\mathcal{L}\in \mathrm{Form}_{\mathrm{Lang_{r}}}(\Sigma)$, $\mathcal{L} = \mathcal{L}_{\mathfrak{F}_{\mathcal{L}}}$. By definition, $\mathfrak{F}_{\mathcal{L}}$ is such that, for every  $A\in \boldsymbol{\mathcal{U}}^{S}$,  $\mathfrak{F}_{\mathcal{L}}(A)$ is
$$
\mathfrak{F}_{\mathcal{L}}(A)=\{\Phi\in\mathrm{Cgr}_{\mathrm{fi}}(\mathbf{T}_{\Sigma}(A))\mid  \Phi\!-\!\mathrm{Sat}(\mathbf{T}_{\Sigma}(A))\subseteq \mathcal{L}(A)\}.
$$
On the other hand, by definition, for every  $A\in \boldsymbol{\mathcal{U}}^{S}$, we have that
$$
\mathcal{L}_{\mathfrak{F}_{\mathcal{L}}}(A) = \{L\in \mathrm{Sub}(\mathrm{T}_{\Sigma}(A))\mid \exists\, \Phi\in\mathfrak{F}_{\mathcal{L}}(A)\,(L = [L]^{\Phi})\}
$$
Let us prove that $\mathcal{L}\leq \mathcal{L}_{\mathfrak{F}_{\mathcal{L}}}$. Let $L$ be a language in $\mathcal{L}(A)$. Then $L$ is regular, hence $\Omega^{\mathbf{T}_{\Sigma}(A)}(L)$ has finite index and $\Omega^{\mathbf{T}_{\Sigma}(A)}(L)\!-\!\mathrm{Sat}(\mathrm{T}_{\Sigma}(A))\subseteq \mathcal{L}(A)$. It follows that $\Omega^{\mathbf{T}_{\Sigma}(A)}(L)$ is a congruence in $\mathfrak{F}_{\mathcal{L}}(A)$. Finally, $L\in \mathcal{L}_{\mathfrak{F}_{\mathcal{L}}}(A)$ since it is an $\Omega^{\mathbf{T}_{\Sigma}(A)}(L)$-saturated language.

Let us now prove that $\mathcal{L}_{\mathfrak{F}_{\mathcal{L}}}\leq \mathcal{L}$. Let $L$ be a language in $\mathcal{L}_{\mathfrak{F}_{\mathcal{L}}}(A)$. Then there exists a congruence $\Phi$ in $\mathfrak{F}_{\mathcal{L}}(A)$ such that  $L$ is $\Phi$-saturated. Since $\Phi$ is a congruence in $\mathfrak{F}_{\mathcal{L}}(A)$, all $\Phi$-saturated languages belong to $\mathcal{L}(A)$. Thus, $L\in\mathcal{L}(A)$.

If we denote by $\vartheta_{\Sigma}$ the bijection from $\mathrm{Form}_{\mathrm{Cgr}_{\mathrm{fi}}}(\Sigma)$ to $\mathrm{Form}_{\mathrm{Lang_{r}}}(\Sigma)$, then it is straightforward to prove that, for every $\mathfrak{F}$, $\mathfrak{G}\in \mathrm{Form}_{\mathrm{Cgr}_{\mathrm{fi}}}(\Sigma)$, $\mathfrak{F}\leq \mathfrak{G}$ if, and only if,  $\vartheta_{\Sigma}(\mathfrak{F})\leq \vartheta_{\Sigma}(\mathfrak{G})$ (or, what is equivalent, that the bijection $\vartheta_{\Sigma}$ is such that both $\vartheta_{\Sigma}$ and $\vartheta_{\Sigma}^{-1}$ are order-preserving). Therefore the complete lattices $\mathbf{Form}_{\mathrm{Cgr}_{\mathrm{fi}}}(\Sigma)$ and $\mathbf{Form}_{\mathrm{Lang_{r}}}(\Sigma)$ are isomorphic.
\end{proof}

From the just stated proposition together with Corollary~\ref{FormCgrfiAlg}, we obtain immediately the following corollary.

\begin{corollary}
$\mathbf{Form}_{\mathrm{Lang_{r}}}(\Sigma)$ is an algebraic lattice.
\end{corollary}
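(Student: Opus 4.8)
The plan is to reduce the statement to the two results immediately preceding it, exactly as was done for $\mathbf{Form}_{\mathrm{Cgr}}(\Sigma)$ after the first Eilenberg type theorem. First I would invoke the just stated proposition, which provides an isomorphism $\vartheta_{\Sigma}$ of complete lattices between $\mathbf{Form}_{\mathrm{Cgr}_{\mathrm{fi}}}(\Sigma)$ and $\mathbf{Form}_{\mathrm{Lang_{r}}}(\Sigma)$. Next I would recall Corollary~\ref{FormCgrfiAlg}, according to which $\mathbf{Form}_{\mathrm{Cgr}_{\mathrm{fi}}}(\Sigma)$ is an algebraic lattice.

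The key observation is that the property of being an algebraic lattice is invariant under isomorphism in the category $\mathbf{CLat}$ of complete lattices. Indeed, a complete lattice is algebraic precisely when every element is the supremum of the compact elements below it, and an isomorphism of complete lattices preserves arbitrary suprema and, in particular, the upward directedness used to define compactness; hence it carries compact elements to compact elements and algebraic lattices to algebraic lattices. Phrased categorically, $\mathbf{ALat}$, the full subcategory of $\mathbf{CLat}$ determined by the algebraic lattices, is isomorphism-closed. Therefore, since $\mathbf{Form}_{\mathrm{Cgr}_{\mathrm{fi}}}(\Sigma)$ is algebraic and is isomorphic, via $\vartheta_{\Sigma}$, to $\mathbf{Form}_{\mathrm{Lang_{r}}}(\Sigma)$, the latter is algebraic as well.

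There is no genuine obstacle here: the entire content has already been deferred to the preceding isomorphism theorem and to Corollary~\ref{FormCgrfiAlg}, so the argument is a one-line transport of structure along $\vartheta_{\Sigma}$. The only point worth making explicit, should one prefer a self-contained justification rather than a citation to the isomorphism-closedness of $\mathbf{ALat}$, is the verification that $\vartheta_{\Sigma}$ and $\vartheta_{\Sigma}^{-1}$ preserve compact elements, which is immediate from their preservation of directed suprema; I would record this observation and then conclude.
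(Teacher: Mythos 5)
Your proposal is correct and matches the paper's own argument exactly: the paper derives this corollary immediately from the isomorphism between $\mathbf{Form}_{\mathrm{Cgr}_{\mathrm{fi}}}(\Sigma)$ and $\mathbf{Form}_{\mathrm{Lang_{r}}}(\Sigma)$ together with Corollary~\ref{FormCgrfiAlg}, using (as it did after the first Eilenberg type theorem) that $\mathbf{ALat}$ is isomorphism-closed in $\mathbf{CLat}$. Your explicit remark that compact elements are preserved because isomorphisms preserve directed suprema is a correct unpacking of that same fact, not a different route.
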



\begin{remark}
Let $J$ be a nonempty set in $\ensuremath{\boldsymbol{\mathcal{U}}}$ and $(\mathcal{L}_{j})_{j\in J}$ an upward directed family in $\mathrm{Form}_{\mathrm{Lang_{r}}}(\Sigma)$. Then the function $\mathcal{L}$ defined, for every $A\in \boldsymbol{\mathcal{U}}^{S}$, as $\mathcal{L}(A) = \bigcup_{j\in J}\mathcal{L}_{j}(A)$ is the least upper bound of $(\mathcal{L}_{j})_{i\in J}$ in $\mathbf{Form}_{\mathrm{Lang_{r}}}(\Sigma)$. Moreover, since $\mathbf{Form}_{\mathrm{Lang_{r}}}(\Sigma)$ is an algebraic lattice, it is meet-continuous, i.e., for every $\mathcal{L}$ in $\mathrm{Form}_{\mathrm{Lang_{r}}}(\Sigma)$, every nonempty set $J$ in $\ensuremath{\boldsymbol{\mathcal{U}}}$, and every upward directed family $(\mathcal{L}_{j})_{j\in J}$ in $\mathrm{Form}_{\mathrm{Lang_{r}}}(\Sigma)$ we have that
$$
\textstyle
\mathcal{L}\wedge\bigvee_{j\in J}\mathcal{L}_{j} = \bigvee_{j\in J} (\mathcal{L}\wedge \mathcal{L}_{j}).
$$
\end{remark}


\textbf{Acknowledgement.}
We would like to thank our dear friend Jos\'{e} Garc\'{\i}a Roca---example of intelligence, goodness, and integrity---for correcting our English---needless to say, any remaining errors are our sole responsibility. But, above all, for his tireless encouragement and invaluable continued moral support.



\begin{thebibliography}{99}

\bibitem
{bcrr15}
A. Ballester-Bolinches, E. Cosme-Ll\'{o}pez, R. Esteban-Romero, and J. Rutten, \emph{Formations of monoids, congruences, and formal languages}. Scientific Annals of Computer Science, \textbf{25} (2015), pp. 171--209.


\bibitem
{bps14}
A. Ballester-Bolinches, J.-\'{E}. Pin, and X. Soler-Escriv\`{a}, \emph{Formations of finite monoids and formal languages: Eilenberg's variety theorem revisited}. Forum Math. \textbf{26} (2014), pp. 1737--1761.

\bibitem
{mb72}
M. Barr, \emph{The point of the empty set}. (French summary) Cahiers Topologie GÈom. DiffÈrentielle \textbf{13} (1972), pp. 357--368

\bibitem
{jB68}
J. B{\'{e}}nabou, \emph{Structures algebriques dans les categories}, Cahiers de Topologie et G\'{e}ometrie Diff\'{e}rentielle, \textbf{10} (1968), pp. 1--126.

\bibitem
{gb15}
G. M. Bergman,  \emph{An invitation to general algebra and universal constructions}. Second edition. Universitext. Springer, Cham, 2015.

\bibitem
{bir79}
G. Birkhoff, \emph{Lattice theory}. Corrected reprint of the 1967 third edition. American Mathematical Society Colloquium Publications, 25. American Mathematical Society, Providence, R.I., 1979.

\bibitem
{nB70}
N. Bourbaki, \emph{Th{\'{e}}orie des ensembles}, Hermann, Paris, 1970.

\bibitem
{cs04}
J. Climent and J. Soliveres, \emph{On many-sorted algebraic closure operators},
Mathematische Nachrichten, \textbf{266} (2004), pp. 81--84.

\bibitem
{cs05} J. Climent and J. Soliveres, \emph{The completeness theorem for monads in categories of sorted sets}, Houston Journal of
Mathematics, \textbf{31} (2005), pp. 103--129.

\bibitem
{pC81}
P. Cohn, \emph{Universal algebra}, Second edition. Mathematics and its Applications, 6. D. Reidel Publishing Co., Dordrecht-Boston, Mass., 1981.

\bibitem
{cll15}
E. Cosme Ll\'{o}pez, \emph{Some contributions to the algebraic theory of automata}, Ph.D. thesis, Universitat de Val\`{e}ncia, 2015.

\bibitem
{se76}
S. Eilenberg, \emph{Automata, languages, and machines. Vol. B. With two chapters (``Depth decomposition theorem'' and ``Complexity of semigroups and morphisms'') by Bret Tilson}. Pure and Applied Mathematics, Vol. 59. Academic Press [Harcourt Brace Jovanovich, Publishers], New York-London, 1976.

\bibitem
{hE77}
H. Enderton, \emph{Elements of set theory}, Academic Press [Harcourt Brace Jovanovich, Publishers], New York-London, 1977.

\bibitem
{fs90}
P. Freyd and A. Scedrov, \emph{Categories, allegories}, North-Holland Mathematical Library, 39. North-Holland Publishing Co., Amsterdam, 1990.

\bibitem
{gas62}
W. Gasch\"{u}tz, \emph{Zur Theorie der endlichen auflˆsbaren Gruppen}, Math. Z. \textbf{80} (1962), pp. 300--305.

\bibitem
{gm85}
J. Goguen and J. Meseguer, \emph{Completeness of many-sorted equational logic}, Houston Journal of Mathematics, \textbf{11}(1985), pp. 307--334.

\bibitem
{gtw76}
J. Goguen, J. Thatcher and E. Wagner, \emph{An initial algebra approach to the specification, correctness,
and implementation of abstract data types}, IBM Thomas J. Watson Research Center, Tecnical Report RC 6487, October 1976.

\bibitem
{gG08} G. Gr\"{a}tzer, \emph{Universal algebra. With appendices by Gr\"{a}tzer, Bjarni J\'{o}nsson, Walter Taylor, Robert W. Quackenbush, G\"{u}nter H. Wenzel, and Gr\"{a}tzer and W. A. Lampe}. Revised reprint of the 1979 second edition. Springer, New York, 2008.

\bibitem
{gG11}
G. Gr‰tzer, \emph{Lattice theory: foundation}, Birkh‰user/Springer Basel AG, Basel, 2011.

\bibitem
{h63}
P.J. Higgins,
\emph{Algebras with a scheme of operators},
Mathematische Nachrichten,
\textbf{27} (1963),
pp. 115--132.

\bibitem
{jh85}
J. L. Hook, \emph{A note on interpretations of many-sorted theories}, J. Symbolic Logic \textbf{50} (1985), pp. 372--374.

\bibitem
{sM98}
S. Mac Lane,
\emph{Categories for the working mathematician.} 2nd ed., Springer-Verlag, New York, 1998.

\bibitem
{m76}
G. Matthiessen, \emph{Theorie der {H}eterogenen {A}lgebren}, Mathematik-{A}rbeitspapiere, Nr. 3, Universit\"{a}t Bremen Teil A, Mathematische Forchungspapiere, 1976.

\bibitem
{m78}
G. Matthiessen,  \emph{A heterogeneous algebraic approach to some problems in automata theory, many-valued logic and other topics}. Contributions to general algebra (Proc. Klagenfurt Conf., Klagenfurt, 1978), pp. 193--211, Heyn, Klagenfurt, 1979.

\bibitem
{mmt87}
R. McKenzie, G. McNulty and W. Taylor, \emph{Algebras, lattices, varieties. Vol. I}, Wadsworth\!\! $\And$\!\! Brooks/Cole Mathematics Series, Monterey, CA, 1987.

\bibitem
{dM69}
D. Monk, \emph{Introduction to set theory}, McGraw-Hill, Inc., New York-London-Sydney, 1969.

\bibitem
{Sch55}
M. P. Sch{\"{u}}tzenberger, \emph{Une th\'{e}orie alg\'{e}brique du codage}. S\'{e}minaire Dubreil. Alg\`{e}bre et th\'{e}orie des nombres, 9 (1955--1956), Expos\'{e} No. 15, 24 p.

\bibitem
{shsk89}
L. A. Shemetkov and A. N. Skiba, \emph{Formations of algebraic systems} [in Russian], Nauka, Moscow, 1989.

\bibitem
{sl74}
J. S{\l}omi\'{n}ski, \emph{On the greatest congruence relation contained in an equivalence relation and its applications to the algebraic theory of machines}. Colloq. Math.  \textbf{29}  (1974), pp. 31--43, 159.

\bibitem
{th80}
D. Th\'{e}rien, \emph{Classification of regular languages by congruences}. Ph.D. thesis, University of Waterloo, Ontario, Canada,  1980.

\end{thebibliography}
\end{document}